\definecolor{lightred}{RGB}{255,127,127}
\definecolor{lightgreen}{RGB}{127,255,127}
\definecolor{lightblue}{RGB}{127,127,255}
\definecolor{linkcolor}{rgb}{0,0,0.6}
\numberwithin{equation}{section}
\theoremstyle{plain}
\newtheorem{theorem}{Theorem}[section]
\newtheorem{lemma}[theorem]{Lemma}
\newtheorem{proposition}[theorem]{Proposition}
\newtheorem{corollary}[theorem]{Corollary}
\newcommand{\vp}{\varphi}
\newcommand{\noi}{\noindent}
\newcommand{\p}{\partial}
\newcommand{\g}{\mathfrak{g}}
\newcommand{\ti}[1]{_{\bm{\underline{#1}}}}
\newcommand{\fs}[2]{f^{#1}_{{\color{white} #1}\, #2}\;}
\newcommand{\ft}[2]{f^{#1}_{{\color{white} #1}\, #2}}
\newcommand{\dd}{\text{d}}
\newcommand{\lt}{\bm{\ell}}
\newcommand{\Tc}{\mathcal{T}}
\newcommand{\ls}[2]{\ell^{\hspace{1pt}#1}_{#2}}
\newcommand{\lst}[2]{\widetilde{\ell}^{\hspace{2pt}#1}_{#2}}
\newcommand{\ltb}{\bm{\widetilde\ell}}
\newcommand{\kb}[2]{\eta^{\hspace{1pt}#1}_{#2}}
\newcommand{\C}{\mathbb{C}}
\newcommand{\CP}{\mathbb{P}^1}
\newcommand{\R}{\mathbb{R}}
\newcommand{\D}{\mathbb{D}}
\newcommand{\Z}{\mathbb{Z}}
\newcommand{\Ac}{\mathcal{A}}
\newcommand{\rd}{\mathtt{r}}
\newcommand{\cd}{\mathtt{c}}
\newcommand{\Pc}{\mathcal{P}}
\newcommand{\Id}{\text{Id}}
\newcommand{\Sg}{\Gamma}
\newcommand{\Si}{\Sigma}
\newcommand{\Jt}[2]{J^{#1}_{[#2]}}
\newcommand{\J}[2]{\mathcal{J}^{#1}_{[#2]}}
\newcommand{\s}{\sigma}
\newcommand{\po}{z}
\newcommand{\pb}{\bm{z}}
\newcommand{\Rc}{\mathcal{R}}
\newcommand{\Lc}{\mathcal{L}}
\newcommand{\Dc}[2]{\mathcal{D}^{#1}_{[#2]}}
\newcommand{\Hc}{\mathcal{H}}
\newcommand{\Mc}{\mathcal{M}}
\newcommand{\Mcs}[3]{\mathcal{M}(#1\hspace{1.4pt};\hspace{0.4pt}#2,#3)}
\newcommand{\Kc}{\mathcal{K}}
\newcommand{\ze}{\zeta}
\newcommand{\Q}{\mathcal{Q}}
\newcommand{\eb}{\bm{\epsilon}}
\newcommand{\zt}{\widetilde{z}}
\newcommand{\wt}{\widetilde{w}}
\newcommand{\vpt}{\widetilde{\vp}}
\newcommand{\zet}{%
  \mspace{2mu}%
  \widetilde{\mspace{-2mu}\rule{0pt}{1.4ex}\smash[t]{\zeta}}%
}
\newcommand{\pit}{\widetilde{\pi}}
\newcommand{\Jtt}[2]{\widetilde{\mathcal{J}}^{\hspace{1pt}#1}_{[#2]}}
\newcommand{\W}[1]{I_{\text{W}\hspace{-1pt}\text{Z}}\left[#1\right]}
\newcommand{\Ww}[1]{I_{\text{W}\hspace{-1pt}\text{Z}}\bigl[#1\bigr]}
\newcommand{\Te}[2]{T^{#1}_{{\color{white}#1} #2}}
\newcommand{\lc}[1]{\ell_{#1}}
\newcommand{\kc}[1]{\kay_{#1}}
\newcommand{\gb}[1]{g^{(#1)}}
\newcommand{\Xb}[1]{X^{(#1)}}
\newcommand{\jb}[1]{j^{(#1)}}
\newcommand{\Jb}[1]{J^{(#1)}}
\newcommand{\Wb}[1]{W^{(#1)}}
\newcommand{\Rb}[1]{R^{(#1)}}
\newcommand{\Ad}{\text{Ad}}
\newcommand{\hYB}{\text{hYB}}
\newcommand{\vppm}[1]{\varphi_{\pm,#1}}
\newcommand{\vpp}[1]{\varphi_{+,#1}}
\newcommand{\vpm}[1]{\varphi_{-,#1}}
\newcommand{\zi}{\widetilde{z}_{(\infty)}}
\newcommand{\Cc}{\mathcal{C}}
\newcommand{\Sc}[3]{\mathcal{S}^{#1#2}_{{\color{white}#1#2}#3}}
\newcommand{\Tcc}[2]{\mathcal{T}^{#1}_{{\color{white}#1}#2}}
\newcommand{\Uc}[2]{\mathcal{U}^{#1}_{{\color{white}#1}#2}}
\newcommand{\Oo}{\mathcal{O}}
\newcommand{\Oot}{\widetilde{\mathcal{O}}}
\newcommand{\nm}{\cite{Delduc:2019bcl}}
\newcommand{\bg}{\cite{Vicedo:2017cge}}
\newcommand{\bnm}{\cite{Vicedo:2017cge,Delduc:2019bcl}}
\newcommand{\nms}{\cite{Delduc:2019bcl} }
\newcommand{\bgs}{\cite{Vicedo:2017cge} }
\newcommand{\bnms}{\cite{Vicedo:2017cge,Delduc:2019bcl} }
\newcommand{\prl}{\cite{Delduc:2018hty}}
\newcommand{\Sit}{\widetilde{\Sigma}}
\newcommand{\Act}{\widetilde{\mathcal{A}}}
\newcommand{\is}{{(\infty)}}
\newcommand{\Cct}{\widetilde{\mathcal{C}}}
\newcommand{\Gt}{\widetilde{\Gamma}}
\newcommand{\gf}{\equiv}
\newcommand{\Lct}{\widetilde{\mathcal{L}}}
\newcommand{\Rct}{\widetilde{\mathcal{R}}}
\newcommand{\vptpm}[1]{\widetilde{\varphi}_{\pm,#1}}
\newcommand{\vptp}[1]{\widetilde{\varphi}_{+,#1}}
\newcommand{\vptm}[1]{\widetilde{\varphi}_{-,#1}}
\newcommand{\Oc}[2]{\mathcal{O}_{#1 #2}}
\newcommand{\vt}{\vartheta}
\newcommand{\B}{\mathcal{B}}
\newcommand{\Pexp}{\text{P}\overleftarrow{\text{exp}}}
\newcommand{\Kt}{\widetilde{K}}
\newcommand{\cc}{\mathbbm{c}}
\newcommand{\pms}{(\pm)}
\newcommand{\pcm}{{\text{PCM}}}
\newcommand{\elt}[1]{\widetilde{\ell}_{#1}}
\newcommand{\rhot}{\widetilde{\rho}}
\newcommand{\vpi}{\varpi}
\def\beqz{\begin{equation*}}
\def\eeqz{\end{equation*}}
\DeclareSymbolFont{stixsymbols}{LS1}{stixscr}{m}{n}
\DeclareMathSymbol{\kay}{\mathalpha}{stixsymbols}{"6B}
\def\res{\mathop{\text{res}\,}}
\definecolor{myGreen}{rgb}{0.0,0.4,0.0}
\let\@keywords\@empty
\let\@subject\@empty
\providecommand{\keywords}[1]{\gdef\@keywords{#1}}
\providecommand{\subject}[1]{\gdef\@subject{#1}}
\def\thetitle{\@title}
\def\theauthor{\@author}
\def\thesubject{\@subject}
\def\thedate{\@date}
\def\thekeywords{\@keywords}
\title{Constrained affine Gaudin models and \\ diagonal Yang-Baxter deformations}
\author{S. Lacroix}
\begin{document}

\begin{flushright}
[ZMP-HH/19-12]
\end{flushright}

\begin{center}

\vspace*{2cm}

\begingroup\Large\bfseries\thetitle\par\endgroup

\vspace{1.5cm}

\begingroup
S. Lacroix \footnote{E-mail:~sylvain.lacroix@desy.de}
\endgroup

\vspace{1cm}

\begingroup
\it II. Institut f\"ur Theoretische Physik, Universit\"at Hamburg,
\\
Luruper Chaussee 149, 22761 Hamburg, Germany
\\
Zentrum f\"ur Mathematische Physik, Universit\"at Hamburg, \\
Bundesstrasse 55, 20146 Hamburg, Germany

\endgroup

\end{center}

\vspace{2cm}

\begin{abstract}
We review and pursue further the study of constrained realisations of affine Gaudin models, which form a large class of two-dimensional integrable field theories with gauge symmetries. In particular, we develop a systematic gauging procedure which allows to reformulate the non-constrained realisations of affine Gaudin models considered recently in~\nms as equivalent models with a gauge symmetry. This reformulation is then used to construct integrable deformations of these models breaking their diagonal symmetry. In a second time, we apply these general methods to the integrable coupled $\s$-model introduced recently, whose target space is the $N$-fold Cartesian product $G_0^N$ of a real semi-simple Lie group $G_0$. We present its gauged formulation as a model on $G_0^{N+1}$ with a gauge symmetry acting as the right multiplication by the diagonal subgroup $G_0^{\text{diag}}$ and construct its diagonal homogeneous Yang-Baxter deformation.
\end{abstract}

\newpage

\setcounter{tocdepth}{2}
\tableofcontents

\newpage

\section{Introduction}

\paragraph{Motivations.} Exploring the landscape of integrable two-dimensional field theories is a difficult task. Indeed, the conditions for a field theory to be integrable are quite constraining and there are no systematic procedures to determine whether a given model is integrable or not. This motivates the search for new ways of generating integrable field theories. Recent progresses in this direction have been made using the language of affine Gaudin models. These are two-dimensional field theories associated with affine Kac-Moody algebras, which are constructed in a way which automatically ensures their integrability~\bg. They form a large class of integrable systems, which in particular contains all known integrable $\s$-models, as well as affine Toda field theories~\bg.

Prototypical examples of integrable $\s$-models which can be realised as affine Gaudin models are the Principal Chiral Model (PCM) on a semi-simple real Lie group $G_0$, as well as the model obtained by adding to its action a Wess-Zumino term. Based on this interpretation, the formalism of affine Gaudin models has been used recently in~\cite{Delduc:2018hty,Delduc:2019bcl} to generate an infinite family of new integrable $\s$-models, obtained by coupling together $N$ copies of the PCM with Wess-Zumino term on $G_0$. Let us denote by $\gb r(x,t)$ the $G_0$-valued field attached to the $r^{\rm{th}}$-copy. The action of the model is then expressed in terms of the fields $\gb 1,\cdots,\gb N$. In particular, it is invariant under the global left symmetries $\gb 1 \mapsto h_1^{-1} \gb 1, \, \cdots, \gb N \mapsto h_N^{-1} \gb N$, with constant parameters $h_1,\cdots,h_N$ in $G_0$, as well as under the diagonal right symmetry $\gb 1 \mapsto \gb 1 h, \cdots, \gb N \mapsto \gb N h$, $h\in G_0$, which acts simultaneously on all the copies.

It is well known that the model with only one copy admits a continuous deformation which preserves its integrability: the Yang-Baxter model (see~\cite{Klimcik:2002zj,Klimcik:2008eq,Delduc:2013fga} for the PCM alone,~\cite{Delduc:2014uaa} for the PCM with a Wess-Zumino term and~\cite{Kawaguchi:2014qwa} for homogeneous Yang-Baxter deformations). This deformation breaks the global left symmetry of the model and similarly, there exists an equivalent deformation which breaks its right symmetry. A systematic procedure to construct integrable Yang-Baxter type deformations of the coupled model with $N$ copies has been proposed in~\nm. This procedure can be applied to any of the $N$ copies of the model and breaks the associated left symmetry acting on the field $\gb r$ attached to this copy. A natural question is then whether there also exists an integrable Yang-Baxter type deformation which breaks the diagonal right symmetry of the model. This is one of the main questions addressed in this article: more precisely, we will develop a systematic procedure allowing to deform any realisation of affine Gaudin model considered in~\nm, in a way which breaks the diagonal symmetry naturally associated with this model\footnote{Note that for simplicity, we will restrict in this article to Yang-Baxter deformations without Wess-Zumino term, although such a deformation should also exists in the most general case.}. This general method will in particular be applied to the integrable coupled $\s$-model considered above.

The construction of this diagonal Yang-Baxter deformation relies very strongly on a reformulation of the undeformed model as a theory with a gauge symmetry. This reformulation is the other main result of this article. Indeed, although the models considered in~\nms are not models with a gauge symmetry, we demonstrate in this article that they all possess an alternative formulation as a gauged model, which can also be seen as a realisation of affine Gaudin model. Let us illustrate this general gauging procedure on the example of the integrable coupled $\s$-model on $G_0^N$ considered above. This model possesses an alternative formulation as a model on $G_0^{N+1}$, described by $N+1$ fields $\gb 1(x,t),\cdots,\gb {N+1}(x,t)$, but which now possesses a gauge symmetry acting as the diagonal right translation
\begin{equation*}
\gb 1(x,t) \longmapsto \gb 1(x,t)h(x,t), \;\;\;\;\;\; \cdots, \;\;\;\;\;\; \gb {N+1}(x,t) \longmapsto \gb {N+1}(x,t)h(x,t),
\end{equation*}
with $h(x,t)$ a local $G_0$-valued parameter. This gauge symmetry can be fixed by imposing $\gb{N+1}=\Id$: the resulting model on $\gb 1,\cdots,\gb N$ then coincides with the initial coupled $\s$-model. This gauged reformulation of the model allows a more uniform treatment of its global symmetries. Indeed, the gauged model possesses $N+1$ global symmetries, acting by left translation $\gb 1 \mapsto h_1^{-1} \gb 1, \, \cdots, \gb {N+1} \mapsto h_{N+1}^{-1} \gb {N+1}$ on the fields $\gb r$. Under the gauge-fixing condition $\gb{N+1}=\Id$, the symmetries acting on the fields $\gb 1$ to $\gb N$ are identified with the $N$ independent left symmetries of the non-gauged model, whereas the symmetry acting on the field $\gb{N+1}$ becomes the diagonal right symmetry. As we shall explain, in the gauged formulation, one can apply a Yang-Baxter deformation to any of the $N+1$ copies of the model: after gauge-fixing, the deformation of the $(N+1)^{\rm{th}}$-copy will then be the diagonal deformation announced above.

In the rest of this introduction, we will explain in more details how the gauging and diagonal deformation procedures are constructed in the language of affine Gaudin models. 

\paragraph{The two formulations of the PCM.} A good starting point to understand and motivate the development of the general gauging procedure is the simplest example of the PCM. Indeed, it is already well known that the PCM possesses two different formulations, either as a $\s$-model with target space the manifold $G_0$, or as a model on the Cartesian product $G_0\times G_0$, together with a $G_0$ gauge symmetry. Let us briefly review these non-gauged and gauged formulations, with a particular emphasis on how the integrable structure of the model manifests itself in these two formulations.\\

Let us start with the standard non-gauged formulation of the PCM. It describes the dynamic of a unique $G_0$-valued field $\gb 1(x,t)$. Let us introduce the light-cone coordinates $x^\pm=(t\pm x)/2$, the corresponding derivatives $\p_\pm=\p_t\pm\p_x$ and the light-cone currents $\jb 1_\pm=g^{(1)\,-1}\p_\pm \gb 1$, valued in the Lie algebra $\g_0$ of $G_0$. The equation of motion satisfied by the field $\gb 1(x,t)$ comes from the extremisation of the action
\begin{equation}\label{Eq:IntroActionNonGauged}
S_{\pcm}\bigl[\gb 1] = K \iint \dd x^+ \, \dd x^-\; \kappa\bigl(\jb 1_+,\jb 1_-\bigr),
\end{equation}
where $K$ is a constant and $\kappa(\cdot,\cdot)$ denotes the opposite of the Killing form of $\g_0$. The first step towards establishing the integrability of this model is to recast this equation of motion as the zero curvature equation
\begin{equation}\label{Eq:IntroZCE}
\p_+ \Lc_-(z) - \p_- \Lc_+(z) + \bigl[ \Lc_+(z),\Lc_-(z) \bigr] = 0
\end{equation}
on the Lax pair $\Lc_\pm(z) = \dfrac{\jb 1_\pm}{1 \mp z}$. As for all realisations of affine Gaudin models, this Lax pair is valued in the complexification $\g$ of the Lie algebra $\g_0$ and depends on an auxiliary complex parameter $z$, called the spectral parameter. The zero curvature equation \eqref{Eq:IntroZCE}, true for all $z\in\C$, ensures the existence of an infinite number of conserved charges for the model, extracted from the monodromy matrix of the Lax matrix $\Lc(z)=\frac{1}{2}\bigl( \Lc_+(z)-\Lc_-(z) \bigr)$.

The model is then integrable if these conserved charges are in involution one with another. This is ensured if the Poisson bracket of the Lax matrix $\Lc(z)$ with itself is a non-ultralocal Maillet bracket~\cite{Maillet:1985fn,Maillet:1985ek}. For the PCM, and in fact all realisations of affine Gaudin models, the Lax matrix satisfies a Maillet bracket with a particular form, governed by a rational function of the spectral parameter, called the twist function $\vp(z)$. This function then characterises the integrable structure of the model. For the PCM, it is given by
\begin{equation}\label{Eq:IntroTwistPCM}
\vp_{\pcm}(z) = K\frac{1-z^2}{z^2}.
\end{equation}
To compute this twist function, one first has to perform the canonical analysis of the PCM and obtain its Hamiltonian formulation. Let us say a few words about this formulation. It is expressed in terms of the $G_0$-valued ``coordinate'' field $\gb 1$ and a $\g_0$-valued ``momenta'' field $\Xb 1$, which together describe the canonical fields on the cotangent bundle $T^*G_0$. The algebra of Hamiltonian observables of the PCM is then the Poisson algebra $\Ac_{G_0}$ generated by the fields $\gb 1$ and $\Xb 1$, equipped with the canonical Poisson bracket.\\

Let us now turn to the gauged formulation of the PCM. As mentioned above, it is a field theory on the Cartesian product $G_0\times G_0$ and is thus described by two $G_0$-valued fields $\gb 1(x,t)$ and $\gb 2(x,t)$. Denoting by $\jb 1_\pm$ and $\jb 2_\pm$ the corresponding light-cone currents, the action of the gauged formulation of the PCM is then given by
\begin{equation}\label{Eq:IntroActionGauged}
S^{\text{gauge}}_\pcm \bigl[\gb 1,\gb 2 \bigr] = K \iint \dd x^+\,\dd x^- \; \kappa\bigl( \jb 1_+-\jb 2_+, \jb 1_--\jb 2_- \bigr).
\end{equation}
It can seem surprising at first that the PCM, described above by a single $G_0$-valued field $\gb 1$, also possesses a formulation with two fields $(\gb 1,\gb 2)\in G_0 \times G_0$. This apparent addition of degrees of freedom is in fact counterbalanced by the appearance in the model of a $G_0$ gauge symmetry. Indeed, the action \eqref{Eq:IntroActionGauged} is invariant under the local transformation
\begin{equation}\label{Eq:IntroGaugePCM}
\gb 1(x,t) \longmapsto \gb 1(x,t)h(x,t) \;\;\;\;\;\;\;\;\; \text{ and } \;\;\;\;\;\;\;\;\; \gb 2(x,t) \longmapsto \gb 2(x,t)h(x,t),
\end{equation}
where $h(x,t)$ is an arbitrary function of the space-time coordinates $(x,t)$, valued in the Lie group $G_0$. This gauge symmetry acts on the manifold $G_0\times G_0$, parametrised by the fields $(\gb 1,\gb 2)$, as the right multiplication by the diagonal subgroup $G_0^{\text{diag}}=\bigl\lbrace (h,h), \; h \in G_0 \bigr\rbrace \subset G_0\times G_0$. In particular, this gauge symmetry shows that the fields $\gb 1$ and $\gb 2$ are not independent physical degrees of freedom of the model. To remove this additional non-physical gauge freedom, one can gauge-fix the model, for example by considering  the gauge-fixing condition $\gb 2(x,t)=\Id$ (indeed, it is clear that every orbit of the gauge symmetry \eqref{Eq:IntroGaugePCM} possesses a unique representative satisfying this condition). Under this gauge-fixing, one recovers the original non-gauged action \eqref{Eq:IntroActionNonGauged} of the PCM:
\begin{equation*}
S^{\text{gauge}}_\pcm \bigl[\gb 1,\Id] = S_\pcm \bigl[\gb 1 \bigr].
\end{equation*}

Let us now discuss the integrable structure of the PCM in its gauged formulation. Its equation of motion can also be recast as a zero curvature equation \eqref{Eq:IntroZCE}, on the Lax pair 
\begin{equation*}
\Lct_\pm(\zt)=\frac{\jb 1_\pm+\jb 2_\pm}{2} + \zt^{\pm 1} \frac{ \jb 1_\pm - \jb 2_\pm }{2}.
\end{equation*}
We denoted the spectral parameter of this Lax pair using another notation $\zt$ for reasons to be made clear below. The next step in the study of the integrable structure of the gauged model is to study the Poisson bracket of its Lax matrix $\Lct(\zt)=\frac{1}{2}\bigl( \Lct_+(\zt) - \Lct_-(\zt) \bigr)$. For that, one must first perform the canonical analysis of the model. Because the gauged formulation of the model is on the product $G_0\times G_0$, its algebra of Hamiltonian observables is the tensor product $\Ac_{G_0}^{\otimes 2}$ of two copies of the algebra $\Ac_{G_0}$ of canonical fields on $T^*G_0$, generated by $G_0$-valued fields $\gb 1$ and $\gb 2$ and $\g_0$-valued fields $\Xb 1$ and $\Xb 2$. The invariance of the model under the gauge symmetry \eqref{Eq:IntroGaugePCM} implies the existence of a first-class constraint in the Hamiltonian formulation (see for example~\cite{dirac1964lectures,Henneaux:1992ig} for a review about the Dirac formalism of constrained Hamiltonian systems and its relation with gauge symmetries). In the present case, this constraint simply reads
\begin{equation}\label{Eq:IntroConstraintPCM}
\Xb 1+\Xb 2 \approx 0.
\end{equation}
One can express the Lax matrix $\Lct(\zt)$ of the model in terms of the Hamiltonian fields $\gb 1$, $\gb 2$, $\Xb 1$ and $\Xb 2$, modulo the freedom of adding to it any term proportional to the constraint \eqref{Eq:IntroConstraintPCM}. Choosing this term carefully, one then obtains a Lax matrix $\Lct(\zt)$ satisfying a Maillet bracket\footnote{This bracket is in fact strong, in the sense that it holds even without having to impose the constraint \eqref{Eq:IntroConstraintPCM}.} with twist function $\vpt_\pcm(\zt)$. The latter is given by
\begin{equation}\label{Eq:IntroTwistPCMGauged}
\vpt_\pcm(\zt) = \frac{8K\,\zt}{(\zt^2-1)^2}.
\end{equation}

The integrable structure of the gauged and non-gauged formulation of the PCM are not independent. Under the gauge-fixing condition $\gb 2=\Id$, one verifies easily that the Lax pairs of the two formulations correspond to the same Lax pair $\Lc_\pm(z)=\Lct_\pm(\zt)$, expressed with two different spectral parameters $z$ and $\zt$ related by the transformation
\begin{equation}\label{Eq:IntroChangeSpecPCM}
z \longmapsto \zt=f(z), \;\;\;\;\;\;\; \text{ with } \;\;\;\;\;\;\; f(z) = \frac{1+z}{1-z}.
\end{equation}
Moreover, one checks that the twist functions $\vp(z)$ and $\vpt(\zt)$ of the two formulations in fact correspond to the same 1-form
\begin{equation}\label{Eq:IntroTwist1formPCM}
\vp_\pcm(z)\,\dd z = \vpt_\pcm(\zt) \,\dd \zt  = \vpt_\pcm\bigl(f(z)\bigr) f'(z)\,\dd z,
\end{equation}
expressed in terms of the two different coordinates $z$ and $\zt$.

\paragraph{Gauged and non-gauged realisations of affine Gaudin models.} Let us now come back to realisations of affine Gaudin models and describe how the results presented above for the PCM generalise in this more general framework. We start by briefly describing the structure of these models, based on~\bg\footnote{In the nomenclature of~\bg, we are interested here only in the non-cyclotomic affine Gaudin models.}. One of the fundamental object describing an affine Gaudin model is its twist function $\vp(z)$. In particular, its pole structure describes the so-called sites $\Si$ of the model: each pole $z_\alpha$ of the 1-form $\vp(z)\dd z$ corresponds to a site $\alpha\in\Si$ of the model ($z_\alpha$ is then called the position of the site $\alpha$). For a site at position $z_\alpha\in\C$, let $m_\alpha\in\mathbb{Z}_{\geq 1}$ be the multiplicity of the pole of $\vp(z)$ at $z=z_\alpha$ and $\ls\alpha p$, $p\in\lbrace 0,\cdots,m_\alpha-1 \rbrace$, be the corresponding coefficient of $(z-z_\alpha)^{-p-1}$ in the partial fraction decomposition of $\vp(z)$. The numbers $\ls \alpha p$ are called the levels of the affine Gaudin model. To each site $\alpha$ is attached $m_\alpha$ currents $\J\alpha p(x)$ ($\alpha\in\Si$, $p\in\lbrace 0,\cdots,m_\alpha-1 \rbrace$), called Takiff currents. These are Lie algebra valued fields belonging to the Poisson algebra $\Ac$ describing the Hamiltonian observables of the model and satisfying particular Poisson brackets depending on the levels $\ls \alpha p$.\\

The affine Gaudin models considered in the article~\nms possess a particular structure. Indeed, their twist function $\vp(z)$ is such that the 1-form $\vp(z)\dd z$ possesses a double pole at $z=\infty$, corresponding to a constant term in their twist function. According to the nomenclature introduced above, these models then possess a site of multiplicity two at infinity. As explained in the general construction~\bg, a site at infinity is treated in a slightly different way than the sites at finite positions $z_\alpha\in\C$. In particular, in~\nm, there are no Takiff currents attached to this site and the degrees of freedom of the model then only come from the finite sites.

To illustrate this, let us come back to the example of the PCM, in its non-gauged formulation. Its twist function is given by Equation \eqref{Eq:IntroTwistPCM}: in particular, it possesses two double poles at $z=\infty$ and at $z=0$. The first pole corresponds to a site at infinity, which according to the discussion above, is not associated with any degrees of freedom. The double pole at $z=0$ corresponds to a site $(1)$ with multiplicity $2$, which is then associated with two Takiff currents $\J{(1)}0(x)$ and $\J{(1)}1(x)$, expressed in terms of the canonical fields $\gb 1$ and $\Xb 1$ generating the algebra of observables $\Ac_{G_0}$ of the model. The degrees of freedom of the PCM are then attached to its site $(1)$: following the terminology of~\nm, we say that the site $(1)$ is associated with a PCM realisation in $\Ac_{G_0}$. Similarly, one can describe the PCM with Wess-Zumino term as a realisation of affine Gaudin model with a site at infinity (which does not come with any degrees of freedom attached) and a finite site $(1)$ of multiplicity 2, to which is attached a so-called PCM+WZ realisation, also valued in the algebra $\Ac_{G_0}$ but whose Takiff currents are a deformation of the ones of the PCM realisation. Finally, the integrable coupled $\s$-model introduced in~\cite{Delduc:2018hty,Delduc:2019bcl} is obtained as a realisation of affine Gaudin model with a site at infinity (with no degrees of freedom attached) and $N$ finite sites of multiplicity two, each associated with a PCM+WZ realisation. Thus, to each site of the model is attached a copy of the algebra $\Ac_{G_0}$ and the space of all observables of this model is then the tensor product $\Ac_{G_0}^{\otimes N}$. \\

The models described in~\nms do not possess a gauge symmetry. It was explained in~\bgs how to obtain an affine Gaudin model with gauge symmetry. Let us recall that affine Gaudin models are defined in the Hamiltonian formalism. As mentioned earlier for the PCM, in this formalism, gauge symmetries are associated with the presence of first-class constraints in the model. It was proven in~\bgs that such a constraint can be naturally constructed and consistently imposed in affine Gaudin models whose twist function is such that $\vp(z)\dd z$ is regular at $z=\infty$. In particular, as the models of~\nms possess a site of multiplicity two at infinity, they do not belong to this class. The general construction of these gauged realisations of affine Gaudin models and their properties are discussed in details in the section \ref{Sec:AGM} of this article.

An example of such a gauged realisation of affine Gaudin model is given by the gauged formulation of the PCM described above. Indeed, one checks that its twist function \eqref{Eq:IntroTwistPCMGauged} is such that the 1-form $\vpt_{\text{PCM}}(\zt)\dd \zt$ is regular at $\zt=\infty$ (following the notations used earlier, we denote by $\zt$ the spectral parameter used in the gauged formulation of the PCM). This model does not possess a site at infinity but possesses two finite sites of multiplicity 2 at $\zt=+1$ and $\zt=-1$. Each of these sites is associated with a PCM realisation in $\Ac_{G_0}$, corresponding to the canonical fields $(\gb 1,\Xb 1)$ for the site at $\zt=+1$ and $(\gb 2,\Xb 2)$ for the site at $\zt=-1$. The constraint naturally associated with this affine Gaudin model (following the construction of~\bg) coincides with the one \eqref{Eq:IntroConstraintPCM} described above, which generates the gauge symmetry \eqref{Eq:IntroGaugePCM}.

\paragraph{Gauging procedure.} One of the main result of this article, which is the subject of Section \ref{Sec:Gauging}, states that any non-gauged realisation of affine Gaudin model as considered in~\nms admits an equivalent formulation with a gauge symmetry. Let us sketch how this gauged formulation is constructed. As explained in the previous paragraphs, the model we start with cannot possess a gauge symmetry as it has a site at infinity, corresponding to a constant term in its twist function but which is not associated with any degree of freedom of the model. To obtain a model with gauge symmetry, let us then change what we consider to be the point at infinity by performing a change of spectral parameter
\begin{equation*}
z \longmapsto \zt = f(z), \;\;\;\;\;\;\; \text{ with } \;\;\;\;\;\;\; f(z) = \frac{az+b}{cz+d}.
\end{equation*}
In this equation, $a,b,c$ and $d$ are real numbers such that $ad-bc\neq 0$, making $f$ a M\"obius transformation, \textit{i.e.} a biholomorphism from the Riemann sphere $\CP$ to itself. The gauged model is then constructed as a realisation of affine Gaudin model with twist function $\vpt(\zt)$ defined from the twist function $\vp(z)$ of the non-gauged formulation by imposing
\begin{equation}\label{Eq:IntroTwist1form}
\vp(z)\dd z = \vpt(\zt) \dd \zt.
\end{equation}
The change of spectral parameter considered here is a generalisation of the one \eqref{Eq:IntroChangeSpecPCM} considered earlier in this introduction for the PCM. The definition of the twist function $\vpt(\zt)$ through the relation \eqref{Eq:IntroTwist1form} can be thus motivated from the example of the PCM by the similar relation \eqref{Eq:IntroTwist1formPCM} that we observed in this case.

The sites of the gauged model correspond to the poles of the 1-form $\vpt(\zt)\dd\zt$, which are the images under the M\"obius transformation $f$ of the poles of $\vp(z)\dd z$. An important property of this transformation is that it sends the point $z=\infty$ to a finite point at $\zt=f(\infty)=a/c$ (we suppose here that $c$ is non-zero). Thus, the site at infinity of the model we started with now becomes a finite site of multiplicity two at position $\zi=a/c$, which we shall denote as $\is$ to keep track of its status in the initial model. The finite sites $\alpha\in\Si$ of the initial model, with positions $z_\alpha\in\C$, are sent under the change of spectral parameter to $\zt_\alpha=f(z_\alpha)$. We will suppose that the map $f$ is such that all $\zt_\alpha$ are in the finite complex plane, so that all the sites $\alpha\in\Si$ stay finite after performing the change of spectral parameter $z\mapsto\zt=f(z)$. In particular, this means that all poles of the 1-form $\vp(z)\dd z$ are sent to finite points by $f$: thus, the 1-form $\vpt(\zt)\dd\zt$ is regular at $\zt=\infty$. According to the discussion above, this means that we can impose a first-class constraint to the model with twist function $\vpt(\zt)$ and thus that this model possesses a gauge symmetry.\\

We first have to achieve the construction of this model. In particular, we still have to attach to each of its sites some Takiff currents $\Jtt\alpha p(x)$ (whose Poisson brackets depend on the levels $\lst\alpha p$ extracted from the partial fraction decomposition of $\vpt(\zt)$). Let us first consider a site $\alpha\in\Si$ which had a finite position $z=z_\alpha\in\C$ in the model we started with and is now at position $\zt=\zt_\alpha\in\C$. In the initial non-gauged model, this site is associated with Takiff currents $\J\alpha p(x)$, in the algebra of observables $\Ac$ of this model. As we will show in the main body of this article, one can construct naturally the Takiff currents $\Jtt\alpha p(x)$ attached to the site $\alpha$ in the gauged model as a linear combination of these initial currents $\J\alpha p(x)$. In particular, these sites are still associated with the same degrees of freedom in the algebra $\Ac$.

Let us now turn to the site $\is$. In the initial model, this site was situated at infinity and thus was not associated with any Takiff currents (see discussion above). Thus, to realise the Takiff currents $\Jtt\is 0(x)$ and $\Jtt\is 1(x)$ attached to this site in the new model (recall that this site is of multiplicity two and is thus associated with two Takiff currents), one has to introduce new degrees of freedom in the model. We shall do it by attaching to the site $\is$ a PCM+WZ realisation in the algebra $\Ac_{G_0}$: indeed, as explained above, this is a realisation of multiplicity two. The new degrees of freedom added to the model are thus the canonical fields $g(x)$ and $X(x)$ generating the algebra $\Ac_{G_0}$ and the algebra of observables of the gauged model is then $\Act=\Ac\otimes \Ac_{G_0}$. These additional degrees of freedom are the equivalent of the fields $\gb 2$ and $\Xb 2$ introduced in the gauged formulation of the PCM above.

As in the case of the PCM, this apparent addition of new degrees of freedom is counterbalanced by the appearance of a constraint $\Cct(x)\approx 0$ in the model. Indeed, as we will show in the main text of this article, this constraint can be used to express the additional field $X(x)$ in the component $\Ac_{G_0}$ of $\Act$ in terms of the other degrees of freedom of the model. Moreover, the introduction of the first-class constraint $\Cct(x)\approx 0$ also comes with the appearance of a gauge-symmetry, generated by the constraint. As we shall also show, this gauge symmetry acts on the field $g(x)$ in $\Ac_{G_0}$ by the right-multiplication
\begin{equation*}
g(x) \longmapsto g(x) h(x,t),
\end{equation*}
with $h(x,t)$ a local parameter in $G_0$. Thus, this gauge symmetry can be used to eliminate totally the field $g(x)$, for example by imposing the gauge-fixing condition $g(x)=\Id$. As the constraint and the gauge-fixing allow to eliminate both fields $g(x)$ and $X(x)$, and as these fields generate the component $\Ac_{G_0}$ of the algebra $\Act=\Ac\otimes\Ac_{G_0}$, we see that the physical degrees of freedom of the model (after gauge-fixing) are described by the component $\Ac$ of $\Act$. Let us summarise: the new model that we constructed possesses additional degrees of freedom $g$ and $X$ (attached to the site $\is$) compared to the initial one ; however, these degrees are not physical and can be eliminated by the constraint and the gauge symmetry of the model, hence reducing the space of observables of the model to the initial one $\Ac$.

In the main text of this article, we will finally prove that the gauged model with spectral parameter $\zt$, when considered after the gauge fixing $g=\Id$, is equivalent to the initial model we started with, by proving that these models share the same dynamics. The construction sketched above then provides a systematic gauging procedure of the models considered in~\nm, which generalises the example of the PCM introduced above as a motivation (the fact that the gauging of the PCM described above indeed fits into this more general formalism is proved in Subsection \ref{SubSec:PCM} of this article). In particular, this procedure can now be applied to the integrable coupled $\s$-model introduced in~\prl: as mentioned earlier in this introduction, it then starts from the non-gauged formulation of this model as a $\s$-model on $G_0^N$ (recalled in Subsection \ref{SubSec:NonGaugedSigma}) and realises it as a gauged model on $G_0^{N+1}$, with a gauge symmetry acting as the right multiplication by the subgroup $G_0^{\text{diag}}=\bigl\lbrace (h,\cdots,h),\;h\in G_0 \bigr\rbrace$ of $G_0^{N+1}$, as detailed in Subsection \ref{SubSec:GaugedSigma}.

\paragraph{Diagonal Yang-Baxter deformations.} As already announced at the beginning of this introduction, the gauging procedure described above finds a concrete application in the development of the so-called diagonal Yang-Baxter deformation procedure, described in Subsection \ref{SubSec:DiagYB} of this article. Let us explain briefly the principle behind this construction. As explained above, the gauging procedure goes through the introduction of new degrees of freedom, taking the form of a PCM+WZ realisation attached to the site $\is$ of the gauged model. As explained in~\nm, models which possess such a realisation can be deformed by replacing this PCM+WZ realisation by a (homogeneous or inhomogeneous) Yang-Baxter realisation\footnote{Note that for simplicity, we actually consider in this article only deformations of the PCM realisation without Wess-Zumino term. We shall not enter further into these technical details in this introduction.}.

Applying such a construction to the gauged model obtained after the gauging procedure, one then gets an integrable deformation of this model, which also possesses a gauge symmetry. Gauge-fixing this symmetry then provides us with an integrable deformation of the non-gauged model we started with. We developp this idea in details in Subsection \ref{SubSec:DiagYB}. Note that Yang-Baxter deformations generally come with the breaking of a global symmetry of the model. In the present case, we show that this deformation breaks the global diagonal symmetry of the model, which is a natural symmetry of all models considered in~\nm, acting on all the Takiff currents at each site simultaneously. As announced at the beginning of this introduction, this is one of the main motivation of this article.

As an application of the general deformation procedure described in Subsection \ref{SubSec:DiagYB}, we construct in Subsection \ref{SubSec:YBSigma} the homogeneous diagonal Yang-Baxter deformation of the integrable coupled $\s$-model introduced in~\prl. In particular, this deformation breaks the diagonal symmetry of this model, which acts on the $N$ $G_0$-valued fields $\gb 1,\cdots,\gb N$ by a simultaneous right-multiplication $(\gb 1,\cdots,\gb N) \mapsto (\gb 1 h,\cdots,\gb N h)$, with $h\in G_0$ constant.

\paragraph{Plan of this article.} The plan of this article is the following. The section \ref{Sec:AGM} is devoted to the construction and the study of constrained realisations of affine Gaudin models and their gauge symmetry. Subsection \ref{SubSec:PhaseSpace} introduces the main ingredients of realisations of affine Gaudin models: Takiff currents, twist function, Gaudin Lax matrix and spectral parameter dependent quadratic Hamiltonian. Subsection \ref{SubSec:Gauge} concerns the introduction of a gauge symmetry in these models, using the Dirac formalism of first-class constraints. In Subsection \ref{SubSec:Zeroes}, we introduce quadratic charges associated with the zeroes of the twist function and explain their relation with the Hamiltonian and momentum of these models. In Subsection \ref{SubSec:Integrability}, we prove the integrability of constrained realisations of affine Gaudin models by constructing a Lax representation of their equations of motion and showing that their Lax matrix satisfies a Maillet bracket. In Subsection \ref{SubSec:SpaceTime}, we use the parametrisation of the Hamiltonian and momentum in terms of the quadratic charges associated with the zeroes of the twist function to study the space-time symmetries of the models and in particular give a simple condition for their relativistic invariance. Subsection \ref{SubSec:Reality} is a summary about reality conditions. Finally, in Subsection \ref{SubSec:ChangeSpec}, we show that two constrained realisations of affine Gaudin models related by an appropriate change of the spectral parameter are equivalent. Subsections \ref{SubSec:PhaseSpace}, \ref{SubSec:Gauge}, \ref{SubSec:Integrability} and \ref{SubSec:Reality} are mainly a review of the article~\bg, in a language similar to the one of the reference~\nm. Subsections \ref{SubSec:Zeroes}, \ref{SubSec:SpaceTime} and \ref{SubSec:ChangeSpec} contain new results, which are mostly the generalisations to constrained models of results found in~\nms for non-constrained ones.

Section \ref{Sec:Gauging} contains the most important results of this article. Subsections \ref{SubSec:NonConst} to \ref{SubSec:Gauging} are devoted to the gauging procedure which relates the non-constrained models of~\nms to the constrained ones considered in this article. Subsection \ref{SubSec:NonConst} is a brief review of~\nms about the construction and the properties of the non-constrained model the gauging procedure starts with. Subsection \ref{SubSec:ChangeSpec2} contains the construction of a gauged model, obtained from the non-gauged one by performing an appropriate change of spectral parameter. Subsection \ref{SubSec:Equivalence} is devoted to the proof of the equivalence of these two models through a well-chosen gauge-fixing condition. Finally, Subsection \ref{SubSec:Gauging} summarises the gauging procedure and studies its effects on the symmetries of the model and its integrable structure. The section ends with Subsection \ref{SubSec:DiagYB}, which is an application of the gauging procedure developed before: the diagonal Yang-Baxter deformation procedure.

Section \ref{Sec:SigmaModels} is concerned with the applications of the results of Section \ref{Sec:Gauging} to integrable $\s$-models. In Subsection \ref{SubSec:PCM}, we prove that the previously known gauged formulation of the PCM recalled in this introduction as a motivation for this article indeed fits into the general gauging procedure developed in Section \ref{Sec:Gauging}. In Subsection \ref{SubSec:NonGaugedSigma}, we recall from~\nms the construction of the integrable coupled $\s$-model (first introduced in~\prl) as a non-constrained realisation of affine Gaudin model. We then apply the gauging procedure of Section \ref{Sec:Gauging} to this model and describe its gauged formulation in Subsection \ref{SubSec:GaugedSigma}. Finally, we use this gauged formulation to construct the homogeneous diagonal Yang-Baxter deformation of this model in Subsection \ref{SubSec:YBSigma}, based on the general procedure developed in Subsection \ref{SubSec:DiagYB}.

Some technical results are gathered in Appendices \ref{App:Constraints} to \ref{App:Identities}. Appendix \ref{App:Constraints} is a review about the Dirac treatment of constrained Hamiltonian field theories and its relation with gauge symmetries. Appendix \ref{App:ChangeSpec} contains the proof of a result used in Subsections \ref{SubSec:ChangeSpec} and \ref{SubSec:ChangeSpec2} about the way Takiff currents are modified under a change of spectral parameter. Appendix \ref{App:Realisations} recalls the construction of the PCM+WZ and inhomogeneous realisations. Appendix \ref{App:Dirac} contains the details of the computation of certain Maillet brackets using the Dirac bracket. Finally, Appendix \ref{App:Identities} contains a reinterpretation of the coefficients appearing in both the non-gauged and gauged actions of the coupled integrable $\s$-model of~\prl~as residues of certain quantities and the application of that result to the derivation of some identities involving these coefficients.

\section{Realisations of Affine Gaudin Models with gauge symmetries}
\label{Sec:AGM}

In this section, we explain how to construct local affine Gaudin Models (AGM) with gauge symmetries, as originally introduced in~\cite{Vicedo:2017cge}. The presentation that we shall use here follows closely the one of~\cite{Delduc:2019bcl}, which concerned local AGM without gauge symmetries and their realisations. As a consequence, we will discuss only briefly the aspects which are common to both AGM with and without gauge symmetries and refer to~\cite{Vicedo:2017cge} and~\cite{Delduc:2019bcl} for more details. Thus, we will focus more here on the aspects which are related to the introduction of a gauge symmetry in the formalism of local AGM and their realisations.

\subsection{Takiff currents, Gaudin Lax matrix and quadratic hamiltonians}
\label{SubSec:PhaseSpace}

\subsubsection{Conventions}
\label{SubSubSec:Conventions}

Through this article, we will follow the conventions of~\cite{Delduc:2019bcl}. In particular, we will consider Hamiltonian field theories on a one-dimensional space $\D$, parametrised by a coordinate $x$. This space $\D$ can be chosen to be either the real line $\R$, in which case $x\in ]-\infty, +\infty[$, or the circle $\mathbb{S}^1$, in which case $x\in[0,2\pi[$. In the first case, we suppose that the fields of our theory are periodic and in the second case that they decrease sufficiently fast at infinity.

Moreover, we fix a finite-dimensional semi-simple complex Lie algebra $\g$ and a real form $\g_0$ of $\g$, which can be seen as the subalgebra of fixed-points under an involutive antilinear automorphism $\tau$ of $\g$. We denote by $\kappa$ the opposite of the Killing form of $\g$ and by $C\ti{12}$ the corresponding split-quadratic Casimir in $\g \otimes \g$, which then satisfies the identities
\begin{equation}\label{Eq:ComCasimir}
\bigl[ C\ti{12}, X\ti{1} + X\ti{2} \bigr] = 0, \;\;\;\;\;\; \forall \, X\in\g
\end{equation}
and
\begin{equation*}
\kappa\ti{1}\left( C\ti{12}, X\ti{1} \right) = X, \;\;\;\;\;\; \forall \, X\in\g,
\end{equation*}
where we used the standard tensorial notations $\textbf{\underline{i}}$.

Let us fix a basis $\lbrace I^a \rbrace_{a=,1,\cdots,\dim\g_0}$ of $\g_0$ (which is then also a basis of $\g$ over $\C$). We denote by $\kappa^{ab}=\kappa(I^a,I^b)$ the evaluation of the bilinear form $\kappa$ on this basis and $\kappa_{ab}$ its inverse, which satisfies $\kappa^{ac}\kappa_{cb}=\delta^a_{\,b}$. The split-quadratic Casimir is then given by
\begin{equation*}
C\ti{12} = \kappa_{ab} \, I^a \otimes I^b.
\end{equation*}
We will denote by $\ft {ab}c$ the structure constants of $\g_0$ in the basis $\lbrace I^a \rbrace_{a=,1,\cdots,\dim\g_0}$, defined by
\begin{equation*}
[ I^a, I^b ] = \fs {ab}c I^c.
\end{equation*}

\subsubsection{Takiff currents}
\label{SubSubSec:Takiff}

\paragraph{Takiff algebra.} Following the notations of~\cite{Delduc:2019bcl}, we consider the algebra $\Tc_{\lt}$ generated by a finite set of $\g$-valued \textit{Takiff currents} $\Jt\alpha p(x)$ on $\D$. These currents are formally attached to \textit{sites}, represented by labels $\alpha$ in an abstract finite set $\Si$. For each site $\alpha\in\Si$, these currents are further labelled by their \textit{Takiff mode} $p$, which is an integer ranging from $0$ to $m_\alpha-1$, where $m_\alpha\in\Z_{\geq 1}$ is called the \textit{multiplicity} of the site $\alpha$. The set of sites $\Si=\Si_\rd \sqcup \Si_\cd \sqcup \overline{\Si}_\cd$ is separated into real sites $\alpha\in\Si_\rd$, complex sites $\alpha\in\Si_\cd$ and conjugate sites $\overline{\alpha}\in\overline{\Si}_\cd$. The complex and conjugate sites come in pairs $\alpha\in\Si_\cd$ and $\overline{\alpha}\in\overline{\Si}_\cd$, with same multiplicities $m_\alpha=m_{\overline{\alpha}}$. These denominations are justified by the requirement that the currents $\Jt\alpha p$ are real, \textit{i.e.} $\g_0$-valued, for $\alpha\in\Si_\rd$ and complex, \textit{i.e.} $\g$-valued, and conjugate one-to-another for pairs $\alpha\in\Si_\cd$ and $\overline{\alpha}\in\overline{\Si}_\cd$. In other words, we have
\begin{equation*}
\tau\bigl(\Jt\alpha p \bigr) = \Jt\alpha p \;\;\; \text{ for } \;\;\; \alpha\in\Si_\rd \;\;\;\;\;\; \text{ and } \;\;\;\;\;\; \tau\bigl(\Jt\alpha p \bigr) = \Jt{\overline{\alpha}} p \;\;\; \text{ for } \;\;\; \alpha\in\Si_\cd.
\end{equation*}

The algebra of observables of a local AGM is the algebra $\Tc_{\lt}$ equipped with a Poisson bracket $\lbrace \cdot,\cdot \rbrace_{\Tc_{\lt}}$, making it a Poisson algebra. This Poisson bracket is defined at the level of the currents by
\begin{equation}\label{Eq:Takiff}
\left\lbrace \Jt\alpha p\,\ti{1}(x), \Jt\beta q\,\ti{2}(y) \right\rbrace_{\Tc_{\lt}} = \delta_{\alpha\beta} \left\lbrace \begin{array}{ll}
\left[ C\ti{12}, \Jt\alpha{p+q}\,\ti{1}(x) \right] \delta_{xy} - \ls\alpha{p+q}\, C\ti{12}\, \delta'_{xy} & \text{ if } p+q < m_\alpha \\[5pt]
0 & \text{ if } p+q \geq m_\alpha
\end{array}  \right. ,
\end{equation}
for all $\alpha,\beta\in\Si$, $p\in\lbrace 0,\cdots,m_\alpha-1\rbrace$ and $q\in\lbrace 0,\cdots,m_\beta-1\rbrace$. In this equation, $\delta_{xy}$ denotes the Dirac distribution $\delta(x-y)$ on $\D$ and $\delta'_{xy}$ its derivative $\p_x\delta(x-y)$. Finally, the $\ls\alpha p$'s, for $\alpha\in\Si$ and $p\in\lbrace 0,\cdots,m_\alpha\rbrace$, are numbers that we suppose real for $\alpha\in\Si_\rd$ and complex conjugate one-to-another for pairs $\alpha\in\Si_\cd$ and $\overline{\alpha}\in\overline{\Si}_\cd$. The data defining the Poisson algebra $\Tc_{\lt}$ is encoded in a compact way in the notation $\lt$, called the \textit{Takiff datum}, which is defined as
\begin{equation*}
\lt = \Bigl( \bigl( \ls{\alpha}{p} \bigr)^{\alpha \in \Si_\rd}_{p\in\lbrace 0,\cdots,m_\alpha-1\rbrace}, \bigl( \ls{\alpha}{p} \bigr)^{\alpha \in \Si_\cd}_{p\in\lbrace 0,\cdots,m_\alpha-1\rbrace} \Bigr).
\end{equation*}

\paragraph{Generalised Segal-Sugawara integrals.} We end this paragraph by a brief reminder about \textit{generalised Segal-Sugawara integrals}. These are defined, for $\alpha\in\Si$ and $p\in\lbrace 0,\cdots,m_\alpha-1\rbrace$, as
\begin{equation*}
D^\alpha_{[p]} = \frac{1}{2} \sum_{\substack{q,r=0 \\ q+r \geq p}}^{m_\alpha-1} \kb\alpha{q+r-p}  \int_\D \dd x \; \kappa\! \left( \Jt\alpha q(x), \Jt\alpha r(x) \right),
\end{equation*}
where the numbers $\kb \alpha p$ form the unique solution of the system of linear equations
\begin{equation}\label{Eq:DefKb}
\forall\, q,r \in \lbrace 0,\cdots,m_\alpha-1 \rbrace, \;\;\;\;\;\; \sum_{p=0}^{m_\alpha-1-r} \kb\alpha{p+q}\,\ls\alpha{p+r} = \delta_{q,r}.
\end{equation}
We refer to~\cite{Vicedo:2017cge,Delduc:2019bcl} for more details about the numbers $\kb \alpha p$ and the generalised Segal-Sugawara integrals. The main property of these integrals is the fact that they satisfy the following Poisson brackets:
\begin{equation}\label{Eq:PbSS}
\left\lbrace D^\alpha_{[p]}, D^\beta_{[q]} \right\rbrace = 0 \hspace{20pt} \text{and} \hspace{20pt} \left\lbrace D^\alpha_{[p]}, \Jt\beta q(x) \right\rbrace = \delta_{\alpha\beta}\left\lbrace \begin{array}{ll}
\p_x \Jt\alpha {p+q}(x) & \text{ if } p+q < m_\alpha \\[5pt]
0 & \text{ if } p+q \geq m_\alpha
\end{array}  \right.
\end{equation}
In particular, the momentum $\Pc_{\lt}$ of the Poisson algebra $\Tc_{\lt}$, whose Poisson bracket generates the derivative $\p_x$ with respect to the spatial coordinate $x$, is then given by
\begin{equation}\label{Eq:Momentum}
\Pc_{\lt} = \sum_{\alpha\in\Si} D^\alpha_{[0]}.
\end{equation}

Finally, let us recall that Equation \eqref{Eq:DefKb}, which defines the numbers $\kb\alpha p$, only admits a solution if the highest level $\ls \alpha {m_\alpha-1}$ of each site is non-zero. We shall suppose that this is always the case in the rest of this article.

\subsubsection{Realisations of local AGM}
\label{SubSubSec:Real}

As explained above, the algebra of observables of a local AGM is the Poisson algebra $\Tc_{\lt}$. In this article, as in~\cite{Delduc:2019bcl}, we will be interested in \textit{realisations of local AGM}, whose algebras of observables are \textit{Takiff realisations} of $\Tc_{\lt}$. Such a realisation is a Poisson algebra $(\Ac,\lbrace\cdot,\cdot\rbrace)$, together with a conjugacy-equivariant Poisson map
\begin{equation*}
\pi: \Tc_{\lt} \longrightarrow \Ac.
\end{equation*}
It is uniquely characterised by the currents
\begin{equation}\label{Eq:PiReal}
\J\alpha p(x) = \pi \bigl( \Jt\alpha p(x) \bigr), \;\;\;\;\;\; \text{ for } \; \alpha \in \Si \; \text{ and } \; p\in\lbrace 0,\cdots, m_\alpha-1 \rbrace.
\end{equation}
These are fields in the Poisson algebra $\Ac$, satisfying the same Poisson bracket as the abstract Takiff current $\Jt\alpha p(x)$, \textit{i.e.} such that
\begin{equation}\label{Eq:TakiffReal}
\left\lbrace \J\alpha p\,\ti{1}(x), \J\beta q\,\ti{2}(y) \right\rbrace = \delta_{\alpha\beta} \left\lbrace \begin{array}{ll}
\left[ C\ti{12}, \J\alpha{p+q}\,\ti{1}(x) \right] \delta_{xy} - \ls\alpha{p+q}\, C\ti{12}\, \delta'_{xy} & \text{ if } p+q < m_\alpha \\[5pt]
0 & \text{ if } p+q \geq m_\alpha
\end{array}  \right. .
\end{equation}
Moreover, they also satisfy the same reality conditions:
\begin{equation}\label{Eq:RealityJc}
\tau \bigl( \J\alpha p(x) \bigr) = \J\alpha p(x), \;\;\;\; \forall \, \alpha\in\Si_\rd \;\;\;\;\;\;\;\;\; \text{ and } \;\;\;\;\;\;\;\; \tau \bigl( \J\alpha p(x) \bigr) = \J{\bar\alpha} p(x), \;\;\;\; \forall \, \alpha\in\Si_\cd.
\end{equation}
Conversely, a set of currents $\J\alpha p(x)$ in a Poisson algebra $\Ac$ satisfying Equations \eqref{Eq:TakiffReal} and \eqref{Eq:RealityJc} defines a Takiff realisation of $\Tc_{\lt}$ by \eqref{Eq:PiReal}.

As in~\cite{Delduc:2019bcl}, it is useful to introduce the notion of a \textit{suitable realisation}, defined by the requirement that $\Pc_{\Ac} = \pi\bigl(\Pc_{\lt}\bigr)$, where $\Pc_{\Ac}$ and $\Pc_{\lt}$ are respectively the momentum of the Poisson algebras $\Ac$ and $\Tc_{\lt}$. From the expression \eqref{Eq:Momentum} of $\Pc_{\lt}$, it is clear that the realisation is suitable if and only if
\begin{equation}\label{Eq:MomentumReal}
\Pc_{\Ac} =  \sum_{\alpha\in\Si} \Dc\alpha 0,
\end{equation}
where for $\alpha\in\Si$ and $p\in\lbrace 0,\cdots,m_\alpha-1 \rbrace$, we defined
\begin{equation}\label{Eq:ImageSS}
\Dc\alpha p = \pi \bigl( D^\alpha_{[p]} \bigr) = \frac{1}{2} \sum_{\substack{q,r=0 \\ q+r \geq p}}^{m_\alpha-1} \kb\alpha{q+r-p}  \int_\D \dd x \; \kappa\! \left( \J\alpha q(x), \J\alpha r(x) \right),
\end{equation}
the generalised Segal-Sugawara integral $D^{\alpha}_{[p]}$ seen in the realisation $\Ac$.

\subsubsection{Gaudin Lax matrix and twist function}

As in~\cite{Vicedo:2017cge,Delduc:2019bcl}, we define the \textit{Gaudin Lax matrix} of the model as
\begin{equation}\label{Eq:S}
\Sg(z,x) = \sum_{\alpha\in\Si} \sum_{p=0}^{m_\alpha-1} \frac{\J\alpha p(x)}{(z-\po_\alpha)^{p+1}}.
\end{equation}
In this expression, $z$ is the spectral parameter, \textit{i.e.} an auxiliary variable taking values in the complex plane $\C$, and the numbers $\po_\alpha\in\C$, $\alpha\in\Si$, are called the \textit{positions} of the sites and are part of the defining parameters of the model. We suppose that they obey the reality conditions
\begin{equation}\label{Eq:zReal}
\po_\alpha \in \R \;\; \text{ for } \alpha\in\Si_\rd \;\;\;\;\;\; \text{ and } \;\;\;\;\;\; \overline{\po_\alpha} = \po_{\overline{\alpha}} \;\; \text{ for } \alpha\in\Si_\cd
\end{equation}
and encode them into a unique object
\beqz
\pb = \Bigl( ( \po_\alpha )_{\alpha \in \Si_\rd}, ( \po_\alpha )_{\alpha \in \Si_\cd} \Bigr).
\eeqz

The Poisson bracket \eqref{Eq:TakiffReal} of the Takiff currents $\J\alpha p$ translates into the following bracket for the Gaudin Lax matrix~\cite{Vicedo:2017cge,Delduc:2019bcl}
\begin{equation}\label{Eq:PbGaudin}
\bigl\lbrace \Sg\ti{1}(z,x), \Sg\ti{2}(w,y) \bigr\rbrace = \left[ \frac{C\ti{12}}{w-z}, \Sg\ti{1}(z,x)-\Sg\ti{1}(w,x) \right] \delta_{xy} - \bigl(\vp(z)-\vp(w)\bigr) \frac{C\ti{12}}{w-z} \delta'_{xy},
\end{equation}
where $\vp$ is a rational function of the spectral parameter, called the \textit{twist function}. It is defined in terms of the levels $\ls\alpha p$ as
\begin{equation}\label{Eq:Twist}
\vp(z) = \sum_{\alpha\in\Si} \sum_{p=0}^{m_\alpha-1} \frac{\ls\alpha p}{(z-\po_\alpha)^{p+1}}.
\end{equation}
It is clear that Equation \eqref{Eq:PbGaudin} is also satisfied if one adds to the twist function $\vp(z)$ a constant term. This is the situation considered in~\cite{Delduc:2019bcl}, where this constant term was supposed to be non-zero. In this article, we shall consider in the contrary that there is no such constant term, so that the twist function is simply given by \eqref{Eq:Twist}. As we will see in Subsection \ref{SubSec:Gauge}, this is one of the conditions necessary to obtain a realisation of local AGM with a gauge symmetry. The constant term in $\vp(z)$ considered in~\cite{Delduc:2019bcl} corresponds to a double pole of the 1-form $\vp(z)\dd z$ at $z=\infty$\footnote{As explained in~\cite{Vicedo:2017cge}, in the language of formal AGM, it can be seen as a site of multiplicity 2 at infinity.}. In contrast, the twist function \eqref{Eq:Twist} considered here then yields a form $\vp(z)\dd z$ with at most a simple pole at $z=\infty$. We shall come back on these considerations in Subsection \ref{SubSec:Gauge}.

\subsubsection{Quadratic hamiltonians in involution}

Let us define the following function of the spectral parameter:
\begin{equation}\label{Eq:P}
\Pc(z) = \sum_{\alpha\in\Si} \sum_{p=0}^{m_\alpha-1} \frac{\Dc\alpha p}{(z-\po_\alpha)^{p+1}},
\end{equation}
with $\Dc\alpha p$ as in Equation \eqref{Eq:ImageSS}. Let us note that the momentum \eqref{Eq:MomentumReal} of the theory can then be re-expressed as
\begin{equation*}
\Pc_\Ac = - \res_{z=\infty} \Pc(z) \, \dd z.
\end{equation*}
The quantity $\Pc(z)$ satisfies~\cite{Vicedo:2017cge,Delduc:2019bcl}
\begin{equation}\label{Eq:PbPS}
\lbrace \Pc(z), \Pc(w) \rbrace = 0 \;\;\;\;\; \text{ and  } \;\;\;\;\; \lbrace \Pc(z), \Sg(w,x) \rbrace = - \frac{\p_x \Sg(z,x)-\p_x \Sg(w,x)}{z-w}. 
\end{equation}
We now define the spectral parameter dependent \textit{quadratic hamiltonian} as
\begin{equation}\label{Eq:HamSpec}
\Hc(z) = \frac{1}{2} \int_\D \dd x \; \kappa\bigl( \Sg(z,x), \Sg(z,x) \bigr) - \vp(z) \Pc(z).
\end{equation}
One checks that it satisfies~\cite{Vicedo:2017cge,Delduc:2019bcl}
\begin{equation}\label{Eq:InvolutionHamSpec}
\lbrace \Hc(z), \Hc(w) \rbrace = 0,
\end{equation}
for all $z,w \in \C$. Thus, all quantities linearly extracted from $\Hc(z)$ (evaluation at particular points $z$, residues, ...) are quadratic local charges in involution one with another. Moreover, as $\Hc(z)$ is defined as the integral of a local field over the whole spatial domain $\D$, it is invariant under spatial translation and thus satisfies
\begin{equation*}
\lbrace \Pc_{\Ac}, \Hc(z) \rbrace = 0.
\end{equation*}
Thus, the quantities linearly extracted from $\Hc(z)$ are also in involution with the momentum $\Pc_\Ac$.

\subsection{Constrained realisations of local AGM and diagonal gauge symmetry}
\label{SubSec:Gauge}

\subsubsection{The current at infinity}

The realisations of local AGM considered in~\nms possess a global $G_0$-symmetry called the \textit{diagonal symmetry}. The $\g_0$-valued density of the charge generating this symmetry is extracted from the Gaudin Lax matrix at infinity. Following a similar idea, we define here
\begin{equation}\label{Eq:C}
\Cc(x) = - \res_{z=\infty} \Sg(z,x) \dd z = \sum_{\alpha\in\Si} \J\alpha 0(x).
\end{equation}
Although the field $\Cc(x)$ introduced here is defined in exactly the same way as the one denoted $\Kc^\infty(x)$ in~\cite{Delduc:2019bcl}, we choose here a different notation $\Cc(x)$ (which will be justified in Subsection \ref{SubSubSec:Constraint}) to emphasise the qualitative difference with~\cite{Delduc:2019bcl}. One of these differences is the Poisson bracket of $\Cc(x)$ with the spectral parameter dependent Hamiltonian $\Hc(z)$, that we will compute in the rest of this paragraph.\\

We shall need a few intermediate results. It is for example useful to note that the field $\Cc(x)$ can be rewritten as
\begin{equation*}
\Cc(x) = \lim_{u\to 0} \; \frac{1}{u} \; \Sg\left(\frac{1}{u},x\right).
\end{equation*}
Using this expression and Equation \eqref{Eq:PbGaudin}, one easily finds
\begin{equation}\label{Eq:PBSC}
\bigl\lbrace \Cc\ti{1}(x), \Gamma\ti{2}(z,y) \bigr\rbrace = \bigl[ C\ti{12}, \Gamma\ti{1}(z,x) \bigr] \delta_{xy} - \vp(z) \, C\ti{12} \, \delta'_{xy}.
\end{equation}
Similarly, one gets from Equation \eqref{Eq:PbPS} that
\begin{equation*}
\bigl\lbrace \Cc(x), \Pc(z) \bigr\rbrace = -\p_x \Gamma(z,x).
\end{equation*}
Combining these two Poisson brackets and the definition \eqref{Eq:HamSpec} of the Hamiltonian, we obtain
\begin{equation}\label{Eq:ConservedConstraint}
\lbrace \Hc(z), \Cc(x) \rbrace = 0, \;\;\;\;\; \forall \, z\in\C.
\end{equation}
Thus, the field $\Cc(x)$ Poisson commutes with all quadratic charges extracted from $\Hc(z)$, for all values of the space coordinate $x\in\D$. This is in contrast with the situation in~\nms (see Equation (2.42) therein), where the Poisson bracket of $\Hc(z)$ with the field $\Kc^\infty(x)$ is a spatial derivative, so that only its integral over $\D$ is in involution with $\Hc(z)$. As a consistency check we observe that, in Equation (2.42) of~\nm, this spatial derivative is proportional to the constant term $\ell^\infty$ in the twist function, so that we recover Equation \eqref{Eq:ConservedConstraint} by letting $\ell^\infty$ go to $0$, which is the framework of the present article.

\subsubsection{First-class constraint}
\label{SubSubSec:Constraint}

\paragraph{Constraint.} Let us suppose here that we consider an hamiltonian field theory with observables $\Ac$ and with an Hamiltonian extracted linearly from $\Hc(z)$\footnote{Note that we will consider a slightly more general Hamiltonian later (see Subsection \ref{SubSubSec:Hamiltonian}).}. The current $\Cc(x)$ introduced in the previous paragraph would then be conserved for every $x\in\D$ and would thus generate a local symmetry of the model. Let us be more precise about what we mean by that. For every $\g_0$-valued function $x\mapsto\epsilon(x)$ on $\D$, let us consider the infinitesimal transformation
\begin{equation*}
\delta^\infty_\epsilon \mathcal{O}= \left\lbrace \int_\D \dd x \; \kappa\bigl( \epsilon(x), \Cc(x) \bigr), \mathcal{O} \right\rbrace,
\end{equation*}
acting on observables $\mathcal{O}$ in $\Ac$. This is a local transformation (as opposed to a global one), in the sense that its parameter $\epsilon$ is an arbitrary function of $x$ and not a constant. Moreover, this is a symmetry of the model, as it leaves invariant the Hamiltonian. Indeed, Equation \eqref{Eq:ConservedConstraint} implies
\begin{equation*}
\delta^\infty_\epsilon \Hc(z) = 0,
\end{equation*}
for all functions $x\mapsto\epsilon(x)$.

It would be tempting at this point to interpret the infinitesimal symmetry $\delta^\infty_\epsilon$ as a gauge symmetry. However, it is not the case. Indeed, a gauge symmetry would be a transformation leaving the model invariant and which would depend on an arbitrary function $\epsilon(x,t)$ of both space and time coordinates $x$ and $t$. In the situation described above, the transformation $\delta^\infty_\epsilon$ is a symmetry only for $\epsilon$ depending on the space coordinate $x$, and not the time coordinate $t$. To promote the transformation $\delta^\infty_\epsilon$ into a gauge symmetry, one has to treat its generator $\Cc(x)$ as a \textit{constraint} (hence the notation $\Cc$) and study the model in the Dirac formalism of constrained Hamiltonian systems. We refer to the Appendix \ref{App:Constraints} for a brief reminder about constrained Hamiltonian fields theories and in particular the link between constraints and gauge symmetries. From now on, we shall then restrict the phase space of the model to the constrained surface defined by the constraint
\begin{equation}\label{Eq:Constraint}
\Cc(x) \approx 0.
\end{equation}
Here, we used the standard \textit{weak equality} notation $\approx$ of Dirac to designate an equality which is true only on the constrained surface. As $\Cc$ is a $\g_0$-valued field, the constraint \eqref{Eq:Constraint} should be interpreted as $\dim\g_0$ scalar constraints $\Cc^a(x) \approx 0$, $a=1,\cdots,\dim\g_0$, that we extract from $\Cc(x)$ as (see Paragraph \ref{SubSubSec:Conventions} for the definitions of $I^a$ and $\kappa_{ab}$)
\begin{equation*}
\Cc(x) = \kappa_{ab} \, \Cc^a(x) \, I^b.
\end{equation*}

\paragraph{First-class condition.} We now follow the standard treatment of constrained Hamitlonian systems (see for example~\cite{dirac1964lectures,Henneaux:1992ig} for a detailed exposition and Appendix \ref{App:Constraints} for a summary). An additional condition for the constraint $\Cc$ to generate a gauge symmetry is that it is a \textit{first-class constraint}\footnote{More precisely, we ask that the set of constraints $\lbrace \Cc^a(x) \rbrace^{a=1,\cdots,\dim(\g_0)}_{x\in\D}$, obtained by varying the space variable $x$ and the Lie algebra index $a$, defines a system of first-class constraints.}, \textit{i.e.} that
\begin{equation}\label{Eq:FirstClass}
\bigl\lbrace \Cc\ti{1}(x), \Cc\ti{2}(y) \bigr\rbrace \approx 0.
\end{equation}
This first-class condition ensures that the transformation generated by $\Cc(x)$, which we would like to interpret as a gauge transformation of the model, preserves the constrained surface \eqref{Eq:Constraint}.

It is clear from the expression \eqref{Eq:C} of $\Cc(x)$ and the Poisson bracket \eqref{Eq:TakiffReal} of the currents $\J\alpha 0(x)$ that $\Cc(x)$ is a Kac-Moody current:
\vspace{-8pt}\begin{equation}\label{Eq:PBConstraints}
\bigl\lbrace \Cc\ti{1}(x), \Cc\ti{2}(y) \bigr\rbrace = \bigl[ C\ti{12}, \Cc\ti{1}(x) \bigr] \delta_{xy} - \biggl( \; \sum_{\alpha\in\Si} \, \ls\alpha 0 \, \biggr) C\ti{12}\delta'_{xy}. \vspace{-6pt}
\end{equation}
Thus, the constraint $\Cc(x)$ is first-class, \textit{i.e.} satisfies \eqref{Eq:FirstClass}, if and only if the level of $\Cc(x)$ as a Kac-Moody current vanishes:
\begin{equation}\label{Eq:SumLevels}
\sum_{\alpha\in\Si} \, \ls \alpha 0 = 0.
\end{equation}
In the rest of this article, we will suppose that this condition, that we will call \textit{first-class condition}, is satisfied. Let us note here that it can be rewritten as
\begin{equation}\label{Eq:ResidueInf}
\res_{z=\infty} \vp(z) \dd z = 0.
\end{equation}
Thus, the first-class condition ensuring that $\Cc(x)$ generates a gauge symmetry is equivalent to require that the twist function 1-form $\vp(z)\dd z$ is regular at $z=\infty$.

For completeness, let us rewrite the bracket \eqref{Eq:PBConstraints} in terms of the components $\Cc^a(x)$ of the $\g_0$-valued constraint $\Cc(x)$. One finds
\begin{equation*}
\bigl\lbrace \Cc^a(x), \Cc^b(y) \bigr\rbrace = \fs {ab}c \, \Cc^c(x) \, \delta_{xy},
\end{equation*}
with $\ft{ab}c$ the structure constants defined in Paragraph \ref{SubSubSec:Conventions}. Thus, we are in the framework described in Appendix \ref{App:Constraints}, as this Poisson bracket takes the form \eqref{Eq:AppAlgebraConstraints}, with $\Sc abc(x) = \ft{ab}c$.

\subsubsection{Hamiltonian and dynamics}
\label{SubSubSec:Hamiltonian}

\paragraph{First-class Hamiltonian.} In the previous paragraph, we considered an Hamiltonian extracted linearly from the spectral parameter dependent Hamiltonian $\Hc(z)$, so that it Poisson commutes with the constraint. We shall now consider a slightly more general Hamiltonian $\Hc_0$ constructed as a charge linearly extracted from $\Hc(z)$ to which we add a real constant coefficient $\xi$ times the momentum $\Pc_\Ac$ of the model. As the latter generates the derivative with respect to the space coordinate $x$, the Poisson bracket of $\Hc_0$ with the constraint is simply
\begin{equation*}
\lbrace \Hc_0, \Cc^a(x) \rbrace = \xi \, \p_x \Cc^a(x).
\end{equation*}
In particular, although this Poisson bracket does not vanish for a non-zero coefficient $\xi$, it vanishes weakly:
\begin{equation*}
\lbrace \Hc_0, \Cc^a(x) \rbrace \approx 0.
\end{equation*}
Following Appendix \ref{App:Constraints}, we then say that $\Hc_0$ is a first-class Hamiltonian. In the notations of the appendix, $\Hc_0$ satisfies the hypothesis \eqref{Eq:AppPBHConstraints}, where the fields $\Tcc ab(x)$ are simply given by $0$ and the fields $\Uc ab(x)$ are actually non-dynamical and equal to $\xi \, \delta^a_{\,b}$.

\paragraph{Total Hamiltonian and Lagrange multipliers.} Following Appendix \ref{App:Constraints}, we then define the total Hamiltonian of the model as in Equation \eqref{Eq:AppTotalHam}, introducing auxiliary fields $\mu_a(x)$, $a=1,\cdots,\dim\g_0$, called \textit{Lagrange multipliers}. This total Hamiltonian then reads
\begin{equation}\label{Eq:TotalHam}
\Hc = \Hc_0 + \int_{\D} \dd x \; \kappa\bigl( \mu(x), \Cc(x) \bigr),
\end{equation}
with $\mu(x)=\mu_a(x)\,I^a$ in $\g_0$. The time evolution of any observable $\Oo$ in $\Ac$ is then given by Equation \eqref{Eq:AppTime}:
\begin{equation*}
\p_t \Oo \approx \lbrace \Hc, \Oo \rbrace \approx \lbrace \Hc_0, \Oo \rbrace + \int_\D \dd x \; \kappa\bigl( \mu(x), \lbrace \Cc(x), \Oo \rbrace \bigr).
\end{equation*}

\subsubsection{Action of the gauge symmetry}

Let us end this subsection by discussing how the gauge symmetry associated with the constraint $\Cc(x)$ acts on the dynamical fields of the model. Following Appendix \ref{App:Constraints}, this gauge symmetry is the canonical transformation \eqref{Eq:AppGauge} generated by the constraints $\Cc^a(x)$ and with infinitesimal parameters $\epsilon_a(x,t)$, which are arbitrary functions of space-time coordinates $x$ and $t$. Combining these parameters into a unique $\g_0$-valued parameter $\epsilon(x,t)=\epsilon_a(x,t) I^a$, we write the infinitesimal variation of an observable $\Oo$ in $\Ac$ as
\begin{equation*}
\delta^\infty_\epsilon \Oo \approx \int_\D \dd x \; \kappa\bigl( \epsilon(x,t), \lbrace \Cc(x), \Oo \rbrace \bigr).
\end{equation*}
In particular, one can compute the variation of the Gaudin Lax matrix $\Sg(z,x)$, using Equation \eqref{Eq:PBSC}. One then finds
\begin{equation*}
\delta^\infty_\epsilon \Gamma(z,x) = \bigl[ \Gamma(z,x), \epsilon(x,t) \bigr] + \vp(z)\, \p_x \epsilon(x,t).
\end{equation*}
Using the definition \eqref{Eq:S} of $\Gamma(z,x)$ in terms of the Takiff currents and the one \eqref{Eq:Twist} of $\vp(z)$ in terms of the levels, one gets the action of the gauge symmetry on the Takiff currents:
\begin{equation*}
\delta_\epsilon^\infty \J\alpha p(x) = \bigl[\J\alpha p(x), \epsilon(x,t) \bigr] + \ls\alpha p \, \p_x \epsilon(x,t).
\end{equation*}
This transformation then acts on the Takiff currents associated with different sites in a very similar way. For this reason, we shall call it the \textit{diagonal gauge symmetry}.

As explained in Appendix \ref{App:Constraints}, gauge symmetries also transform the Lagrangian multipliers of the model according to Equation \eqref{Eq:AppTransfMult}. Applying this equation in the present case with $\Sc abc(x)=\ft{ab}c$, $\Tcc ab(x)=0$ and $\Uc ab(x) = \xi \, \delta^a_{\,b}$ yields
\begin{equation*}
\delta_\epsilon^\infty \mu_a(x) = \p_t\epsilon_a(x,t) - \xi\, \p_x\epsilon_a(x,t) -  \fs{bc}a \epsilon_b(x,t) \, \mu_c(x).
\end{equation*}
In terms of the $\g_0$-valued Lagrange multiplier $\mu(x)$, this becomes
\begin{equation*}
\delta_\epsilon^\infty \mu(x) = \bigl[ \mu(x), \epsilon(x,t) \bigr] + \p_t\epsilon(x,t) - \xi\, \p_x\epsilon(x,t).
\end{equation*}

The transformations described above correspond to infinitesimal gauge symmetries. They can be integrated to finite gauge symmetries with parameter a $G_0$-valued field $h(x,t)$ (where $G_0$ is a connected Lie group with Lie algebra $\g_0$). It acts on the Gaudin Lax matrix as
\begin{equation}\label{Eq:GaugeTransfGamma}
\Gamma(z,x) \longmapsto h(x,t)^{-1}\Gamma(z,x)\,h(x,t) + \vp(z) \, h(x,t)^{-1} \p_x h(x,t),
\end{equation}
and on the Lagrange multiplier as
\begin{equation}\label{Eq:GaugeTransfMult}
\mu(x) \longmapsto h(x,t)^{-1}\mu(x)\,h(x,t) +  h(x,t)^{-1} \p_t h(x,t) - \xi \,  h(x,t)^{-1} \p_x h(x,t).
\end{equation}

\subsection{Zeroes of the twist function, Hamiltonian and momentum}
\label{SubSec:Zeroes}

\subsubsection{Zeroes of the twist function}

Let us consider the twist function \eqref{Eq:Twist} put in a common denominator form. As we are considering here the case where it does not possess a constant term, it can be written in the form
\begin{equation*}
\vp(z) = \dfrac{f(z)}{\displaystyle \prod_{\alpha\in\Si} (z-\po_r)^{m_\alpha}},\vspace{-4pt}
\end{equation*}
where $f(z)$ is a polynomial of degree at most $\sum_{\alpha\in\Si} m_\alpha - 1$. Recall also that to obtain a model with gauge symmetry, we supposed that the first-class condition \eqref{Eq:ResidueInf} is satisfied, \textit{i.e.} that the 1-form $\vp(z)\dd z$ is regular at $z=\infty$. This implies that $f(z)$ is in fact of degree at most
\begin{equation}\label{Eq:DefMZeros}
M = \sum_{\alpha\in\Si} m_\alpha - 2.
\end{equation}
One can then distinguish three different cases:
\begin{enumerate}[(i)]
\item $f$ is of degree $M$, hence $z=\infty$ is not a zero of the 1-form $\vp(z)\dd z$ ;
\item $f$ is of degree $M-1$, hence $z=\infty$ is a simple zero of the 1-form $\vp(z)\dd z$ ;
\item $f$ is of degree strictly less than $M-1$, hence $z=\infty$ is a multiple zero of the 1-form $\vp(z)\dd z$.
\end{enumerate}
In this article, we shall restrict ourselves to the first two cases. In case (i), the 1-form $\vp(z)\dd z$ has $M$ finite zeroes $\ze_1,\cdots,\ze_M \in \C$. In case (ii), it has $M-1$ finite zeroes $\ze_1,\cdots,\ze_{M-1}\in\C$ ; to keep the discussion as uniform as possible, we then also let $\ze_M=\infty$ so that in both cases the zeroes of $\vp(z)\dd z$ in the Riemann sphere $\mathbb{P}^1$ are $\ze_1,\cdots,\ze_M$. We define $M_f$ as the number of finite zeroes of $\vp(z)\dd z$: the cases (i) and (ii) then respectively correspond to $M_f=M$ and $M_f=M-1$ and we can write
\begin{equation}\label{Eq:TwistZeros}
\vp(z) = - \ell^\infty \frac{\displaystyle \prod_{i=1}^{M_f} (z-\ze_i)}{\displaystyle \prod_{\alpha\in\Si} (z-\po_r)^{m_\alpha}},
\end{equation}
where $\ell^\infty$ is a real non-zero number. We will suppose in the following that the zeroes of the twist function are simple so that the $\ze_i$'s are pairwise distinct.

We chose the notation for the constant $\ell^\infty$ to be the same as the one for the constant term in the twist function of the reference~\nms due to the similarity of Equation \eqref{Eq:TwistZeros} with the corresponding equation in~\nms (page 15). However, its definition in the present context is different from the one of reference~\nm, as here it is not the constant term of the twist function and hence is not a free parameter. In the case where $M_f=M$, it is related to the levels and the positions of the sites of the model by the equation
\begin{equation*}
\ell^\infty = - \lim_{u\to 0}\, \frac{1}{u^2}\, \vp\left(\frac{1}{u}\right) = - \sum_{\alpha\in\Si} \Bigl( \po_\alpha \, \ls \alpha 0 + \ls \alpha 1 \Bigr),
\end{equation*}
where $\ls \alpha 1$ should be understood as zero if the multiplicity $m_\alpha$ is zero. In the case $M_f=M-1$, the above expression vanishes and we have instead
\begin{equation*}
\ell^\infty = - \lim_{u\to 0}\, \frac{1}{u^3}\, \vp\left(\frac{1}{u}\right) = - \sum_{\alpha\in\Si} \Bigl( \po_\alpha^2 \, \ls \alpha 0 + \po_\alpha \, \ls \alpha 1 + \ls \alpha 2 \Bigr),
\end{equation*}
where we let $\ls \alpha 2=0$ if $m_\alpha=1$ and $\ls \alpha 1 = \ls \alpha 2=0$ if $m_\alpha=0$.\\

Let us end this paragraph by introducing the function
\begin{equation}\label{Eq:DefChi}
\chi(u) = - \frac{1}{u^2} \vp\left(\frac{1}{u}\right).
\end{equation}
It describes the 1-form $\vp(z)\dd z$ in the coordinate patch of $\mathbb{P}^1$ around infinity, in the sense that $\vp(z)\dd z=  \chi(u)\dd u$ for $u=1/z$. The first-class condition \eqref{Eq:ResidueInf} translates to the regularity of $\chi(u)$ at $u=0$. The case (i) described above corresponds to $\chi(0) \neq 0$ and we then have $\ell^\infty = \chi(0)$. Similarly, the case (ii) corresponds to $\chi(0)=0$ and $\chi'(0) \neq 0$ and we then have $\ell^\infty=\chi'(0)$.

\subsubsection{Quadratic charges associated with the zeroes}

As in~\nm, we introduce the spectral parameter dependent quadratic charge
\begin{equation}\label{Eq:QSpec}
\Q(z) = - \frac{1}{2\vp(z)} \int_\D \dd x \; \kappa\bigl( \Gamma(z,x), \Gamma(z,x) \bigr).
\end{equation}
The quantity $\Q(z)$ depends rationally on the spectral parameter $z$. In the complex plane, it has poles at the positions $z_\alpha$, $\alpha\in\Si$, of the sites and at the finite zeroes $\ze_i$, $i\in\lbrace1,\cdots,M_f\rbrace$, of the twist function. The coefficients of the poles of $\Q(z)$ at the $\po_\alpha$'s have been studied in the article~\nm, Proposition A.3. Going through the proof of this statement in~\nm, one sees that it does not depend on the presence or not of a constant term in the twist function. Thus, it also applies in the context of the present article and one then has
\begin{equation*}
\Q(z) = -\sum_{\alpha\in\Si} \sum_{p=0}^{m_\alpha-1} \frac{\Dc\alpha p}{(z-z_\alpha)^{p+1}} + \Q_{\text{reg}}(z),
\end{equation*}
where the $\Dc\alpha p$'s are the generalised Segal-Sugawara integrals \eqref{Eq:ImageSS} and $\Q_{\text{reg}}(z)$ is a rational function of $z$ regular at $z=z_\alpha$ for all $\alpha\in\Si$.

In the complex plane, the function $\Q_{\text{reg}}(z)$ then has poles only at the finite zeroes $\ze_i$, $i\in\lbrace1,\cdots,M_f\rbrace$, of the twist function. Moreover, as we supposed that these zeroes are simple, these poles are simple. Let us then define the corresponding residues
\begin{equation}\label{Eq:QRes}
\Q_i = \res_{z=\ze_i} \, \Q(z)\dd z.
\end{equation}
Recall the spectral parameter dependent Hamiltonian $\Hc(z)$, defined in Equation \eqref{Eq:HamSpec}. Using the fact that $\ze_i$ is a simple zero of $\vp(z)$ and that $\Gamma(z,x)$ is regular at $z=\ze_i$, one can rewrite the residue $\Q_i$ introduced above as
\begin{equation}\label{Eq:QHZeros}
\Q_i = - \frac{\Hc(\ze_i)}{\vp'(\ze_i)} = - \frac{1}{2\vp'(\ze_i)} \int_{\D} \dd x \; \kappa\bigl( \Gamma(\ze_i,x), \Gamma(\ze_i,x) \bigr).
\end{equation}
In particular, we see from Equation \eqref{Eq:InvolutionHamSpec} that the charges $\Q_i$'s are then in involution:
\begin{equation}\label{Eq:QiInvolution}
\lbrace \Q_i, \Q_j \rbrace = 0.
\end{equation}

In addition to simple poles at the $\ze_i$'s ($i=1,\cdots,M_f$) with residues $\Q_i$, the partial fraction decomposition of the function $\Q_{\text{reg}}(z)$ and thus of $\Q(z)$ can also contain a polynomial in $z$. To determine its degree, let us first note that
\begin{equation}\label{Eq:StrongAsymptotics}
\Gamma\left(\frac{1}{u},x\right) = O(u) \;\;\;\;\; \text{ and } \;\;\;\;\; \vp\left(\frac{1}{u}\right) = O(u^{M-M_f+2}),
\end{equation}
as one can show from Equations \eqref{Eq:S}, \eqref{Eq:DefMZeros} and \eqref{Eq:TwistZeros}. Thus, $u^{M-M_f}\Q(1/u)$ is regular at $u=0$. The partial fraction decomposition of $\Q$ then contains a polynomial of degree $M-M_f$. We shall now distinguish the two cases (i) and (ii), following the denominations of the previous paragraph.

\subsubsection[Case (i): $\vp(z)\dd z$ non-zero at infinity]{Case (i): $\bm{\vp(z)\dd z}$ non-zero at infinity}
\label{SubSubSec:CaseI}

The case (i) corresponds to $M_f=M$. The polynomial in the partial fraction decomposition of $\Q(z)$ is then of order $0$, \textit{i.e.} only contains a constant term. We thus have
\begin{equation}\label{Eq:DESQi}
\Q(z) =  \sum_{i=1}^M \frac{\Q_i}{z-\ze_i} - \sum_{\alpha\in\Si} \sum_{p=0}^{m_\alpha-1} \frac{\Dc\alpha p}{(z-z_\alpha)^{p+1}}  + \mathcal{S}_{0},
\end{equation}
for some quadratic charge $\mathcal{S}_0\in\Ac$. Using Equation \eqref{Eq:P}, we then find that the spectral parameter dependent Hamiltonian \eqref{Eq:HamSpec} satisfies
\begin{equation}\label{Eq:DSEHamCaseI}
\frac{\Hc(z)}{\vp(z)} = - \sum_{i=1}^M \frac{\Q_i}{z-\ze_i} - \mathcal{S}_0.
\end{equation}
In particular, we see that any quantity extracted linearly from $\Hc(z)$ is a linear combination of the $\Q_i$'s and $\mathcal{S}_0$.\\

The equations above give the ``strong'' partial fraction decompositions of $\Q(z)$ and $\Hc(z)/\vp(z)$, in the sense that these are true even without imposing the constraint \eqref{Eq:Constraint}. Let us now see what happens when we consider weak equalities, \textit{i.e.} when we impose the constraint. It is clear from Equation \eqref{Eq:C} that the 1-form $\Gamma(z,x)\dd z$ is weakly regular at $z=\infty$, hence
\begin{equation*}
\Gamma\left(\frac{1}{u},x \right) \approx O(u^2) \;\;\;\;\; \text{and} \;\;\;\;\; \vp\left(\frac{1}{u}\right) = O(u^2),
\end{equation*}
where the second equality comes from specifying Equation \eqref{Eq:StrongAsymptotics} to $M_f=M$. We then deduce that $\Q\left(\frac{1}{u}\right) \approx O(u^2)$. Thus, we get
\begin{equation}\label{Eq:WeaklyVanishFirstCase}
\mathcal{S}_0 \approx 0 \;\;\;\;\; \text{ and } \;\;\;\;\; \res_{z=\infty} \Q(z) \dd z \approx 0.
\end{equation}
In particular, any quantity linearly extracted from $\Hc(z)$ is then, at least weakly, a linear combination of the $\Q_i$'s, $i\in\lbrace1,\cdots,M\rbrace$, only. Moreover, it is clear from Equation \eqref{Eq:DESQi} that the momentum \eqref{Eq:MomentumReal} of the theory can be rewritten as
\begin{equation*}
\Pc_\Ac = \sum_{\alpha\in\Si} \Dc\alpha 0 = \sum_{i=1}^M \Q_i + \res_{z=\infty} \Q(z)\dd z.
\end{equation*}
Thus, using the second weak equality in Equation \eqref{Eq:WeaklyVanishFirstCase}, we get the following simple (weak) expression of the momentum:
\begin{equation}\label{Eq:PZerosCaseI}
\Pc_\Ac \approx \sum_{i=1}^M \Q_i.
\end{equation}
This result, similar to Proposition 2.3 of~\nms for non-constrained affine Gaudin models, will play a key role in the analysis of the space-time symmetries of the theory (see Subsection \ref{SubSec:SpaceTime}).\\

Recall that in Subsection \ref{SubSubSec:Hamiltonian} we supposed that the first-class Hamiltonian $\Hc_0$ of the theory is given by a charge extracted linearly from $\Hc(z)$ plus a constant $\xi$ times the momentum of the theory. Thus, it is given by a linear combination of the $\Q_i$'s, at least weakly, and hence is strongly equal to such a linear combination plus terms proportional to the constraints. As explained in Appendix \ref{App:Constraints}, such terms can always be reabsorbed in a redefinition of the Lagrange multipliers $\mu_a(x)$ introduced in Subsection \ref{SubSubSec:Constraint}, without changing the total Hamiltonian and therefore the dynamic of the model. Thus, without loss of generality, we can always suppose that the first-class Hamiltonian $\Hc_0$ is given by
\begin{equation*}
\Hc_0 = \sum_{i=1}^M \epsilon_i \Q_i,
\end{equation*}
for some constant parameters $\bm\epsilon = (\epsilon_1,\cdots,\epsilon_M)$.

As each $\Q_i$, $i\in\lbrace1,\cdots,M\rbrace$, can be extracted linearly from $\Hc(z)$ (see Equation \eqref{Eq:DSEHamCaseI}), this choice of $\Hc_0$ corresponds to taking $\xi=0$, with $\xi$ the coefficient introduced in Subsection \ref{SubSubSec:Hamiltonian}. Let us briefly comment on the interpretation of this result. Equation \eqref{Eq:PZerosCaseI} shows that in the case (i) considered in this paragraph, the momentum of the theory can be extracted from $\Hc(z)$, at least weakly. Thus, the additional freedom of adding to the Hamiltonian a term proportional to the momentum that we introduced in Subsection \ref{SubSubSec:Hamiltonian} is redundant in this case, as it can always be reabsorbed into a redefinition of the Lagrange multipliers. Hence, one can always choose $\xi=0$. In fact, we introduced this additional parameter $\xi$ to be able to treat the case (ii) in full generality, as we shall see in the next paragraph.

\subsubsection[Case (ii): $\vp(z)\dd z$ with a simple zero at infinity]{Case (ii): $\bm{\vp(z)\dd z}$ with a simple zero at infinity}
\label{SubSubSec:CaseII}

Let us now consider the case (ii), corresponding to $M_f=M-1$. The polynomial in the partial fraction decomposition of $\Q(z)$ is then of degree $1$ and there exist quadratic charges $\mathcal{S}_0$ and $\mathcal{S}_1$ such that
\begin{equation}\label{Eq:DESQii}
\Q(z) = \sum_{i=1}^{M-1} \frac{\Q_i}{z-\ze_i} - \sum_{\alpha\in\Si} \sum_{p=0}^{m_\alpha-1} \frac{\Dc\alpha p}{(z-z_\alpha)^{p+1}}  + \mathcal{S}_{0} + \mathcal{S}_{1} z,
\end{equation}
and
\begin{equation*}
\frac{\Hc(z)}{\vp(z)} = - \sum_{i=1}^{M-1} \frac{\Q_i}{z-\ze_i} - \mathcal{S}_0 - \mathcal{S}_1 z.
\end{equation*}
Any quantity linearly extracted from $\Hc(z)$ is thus a linear combination of the $\Q_i$'s ($i\in\lbrace1,\cdots,M-1\rbrace$), $\mathcal{S}_0$ and $\mathcal{S}_1$.

As for the case (i) above, let us now study if some of these charges are weakly vanishing. For $M_f=M-1$, we have
\begin{equation*}
\Gamma\left(\frac{1}{u},x \right) \approx O(u^2) \;\;\;\;\; \text{and} \;\;\;\;\; \vp\left(\frac{1}{u}\right) = O(u^3),
\end{equation*}
so that $\Q\left(\frac{1}{u}\right)\approx O(u)$. Thus, we get
\begin{equation}\label{Eq:WeaklyVanishSecondCase}
\mathcal{S}_0 \approx 0 \;\;\;\;\; \text{ and } \;\;\;\;\; \mathcal{S}_1 \approx 0.
\end{equation}
In particular, any quantity linearly extracted from $\Hc(z)$ is weakly equal to a linear combination of the charges $\Q_i$'s, $i\in\lbrace1,\cdots,M-1\rbrace$.\\

Recall that in the case (ii), the 1-form $\vp(z)\dd z$ possesses, in addition to its finite zeroes $\ze_1,\cdots,\ze_{M-1}$, a simple zero at $z=\ze_M=\infty$. To keep the discussion as uniform as possible, we then introduce the charge
\begin{equation}\label{Eq:DefQM}
\Q_M = \res_{z=\infty} \Q(z) \dd z.
\end{equation}
Recall that the charge $\Q_i$ associated with a finite zero $\ze_i$ can be expressed easily in terms of the current $\Gamma(\ze_i,x)$ and the derivative $\vp'(\ze_i)$, as in Equation \eqref{Eq:QHZeros}. To obtain a similar expression for $\Q_M$, let us introduce the current $\B(x)$ such that
\begin{equation}\label{Eq:DefB}
\Gamma\left(\frac{1}{u},x\right) = u \,\Cc(x) - u^2 \,\B(x) +O(u^3).
\end{equation}
As $\Cc(x) \approx 0$, $\B(x)$ is equal weakly to the evaluation of the 1-form $\Gamma(z,x)\dd z$ at $z=\ze_M=\infty$ and is therefore the equivalent of $\Gamma(\ze_i,x)$ for a finite zero.

Recall the function $\chi(u)$ introduced in Equation \eqref{Eq:DefChi}. By assumption, in the case (ii) considered here, $\chi(u)$ has a simple zero at $u=0$ and we then have
\begin{equation}\label{Eq:AsymptoticTwistCaseII}
\vp\left(\frac{1}{u}\right) = -\chi'(0)\, u^3 + O(u^4),
\end{equation}
with $\chi'(0) \neq 0$. A study of the asymptotic of $\Q(z)$ at infinity from Equations \eqref{Eq:DefB} and \eqref{Eq:AsymptoticTwistCaseII} yields the following weak expression of $\Q_M$:
\begin{equation}\label{Eq:QM}
\Q_M \approx - \frac{1}{2\chi'(0)} \int_\D \dd x \, \kappa\bigl(\B(x),\B(x)\bigr),
\end{equation}
which has to be compared with Equation \eqref{Eq:QHZeros} for a finite zero $\ze_i$.\\

Let us now come back to the momentum $\Pc_\Ac$ of the theory. It is clear from Equation \eqref{Eq:DESQii} and the definition \eqref{Eq:DefQM} of $\Q_M$ that, in the case (ii), it can be expressed as
\begin{equation}\label{Eq:PZerosCaseII}
\Pc_\Ac = \sum_{i=1}^M \Q_i.
\end{equation}
This expression is the equivalent of Equation (2.52) of~\nms for non-constrained models and of Equation \eqref{Eq:PZerosCaseI} for constrained models in the case (i) (although, contrarily to Equation \eqref{Eq:PZerosCaseI}, it holds strongly).

As $\Q_1,\cdots,\Q_{M-1}$ are extracted from $\Hc(z)$, they are in involution between themselves and with the momentum $\Pc_\Ac$. Thus, considering the above expression for the latter, we get that all the charges $\Q_1,\cdots,\Q_M$ are in involution (hence Equation \eqref{Eq:QiInvolution} holds for any $i,j\in\lbrace 1,\cdots,M\rbrace$).\\

To end this subsection, let us now consider the first-class Hamiltonian $\Hc_0$ introduced in Subsection \ref{SubSubSec:Hamiltonian}. It was defined as the sum of a charge extracted linearly from $\Hc(z)$ (which as seen above is then a linear combination of $\Q_1,\cdots,\Q_{M-1}$, at least weakly) and of  a constant coefficient $\xi$ times the momentum $\Pc_\Ac$. Thus $\Hc_0$ is equal weakly to a linear combination of all the $\Q_i$'s, $i\in\lbrace 1,\cdots,M\rbrace$. As in the case (i) treated in the previous subsection, we can reabsorb any term in $\Hc_0$ which is proportional to the constraint in a redefinition of the Lagrange multipliers of the model, so that this equality becomes strong. Thus, we can suppose that
\begin{equation*}
\Hc_0 = \sum_{i=1}^M \epsilon_i\Q_i,
\end{equation*}
for some parameters $\bm\epsilon = (\epsilon_1,\cdots,\epsilon_M)$.

The charges $\Q_1,\cdots,\Q_{M-1}$ associated with finite zeroes can be extracted from $\Hc(z)$. Thus, in the expression above for $\Hc_0$, the only quantity which cannot be extracted from $\Hc(z)$ is $\Q_M$, which then has to come from the momentum \eqref{Eq:PZerosCaseII}. In particular, this means that the coefficient $\xi$ appearing in the definition of $\Hc_0$ in Subsection \ref{SubSubSec:Hamiltonian} is given in the case (ii) by $\xi=\epsilon_M$. As we shall see in Section \ref{Sec:SigmaModels} of this article\footnote{See also the treatment of integrable $\s$-model on $\mathbb{Z}_T$-cosets in~\cite{Vicedo:2017cge}.}, the description of certain integrable $\s$-models as constrained realisations of Affine Gaudin models will require having a non-zero coefficient $\epsilon_M$ (in fact, we shall see in Subsection \ref{SubSubSec:Lorentz} that $\epsilon_M$ cannot be zero for a model with relativistic invariance). This, and the fact that the introduction of the charge $\Q_M$ makes the treatment of both cases (i) and (ii) more uniform, leads us to think that the quantities $\Q_i$, $i\in\lbrace 1,\cdots,M\rbrace$, are in fact more fundamental objects of the theory than $\Hc(z)$.

\subsubsection{Summary}
\label{SubSubSec:Summary}

Let us summarise the results obtained above for the two cases (i) and (ii). In both cases, one expresses the momentum and the first-class Hamiltonian of the theory in terms of the quadratic charges $\Q_i$, $i\in\lbrace 1,\cdots,M \rbrace$:
\begin{equation}\label{Eq:PHZeros}
\Pc_\Ac \approx \sum_{i=1}^M \Q_i \;\;\;\;\; \text{ and } \;\;\;\;\; \Hc_0 = \sum_{i=1}^M \epsilon_i \Q_i,
\end{equation}
where the first equality also holds strongly in case (ii) but only weakly in case (i). The coefficients $\bm\epsilon = (\epsilon_1,\cdots,\epsilon_M)$ are thus defining parameters of the model.

In addition to $\eb$, the model is determined by the choice of the Takiff datum $\lt$ and of the positions of the sites $\pb$, which is equivalent to the choice of the twist function $\vp(z)$, as well as the choice of a realisation $\pi$ from the abstract Takiff algebra $\Tc_{\lt}$ to the physical algebra of observables $\Ac$. Following the notations of~\nm, we shall denote the corresponding model $\mathbb{M}^{\vp,\pi}_{\eb}$ and the corresponding Hamiltonian $\Hc^{\vp,\pi}_{\eb}$, in particular if we have to discuss several realisations of Affine Gaudin models at the same time.

\subsection{Lax pair and integrability}
\label{SubSec:Integrability}

\subsubsection{Zero curvature equation and Lax pair}

In the previous subsections, we defined a constrained model on $\Ac$ by specifying its total Hamiltonian \eqref{Eq:TotalHam}, formed by the first-class Hamiltonian $\Hc_0$ of Equation \eqref{Eq:PHZeros} and Lagrange multipliers associated with the constraints. We shall now prove that the dynamic defined by this Hamiltonian is integrable, by first exhibiting a Lax formulation of its equations of motion. Let us define the Lax matrix of the model as the ratio~\bg:
\begin{equation}\label{Eq:Lax}
\Lc(z,x) = \frac{\Gamma(z,x)}{\vp(z)}.
\end{equation}
The results of this paragraph can then be summarised in Theorem \ref{Thm:Lax} below. To state the theorem, let us first recall the function $\chi(u)$ defined in Equation \eqref{Eq:DefChi} and let us also push the asymptotic expansion \eqref{Eq:DefB} of $\Gamma(z,x)$ around infinity to the next order, introducing a current $\B_1(x)$ in addition to the current $\B(x)$:
\begin{equation}\label{Eq:DefB1}
\Gamma\left(\frac{1}{u},x\right) = u \,\Cc(x) - u^2 \,\B(x) - u^3 \, \B_1(x) + O(u^4).
\end{equation}

\begin{theorem}\label{Thm:Lax}
There exists a $\g$-valued field $\Mc(z,x)$ such that the time evolution of $\Lc(z,x)$ takes the form of a zero curvature equation on the Lax pair $\bigl(\Mc(z,x),\Lc(z,x)\bigr)$:
\begin{equation}\label{Eq:Zce}
\p_t \Lc(z,x) - \p_x \Mc(z,x) + \bigl[ \Mc(z,x), \Lc(z,x) \bigr] = 0,
\end{equation}
for all values of the spectral parameter $z$. Moreover:
\begin{enumerate}[(i)]
\item In the case where $\vp(z)\dd z$ is non-zero at infinity, the Lax pair is given by
\begin{equation*}
\Lc(z,x) \approx \sum_{i=1}^M \frac{1}{\vp'(\ze_i)} \frac{\Gamma(\ze_i,x)}{z-\ze_i} + \frac{\B(x)}{\chi(0)} \;\;\;\;\; \text{ and } \;\;\;\;\; \Mc(z,x) = \sum_{i=1}^M \frac{\epsilon_i}{\vp'(\ze_i)} \frac{\Gamma(\ze_i,x)}{z-\ze_i} + \mu(x).
\end{equation*}
\item In the case where $\vp(z)\dd z$ has a simple zero at infinity, the Lax pair is given by
\begin{eqnarray*}
 & & \displaystyle \Lc(z,x) \approx \sum_{i=1}^{M-1} \frac{1}{\vp'(\ze_i)} \frac{\Gamma(\ze_i,x)}{z-\ze_i} + \frac{\B(x)}{\chi'(0)}z + \frac{2\chi'(0)\B_1(x)-\chi''(0)\B(x)}{2\chi'(0)^2}  \\
 \text{and} & & \displaystyle \Mc(z,x) \approx \sum_{i=1}^{M-1} \frac{\epsilon_i}{\vp'(\ze_i)} \frac{\Gamma(\ze_i,x)}{z-\ze_i} + \frac{\epsilon_M\,\B(x)}{\chi'(0)}z + \epsilon_M \frac{2\chi'(0)\B_1(x)-\chi''(0)\B(x)}{2\chi'(0)^2} + \mu(x).
\end{eqnarray*}
\end{enumerate}
\end{theorem}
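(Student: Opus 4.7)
\vspace{6pt}
\noindent\textbf{Proof plan.} The plan is to derive the time evolution of $\Lc(z,x)$ from Hamilton's equations with the total Hamiltonian \eqref{Eq:TotalHam}, in which by the summary \eqref{Eq:PHZeros} we may take $\Hc_0 = \sum_{i=1}^M \epsilon_i \Q_i$. Working modulo the constraint $\Cc \approx 0$, I would split the computation into three contributions: (a) those coming from the charges $\Q_i$ associated with the \emph{finite} zeroes $\ze_i$ of $\vp$, (b) in case (ii), the contribution of $\Q_M$, and (c) the contribution of the Lagrange multiplier term $\int_\D\dd x\,\kappa(\mu,\Cc)$. The heart of the argument is the observation, standard for unconstrained AGMs in \bnms, that the spectral-parameter dependent Hamiltonian generates a zero-curvature evolution: from \eqref{Eq:PbGaudin} and \eqref{Eq:PbPS} one derives, for any auxiliary spectral parameter $w$,
\begin{equation*}
\bigl\{\Hc(w),\Lc(z,x)\bigr\} \approx \p_x\Nc(z,w,x)-\bigl[\Lc(z,x),\Nc(z,w,x)\bigr],
\end{equation*}
for a $\g$-valued field $\Nc(z,w,x)$ rational in $z$ and $w$; specialising to $w=\ze_i$ with $\ze_i$ a simple finite zero of $\vp$ and using $\Q_i=-\Hc(\ze_i)/\vp'(\ze_i)$ one obtains the clean form $\{\Q_i,\Lc(z,x)\}\approx\p_x\Nc_i(z,x)-[\Lc(z,x),\Nc_i(z,x)]$ with $\Nc_i(z,x)=\frac{1}{\vp'(\ze_i)}\frac{\Gamma(\ze_i,x)}{z-\ze_i}$.

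The contribution (c) of the multiplier term is a direct application of \eqref{Eq:PBSC}: performing the integration and dividing by $\vp(z)$, the $\vp(z)\delta'_{xy}$ piece integrates to $\vp(z)\p_x\mu(x)$, which contributes $\p_x\mu(x)-[\Lc(z,x),\mu(x)]$ to $\p_t\Lc(z,x)$. Combined with (a), this already gives case (i): the Lax connection $\Mc(z,x)=\sum_{i=1}^M \epsilon_i\Nc_i(z,x)+\mu(x)$ produces the zero curvature equation \eqref{Eq:Zce}. The expression for $\Lc(z,x)$ itself in case (i) is then the partial fraction decomposition of $\Gamma(z,x)/\vp(z)$: the simple zeroes $\ze_i$ of $\vp$ provide simple poles with residues $\Gamma(\ze_i,x)/\vp'(\ze_i)$, the poles of $\Gamma$ at the site positions cancel against those of $\vp$, and the constant term at infinity is obtained from the asymptotics \eqref{Eq:DefB} together with $\vp(1/u)=-u^2\chi(0)+O(u^3)$, giving $\B(x)/\chi(0)$ once the constraint $\Cc\approx 0$ is used to eliminate the $u\,\Cc(x)$ term.

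For case (ii) the analogous partial fraction computation gives the displayed expression for $\Lc(z,x)$: now $\vp(1/u)=-\chi'(0)u^3+O(u^4)$ and $\Gamma(1/u,x)\approx -u^2\B(x)-u^3\B_1(x)+O(u^4)$ by \eqref{Eq:DefB1} and the constraint, so that a careful division (expanding $1/(1+u\chi''(0)/(2\chi'(0)))$) produces the linear-in-$z$ term $\B(x)z/\chi'(0)$ and the constant $[2\chi'(0)\B_1(x)-\chi''(0)\B(x)]/(2\chi'(0)^2)$. For the contribution (b) of $\Q_M$, I would use its weak expression \eqref{Eq:QM} together with the Poisson bracket of $\B(x)$ with $\Gamma(z,y)$, obtained by extracting the $u^2$ coefficient in \eqref{Eq:PBSC} after substituting $z\mapsto 1/u$; this produces precisely the $z$-linear and $z$-independent terms proportional to $\epsilon_M$ in the displayed $\Mc(z,x)$, as well as a $\p_x$-derivative piece that combines to give the zero-curvature form. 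The main technical obstacle is (b): ensuring that the asymptotic expansion of $\Gamma$ at infinity is pushed to sufficient order, that the sub-leading terms in $\vp(1/u)$ are tracked correctly, and that the resulting Poisson bracket of $\Q_M$ with $\Lc(z,x)$ reassembles into the stated $\epsilon_M$-dependent pieces of $\Mc(z,x)$.

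Finally, once $\Mc(z,x)$ has been identified in each case, verifying \eqref{Eq:Zce} is simply the consistent assembly of the three Poisson brackets above; the spectral parameter $z$ in $\Lc$ and the auxiliary spectral parameters $\ze_i$ in $\Mc$ play symmetric roles, and the identity holds for all $z\in\C$ because both sides are rational in $z$ with matching pole structure.
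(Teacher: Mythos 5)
Your treatment of the generalities, of the Lagrange-multiplier contribution, of all of case (i), and of the weak partial-fraction decompositions of $\Lc(z,x)$ in both cases coincides with the paper's proof. The one place where you genuinely diverge — the contribution of $\Q_M$ in case (ii) — is also the place where your argument has a gap. You propose to compute $\lbrace \Q_M, \Lc(z,x)\rbrace$ from the \emph{weak} expression \eqref{Eq:QM} of $\Q_M$ in terms of $\B$. But weakly equal charges do not generate the same flow: the strong charge $\Q_M=\res_{z=\infty}\Q(z)\dd z$ differs from $-\tfrac{1}{2\chi'(0)}\int\kappa(\B,\B)$ by terms of the form $\int\dd x\,\kappa\bigl(\Cc(x),f(x)\bigr)$ (plus terms quadratic in $\Cc$), with $f$ built from $\B$ and $\B_1$, and by \eqref{Eq:PbCL} the Hamiltonian flow of such a term on $\Lc$ is of the form $\p_x f-[f,\Lc]$, which does \emph{not} vanish weakly. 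Expanding $\Q(1/u)$ one finds that this correction is of exactly the same size as the constant piece $\epsilon_M\,[2\chi'(0)\B_1-\chi''(0)\B]/(2\chi'(0)^2)$ of $\Mc(z,x)$, so your claim that the flow of \eqref{Eq:QM} "produces precisely" the stated $\epsilon_M$-dependent terms cannot be concluded without explicitly tracking these constraint-proportional corrections (or, equivalently, without accepting an uncontrolled shift of the multiplier $\mu$ relative to the one appearing in the total Hamiltonian \eqref{Eq:TotalHam}). A minor related slip: the bracket of $\B$ with $\Gamma(z,y)$ at generic $z$ is obtained by expanding the \emph{first} argument of the Gaudin bracket \eqref{Eq:PbGaudin} around infinity, not by extracting a coefficient of \eqref{Eq:PBSC}, which only yields $\lbrace\Cc,\B\rbrace$.

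The paper sidesteps this entire difficulty: in case (ii) the identity \eqref{Eq:PZerosCaseII} holds \emph{strongly}, so one writes $\Q_M=\Pc_\Ac-\sum_{i=1}^{M-1}\Q_i$, uses that the momentum generates the trivial zero curvature equation $\lbrace\Pc_\Ac,\Lc\rbrace-\p_x\Lc+[\Lc,\Lc]=0$, and deduces that $\Q_M$ generates a zero curvature equation with connection $\Lc(z,x)-\sum_{i=1}^{M-1}\frac{1}{\vp'(\ze_i)}\frac{\Gamma(\ze_i,x)}{z-\ze_i}$; substituting the weak partial-fraction decomposition of $\Lc$ then gives the displayed $\Mc(z,x)$ with no asymptotic Poisson-bracket computation at infinity. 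If you want to keep your direct route, you must either carry out the expansion of $\Q(z)$ at $z=\infty$ to the order where the $\kappa(\Cc,\B_1)$ and $\kappa(\Cc,\B)$ terms appear and include their flows, or replace \eqref{Eq:QM} by a strong expression for $\Q_M$; as written, step (b) does not establish the stated formula for $\Mc(z,x)$.
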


\begin{proof} ~\\
\underline{Generalities:} From the Poisson brackets \eqref{Eq:PbGaudin} and \eqref{Eq:PbPS}, one shows that the evolution of $\Lc(z,x)$ under the Hamiltonian flow of the spectral parameter dependent Hamiltonian $\Hc(z)$ takes the form of a zero curvature equation
\begin{equation}\label{Eq:ZceSpec}
\lbrace \Hc(w), \Lc(z,x) \rbrace - \p_x \Mcs wzx + \bigl[ \Mcs wzx, \Lc(z,x) \bigr] = 0,
\end{equation}
with
\begin{equation}\label{Eq:MSpec}
\Mcs wzx = \vp(w)\frac{\Lc(z,x)-\Lc(w,x)}{z-w}.
\end{equation}
As the charges $\Q_i$, $i\in\lbrace 1,\cdots,M_f\rbrace$, associated with finite zeroes are extracted linearly from $\Hc(z)$ in Equation \eqref{Eq:QHZeros}, their Hamiltonian flow on $\Lc(z,x)$ also generates a zero curvature equation. More precisely, noting that $\vp(\ze_i)=0$ and that $\vp(w)\Lc(w,x)=\Gamma(w,x)$ is regular at $w=\ze_i$, we get
\begin{equation}\label{Eq:ZceQi}
\lbrace Q_i, \Lc(z,x) \rbrace - \p_x \left( \frac{1}{\vp'(\ze_i)} \frac{\Gamma(\ze_i,x)}{z-\ze_i} \right) + \left[ \frac{1}{\vp'(\ze_i)} \frac{\Gamma(\ze_i,x)}{z-\ze_i}, \Lc(z,x) \right] = 0.
\end{equation}
The total Hamiltonian \eqref{Eq:TotalHam} of the model also contains a term proportional to the Lagrange multipliers. To compute its action on the Lax matrix $\Lc(z,x)$, we start from the Poisson bracket \eqref{Eq:PBSC} between the constraint and $\Gamma(z,x)$, from which we simply deduce that
\begin{equation}\label{Eq:PbCL}
\lbrace \Cc\ti{1}(y), \Lc\ti{2}(z,x) \rbrace = \bigl[ C\ti{12}, \Lc\ti{1}(z,x) \bigr] \delta_{xy} + C\ti{12} \delta'_{xy}.
\end{equation}
Thus the Lagrange multipliers enter the dynamics of the Lax matrix through:
\begin{equation}\label{Eq:ZceMult}
\left\lbrace \int_\D \dd y \, \kappa\bigl( \mu(y), \Cc(y) \bigr), \Lc(z,x) \right\rbrace - \p_x \mu(x) + \bigl[ \mu(x), \Lc(z,x) \bigr] \approx 0.
\end{equation}

\noi \underline{Case (i):} Let us now consider the case (i), where $M_f=M$ and thus all the charges $\Q_i$'s are associated with finite zeroes. The total Hamiltonian $\Hc$ defining the dynamic $\p_t \approx \lbrace \Hc, \cdot \rbrace$ is then
\begin{equation*}
\Hc = \sum_{i=1}^{M_f} \epsilon_i \Q_i + \int_\D \dd y \, \kappa\bigl( \mu(y), \Cc(y) \bigr),
\end{equation*}
and we obtain the zero curvature equation \eqref{Eq:Zce} by summing Equation \eqref{Eq:ZceQi} from $i=1$ to $i=M_f=M$ (with a coefficient $\epsilon_i$) and Equation \eqref{Eq:ZceMult}, yielding the following expression for the matrix $\Mc(z,x)$:
\begin{equation}\label{Eq:MCaseI}
\Mc(z,x) = \sum_{i=1}^M \frac{\epsilon_i}{\vp'(\ze_i)} \frac{\Gamma(\ze_i,x)}{z-\ze_i} + \mu(x).
\end{equation}
Let us now turn to the partial fraction decomposition of $\Lc(z,x)$. In the complex plane, it has poles only at the zeroes $\ze_1,\cdots,\ze_M$ of the twist function\footnote{Recall that we always suppose in this article that the highest level $\ls \alpha {m_\alpha-1}$ of each site is non-zero. Thus, $\Gamma(z,x)$ and $\vp(z)$ both have a pole of order $m_\alpha$ at $z=z_\alpha$, so that $\Lc(z,x)$ is regular at $z=z_\alpha$ (see Subsection \ref{SubSubSec:Takiff}).}. Moreover, as we supposed the zeroes of the twist function to be simple, these poles are simple. The corresponding residues are given by
\begin{equation}\label{Eq:ResLaxZero}
\res_{z=\ze_i} \Lc(z,x) = \lim_{z \to \ze_i} \frac{z-\ze_i}{\vp(z)} \, \Gamma(z,x) \, = \frac{\Gamma(\ze_i,x)}{\vp'(\ze_i)}.
\end{equation}
In addition to simple poles at the $\ze_i$'s, the partial fraction decomposition of $\Lc(z,x)$ contains a polynomial in $z$. To compute it, recall the asymptotic expansion \eqref{Eq:DefB1} of $\Gamma(z,x)$ around infinity. Similarly, as we are considering the case (i), the asymptotic expansion of the twist function around infinity is given by
\begin{equation*}
\vp\left(\frac{1}{u}\right) = -u^2 \chi(0) - u^3 \chi'(0) + O(u^4),
\end{equation*}
with $\chi(u)$ the function introduced in Equation \eqref{Eq:DefChi}. Combining these two expansions, one gets
\begin{equation*}
\Lc\left(\frac{1}{u},x\right) = \frac{1}{u} \frac{\Cc(x)}{\chi(0)} + \frac{\chi(0)\B(x)-\chi'(0)\Cc(x)}{\chi(0)^2} + O(u). 
\end{equation*}
This determines the polynomial part of the partial fraction decomposition of $\Lc(z,x)$. We then get
\begin{equation*}
\Lc(z,x) = \sum_{i=1}^M \frac{1}{\vp'(\ze_i)} \frac{\Gamma(\ze_i,x)}{z-\ze_i} + \frac{\chi(0)\B(x)-\chi'(0)\Cc(x)}{\chi(0)^2} + \frac{\Cc(x)}{\chi(0)} z,
\end{equation*}
hence, weakly,
\begin{equation*}
\Lc(z,x) \approx \sum_{i=1}^M \frac{1}{\vp'(\ze_i)} \frac{\Gamma(\ze_i,x)}{z-\ze_i} + \frac{\B(x)}{\chi(0)}.
\end{equation*}
This expression for $\Lc(z,x)$ bears a striking resemblance with the above expression for $\Mc(z,x)$, similarly to what was observed in~\nms for non-constrained model.\\

\noi \underline{Case (ii):} Let us now consider the case (ii), where $M_f=M-1$. We shall begin by finding the partial fraction decomposition of $\Lc(z,x)$. As in the case (i) treated above, its poles in the complex plane are only situated at the finite zeroes $\ze_1,\cdots,\ze_{M-1}$ of the twist function. Moreover, these poles are simple and the expression \eqref{Eq:ResLaxZero} of the corresponding residues that we found above for the case (i) is easily seen to apply also in the present case. There is still to determine the polynomial part of the partial fraction decomposition of $\Lc(z,x)$. For simplicity, we will consider here only the weak partial fraction decomposition. The asymptotic expansion \eqref{Eq:DefB1} of $\Gamma(z,x)$ around infinity becomes weakly
\begin{equation*}
\Gamma\left(\frac{1}{u},x\right) \approx -u^2 \left( \B(x) + \B_1(x)\, u + O(u^2) \right).
\end{equation*}
Moreover, one also has, in the case (ii) where $\chi(0)=0$ and $\chi'(0) \neq 0$:
\begin{equation*}
\vp\left(\frac{1}{u}\right) = - u^3 \chi'(0) \left( 1 + \frac{\chi''(0)}{2\chi'(0)}\,u + O(u^2) \right).
\end{equation*}
One then gets
\begin{equation*}
\Lc\left(\frac{1}{u},x\right) \approx \frac{1}{u} \frac{\B(x)}{\chi'(0)}  + \frac{2\chi'(0)\B_1(x)-\chi''(0)\B(x)}{2\chi'(0)^2} + O(u)
\end{equation*}
Thus, the partial fraction decomposition of $\Lc(z,x)$ is given weakly by
\begin{equation*}
\Lc(z,x) \approx \sum_{i=1}^{M-1} \frac{1}{\vp'(\ze_i)} \frac{\Gamma(\ze_i,x)}{z-\ze_i} + \frac{2\chi'(0)\B_1(x)-\chi''(0)\B(x)}{2\chi'(0)^2} + \frac{\B(x)}{\chi'(0)}z.
\end{equation*}

Let us now turn to the computation of $\Mc(z,x)$. In the case (ii) considered here, only the first $M-1$ charges $\Q_i$ obey the zero curvature equation \eqref{Eq:ZceQi} so we still have to study the Hamiltonian flow of $\Q_M$ on $\Lc(z,x)$. For that, we will use Equation \eqref{Eq:PZerosCaseII} rewritten as
\begin{equation*}
\Q_M = \Pc_\Ac - \sum_{i=1}^{M-1} \Q_i.
\end{equation*}
Indeed, as observed above, each $\Q_i$, $i\in\lbrace 1,\cdots,M-1\rbrace$ generates a zero curvature equation on $\Lc(z,x)$ and the momentum $\Pc_\Ac$ also generates a trivial zero curvature equation on it:
\begin{equation*}
\bigl\lbrace \Pc_\Ac, \Lc(z,x) \bigr\rbrace - \p_x \Lc(z,x) + \bigl[ \Lc(z,x), \Lc(z,x) \bigr] = 0.
\end{equation*}
We then get
\begin{equation*}
\bigl\lbrace \Q_M, \Lc(z,x) \bigr\rbrace - \p_x \left( \Lc(z,x) - \sum_{i=1}^{M-1} \frac{1}{\vp'(\ze_i)} \frac{\Gamma(\ze_i,x)}{z-\ze_i} \right) + \left[ \Lc(z,x) - \sum_{i=1}^{M-1} \frac{1}{\vp'(\ze_i)} \frac{\Gamma(\ze_i,x)}{z-\ze_i}, \Lc(z,x) \right] = 0.
\end{equation*}
Combining this equation with the zero curvature equations \eqref{Eq:ZceQi} and \eqref{Eq:ZceMult}, we finally get the zero curvature equation \eqref{Eq:Zce} generated by the total Hamiltonian
\begin{equation*}
\Hc = \sum_{i=1}^{M-1} \epsilon_i\Q_i + \epsilon_M \Q_M + \int_\D \dd y \, \kappa\bigl( \mu(y), \Cc(y) \bigr),
\end{equation*}
with the temporal component of the Lax pair reading
\begin{equation}\label{Eq:MCaseII}
\Mc(z,x) = \sum_{i=1}^{M-1} \frac{\epsilon_i}{\vp'(\ze_i)} \frac{\Gamma(\ze_i,x)}{z-\ze_i} + \epsilon_M \left( \Lc(z,x) - \sum_{i=1}^{M-1} \frac{1}{\vp'(\ze_i)} \frac{\Gamma(\ze_i,x)}{z-\ze_i} \right) + \mu(x).
\end{equation}
Using the weak expression found earlier for $\Lc(z,x)$, we then get
\begin{equation*}
\Mc(z,x) \approx \sum_{i=1}^{M-1} \frac{\epsilon_i}{\vp'(\ze_i)} \frac{\Gamma(\ze_i,x)}{z-\ze_i} + \frac{\epsilon_M\,\B(x)}{\chi'(0)}z + \epsilon_M \frac{2\chi'(0)\B_1(x)-\chi''(0)\B(x)}{2\chi'(0)^2} + \mu(x),
\end{equation*}
which ends the proof of the theorem.
\end{proof}

\subsubsection{Gauge transformation of the Lax pair}

The following proposition gives the transformation of the Lax pair $\bigl(\Lc(z,x),\Mc(z,x)\bigr)$ of the model under a gauge transformation.

\begin{proposition}\label{Prop:GaugeLax}
Under a gauge transformation with local parameter $h(x,t) \in G_0$, the Lax pair transforms as
\begin{eqnarray*}
\Lc(z,x) &\longmapsto&  h^{-1}(x,t) \Lc(z,x) h(x,t) + h^{-1}(x,t)\p_x h(x,t), \\
\Mc(z,x) &\longmapsto& h^{-1}(x,t) \Mc(z,x) h(x,t) +  h^{-1}(x,t)\p_t h(x,t).
\end{eqnarray*}
\end{proposition}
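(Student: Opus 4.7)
The plan is to derive both transformation rules directly from the known gauge action on $\Gamma(z,x)$ and on the Lagrange multiplier $\mu(x)$, as given in Equations \eqref{Eq:GaugeTransfGamma} and \eqref{Eq:GaugeTransfMult}, together with the explicit expressions for $\Lc(z,x)$ and $\Mc(z,x)$ established in Theorem \ref{Thm:Lax}.

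For $\Lc(z,x) = \Gamma(z,x)/\vp(z)$, the transformation is immediate: dividing the right-hand side of \eqref{Eq:GaugeTransfGamma} by $\vp(z)$, the $\vp(z)\, h^{-1}\p_x h$ term in the inhomogeneous part turns precisely into $h^{-1}\p_x h$, yielding the claimed formula. No case distinction is needed at this step.

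For $\Mc(z,x)$, the key observation is that each $\Gamma(\ze_i,x)$ entering the sums in Theorem \ref{Thm:Lax} is evaluated at a zero of the twist function, so that $\vp(\ze_i)=0$ kills the inhomogeneous piece of \eqref{Eq:GaugeTransfGamma} and $\Gamma(\ze_i,x)\mapsto h^{-1}\Gamma(\ze_i,x)h$. Thus each sum $\sum_i \tfrac{\epsilon_i}{\vp'(\ze_i)}\tfrac{\Gamma(\ze_i,x)}{z-\ze_i}$ transforms by pure conjugation. In case (i), the coefficient $\xi$ introduced in Subsection \ref{SubSubSec:Hamiltonian} is $\xi=0$ (as explained in Paragraph \ref{SubSubSec:CaseI}), so \eqref{Eq:GaugeTransfMult} reduces to $\mu\mapsto h^{-1}\mu h + h^{-1}\p_t h$, and adding the two contributions gives the result directly.

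In case (ii) the argument is a little more delicate: one has $\xi = \epsilon_M \neq 0$ in general, so $\mu$ picks up an extra $-\epsilon_M\, h^{-1}\p_x h$ under gauge transformations. To see the cancellation, I would rewrite the expression for $\Mc(z,x)$ from Theorem \ref{Thm:Lax} in the form
\begin{equation*}
\Mc(z,x) \approx \sum_{i=1}^{M-1} \frac{\epsilon_i-\epsilon_M}{\vp'(\ze_i)}\frac{\Gamma(\ze_i,x)}{z-\ze_i} + \epsilon_M\, \Lc(z,x) + \mu(x),
\end{equation*}
which matches the weak expression for $\Lc(z,x)$ found in the proof of Theorem \ref{Thm:Lax}. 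The first sum transforms covariantly (again by $\vp(\ze_i)=0$), the middle piece $\epsilon_M \Lc(z,x)$ transforms via the already-established rule for $\Lc$ and produces an inhomogeneous term $\epsilon_M\, h^{-1}\p_x h$, and this is exactly cancelled by the $-\xi\, h^{-1}\p_x h=-\epsilon_M\, h^{-1}\p_x h$ in \eqref{Eq:GaugeTransfMult}. The surviving inhomogeneous term is $h^{-1}\p_t h$, as claimed.

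The only mildly subtle point — and what I would flag as the main obstacle — is bookkeeping the parameter $\xi$ correctly across the two cases and verifying the rewriting of $\Mc(z,x)$ in terms of $\Lc(z,x)$ in case (ii); once that rewriting is in place, the cancellation between the inhomogeneous pieces of $\epsilon_M\Lc$ and of $\mu$ is the mechanism that ensures $\Mc$ transforms as a genuine time-component of a gauge connection.
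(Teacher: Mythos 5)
Your proof is correct and follows essentially the same route as the paper's: the transformation of $\Lc$ follows by dividing \eqref{Eq:GaugeTransfGamma} by $\vp(z)$, the currents $\Gamma(\ze_i,x)$ transform covariantly because $\vp(\ze_i)=0$, and in case (ii) the inhomogeneous term $\epsilon_M\,h^{-1}\p_x h$ coming from $\epsilon_M\Lc(z,x)$ in \eqref{Eq:MCaseII} cancels against the $-\xi\,h^{-1}\p_x h$ with $\xi=\epsilon_M$ in \eqref{Eq:GaugeTransfMult}. Your rewriting of $\Mc(z,x)$ with coefficients $\epsilon_i-\epsilon_M$ is just the paper's expression \eqref{Eq:MCaseII} with the two sums combined, so the argument is the same.
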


Before proving the proposition, let us first comment on its interpretation. One recognises in the transformation a formal gauge transformation $\bigl(\Lc^h(z,x),\Mc^h(z,x)\bigr)$ of the Lax pair $\bigl(\Lc(z,x),\Mc(z,x)\bigr)$. Although this formal gauge transformation coincides in the present case with the ``physical'' gauge transformation generated by the constraint, their origins are different. The formal gauge transformation is a common property of all integrable two-dimensional field theories, which encodes the non-uniqueness of the choice of a Lax pair: indeed, if the pair $\bigl(\Lc(z,x),\Mc(z,x)\bigr)$ satisfies a zero curvature equation, then so does the transformed pair $\bigl(\Lc^h(z,x),\Mc^h(z,x)\bigr)$, for any $h(x,t)\in G_0$. In the present case, this freedom in the choice of the Lax pair coincides with a change of physical gauge. In particular, this gives an alternative proof that the equations of motion of the model are gauge invariant, as they can be recast as a zero curvature equation, which by construction is invariant under formal gauge transformations. Finally, let us note that formal gauge transformations contain a bigger class of transformations than the physical gauge ones, as one can also consider a formal gauge transformation by a function $h(z,x,t) \in G_0$ which depends also on the spectral parameter $z$. Let us now prove Proposition \ref{Prop:GaugeLax}.

\begin{proof}
The transformation of the spatial component $\Lc(z,x)$ is a direct consequence of its definition \eqref{Eq:Lax} and the transformation \eqref{Eq:GaugeTransfGamma} of the Gaudin Lax matrix $\Gamma(z,x)$. Let us now consider a finite zero $\ze_i$, $i\in\lbrace 1,\cdots,M_f \rbrace$: from Equation \eqref{Eq:GaugeTransfGamma}, we find that the current $\Gamma(\ze_i,x)$ transforms covariantly:
\begin{equation}\label{Eq:GaugeGammaZero}
\Gamma(\ze_i,x) \longmapsto h^{-1}(x,t) \Gamma(\ze_i,x) h(x,t).
\end{equation}
Let us now distinguish the cases (i) and (ii).\\

\noi \underline{Case (i):} In case (i), the temporal component $\Mc(z,x)$ of the Lax pair is given by Equation \eqref{Eq:MCaseI} in terms of the currents $\Gamma(\ze_i,x)$ and the Lagrange multiplier. We determined the gauge transformation of $\Gamma(\ze_i,x)$ just above and of the Lagrange multiplier in Equation \eqref{Eq:GaugeTransfMult}. Recalling from Subsection \ref{SubSubSec:CaseI} that in the case (i), the coefficient $\xi$ appearing in the transformation of the Lagrange multiplier is equal to $0$, we obtain the transformation of $\Mc(z,x)$ announced in the proposition.\\

\noi \underline{Case (ii):} Let us now turn to the case (ii), for which the temporal component $\Mc(z,x)$ of the Lax pair is given by Equation \eqref{Eq:MCaseII}. The first sum in Equation \eqref{Eq:MCaseII} transforms covariantly under a gauge transformation, as can be seen from the above equation for the transformation of $\Gamma(\ze_i,x)$. Moreover, combining this equation and the transformation of $\Lc(z,x)$, we see that the second term in Equation \eqref{Eq:MCaseII} transforms as
\begin{equation*}
\epsilon_M \left( \Lc(z) - \sum_{i=1}^{M-1} \frac{1}{\vp'(\ze_i)} \frac{\Gamma(\ze_i)}{z-\ze_i} \right) \longmapsto \epsilon_M h^{-1}\left( \Lc(z) - \sum_{i=1}^{M-1} \frac{1}{\vp'(\ze_i)} \frac{\Gamma(\ze_i)}{z-\ze_i} \right) h + \epsilon_M\, h^{-1} \p_x h.
\end{equation*}
Finally, let us recall from Subsection \ref{SubSubSec:CaseII} that in the case (ii), the parameter $\xi$ appearing in the transformation \eqref{Eq:GaugeTransfMult} of the Lagrange multiplier is equal to $\epsilon_M$. Thus, in this case, it transforms as
\begin{equation*}
\mu \longmapsto h^{-1}\mu\, h + h^{-1} \p_t h - \epsilon_M \, h^{-1} \p_x h.
\end{equation*}
In particular, one sees that in the total transformation of $\Mc(z,x)$, the terms $h^{-1}\p_x h$ present in the two equations above cancel, yielding the expected result.
\end{proof}

\subsubsection{Maillet bracket and integrability}
\label{SubSubSec:Maillet}

\paragraph{Non-local conserved charges.} The existence of a Lax pair of the model guaranteed by Theorem \ref{Thm:Lax} implies the existence of an infinite number of non-local conserved charges extracted as traces of powers\footnote{We suppose here that we work in a matrix representation of the algebra $\g_0$ to make sense of these charges} of the monodromy matrix of the model
\begin{equation*}
\mathcal{T}(z) = \Pexp \left( - \int_\D \dd x\; \Lc(z,x) \right).
\end{equation*}
A natural question at this point is whether these charges are gauge invariant. Recall from Proposition \ref{Prop:GaugeLax} that gauge transformations act on the Lax matrix by formal gauge transformations. It is a standard result about path-ordered exponentials that the monodromy matrix then transforms as
\begin{equation*}
\mathcal{T}(z) \longmapsto h(x_2,t)^{-1} \,\mathcal{T}(z) \,h(x_1,t),
\end{equation*}
where $x_1$ and $x_2$ are the boundary points of the spatial domain $\D$. If we suppose that $\D$ is the circle $\mathbb{S}^1$, then $x_1=0$ and $x_2=2\pi$ and we restrict to gauge parameters $h(x,t)$ periodic in $x$. Thus, gauge transformations act on the monodromy matrix as conjugacy by $h(0,t)=h(2\pi,t)$, and the conserved charges extracted as traces of powers of $\mathcal{T}(z)$ are then gauge-invariant. If we suppose that $\D$ is the real line $\mathbb{R}$, then $x_1=-\infty$ and $x_2=+\infty$ and we restrict to gauge parameters $h(x,t)$ without monodromy. Thus, the gauge transformation of the monodromy matrix takes the form of a conjugacy by $h(-\infty,t)=h(+\infty,t)$ and the non-local conserved charges are also gauge-invariant in this case.

\paragraph{Maillet bracket.} To prove the integrability of the model, we shall show that the conserved charges extracted from the monodromy matrix in the previous paragraph are in involution. Indeed, it is a standard result~\cite{Maillet:1985fn,Maillet:1985ek} that this is the case if the Poisson bracket of the Lax matrix takes the form of a non-ultralocal Maillet bracket:
\begin{align}\label{Eq:PBR}
\hspace{-60pt}\bigl\lbrace \Lc\ti{1}(z,x), \Lc\ti{2}(w,y) \bigr\rbrace
&= \bigl[ \Rc\ti{12}(z,w), \Lc\ti{1}(z,x) \bigr] \delta_{xy} - \bigl[ \Rc\ti{21}(w,z), \Lc\ti{2}(w,x) \bigr] \delta_{xy}\\
& \hspace{40pt} - \left( \Rc\ti{12}(z,w) + \Rc\ti{21}(w,z) \right) \delta'_{xy}, \notag
\end{align}
for some $\g\otimes\g$-valued matrix $\Rc(z,w)$ called the $\Rc$-matrix. Starting from the definition \eqref{Eq:Lax} of the Lax matrix in terms of $\Gamma(z,x)$ and the Poisson bracket \eqref{Eq:PbGaudin} of the latter, one clearly sees that the Lax matrix indeed satisfies the above Maillet bracket (note that this bracket is strong, in the sense that it holds even without having to impose the constraint $\Cc(x)\approx 0$), with the \textit{$\Rc$-matrix}
\begin{equation}\label{Eq:RMat}
\Rc\ti{12}(z,w) = \frac{C\ti{12}}{w-z} \vp(w)^{-1}.
\end{equation}
This then proves the integrability of the model constructed here as a constrained realisation of an Affine Gaudin model. Moreover, the particular form \eqref{Eq:RMat} of the $\Rc$-matrix makes the model part of a more restricted class of integrable field theories, called models with twist function~\cite{Maillet:1985ec, Reyman:1988sf, Sevostyanov:1995hd, Vicedo:2010qd} (see also~\cite{Lacroix:2018njs}), which possess certain additional interesting properties, among which is the existence of an infinite integrable hierarchy of local charges, as we shall briefly recall in the next paragraph.

\paragraph{Local integrable hierarchy.} It was proven in~\cite{Lacroix:2017isl} that models with twist function possess, in addition to the non-local charges extracted from the monodromy, an infinite number of local charges in involution, forming so-called integrable hierarchies. More precisely, there will exist such an hierarchy for each zero of the twist function.

For a finite zero $\ze_i$, $i\in\lbrace 1,\cdots,M_f \rbrace$, the local charges forming this hierarchy take the form
\begin{equation}\label{Eq:Hierarchy}
\Q^d_i = -\frac{1}{(d+1) \, \vp'(\ze_i)} \int_\D \dd x \; \Phi_d\bigl( \Sg(\ze_i,x) \bigr).
\end{equation}
These charges are labelled by the so-called affine exponents $d\in E$ of the Lie algebra $\g$, which form an infinite subset of $\mathbb{Z}_{\geq 1}$. For each $d\in E$, $\Phi_d$ is then a well-chosen~\cite{Evans:1999mj} invariant polynomial on $\g$ of degree $d+1$. In particular, $1$ is always an exponent in $E$, with corresponding polynomial $\Phi_1(\cdot) = \kappa(\cdot,\cdot)$: the corresponding charge $\Q_i^1$ then coincides with the quadratic charge $\Q_i$, according to Equation \eqref{Eq:QHZeros}. The local charges $\left( \Q_i^d \right)^{d\in E}_{i\in\lbrace 1,\cdots,M_f\rbrace}$ are pairwise in involution, as proven in~\cite{Lacroix:2017isl}.

Recall that in this article, we distinguish two cases (i) and (ii), depending whether the infinity is a zero of the twist function or not. Let us consider the case (ii), where $\ze_M=\infty$ is a zero of the twist function. As shown in~\cite{Lacroix:2017isl}, one can also associate an infinite integrable hierarchy $\Q_M^d$, $d\in E$, to this zero. The local charges in this hierarchy are then given, at least weakly, by
\begin{equation}\label{Eq:HierarchyInfinity}
\Q^d_M \approx -\frac{1}{(d+1) \, \chi'(0)} \int_\D \dd x \; \Phi_d\bigl( \B(x) \bigr),
\end{equation}
where $\chi$ was defined in Equation \eqref{Eq:DefChi} and $\B(x)$, introduced in Equation \eqref{Eq:DefB}, can be seen as the evaluation (weakly) of the 1-form $\Gamma(z,x)\dd z$ at $z=\ze_M=\infty$. In particular, the quadratic charge $\Q^1_M$ coincide weakly with the charge $\Q_M$ introduced in Subsection \ref{SubSubSec:CaseII} (see Equation \eqref{Eq:QM}). The charges $\Q_M^d$, $d\in E$, pairwise Poisson commute. Moreover, they are also weakly in involution with the charges $\Q_i^d$, $i\in\lbrace 1,\cdots,M_f\rbrace$, associated with the finite zeroes\footnote{Let us note that the quadratic charge $\Q_M$, which is weakly equal to $\Q^1_M$, actually Poisson commutes strongly with the other $\Q_i$'s. It would be interesting to see if for higher exponents $d>1$, one can find a correction of the charge $\Q^d_M$ by terms proportional to the constraint which would make the charge strongly in involution with all other $\Q_i^d$'s.}.

Let us end this paragraph by mentioning that all the local charges $\Q_i^d$ ($i\in\lbrace 1,\cdots,M\rbrace$, $d\in E$) are gauge-invariant (see Theorem 5.3 of~\cite{Lacroix:2017isl}). For charges \eqref{Eq:Hierarchy} associated with finite zeroes, this can be easily understood from the facts that the polynomials $\Phi_d$ are invariant under conjugacy and that the currents $\Gamma(\ze_i,x)$ transform covariantly under a gauge transformation, as seen in Equation \eqref{Eq:GaugeGammaZero}. Similarly, one can show that in the case (ii), the current $\B(x)$ also transforms covariantly, proving the gauge-invariance of the corresponding charges \eqref{Eq:HierarchyInfinity}.

\subsection{Space-time symmetries}
\label{SubSec:SpaceTime}

\subsubsection{Energy-momentum tensor}

The model defined in the previous subsections is invariant under space-time translations. The conserved quantities associated with these symmetries are the momentum $\Pc_\Ac$ and the Hamiltonian $\Hc_0$ of the model\footnote{When discussing conserved quantities, one can consider the first-class Hamiltonian $\Hc_0$ instead of the total Hamiltonian $\Hc$. Indeed, the two are weakly equal and thus are equal on-shell, as the constraint vanishes on physical solutions of the equations of motion.}. Let us recall from Equation \eqref{Eq:PHZeros} that $\Pc_\Ac$ and $\Hc_0$ have a simple expression in terms of the charges $\Q_i$, $i\in\lbrace 1,\cdots,M\rbrace$. Let us define the densities corresponding to these charges:
\begin{equation*}
q_i(x) = -\res_{z=\ze_i}  \frac{\kappa\bigl( \Gamma(z,x), \Gamma(z,x) \bigr)}{2\vp(z)} \dd z,
\end{equation*}
such that
\begin{equation*}
\Q_i = \int_\D \dd x \; q_i(x).
\end{equation*}
The main result of this subsection is the following Proposition (which is the equivalent of Proposition 2.4 in~\nms for non-constrained model).

\begin{proposition}\label{Prop:EnergyMomentum}
The components of the energy-momentum tensor of the model are
\begin{equation}
\Te01(x) \approx \sum_{i=1}^M q_i(x), \;\;\;\;\; \Te00(x) \approx -\Te11(x) \approx \sum_{i=1}^M \epsilon_i\, q_i(x) \;\;\;\;\; \text{and} \;\;\;\;\; \Te10(x) \approx -\sum_{i=1}^M \epsilon_i^2\, q_i(x),
\end{equation}
where the indices 0 and 1 correspond respectively to the time direction and the space direction.
\end{proposition}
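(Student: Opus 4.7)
The plan is to read off the densities of the momentum $\mathcal{P}_\Ac$ and the Hamiltonian $\Hc_0$ directly from their expressions in Equation \eqref{Eq:PHZeros}, and then to determine the remaining two components of the energy-momentum tensor from the conservation laws $\partial_t \Te{0}{\nu} + \partial_x \Te{1}{\nu} \approx 0$. Concretely, since $\Hc_0 = \sum_i \epsilon_i \Q_i$ and $\mathcal{P}_\Ac \approx \sum_i \Q_i$, taking the natural local density $q_i(x)$ of $\Q_i$ yields the weak identifications $\Te{0}{0}(x) \approx \sum_i \epsilon_i q_i(x)$ and $\Te{0}{1}(x) \approx \sum_i q_i(x)$. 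The content of the proposition then reduces to establishing, for each $i \in \{1,\dots,M\}$, the chiral transport equation
\begin{equation*}
\partial_t q_i(x) \approx \epsilon_i \, \partial_x q_i(x).
\end{equation*}
Assuming this, the conservation laws force $\Te{1}{1} \approx -\sum_i \epsilon_i q_i = -\Te{0}{0}$ and $\Te{1}{0} \approx -\sum_i \epsilon_i^2 q_i$, which are the announced expressions.

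To prove the chiral transport equation for a finite zero $\ze_i$, I would use the zero curvature equation \eqref{Eq:Zce} to compute $\partial_t \Sg(\ze_i,x)$. The key observation is that, from Theorem \ref{Thm:Lax}, the residues of the full Lax pair at $\ze_i$ satisfy $\res_{z=\ze_i}\Mc(z,x) = \epsilon_i \res_{z=\ze_i}\Lc(z,x) = \epsilon_i \Gamma(\ze_i,x)/\vp'(\ze_i)$, so the combination $\Mc^{(i)}(z,x) := \Mc(z,x) - \epsilon_i \Lc(z,x)$ is regular at $z = \ze_i$. Multiplying \eqref{Eq:Zce} by $\vp(z)$ and rewriting the commutator as $\vp(z)[\Mc,\Lc] = [\Mc^{(i)}, \Gamma]$, I take the limit $z \to \ze_i$ using $\vp(z)\Lc(z,x) = \Gamma(z,x)$ to obtain
\begin{equation*}
\partial_t \Gamma(\ze_i,x) \approx \epsilon_i \, \partial_x \Gamma(\ze_i,x) - \bigl[\Mc^{(i)}(\ze_i,x), \Gamma(\ze_i,x)\bigr].
\end{equation*}
Pairing with $\Gamma(\ze_i,x)$ through $\kappa$ and using the invariance identity $\kappa(X,[Y,X]) = 0$ kills the commutator term, and the resulting identity is precisely $\partial_t q_i \approx \epsilon_i \partial_x q_i$.

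The remaining task is to handle case (ii) in which $\ze_M = \infty$, where $q_M \approx -\frac{1}{2\chi'(0)} \kappa\bigl(\B(x),\B(x)\bigr)$ by Equation \eqref{Eq:QM} and $\B(x)$ is extracted as the coefficient of $u^2$ in the expansion \eqref{Eq:DefB} of $\Gamma(1/u,x)$. The analogue of the regularity statement is that $\Mc(z,x) - \epsilon_M \Lc(z,x)$ is regular at $z = \infty$: indeed, the weak expressions of $\Lc$ and $\Mc$ in case (ii) of Theorem \ref{Thm:Lax} both exhibit a term linear in $z$ proportional to $\B(x)/\chi'(0)$, which cancels in the difference with the appropriate coefficient $\epsilon_M$. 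The main obstacle is the careful asymptotic bookkeeping this requires: one must expand the zero curvature equation around $z = \infty$ to sufficiently high order in $u = 1/z$ and track how $\partial_t \B(x)$ arises from the $O(u^2)$ piece of the expansion of $\Gamma(1/u,x)$, using the asymptotics \eqref{Eq:AsymptoticTwistCaseII} of $\vp(1/u)$. The same invariance of $\kappa$ then eliminates the commutator term in $\partial_t \B$, yielding $\partial_t q_M \approx \epsilon_M \partial_x q_M$ and completing the proof.
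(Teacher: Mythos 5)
Your proof is correct, but the route to the key step differs from the paper's. Both arguments agree on the framing: read $\Te00 \approx \sum_i \epsilon_i q_i$ and $\Te01 \approx \sum_i q_i$ off Equation \eqref{Eq:PHZeros}, and reduce everything to the chiral transport equation $\p_t q_i \approx \epsilon_i\,\p_x q_i$, from which $\Te1{\nu}$ follows via the conservation law. Where you diverge is in how that equation is established. The paper stays entirely Hamiltonian: it notes that the densities $q_i$ are first-class, so their time evolution is driven by $\Hc_0$ alone, and then computes the Poisson algebra of the densities directly from the Gaudin bracket \eqref{Eq:PbGaudin}, obtaining $\lbrace q_i(x), q_j(y)\rbrace = -\delta_{ij}\bigl(\p_x q_i(x)\delta_{xy} + 2 q_i(x)\delta'_{xy}\bigr)$ (Equation \eqref{Eq:PbDensities}), which immediately gives the transport equation and, as a by-product, exhibits the chiral (Virasoro-type) algebra each $q_i$ generates. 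You instead exploit Theorem \ref{Thm:Lax}: since $\res_{z=\ze_i}\Mc = \epsilon_i\res_{z=\ze_i}\Lc$, the combination $\Mc - \epsilon_i\Lc$ is regular at $\ze_i$ (using also $[\Lc,\Gamma]=0$ to rewrite $\vp[\Mc,\Lc]=[\Mc-\epsilon_i\Lc,\Gamma]$), and evaluating the zero curvature equation at $z=\ze_i$ plus ad-invariance of $\kappa$ kills the commutator — which also disposes of the Lagrange-multiplier term automatically, whereas the paper disposes of it by invoking gauge invariance of $q_i$. Your approach buys a lighter computation (it sidesteps the paper's ``slightly lengthy'' density-bracket calculation) at the cost of leaning on the already-established Lax pair and of only delivering the transport equation rather than the full bracket \eqref{Eq:PbDensities}. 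Your case (ii) treatment of $\ze_M=\infty$ is only sketched, but the sketch is sound: at order $u^2$ of the expansion of $\p_t\Gamma - \p_x\bigl(\vp\,\Mc\bigr) + [\Mc,\Gamma]=0$ the commutator contributions are of the form $[\B,\B_1]$, $[\,\cdot\,,\B]$ and $[\,\cdot\,,\Cc]$, all of which vanish weakly after pairing with $\B(x)$, so $\p_t q_M \approx \epsilon_M\,\p_x q_M$ indeed follows; this mirrors the footnote in the paper's proof noting that case (ii) requires accounting for the zero at infinity.
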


\begin{proof}
The energy-momentum tensor is the conserved current associated with space-time translations, which thus satisfies a local conservation equation
\begin{equation}\label{Eq:LocalEMTensor}
\p_\mu \Te\mu\nu = \p_t \Te0\nu + \p_x \Te1\nu = 0.
\end{equation}
In this equation, the index $\nu=0,1$ is free. The choice $\nu=0$ gives the local conservation equation associated with the time translation symmetry. The corresponding conserved charge is the Hamiltonian:
\begin{equation*}
\Hc_0 \approx \int_\D \dd x \, \Te 00(x).
\end{equation*}
Similarly, the choice $\nu=1$ corresponds to the space translation symmetry, with conserved charge the momentum:
\begin{equation*}
\Pc_\Ac \approx \int_\D \dd x \, \Te 01(x).
\end{equation*}
The expected expressions of the components $\Te 0\nu(x)$ then directly follow from the expression \eqref{Eq:PHZeros} of the Hamiltonian and the momentum in terms of the charges $\Q_i$, whose densities are the $q_i(x)$'s.\\

The local conservation equation \eqref{Eq:LocalEMTensor} of the energy-momentum tensor serves as a definition of the components $\Te1\nu$. To compute these components, one has to find an explicit expression for $\p_t \Te0\nu$ and write is a space derivative $-\p_x \Te1\nu$. To do that, we will need the time evolution of the densities $q_i(x)$. As these densities are first-class (\textit{i.e.} gauge-invariant), their time evolution is governed by the Hamiltonian flow of $\Hc_0$ only:
\begin{equation*}
\p_t q_i(x) \approx \lbrace \Hc_0, q_i(x) \rbrace + \int_\D \dd y \; \kappa\bigl( \mu(y), \lbrace \Cc(y), q_i(x) \rbrace \bigr) \approx \lbrace \Hc_0, q_i(x) \rbrace.
\end{equation*}
Thus, we find
\begin{equation*}
\p_t q_i(x) \approx - \sum_{j=1}^M \epsilon_j \int_\D \dd y \; \bigl\lbrace q_i(x), q_j(y) \rbrace.
\end{equation*}
A direct but slightly lengthy computation from the Poisson bracket \eqref{Eq:PbGaudin} gives\footnote{This computation is very similar to the one yielding Equation (2.56) of~\nms for unconstrained model, the only difference being that in the case (ii), one has to take into account that the zero $\ze_M$ is at infinity.}
\begin{equation}\label{Eq:PbDensities}
\big\lbrace q_i(x), q_j(y) \big\rbrace = -\delta_{ij} \bigl( \p_x q_i(x) \delta_{xy} + 2 q_i(x) \delta'_{xy} \bigr).
\end{equation}
From this equation, we get
\begin{equation*}
\p_t q_i(x) \approx \epsilon_i \p_x q_i(x),
\end{equation*}
from which we deduce the announced expressions of $\Te10(x)$ and $\Te11(x)$.
\end{proof}

\subsubsection{Relativistic invariance}
\label{SubSubSec:Lorentz}

\paragraph{Lorentz symmetry.} Let us consider the Minkowski metric $\eta_{\mu\nu}$ with signature $(+1,-1)$ and the energy-momentum tensor $T_{\mu\nu} =\eta_{\nu\rho}\Te\mu\rho$ with lowered indices. From Proposition \ref{Prop:EnergyMomentum}, one gets
\beqz
T_{01}(x) \approx \sum_{i=1}^M q_i(x), \;\;\;\;\; T_{00}(x) \approx T_{11}(x) \approx \sum_{i=1}^M \epsilon_i\, q_i(x) \;\;\;\;\; \text{and} \;\;\;\;\; T_{10}(x) \approx \sum_{i=1}^M \epsilon_i^2\, q_i(x).
\eeqz
It is a classical result from field theory that the model is relativistic invariant, \textit{i.e.} invariant under the Lorentz group of isometries of the Minkoswki metric $\eta_{\mu\nu}$, if the tensor $T_{\mu\nu}$ is symmetric, hence if $T_{01}=T_{10}$. It is clear from the expressions of $T_{01}$ and $T_{10}$ above that this is the case if
\begin{equation}\label{Eq:Relat}
\epsilon_i^2=1, \;\;\;\;\; \forall\, i\in\lbrace 1,\cdots,M \rbrace,
\end{equation}
\textit{i.e.} if each coefficient $\epsilon_i$ is either $+1$ or $-1$. This quite simple condition for relativistic invariance will be a key element in Section \ref{Sec:SigmaModels}, where we will construct relativistic integrable $\s$-models as constrained Affine Gaudin models.

For the rest of this subsection, we will suppose that the condition \eqref{Eq:Relat} ensuring relativistic invariance of the model is satisfied. Following the notations of~\nm, we then introduce the following subsets of $\lbrace 1,\cdots,M\rbrace$:
\begin{equation*}
I_\pm = \bigl\lbrace i\in\lbrace 1,\cdots,M\rbrace \, \bigl| \, \epsilon_i = \pm 1 \bigr\rbrace,
\end{equation*}
which form a partition of $\lbrace 1,\cdots,M \rbrace = I_+ \sqcup I_-$. Recall from Subsection \ref{SubSec:Zeroes} the number $M_f$ of finite zeroes, which is equal to $M$ in the case (i) and equal to $M-1$ in the case (ii). We will also need the subsets
\begin{equation*}
I_{f,\pm} = \bigl\lbrace i\in\lbrace 1,\cdots,M_f\rbrace \, \bigl| \, \epsilon_i = \pm 1 \bigr\rbrace
\end{equation*}
of $\lbrace 1,\cdots, M_f \rbrace$.

\paragraph{Light-cone Lax pair.} As we are considering a model with Lorentz symmetry, it is convenient to introduce the light-cone coordinates
\begin{equation*}
x_\pm = \frac{t\pm x}{2}
\end{equation*}
and the corresponding derivatives $\p_\pm = \p_t \pm \p_x$. Similarly, one can re-express the Lax pair $\bigl(\Lc(z),\Mc(z)\bigr)$ in light-cone components:
\begin{equation*}
\Lc_\pm(z) = \Mc(z) \pm \Lc(z).
\end{equation*}
The zero curvature equation \eqref{Eq:Zce} then takes the form
\begin{equation*}
\p_+ \Lc_-(z) - \p_- \Lc_+(z) + \bigl[ \Lc_+(z), \Lc_-(z) \bigr] = 0.
\end{equation*}
Theorem \ref{Thm:Lax}, which gave the explicit expression of the Lax pair $\bigl( \Lc(z), \Mc(z) \bigr)$, then gives the following corollary (similar to Equation (2.60) of~\nms for non-constrained model).

\begin{corollary}\label{Cor:LightConeLax}~
\begin{enumerate}[(i)]
\item In the case where $\vp(z)\dd z$ is non-zero at infinity, the light-cone Lax pair is given by
\begin{equation*}
\Lc_\pm (z) \approx \pm 2 \sum_{i \in I_\pm} \frac{1}{\vp'(\ze_i)} \frac{\Gamma(\ze_i)}{z-\ze_i} + \mu \pm \frac{\B}{\chi(0)}.
\end{equation*}
\item In the case where $\vp(z)\dd z$ has a simple zero at infinity, the Lax pair is given by
\begin{equation*}
\Lc_\pm(z) \approx \pm 2 \left( \sum_{i \in I_{f,\pm}} \frac{1}{\vp'(\ze_i)} \frac{\Gamma(\ze_i)}{z-\ze_i} + \delta^\pm_M \frac{\B}{\chi'(0)}z \right) + \mu \pm 2 \delta^\pm_M \frac{2\chi'(0)\B_1-\chi''(0)\B}{2\chi'(0)^2},
\end{equation*}
where
\begin{equation*}
\delta^\pm_M = \left\lbrace \begin{array}{ll}
1 & \text{ if } M \in I_\pm, \\
0 & \text{ if } M \notin I_\pm.
\end{array} \right.
\end{equation*}
\end{enumerate}
\end{corollary}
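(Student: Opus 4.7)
The plan is to obtain the corollary by a direct substitution of the expressions for $\Lc(z,x)$ and $\Mc(z,x)$ provided by Theorem~\ref{Thm:Lax} into the definition $\Lc_\pm(z) = \Mc(z) \pm \Lc(z)$, and then to use the relativistic-invariance condition $\epsilon_i^2=1$ to collapse each $i$-sum onto the appropriate subset $I_\pm$ (or $I_{f,\pm}$).

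The key observation is elementary: under the hypothesis $\epsilon_i \in \{+1,-1\}$, one has
\begin{equation*}
\epsilon_i + 1 = \begin{cases} 2 & \text{if } i \in I_+, \\ 0 & \text{if } i \in I_-, \end{cases} \qquad \epsilon_i - 1 = \begin{cases} 0 & \text{if } i \in I_+, \\ -2 & \text{if } i \in I_-, \end{cases}
\end{equation*}
so that $\epsilon_i \pm 1 = \pm 2\,\bm{1}_{I_\pm}(i)$ in a unified notation. This is the only algebraic fact needed for the reduction.

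For case (i), I would add and subtract the two expressions of Theorem~\ref{Thm:Lax}. In the sum defining $\Mc \pm \Lc$, each term at a finite zero $\ze_i$ picks up the coefficient $\epsilon_i \pm 1$ in the numerator, which by the observation above retains only the indices $i \in I_\pm$ with a prefactor $\pm 2$; the $\mu$ term, present only in $\Mc$, is unaffected, and the term $\B/\chi(0)$, present only in $\Lc$, acquires the sign $\pm$. This produces exactly the expression announced in (i).

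For case (ii), the same manipulation is applied, with the caveat that the finite-zero sum now runs up to $M-1$, so the indices retained are those in $I_{f,\pm} = I_\pm \cap \{1,\dots,M-1\}$. The terms involving $\B/\chi'(0)$ and $(2\chi'(0)\B_1-\chi''(0)\B)/(2\chi'(0)^2)$ both carry the factor $\epsilon_M$ in $\Mc$ and the factor $1$ in $\Lc$, so they combine to $(\epsilon_M \pm 1)$ times the corresponding structure, which by the same observation is $\pm 2\,\delta^\pm_M$. Assembling these pieces yields the announced formula. There is no real obstacle here: everything is bookkeeping once one notes that the relativistic condition is precisely what turns the mixed sums $\sum_i \epsilon_i(\cdot) \pm \sum_i (\cdot)$ into chiral sums over $I_\pm$.
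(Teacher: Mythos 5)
Your proposal is correct and coincides with the argument the paper intends: the corollary is stated as an immediate consequence of Theorem \ref{Thm:Lax}, obtained by forming $\Lc_\pm=\Mc\pm\Lc$ and using $\epsilon_i\pm 1=\pm 2$ for $i\in I_\pm$ and $0$ otherwise, exactly as you do. Your bookkeeping in both cases (i) and (ii), including the $\delta^\pm_M$ factor multiplying both the linear-in-$z$ term and the constant term built from $\B$ and $\B_1$, reproduces the stated formulas.
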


Corollary \ref{Cor:LightConeLax} exhibits a very interesting property of the light-cone Lax pair $\Lc_\pm(z)$ concerning its dependence on the spectral parameter $z$, which is true only for relativistic models. Although both the temporal and spatial components $\Mc(z)$ and $\Lc(z)$ of the Lax pair possess poles at all the zeroes $\ze_i$, $i\in\lbrace 1,\cdots,M \rbrace$, of the model, the pole structure of the light-cone components $\Lc_\pm(z)$ is much simpler. Indeed, $\Lc_\pm(z)$ only has poles at the zeroes $\ze_i$ for $i$ in $I_\pm$ (where in the case (ii), we understand a pole at $\ze_M=\infty$ as a linear term in $z$). Let us also note here that both components $\Lc_+(z)$ and $\Lc_-(z)$ have a constant term independent of $z$.

\paragraph{Spin of the currents.} Let us end this subsection by briefly discussing the spin of the various currents appearing in the Lax pair $\Lc_\pm(z)$. For the case of non-constrained model treated in~\nm, it was shown in Proposition 2.5 that the current $\Gamma(\ze_i)$ associated with a zero $\ze_i$ has a definite spin equal to $\epsilon_i$, \textit{i.e.} has spin $+1$ or $-1$, depending on whether $i$ belongs to $I_+$ or $I_-$. The proof of this result in~\nms can be adapted to the present case for the currents $\Gamma(\ze_i)$ associated with finite zeroes $\ze_i$, $i\in\lbrace 1,\cdots,M_f \rbrace$, and, in the case (ii) where $\ze_M=\infty$ is a zero of the twist function, for the current $\B$, which plays the role of $\Gamma(\ze_M)$ (see previous subsections). For the sake of brevity, we will not enter into more details about this computation here. As for the case of non-constrained models, an interesting consequence of this result is the fact that the hierarchy \eqref{Eq:Hierarchy} of local charges associated with the zero $\ze_i$ (or the hierarchy \eqref{Eq:HierarchyInfinity} for the zero $\ze_M$ in the case (ii)) is composed of charges $\Q_i^d$ of either increasing or decreasing spin $\pm(d+1)$, depending whether $i$ belongs to $I_+$ or $I_-$.

\subsection{Reality conditions}
\label{SubSec:Reality}

Recall that the elementary bricks of the model defined above, the Takiff currents $\J\alpha p(x)$, satisfy the reality conditions \eqref{Eq:RealityJc}, keeping track of the choice of a real form $\g_0 = \lbrace X\in\g \, | \, \tau(X)=X \rbrace$ of the complex Lie algebra $\g$. Let us briefly discuss the consequences of these reality conditions.

\paragraph{Gaudin Lax matrix and twist function.} The reality conditions \eqref{Eq:RealityJc} of the Takiff currents $\J\alpha p(x)$ and \eqref{Eq:zReal} of the positions $z_\alpha$, translate into the following equivariance property of the Gaudin Lax matrix \eqref{Eq:S}:
\begin{equation}\label{Eq:RealGamma}
\tau\bigl( \Gamma(z,x) \bigr) = \Gamma( \overline{z},x ).
\end{equation}
Recall from the paragraph following Equation \eqref{Eq:Takiff} that the levels $\ls\alpha p$ satisfy a reality condition similar to the one \eqref{Eq:RealityJc} for the Takiff currents:
\begin{equation*}
\overline{ \ls\alpha p } = \ls\alpha p, \;\;\;\; \forall \, \alpha\in\Si_\rd \;\;\;\;\;\;\;\;\; \text{ and } \;\;\;\;\;\;\;\; \overline{ \ls\alpha p } = \ls{\bar\alpha} p, \;\;\;\; \forall \, \alpha\in\Si_\cd.
\end{equation*}
Thus, the twist function \eqref{Eq:Twist} also satisfies a simple equivariance condition under complex conjugation:
\begin{equation}\label{Eq:RealTwist}
\overline{\vp(z)} = \vp(\overline{z}).
\end{equation}
In particular, this implies two possibilities for the zeroes $\ze_i$ of the twist function:
\begin{itemize}
\item either $\ze_i$ is real, in which case the current $\Gamma(\ze_i,x)$ is real (\textit{i.e.} $\tau$-invariant) by Equation \eqref{Eq:RealGamma} ;
\item either $\ze_i$ is not real, in which case $\overline{\ze_i}$ is also a zero of the twist function and thus is equal to $\ze_{\bar i}$ for some $\bar{i}\in\lbrace 1,\cdots,M\rbrace$ and we then have $\tau\bigl(\Gamma(\ze_i,x)\bigr) = \Gamma(\ze_{\bar{i}},x)$.
\end{itemize}
Let us also note that the constraint $\Cc(x)$ and the currents $\B(x)$ and $\B_1(x)$, extracted from the asymptotic expansion \eqref{Eq:DefB1} of $\Gamma(z,x)$ around infinity, are all real, as can be seen by developing the reality condition \eqref{Eq:RealGamma} around infinity.

\paragraph{Hamiltonian and Lax pair.} Let us consider a real finite zero $\ze_i$ of the twist function. As observed above, the current $\Gamma(\ze_i,x)$ is real, and from the reality condition \eqref{Eq:RealTwist} of the twist function, one also gets that $\vp'(\ze_i)$ is real. Thus the quadratic charge $\Q_i$, expressed as in Equation \eqref{Eq:QHZeros}, is real. Let us now consider a pair $\ze_i$ and $\ze_{\bar i}$ of complex conjugate zeroes. The currents $\Gamma(\ze_i,x)$ and $\Gamma(\ze_{\bar i},x)$ are then conjugate under $\tau$ and one has $\overline{\vp'(\ze_i)}=\vp'(\ze_{\bar i})$. Thus, the quadratic charges $\Q_i$ and $\Q_{\bar i}$ are complex conjugate one of another. Finally, let us consider the case (ii), where $\ze_M=\infty$ is a (real) zero of the twist function. In this case, one checks that $\chi'(0)$ is real from the definition \eqref{Eq:DefChi} of $\chi$ and the reality condition \eqref{Eq:RealTwist}. As the current $\B(x)$ is also real, one gets that the charge $\Q_M$ associated with the zero $\ze_M=\infty$ and given by Equation \eqref{Eq:QM} is real.

Recall that the first-class Hamiltonian $\Hc_0$ of the model can be expressed in terms of the charges $\Q_i$ as in Equation \eqref{Eq:PHZeros}. Thus, it is real if and only if the coefficients $\epsilon_i$ satisfy the following reality condition
\begin{equation*}
\epsilon_i \in \R \;\; \text{ if } \;\; \ze_i \in \R \;\;\;\;\;\; \text{ and } \;\;\;\;\;\; \overline{\epsilon_i} = \epsilon_{\bar{i}} \;\; \text{ if } \;\; \ze_i \in \C \setminus \R.
\end{equation*}
If we consider a relativistic model, whose coefficient $\epsilon_i$ then satisfy condition \eqref{Eq:Relat}, we see that the above reality condition translates into the fact that if $\ze_i$ and $\ze_{\bar i}$ are pair of complex conjugate zeroes, then $\epsilon_i$ and $\epsilon_{\bar i}$ should be equal (and be both either $+1$ or $-1$). In other words, the labels $i$ and $\bar i$ of pairs of complex conjugate zeroes must belong to the same subset $I_+$ or $I_-$ of $\lbrace 1,\cdots,M\rbrace$.

From the reality conditions of the currents $\Gamma(\ze_i,x)$, $\B(x)$ and $\B_1(x)$ and of the coefficients $\epsilon_i$ above, one checks that the Lax pair of the model, given by Theorem \ref{Thm:Lax}, satisfy the following equivariance condition:
\begin{equation*}
\tau\bigl( \Lc(z,x) \bigr) = \Lc( \overline{z},x ) \;\;\;\;\; \text{ and } \;\;\;\;\; \tau\bigl( \Mc(z,x) \bigr) = \Mc( \overline{z},x ).
\end{equation*}
As a consistency check, one can verify the condition on the Lax matrix $\Lc(z,x)$ from its definition \eqref{Eq:Lax} in terms of $\Gamma(z,x)$ and $\vp(z)$ and their reality conditions \eqref{Eq:RealGamma} and \eqref{Eq:RealTwist}.

\paragraph{Positivity of the Hamiltonian.} For this paragraph only, we suppose that $\g_0$ is the compact real form of $\g$ and that all finite zeroes $\ze_i$ of the twist function are real. In this case, the corresponding currents $\Gamma(\ze_i,x)$ are valued in $\g_0$. Moreover, the restriction of the bilinear form $\kappa$ on the compact form is positive. Thus, the quantity $\kappa\bigl( \Gamma(\ze_i,x), \Gamma(\ze_i,x)\bigr)$ is real and positive and the charge $\Q_i$, given by Equation \eqref{Eq:QHZeros}, has the same sign as $-\vp'(\ze_i)$. Similarly, if we are in the case (ii) and $\ze_M=\infty$ is a zero of the twist function, the corresponding charge $\Q_M$, given by \eqref{Eq:QM}, has the same sign as $-\chi'(0)$.

If one choose the coefficient $\epsilon_i$ to be of the sign of $-\vp'(\ze_i)$ (and the coefficient $\epsilon_M$ to be of the sign of $-\chi'(0)$ if we are in the case (ii)), the Hamiltonian of the theory is then positive, and in particular bounded below. If one considers a relativistic model, where all coefficients $\epsilon_i$ are either $+1$ or $-1$, this choice for a positive Hamiltonian then reduces to
\begin{equation*}
\epsilon_i = - \text{sign} \bigl( \vp'(\ze_i) \bigr) \;\;\;\;\; \text{and} \;\;\;\;\; \epsilon_M = - \text{sign} \bigl( \chi'(0) \bigr) \;\; \text{ if } \;\; \ze_M = \infty.
\end{equation*}

\subsection{Change of spectral parameter}
\label{SubSec:ChangeSpec}

A key role in the construction of constrained realisations of local affine Gaudin models in the previous subsections was played by the spectral parameter $z$, which is an auxiliary parameter valued in the Riemann sphere $\mathbb{P}^1$. In this subsection, we discuss how appropriate changes of the spectral parameter lead to equivalent models, generalising the results of Subsection 2.3.2 of~\nms for non-constrained models.

\subsubsection{Transformation of the twist function and the Gaudin Lax matrix}

\paragraph{Mobius transformation.} A change of spectral parameter is encoded in a bijective transformation $f:\CP \rightarrow \CP$, mapping the ``old'' spectral parameter $z$ to a ``new'' spectral parameter $\zt=f(z)$. In the following paragraphs, we shall explain what condition the map $f$ should satisfy so that $\zt$ can be interpreted as the spectral parameter of a new constrained realisation of local affine Gaudin model. We will then show afterwards that this new model is equivalent to the one we started with.

The key ingredients of the local affine Gaudin models, such as the Gaudin Lax matrix $\Gamma(z,x)$ and the twist function $\vp(z)$, are defined as rational functions of $z$. Thus, if we want the map $f$ to preserve the structure of local affine Gaudin models, it should send rational functions to rational functions. It is a classical result that such (bijective) maps are given by Mobius transformations:
\begin{equation}\label{Eq:Mobius}
f : z \longmapsto \frac{az+b}{cz+d}, \;\;\;\; \text{ with } \;\;\;\; ad-bc \neq 0.
\end{equation}

Recall also that we are considering models with reality conditions (see Subsection \ref{SubSec:Reality}), which mostly take the form of equivariance properties with respect to the conjugacy of the spectral parameter. In order for the new model to also satisfy such reality conditions, we will suppose that $f$ is conjugacy equivariant:
\begin{equation}\label{Eq:fReal}
\overline{f(z)} = f\bigl( \overline{z} \bigr),
\end{equation} 
which is equivalent to requiring that the numbers $a$, $b$, $c$ and $d$ are reals.\\

Let us give two useful properties of the Mobius transformation \eqref{Eq:Mobius} that we shall need in the rest of this subsection. It is a bijection from $\CP$ to $\CP$ and its inverse is again a Mobius transformation:
\begin{equation}\label{Eq:fInv}
f^{-1}(\zt) = -\frac{d\,\zt-b}{c\,\zt-a}.
\end{equation}
Moreover, its derivative $f'$ satisfies:
\begin{equation}\label{Eq:fDer}
f'\bigl( f^{-1}(\zt) \bigr) = \frac{(c\,\zt-a)^2}{ad-bc}.
\end{equation}

\paragraph{Twist function and positions of the sites.} In the formal construction of local affine Gaudin models in~\bg, the twist function appears naturally as a 1-form $\vp(z)\dd z$. Thus, under a change of spectral parameter $z \mapsto \zt$, it should transform as
\begin{equation}\label{Eq:Twist1form}
\vp(z) \dd z = \vpt(\zt) \dd \zt.
\end{equation}
This equation serves as a definition of the new twist function $\vpt(\zt)$. More explicitly, one gets
\begin{equation*}
\vpt(\zt) = \frac{1}{f'\bigl( f^{-1}(\zt) \bigr)} \vp\bigl( f^{-1}(\zt) \bigr).
\end{equation*}
Using Equations \eqref{Eq:fInv} and \eqref{Eq:fDer}, we then have
\begin{equation}\label{Eq:NewTwist}
\vpt(\zt) = \frac{ad-bc}{(c\,\zt-a)^2} \; \vp\left( -\frac{d\,\zt-b}{c\,\zt-a} \right).
\end{equation}

Recall that the twist function of a local affine Gaudin model encodes the information about its sites, their levels and their positions. In particular the positions of the sites correspond to the poles of $\vpt(\zt)$. It is clear from Equation \eqref{Eq:Twist1form} that the poles of $\vpt(\zt)$ are the image under $f$ of the poles of $\vp(z)$. Thus, one can describe the sites of the new model with the same labels $\alpha\in\Si=\Si_{\rd} \sqcup \Si_{\cd} \sqcup \overline{\Si}_{\cd}$ and the positions of the sites in the new model are the image under $f$ of the positions in the initial model:
\begin{equation}\label{Eq:NewPositions}
\zt_\alpha = f(z_\alpha), \;\;\;\; \forall \alpha\in\Si.
\end{equation}
Note that thanks to the reality condition \eqref{Eq:fReal} on $f$, real sites $\alpha\in\Si_{\rd}$ have real positions $\zt_\alpha\in\R$ in the new model and pairs of conjugate complex sites $\alpha\in\Si_{\cd}$ and $\bar\alpha\in\overline{\Si}_{\cd}$ have complex conjugate positions $\zt_\alpha$ and $\zt_{\overline\alpha}$.

The local affine Gaudin models considered in the previous subsections only have sites at finite positions. Thus, we should restrict here to the transformations $f$ such that the positions \eqref{Eq:NewPositions} of the sites in the new model are finite. Concretely, this means that the coefficients $c$ and $d$ appearing in the definition \eqref{Eq:Mobius} of $f$ are such that
\begin{equation}\label{Eq:NewSitesFinite}
c z_\alpha + d \neq 0, \;\;\;\;\; \forall \, \alpha\in\Si.
\end{equation}
As none of the poles of $\vp(z)\dd z$ are sent to infinity under the Mobius transformation $f$, the new twist function  1-form $\vpt(\zt)\dd\zt$ is regular at $\zt=\infty$. In particular, this means that it satisfies the first-class condition
\begin{equation}\label{Eq:NewResidueInf}
\res_{\zt=\infty} \vpt(\zt) \dd \zt = 0,
\end{equation}
similar to the first-class condition \eqref{Eq:ResidueInf} on $\vp(z)\dd z$ that was used in Subsection \ref{SubSec:Gauge} to construct the gauge symmetry of the initial model. This will be a key feature in the rest of this subsection, as it will allow us to construct a gauge symmetry for the new model with twist function $\vpt(\zt)$.

To end the discussion about the pole structure of $\vpt(\zt)$, let us end by the following remark. As explained above, the new twist function $\vpt(\zt)$ should have poles only at the $\zt_\alpha = f(z_\alpha)$. If $c\neq 0$ (\textit{i.e.} if the Mobius transformation is not just the combination of a dilation and a translation), one could expect naively from Equation \eqref{Eq:NewTwist} that $\vpt(\zt)$ has a pole at $\zt=a/c$. However, one can show that $\vpt(\zt)$ is in fact regular at this point. Indeed, the point $\zt=a/c$ is the image under $f$ of the point $z=\infty$. Yet, let us recall that $\vp(z)$ satisfies the first-class condition \eqref{Eq:ResidueInf}, which implies the regularity of $\vp(z)\dd z$ at $z=\infty$. Thus, the new twist function $\vpt(\zt)$ is also regular at $\zt=f(\infty)=a/c$. This can be verified explicitly using the asymptotic expansion of $\vp(z)$ around infinity to prove that the term $\vp\bigl( -(d\,\zt-b)/(c\,\zt-a) \bigr)$ in Equation \eqref{Eq:NewTwist} evolves as $O\bigl((c\,\zt-a)^2\bigr)$ around $\zt=c/a$.\\

It is interesting to compare the situation treated here with the one of reference~\nm. In particular, recall that the local affine Gaudin models considered in~\nm, contrarily to the ones considered here, possess a constant term in the twist function, which corresponds to a double pole at infinity. To preserve this particular structure, the changes of spectral parameter studied in~\nms are restricted to Mobius transformations which fix the point at infinity, \textit{i.e.} to translations and dilations $f:z\mapsto az+b$. As the models considered here do not possess a site at infinity, one can consider the larger class of transformations described above: the Mobius transformations \eqref{Eq:Mobius}, which satisfy the additional condition \eqref{Eq:NewSitesFinite} ensuring that that the new model does not possess a site at infinity. It is worth mentioning here that in Section \ref{Sec:Gauging}, we will consider certain transformations outside of the ones of~\nms and of this subsection, which will allow us to relate the models considered in~\nms with the ones considered here.

\paragraph{Levels.} We have seen in the previous paragraph that the new twist function $\vpt(\zt)$ has poles at the $\zt_\alpha=f(z_\alpha)$, which thus correspond to the positions of the sites in the new model. Recall that in a local affine Gaudin model, the levels of the sites are the coefficients of the partial fraction decomposition of the twist function. The next step into the construction of a model with twist function $\vpt(\zt)$ is thus to determine the corresponding levels, which is done by the following lemma:

\begin{lemma}\label{Lem:NewLevels}
The partial fraction decomposition of $\vpt(\zt)$ is
\begin{equation*}
\vpt(\zt) = \sum_{\alpha\in\Si} \sum_{p=0}^{m_\alpha-1} \frac{\lst\alpha p}{(\zt-\zt_\alpha)^{p+1}},
\end{equation*}
with the levels
\begin{equation*}
\lst \alpha 0 = \ls \alpha 0 \;\;\;\;\;\; \text{ and } \;\;\;\;\;\; \lst \alpha p = \sum_{k=p}^{m_\alpha-1} {k-1 \choose p-1 } \frac{(-c)^{k-p}(ad-bc)^p}{(cz_\alpha+d)^{p+k}} \ls\alpha k, \;\; \text{ for } \; p>0.
\end{equation*}
\end{lemma}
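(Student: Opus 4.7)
The plan is to compute the local expansion of $\vpt(\zt)$ directly from the transformation formula \eqref{Eq:NewTwist}, using the partial fraction decomposition \eqref{Eq:Twist} of the original twist function. The crucial identity is that, since both $z=f^{-1}(\zt)$ and $z_\alpha=f^{-1}(\zt_\alpha)$, a short calculation yields
$$z - z_\alpha = \frac{(ad-bc)\,(\zt-\zt_\alpha)}{(c\zt-a)(c\zt_\alpha-a)}.$$
Substituting this into \eqref{Eq:Twist} and using \eqref{Eq:NewTwist} one gets
$$\vpt(\zt) = \sum_{\alpha\in\Si} \sum_{k=0}^{m_\alpha-1} \ls\alpha k \, \frac{(c\zt-a)^{k-1}(c\zt_\alpha-a)^{k+1}}{(ad-bc)^{k}(\zt-\zt_\alpha)^{k+1}}.$$

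For each $k\geq 1$, the factor $(c\zt-a)^{k-1}$ is polynomial in $\zt$, so the corresponding summand is rational with its only pole at $\zt_\alpha$. Expanding it around $\zt_\alpha$ via
$$(c\zt-a)^{k-1} = \sum_{j=0}^{k-1} \binom{k-1}{j} c^j (\zt-\zt_\alpha)^j (c\zt_\alpha-a)^{k-1-j},$$
I read off the coefficient of $(\zt-\zt_\alpha)^{-(p+1)}$ by setting $j=k-p$, which gives the contribution
$$\ls\alpha k \binom{k-1}{p-1} c^{k-p} \frac{(c\zt_\alpha-a)^{k+p}}{(ad-bc)^k}$$
to $\lst\alpha p$ for each $k\geq p\geq 1$. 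A one-line computation from $\zt_\alpha=f(z_\alpha)$ yields the identity $c\zt_\alpha-a=-(ad-bc)/(cz_\alpha+d)$, and substituting this in reproduces the announced formula, since the factor $(-1)^{k+p}$ that appears equals $(-1)^{k-p}$ and combines with $c^{k-p}$ to produce $(-c)^{k-p}$.

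The case $p=0$ must be handled separately because the $k=0$ summand contains a spurious factor $1/(c\zt-a)$, producing a potential pole at $\zt=a/c=f(\infty)$. A direct partial fraction manipulation shows that its contribution to the pole at $\zt_\alpha$ reduces to $\ls\alpha 0/(\zt-\zt_\alpha)$, giving $\lst\alpha 0=\ls\alpha 0$; meanwhile the residue at $\zt=a/c$, summed over $\alpha$, equals $-\sum_{\alpha}\ls\alpha 0$, which vanishes precisely by the first-class condition \eqref{Eq:SumLevels} and confirms the absence of any pole of $\vpt(\zt)$ at $\zt=a/c$. Finally, the first-class condition \eqref{Eq:NewResidueInf} implies $\vpt(\zt)=O(1/\zt^2)$ at infinity, so the partial fraction decomposition has no polynomial part and the simple-pole contributions above exhaust it.

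The computation is essentially bookkeeping; the only conceptually delicate point is the cancellation of the spurious pole at $\zt=a/c$. That cancellation is the place where the first-class condition on $\vp$ enters, and is the reason the passage $\vp\mapsto\vpt$ closes within the class of twist functions satisfying \eqref{Eq:NewResidueInf}.
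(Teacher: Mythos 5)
Your proof is correct and follows essentially the same route as the paper's: both start from \eqref{Eq:NewTwist} and the decomposition \eqref{Eq:Twist}, expand binomially around each $\zt_\alpha$ (your identity for $z-z_\alpha$ is a repackaging of the paper's $\frac{d\zt-b}{c\zt-a}+z_\alpha=\frac{cz_\alpha+d}{c\zt-a}\,(\zt-\zt_\alpha)$), treat the Takiff-mode-zero term separately by partial fractions, and cancel the spurious pole at $\zt=a/c$ using the first-class condition \eqref{Eq:SumLevels}. The only inessential difference is that the paper also addresses the case $c=0$ explicitly, where your step involving the pole at $a/c$ is vacuous and the formula follows directly.
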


Before proving the lemma, let us note that the levels $\ls \alpha 0$ of Takiff mode 0 are invariant under the change of spectral parameter. This is simply a consequence of the fact that the residues of a 1-form do not depend on the choice of coordinate:
\begin{equation*}
\ls \alpha 0 = \res_{z=z_\alpha} \vp(z) \dd z = \res_{\zt = \zt_\alpha} \vpt(\zt) \dd\zt = \lst \alpha 0 .
\end{equation*}
This observation allows one to check explicitly that the new twist function $\vpt(\zt)$ satisfies the first-class condition \eqref{Eq:NewResidueInf}, as:
\begin{equation*}
\res_{\zt=\infty} \vpt(\zt) \dd \zt = \sum_{\alpha\in\Si} \lst \alpha 0 = \sum_{\alpha\in\Si} \ls \alpha 0 = 0.
\end{equation*}

\begin{proof}
To prove Lemma \ref{Lem:NewLevels}, let us start from the partial fraction decomposition \eqref{Eq:Twist} of $\vp(z)$ and the explicit expression \eqref{Eq:NewTwist} of $\vpt(\zt)$, which together yield
\begin{equation*}
\vpt(\zt) = \frac{ad-bc}{(c\zt-a)^2} \sum_{\alpha\in\Si} \sum_{p=0}^{m_\alpha-1} \frac{(-1)^{p+1}\ls \alpha p}{\left( \displaystyle  \frac{d\zt-c}{c\zt-a} + z_\alpha \right)^{p+1}}.
\end{equation*}
Elementaty manipulations give
\begin{equation*}
\frac{d\zt-c}{c\zt-a} + z_\alpha = \frac{d+cz_\alpha}{c\zt-a} \bigl( \zt-\zt_\alpha \bigr),
\end{equation*}
and after reinsertion in the equation above (we suppose here that $c\neq 0$ and will comment on the case $c=0$ at the end of the proof):
\begin{equation}\label{Eq:NewTwistAux}
\vpt(\zt) = (ad-bc) \sum_{\alpha\in\Si} \sum_{p=0}^{m_\alpha-1} \frac{(-c)^{p-1}\ls\alpha p}{(cz_\alpha+d)^{p+1}} \frac{(\zt-a/c)^{p-1}}{(\zt-\zt_\alpha)^{p+1}}.
\end{equation}
We shall need the following direct identity:
\begin{equation*}
\zt_\alpha - a/c = - \frac{ad-bc}{c(cz_\alpha+d)}.
\end{equation*}
For $p=0$, we have
\begin{equation*}
\left. \frac{(\zt-a/c)^{p-1}}{(\zt-\zt_\alpha)^{p+1}} \right|_{p=0} = \frac{1}{\zt_\alpha-a/c} \left( \frac{1}{\zt-\zt_\alpha} - \frac{1}{\zt-a/c} \right) = -\frac{c(cz_\alpha+d)}{ad-bc} \left( \frac{1}{\zt-\zt_\alpha} - \frac{1}{\zt-a/c} \right).
\end{equation*}
For $p>0$, we develop $(\zt-a/c)^{p-1} = (\zt-\zt_\alpha+\zt_\alpha-a/c)^{p-1}$ with the Newton binomial identity, yielding after a few manipulations
\begin{equation*}
\left. \frac{(\zt-a/c)^{p-1}}{(\zt-\zt_\alpha)^{p+1}} \right|_{p>0} = \sum_{k=1}^p {{p-1}\choose{k-1}} \left(- \frac{ad-bc}{c(cz_\alpha+d)} \right)^{k-1} \frac{1}{(\zt-\zt_\alpha)^{k+1}}.
\end{equation*}
Separating the sum in Equation \eqref{Eq:NewTwistAux} into the cases $p=0$ and $p>0$ and using the expressions above, we get
\begin{equation*}
\vpt(\zt) = \sum_{\alpha\in\Si} \left( \ls \alpha 0 \left( \frac{1}{\zt-\zt_\alpha}-\frac{1}{\zt-a/c} \right) + \sum_{p=1}^{m_\alpha-1} \sum_{k=1}^p {p-1 \choose k-1 } \frac{(-c)^{p-k}(ad-bc)^k}{(cz_\alpha+d)^{k+p}} \frac{\ls\alpha p}{(\zt-\zt_\alpha)^{k+1}} \right) 
\end{equation*}
Exchanging the sums over $p$ and $k$, we get
\begin{equation}\label{Eq:DESNewTwist}
\vpt(\zt) = \sum_{\alpha\in\Si} \left( \frac{\ls \alpha 0}{\zt-\zt_\alpha} + \sum_{k=1}^{m_\alpha-1} \left( \sum_{p=k}^{m_\alpha-1} {p-1 \choose k-1 } \frac{(-c)^{p-k}(ad-bc)^k}{(cz_\alpha+d)^{k+p}} \ls\alpha p \right) \frac{1}{(\zt-\zt_\alpha)^{k+1}} \right) - \left( \sum_{\alpha\in\Si} \ls\alpha0 \right) \frac{1}{\zt-a/c}.
\end{equation}
The last term vanishes using the first-class condition \eqref{Eq:SumLevels}, hence the result announced in the lemma (modulo an exchange of the labels $k$ and $p$).

Although we supposed $c\neq 0$ to prove this statement, the expression above makes sense when taking the limit $c\to0$, where it simply becomes (as one keeps only $p=k$ in the sum over $p$)
\begin{equation*}
 \vpt(\zt) \Bigr|_{c=0} = \sum_{\alpha\in\Si} \sum_{k=0}^{m_\alpha-1} \left(\frac{a}{d}\right)^k \frac{\ls\alpha k}{(\zt-\zt_\alpha)^{k+1}}.
\end{equation*}
One verifies easily that this is indeed the transformation law of the twist function when $c=0$, \textit{i.e.} when the Mobius transformation $f$ is simply a translation followed by a dilation of factor $a/d$ (see also~\nm, Equation (2.44)).
\end{proof}

\paragraph{Realisation and Gaudin Lax matrix.} The next step in the construction of a model with twist function $\vpt(\zt)$ is to pick a realisation of the Takiff algebra $\Tc_{\ltb}$ corresponding to the levels $\lst \alpha p$ computed in Lemma \ref{Lem:NewLevels}. Recall that for the initial model with levels $\ls \alpha p$, we picked a realisation $\pi: \Tc_{\lt} \rightarrow \Ac$ into the algebra of observables $\Ac$, which sends the abstract Takiff currents $\Jt\alpha p(x)$ of $\Tc_{\lt}$ to Takiff currents $\J\alpha p(x)$ in $\Ac$ satisfying the Poisson bracket \eqref{Eq:TakiffReal} (see Subsection \ref{SubSubSec:Real}). The following proposition shows that this realisation can be used to construct a realisation of the abstract Takiff currents $\widetilde{J}^{\,\alpha}_{[p]}(x)$ of $\Tc_{\ltb}$ in the same algebra of observables $\Ac$.

\begin{proposition}\label{Prop:NewTakiff}
The map
\begin{equation*}
\begin{array}{rccl}
\pit : & \Tc_{\ltb} & \longrightarrow & \Ac \\[3pt]
       & \widetilde{J}^{\,\alpha}_{[p]} & \longmapsto & \Jtt\alpha p
\end{array}
\end{equation*}
defined by
\begin{equation*}
\Jtt\alpha 0(x) =\J\alpha 0(x) \;\;\;\;\;\; \text{ and } \;\;\;\;\;\; \Jtt \alpha p (x) = \sum_{k=p}^{m_\alpha-1} {k-1 \choose p-1 } \frac{(-c)^{k-p}(ad-bc)^p}{(cz_\alpha+d)^{p+k}} \J\alpha k(x), \;\; \text{ for } \; p>0,
\end{equation*}
is a realisation of the Takiff algebra $\Tc_{\ltb}$.
\end{proposition}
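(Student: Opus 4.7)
The plan is to verify the Takiff Poisson bracket \eqref{Eq:TakiffReal} for the $\Jtt\alpha p$ with levels $\lst\alpha p$ (as given by Lemma \ref{Lem:NewLevels}) by direct computation, exploiting the fact that each $\Jtt\alpha p$ is written as an explicit linear combination of the $\J\alpha k$'s, for which the Takiff bracket is already known. Setting $C_{pk}^\alpha := \binom{k-1}{p-1} (-c)^{k-p}(ad-bc)^p/(cz_\alpha+d)^{p+k}$ for $p \geq 1$ and $C_{0k}^\alpha := \delta_{k,0}$, one has $\Jtt\alpha p = \sum_{k=p}^{m_\alpha-1} C_{pk}^\alpha \, \J\alpha k$, so the bracket $\{\Jtt\alpha p\ti{1}, \Jtt\beta q\ti{2}\}$ can be expanded bilinearly in terms of the known brackets $\{\J\alpha k\ti{1}, \J\beta l\ti{2}\}$.

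Since $\{\J\alpha k\ti{1}, \J\beta l\ti{2}\}$ vanishes for $\alpha \neq \beta$, so does $\{\Jtt\alpha p\ti{1}, \Jtt\beta q\ti{2}\}$, and we restrict to $\alpha = \beta$. Substituting \eqref{Eq:TakiffReal} then gives
\begin{equation*}
\{\Jtt\alpha p\ti{1}(x), \Jtt\alpha q\ti{2}(y)\} = \sum_{\substack{k \geq p,\, l \geq q \\ k+l < m_\alpha}} C_{pk}^\alpha C_{ql}^\alpha \Bigl( [C\ti{12}, \J\alpha{k+l}\ti{1}(x)] \delta_{xy} - \ls\alpha{k+l}\, C\ti{12}\, \delta'_{xy} \Bigr).
\end{equation*}
When $p+q \geq m_\alpha$, the conditions $k \geq p$ and $l \geq q$ force $k+l \geq p+q \geq m_\alpha$, so the summation range is empty and the bracket vanishes, matching the Takiff PB. When $p+q < m_\alpha$, one must show that the right-hand side equals $[C\ti{12}, \Jtt\alpha{p+q}\ti{1}(x)]\delta_{xy} - \lst\alpha{p+q}\,C\ti{12}\,\delta'_{xy}$.

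Both sides are linear in the $\J\alpha r$ (respectively in the $\ls\alpha r$). Matching the coefficient of $\J\alpha r$ (or of $\ls\alpha r$) for each admissible $r$, and factoring out the common prefactor $(-c)^{r-p-q}(ad-bc)^{p+q}/(cz_\alpha+d)^{p+q+r}$ that appears in both $C_{pk}^\alpha C_{ql}^\alpha$ (with $k+l=r$) and in $C_{p+q,r}^\alpha$, the identity to prove reduces to
\begin{equation*}
\sum_{\substack{k+l=r \\ k\geq p,\, l\geq q}} \binom{k-1}{p-1}\binom{l-1}{q-1} = \binom{r-1}{p+q-1},
\end{equation*}
valid for $p, q \geq 1$ and $p+q \leq r \leq m_\alpha - 1$. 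This is a Chu--Vandermonde convolution, equivalent to the power-series identity $(1-x)^{-p}(1-x)^{-q} = (1-x)^{-(p+q)}$ after the substitution $k = p + i$, $l = q + j$. The subcases where $p = 0$ or $q = 0$ are simpler and handled directly using $\Jtt\alpha 0 = \J\alpha 0$ and $\lst\alpha 0 = \ls\alpha 0$.

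The reality conditions \eqref{Eq:RealityJc} for $\Jtt\alpha p$ follow at once from those of $\J\alpha k$ together with the reality of $a, b, c, d$ and the pairing of complex conjugate positions \eqref{Eq:zReal}: the coefficients $C_{pk}^\alpha$ satisfy $\overline{C_{pk}^\alpha} = C_{pk}^\alpha$ for $\alpha \in \Si_\rd$ and $\overline{C_{pk}^\alpha} = C_{pk}^{\overline{\alpha}}$ for $\alpha \in \Si_\cd$, which transports $\tau$-equivariance from the $\J\alpha k$ to the $\Jtt\alpha p$. The only nontrivial step is thus the Chu--Vandermonde identity above; the vanishing case $p+q \geq m_\alpha$ and the consistency of the numerical coefficients with Lemma \ref{Lem:NewLevels} are then routine bookkeeping.
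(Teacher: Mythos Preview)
Your proof is correct and follows essentially the same route as the paper's Appendix~\ref{App:ChangeSpec}: both expand the bracket bilinearly in the $\J\alpha k$, reduce to a convolution condition on the coefficients, and identify that condition with the Vandermonde identity $\sum_{s=p}^{r-q}\binom{s-1}{p-1}\binom{r-s-1}{q-1}=\binom{r-1}{p+q-1}$. The only structural difference is that the paper first abstracts the computation into a general Lemma (any family of coefficients $\beta_{p,q}$ satisfying $\sum_s\beta_{p,s}\beta_{q,r-s}=\beta_{p+q,r}$ yields Takiff currents), motivated by automorphisms of $\C[\xi]/\xi^m\C[\xi]$, and then checks this relation for the specific coefficients; you merge these two steps into a single direct verification. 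Your additional remark on the reality conditions is a useful complement, as the paper does not spell this out in the appendix.
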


\begin{proof}
The proof of this proposition being rather technical, we give it in Appendix \ref{App:ChangeSpec}.
\end{proof}

The expression of the new Takiff currents $\Jtt \alpha p(x)$ in Proposition \ref{Prop:NewTakiff} is in striking resemblance with the expression of the new levels $\lst \alpha p$ in Lemma \ref{Lem:NewLevels}. This is in fact natural, as we shall see below.

Consider the Gaudin Lax matrix $\widetilde{\Gamma}\bigl(\zt,x)$ of the new model with spectral parameter $\zt$ and realisation $\pit$ as in Proposition \ref{Prop:NewTakiff}. It is given from the Takiff currents $\Jtt\alpha p(x)$ by
\begin{equation*}
\widetilde{\Gamma}\bigl(\zt,x)  = \sum_{\alpha\in\Si} \sum_{p=0}^{m_\alpha-1} \frac{\Jtt\alpha p(x)}{(\zt-\zt_\alpha)^{p+1}}.
\end{equation*}
Recall that the expression of the new levels $\lst\alpha p$ found in Lemma \ref{Lem:NewLevels} comes from performing the partial fraction decomposition of the twist function $\vpt(\zt)$, which was defined through Equation \eqref{Eq:Twist1form} by requiring that the twist function transforms as a 1-form. Following the steps of the proof of Lemma \ref{Lem:NewLevels}, but for the Gaudin Lax matrix $\widetilde{\Gamma}\bigl(\zt,x)$ instead of the twist function, one sees that the definition of the new Takiff currents $\Jtt\alpha p(x)$ in Proposition \ref{Prop:NewTakiff}, which mimics the expression of the new levels $\ls\alpha p$, implies that the Gaudin Lax matrix almost transforms as a 1-form, as it satisfies
\begin{equation}\label{Eq:TransfoGammaStrong}
\Gamma(z,x) \dd z = \frac{ad-bc}{(c\,\zt-a)^2} \; \Gamma\left( -\frac{d\,\zt-b}{c\,\zt-a}, x \right) \dd \zt = \widetilde{\Gamma}\bigl(\zt,x) \dd\zt - \Cc(x) \frac{c\,\dd \zt}{c\zt-a}.
\end{equation}
This equation is the equivalent for the Gaudin Lax matrix of the Equation \eqref{Eq:DESNewTwist} for the twist function, where the constraint $\Cc(x) = \sum_{\alpha\in\Si} \J\alpha 0(x)$ now replaces  the term $\sum_{\alpha\in\Si} \ls \alpha 0$ of Equation \eqref{Eq:DESNewTwist}. We thus see that the Gaudin Lax matrix also transforms as a 1-form, up to the constraint $\Cc(x)$. This naturally leads us to discuss the constraint in the new model.

\paragraph{Constraint.} The initial model considered in the previous subsections possesses a constraint $\Cc(x) \approx 0$, defined by Equation \eqref{Eq:C}. Similarly, we define the constraint for the new model (with spectral parameter $\zt$) as
\begin{equation*}
\widetilde{\Cc}(x) = - \res_{\zt = \infty} \widetilde{\Gamma}\bigl(\zt,x) = \sum_{\alpha\in\Si} \Jtt\alpha p(x).
\end{equation*}
However, according to Proposition \ref{Prop:NewTakiff}, the currents of Takiff mode $p=0$ are not modified by the change of spectral parameter. Thus, the constraints of the inital model and the new model coincide:
\begin{equation*}
\widetilde{\Cc}(x) = \Cc(x).
\end{equation*}
This is a particularly important result, as it shows that the constraint and hence the gauge transformation are not modified by the change of spectral parameter and that one can use the same notion of weak equality $\approx$ in both models. In particular, it shows, starting from Equation \eqref{Eq:TransfoGammaStrong}, that the Gaudin Lax matrix indeed transforms as a 1-form in the algebra of physical observables, \textit{i.e.} when considered weakly:
\begin{equation}\label{Eq:Gamma1form}
\Gamma(z,x)\dd z \approx \widetilde{\Gamma}\bigl(\zt,x) \dd \zt.
\end{equation}

\subsubsection{Equivalence of the two models}
\label{SubSubSec:EquivChangeSpec}

\paragraph{Summary.} Before going further, let us summarise what we did so far. We started with the model $\mathbb{M}^{\vp,\pi}_{\eb}$ (see Paragraph \ref{SubSubSec:Summary}), as constructed in the previous subsections, \textit{i.e.} a constrained realisation of a local affine Gaudin model with spectral parameter $z$. In particular, the observables $\Ac$ and the positions of the sites of this model are specified by its twist function $\vp(z)$ and the realisation $\pi:\Tc_{\lt} \rightarrow \Ac$, whereas its Hamiltonian is specified by the parameters $\eb$.

We then considered a change of spectral parameter $z \mapsto \zt=f(z)$ and started the construction of a new model, with spectral parameter $\zt$. For that, we defined the twist function $\vpt(\zt)$ of this new model in such a way that $\vpt(\zt)\dd \zt$ and $\vp(z)\dd z$ are the same 1-form expressed in the two different coordinates $\zt$ and $z$. This defines the Takiff datum $\ltb$ and the positions of the sites $\zt_\alpha$ of the new model (see Lemma \ref{Lem:NewLevels}). We then showed that the Takiff algebra $\Tc_{\ltb}$ also possesses a natural realisation $\pit$ in the same algebra of observables $\Ac$ as the initial model, constructed from the realisation $\pi$ (Proposition \ref{Prop:NewTakiff}). Moreover, we observed that the constraint naturally associated with the new model coincides with the one of the initial model, thus showing that the two models share the same physical phase space.

The new model can then be identified as a model $\mathbb{M}^{\vpt,\pit}_{\bm{\widetilde\epsilon}}$, with spectral parameter $\zt$, twist function $\vpt(\zt)$ and realisation $\pit$. The last element that one needs to define the new model is then its Hamiltonian, or equivalently the parameters $\bm{\widetilde\epsilon}$. We will show in this subsection that there is a canonical choice for $\bm{\widetilde\epsilon}$ and that this choice implies the equivalence of the two models $\mathbb{M}^{\vp,\pi}_{\eb}$ and $\mathbb{M}^{\vpt,\pit}_{\bm{\widetilde\epsilon}}$.

\paragraph{Zeroes of the new twist function and quadratic charges.} As explained in Subsection \ref{SubSec:Zeroes}, the construction of the Hamiltonian of the new model is deeply related to the zeroes of its twist function 1-form $\vpt(\zt)\dd\zt$. It is clear from Equation \eqref{Eq:Twist1form} that the zeroes $\zet_i$ of $\vpt(\zt)\dd\zt$ are the images under the map $f$ of the zeroes $\ze_i$ of $\vp(z)\dd z$:
\begin{equation}\label{Eq:NewZeros}
\zet_i = f(\ze_i), \;\;\;\;\;\;\;\; \forall i\in \lbrace 1,\cdots,M \rbrace.
\end{equation}
Let us recall that in Subsection \ref{SubSec:Zeroes}, we considered two cases (i) and (ii), whether $z=\infty$ is a zero of $\vp(z)\dd z$ or not. In both cases, all the zeroes $\ze_i$, including potentially the infinity, possess an image $f(\ze_i)$ in $\CP$ under $f$, which can be itself finite or infinite, depending on the choice of $f$. Thus, the new model itself can fit either in the cases (i) or (ii), independently of what one had in the initial model. However, as we shall see, we will be able to treat in an uniform way all possible cases that one can consider (the initial model being in case (i) or (ii) and the the new model being in case (i) or (ii)).\\

Following Subsection \ref{SubSec:Zeroes}, the next step in the construction of the Hamiltonian is to associate to each zero $\zet_i$ a quadratic charge $\widetilde{\Q}_i$, defined as
\begin{equation*}
\widetilde{\Q}_i = \res_{\zt = \zet_i} \widetilde{\Q}(\zt)\dd \zt.\vspace{-5pt}
\end{equation*}
In this expression, the spectral parameter dependent quantity $\widetilde{\Q}(\zt)$ is defined as in Equation \eqref{Eq:QSpec} for the inital model, \textit{i.e.} as
\begin{equation*}
\widetilde{\Q}(\zt) = - \frac{1}{2\vpt(\zt)} \int_\D \dd x \; \kappa\bigl( \widetilde{\Gamma}(\zt,x), \widetilde{\Gamma}(\zt,x) \bigr).
\end{equation*}
It is clear from the transformation laws \eqref{Eq:Twist1form} and \eqref{Eq:Gamma1form} that the quantity $\Q(z)$ also transforms as a 1-form (at least weakly):
\begin{equation*}
\Q(z)\dd z \approx \widetilde{\Q}(\zt) \dd \zt.
\end{equation*}
It is a standard result of complex analysis that the residues of a 1-form are independent of the choice of coordinate. Thus, we obtain that the quadratic charges $\Q_i$ and $\widetilde{\Q}_i$ are weakly equal
\begin{equation}\label{Eq:NewQi}
\Q_i \approx \widetilde{\Q}_i.
\end{equation}

\paragraph{Hamiltonian.} Let us recall that the first-class Hamiltonian $\Hc_0$ of the initial model is defined by a choice of parameters $\epsilon_i$ ($i\in\lbrace 1,\cdots,M\rbrace$) associated with the zeroes $\ze_i$, as in Equation \eqref{Eq:PHZeros}. Similarly, one defines the Hamiltonian $\widetilde{\Hc}_0$ of the new model by choosing a parameter $\widetilde\epsilon_i$ for each zero $\zet_i$. There is a one-to-one correspondence \eqref{Eq:NewZeros} between the zeroes $\ze_i$ of the initial model and the zeroes $\zet_i$ of the new one. Thus, a canonical choice for defining the parameters $\widetilde\epsilon_i$ is to take them equal to the parameters $\epsilon_i$ of the initial model, so that the Hamiltonian of the new model reads
\begin{equation*}
\widetilde{\Hc}_0 = \sum_{i=1}^M \epsilon_i \widetilde{\Q}_i.
\end{equation*}
Using the weak invariance \eqref{Eq:NewQi} of the quadratic charges $\Q_i$'s under the change of spectral parameter, we then get that the Hamiltonians of the initial and new models are weakly equal:
\begin{equation*}
\widetilde{\Hc}_0 \approx \Hc_0.
\end{equation*}
In particular, they define the same dynamics on the physical observables. Thus, the models $\mathbb{M}^{\vp,\pi}_{\eb}$ and $\mathbb{M}^{\vpt,\pit}_{\eb}$ are equivalent, as announced.

\paragraph{Transformation of the Lax pair.}\label{Par:TransfoLax} We have just shown that the dynamics of the new and the initial models are equivalent. As this dynamic is encoded in the zero curvature equation \eqref{Eq:Zce} on the Lax pair $\bigl(\Lc(z),\Mc(z)\bigr)$, it is natural to wonder how this Lax pair transforms under the change of spectral parameter. Similarly to Equation \eqref{Eq:Lax} for the initial model, the Lax matrix of the new model is defined by
\begin{equation*}
\widetilde{\Lc}(\zt,x) = \frac{\widetilde{\Gamma}(\zt,x)}{\vpt(\zt)}.
\end{equation*}
Considering the transformation laws \eqref{Eq:Twist1form} and \eqref{Eq:Gamma1form} of the twist function and the Gaudin Lax matrix, one sees that the Lax matrix transforms as a function:
\begin{equation*}
\widetilde{\Lc}(\zt,x) \approx \Lc(z,x).
\end{equation*}
Starting from the expression of the temporal component of the Lax pair $\Mc(z,x)$ in Theorem \ref{Thm:Lax}, one can also compute its behaviour under the change of spectral parameter\footnote{This computation is however quite subtle, as it requires taking into account the transformation of the Lagrange multiplier $\mu(x)$. Indeed, as the Hamiltonian $\Hc_0$ and $\widetilde{\Hc}_0$ are only equal weakly, they define the same dynamics only through a change of Lagrange multipliers.}. One then finds that it also transforms as a function:
\begin{equation*}
\widetilde{\Mc}(\zt,x) \approx \Mc(z,x).
\end{equation*}
It is then clear that the zero curvature equation \eqref{Eq:Zce} for the initial Lax pair $\bigl(\Lc(z),\Mc(z)\bigr)$ is equivalent to the one for the transformed Lax pair $\left(\widetilde{\Lc}(\zt),\widetilde{\Mc}(\zt)\right)$, which provides another check of the equivalence of the dynamics of the two models.

\section{Gauging non-constrained affine Gaudin models and diagonal Yang-Baxter deformations}
\label{Sec:Gauging}

The realisations of affine Gaudin models considered in~\nms are described by sites $\alpha\in\Si$ with positions $z_\alpha$ in the complex plane, as the ones studied in Section \ref{Sec:AGM} of the present article, but also possess a site of multiplicity 2 at infinity, taking the form of a constant term $\ell^\infty$ in the twist function. This site at infinity is the main difference between the models of~\nms and the ones considered in this article: in particular, the presence of this site prevents the introduction of a constraint in the models of~\nm\footnote{Recall that in the present article the constraint $\Cc(x)$ was introduced in Subsection \ref{SubSubSec:Constraint} as (minus) the residue at infinity of the Gaudin Lax matrix $\Gamma(z,x)$. The fact that one can consistently restrict the dynamics of the model to the constrained surface $\Cc(x)\approx 0$ comes from the fact that this constraint is conserved under the time evolution of the model (and more generally under the Hamiltonian flow of the spectral parameter dependent Hamiltonian $\Hc(z)$). In the models considered in~\nm, only the integral of the residue at infinity of $\Gamma(z,x)$ is conserved by the Hamiltonian flow of $\Hc(z)$. Thus, this residue cannot be chosen as a constraint of the model (and instead just generates a global symmetry of the model).}.

In this section, we will show that all non-constrained models considered in~\nms are equivalent to a constrained model of the form considered in Section \ref{Sec:AGM}. This will be done using a change of the spectral parameter of the model, which will send all sites of the non-constrained model, including the site with multiplicity two at infinity, to sites in the complex plane. Before developing this idea further, let us recall briefly the construction of the non-constrained model we start with, based on~\bnm.

\subsection{Non-constrained affine Gaudin model}
\label{SubSec:NonConst}

\paragraph{Twist function and Gaudin Lax matrix.} Let us consider a non-constrained model $\mathbb{M}^{\vp,\pi}_{\eb}$, following the notations of~\nm. Its twist function takes the form
\begin{equation}\label{Eq:TwistNonConst}
\vp(z) = \sum_{\alpha\in\Si} \sum_{p=0}^{m_\alpha-1} \frac{\ls\alpha p}{(z-z_\alpha)^{p+1}} - \ell^\infty.
\end{equation}
The first term in this expression corresponds to sites $\alpha\in\Si$ with finite position $z_\alpha\in\C$, multiplicity $m_\alpha\in\Z_{\geq 1}$ and levels $\ls \alpha p$, $p\in\lbrace 0,\cdots,m_\alpha-1\rbrace$. The constant term $\ell^\infty$ creates a double pole in the 1-form $\vp(z)\dd z$ and thus corresponds to a site of multiplicity two at infinity (see~\bgs for details about the treatment of a site at infinity).

This twist function specify the Takiff datum $\lt$ of the model. Its algebra of observables $\Ac$ is then defined by taking a realisation $\pi : \Tc_{\lt} \rightarrow \Ac$ of the Takiff algebra $\Tc_{\lt}$. This realisation is given concretely by a choice of Takiff currents $\J\alpha p(x)$ in $\Ac$ for $\alpha\in\Si$ and $p\in\lbrace 0,\cdots,m_\alpha-1\rbrace$. These Takiff currents are then gathered into the Gaudin Lax matrix
\begin{equation}\label{Eq:GammaNonConst}
\Gamma(z,x) = \sum_{\alpha\in\Si} \sum_{p=0}^{m_\alpha-1} \frac{\J\alpha p(x)}{(z-z_\alpha)^{p+1}}.
\end{equation}

\paragraph{Hamiltonian.} The twist function \eqref{Eq:TwistNonConst} can be re-expressed as
\begin{equation*}
\vp(z) = - \ell^\infty \frac{\prod_{i=1}^M(z-\ze_i)}{\prod_{\alpha\in\Si} (z-z_\alpha)^{m_\alpha}},
\end{equation*}
upon introducing its zeroes $\ze_1,\cdots,\ze_M$, with $M=\sum_{\alpha\in\Si} m_\alpha$. One can then consider the spectral parameter dependent quantity
\begin{equation*}
\Q(z) = - \frac{1}{2\vp(z)} \int_{\D} \dd x \, \kappa\bigl( \Gamma(z,x), \Gamma(z,x) \bigr)
\end{equation*}
and the quadratic charges $\Q_i = \res_{z=\ze_i} \Q(z)\dd z$ associated with the zeroes $\ze_i$, $i\in\lbrace1,\cdots,M\rbrace$. The Hamiltonian of the non-constrained model is then defined by a choice of parameters $\eb=(\epsilon_1,\cdots,\epsilon_M)$:
\begin{equation*}
\Hc = \sum_{i=1}^M \epsilon_i \Q_i.
\end{equation*}

\subsection{Change of spectral parameter and constrained model}
\label{SubSec:ChangeSpec2}

\subsubsection{Mobius transformation}
\label{SubSubSec:Mobius}

As explained at the beginning of this section, our goal here is to relate the non-constrained model defined above to a constrained model, using a change of the spectral parameter $z$. This idea then follows closely the ones developed in Subsection 2.3.2 of~\nms and Subsection \ref{SubSec:ChangeSpec} of the present article. In particular, following Subsection \ref{SubSec:ChangeSpec}, we consider a change of spectral parameter $z\mapsto \zt = f(z)$, where the bijective map $f:\CP\rightarrow\CP$ sends rational functions of $z$ to rational functions of $\zt$ and hence is a Mobius transformation \eqref{Eq:Mobius}. The goal of this subsection is then to construct a model with spectral parameter $\zt$, as we did in Subsection \ref{SubSec:ChangeSpec} (the difference being that here we start with a non-constrained model).

We would like the model with spectral parameter $\zt$ to be a constrained model, as the ones considered in Section \ref{Sec:AGM}. In particular, this model then should not have a site at infinity. Thus, the positions of the sites of the non-constrained model we start with should not be sent to infinity by the Mobius transformation $f$. A site $\alpha\in\Si$ with finite position $z_\alpha\in\C$ is sent to the position
\begin{equation}\label{Eq:NewPositions2}
\zt_\alpha = f(z_\alpha) = \frac{a z_\alpha+b}{c z_\alpha+d}.
\end{equation}
We then require that the coefficients $c$ and $d$ satisfy
\begin{equation*}
c z_\alpha + d \neq 0, \;\;\;\;\; \forall \, \alpha\in\Si,
\end{equation*}
similarly to Equation \eqref{Eq:NewSitesFinite} in Subsection \ref{SubSec:ChangeSpec}. However, in the case considered in this subsection, one has an additional condition on the Mobius transformation $f$. Indeed, it should also send the site at $z=\infty$ to a finite position $\zi\in\C$. This is simply equivalent to requiring that the coefficient $c$ in \eqref{Eq:Mobius} is non-zero, which we shall suppose through all this subsection. The point at $z=\infty$ is then mapped under $f$ to the finite point
\begin{equation}\label{Eq:PosInf}
\zi = f(\infty) = \frac{a}{c}.
\end{equation}

\subsubsection{Twist function, sites and levels}

Similarly to Subsections 2.3.2 of~\nms and \ref{SubSec:ChangeSpec} of the present article, we define the twist function of the model with spectral parameter $\zt$ by supposing that it transforms as a 1-form, \textit{i.e.} that $\vp(z)\dd z = \vpt(\zt)\dd\zt$. Concretely, this means that the new twist function is given by (see Equation \eqref{Eq:NewTwist})
\begin{equation}\label{Eq:NewTwistConst}
\vpt(\zt) = \frac{ad-bc}{(c\,\zt-a)^2} \; \vp\left( -\frac{d\,\zt-b}{c\,\zt-a} \right),
\end{equation}
with $\vp$ the twist function \eqref{Eq:TwistNonConst} of the non-constrained model.

We now want to construct a realisation of affine Gaudin model with twist function $\vpt(\zt)$. The defining data of such a model (sites, positions, levels, ...) is encoded in the partial fraction decomposition of the twist function \eqref{Eq:NewTwistConst}. This is given by the following lemma.

\begin{lemma}\label{Lem:NewLevels2}
The partial fraction decomposition of $\vpt(\zt)$ is
\begin{equation*}
\vpt(\zt) = \sum_{\alpha\in\Si} \sum_{p=0}^{m_\alpha-1} \frac{\lst\alpha p}{(\zt-\zt_\alpha)^{p+1}} + \frac{\lst{(\infty)}0}{\zt-\zi} + \frac{\lst{(\infty)}1}{(\zt-\zi)^2},
\end{equation*}
with the levels
\begin{equation*}
\lst \alpha 0 = \ls \alpha 0 \;\;\;\;\;\; \text{ and } \;\;\;\;\;\; \lst \alpha p = \sum_{k=p}^{m_\alpha-1} {k-1 \choose p-1 } \frac{(-c)^{k-p}(ad-bc)^p}{(cz_\alpha+d)^{p+k}} \ls\alpha k, \;\; \text{ for } \; p>0,
\end{equation*}
together with
\begin{equation}\label{Eq:LevelInf}
\lst{(\infty)}0 = -\sum_{\alpha\in\Si} \ls\alpha 0 \;\;\;\;\;\; \text{ and } \;\;\;\;\;\; \lst{(\infty)}1 = - \frac{\ell^\infty(ad-bc)}{c^2}.
\end{equation}
\end{lemma}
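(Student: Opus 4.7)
The plan is to reduce Lemma \ref{Lem:NewLevels2} to the computation already carried out in the proof of Lemma \ref{Lem:NewLevels}, plus a direct treatment of the constant term. Split the twist function as
\[
\vp(z) = \vp_{\text{fin}}(z) - \ell^\infty, \qquad \vp_{\text{fin}}(z) = \sum_{\alpha\in\Si}\sum_{p=0}^{m_\alpha-1}\frac{\ls\alpha p}{(z-z_\alpha)^{p+1}},
\]
and apply the Möbius change of variable \eqref{Eq:NewTwistConst} to each piece separately.

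First, I would reproduce the calculation of the proof of Lemma \ref{Lem:NewLevels} applied to $\vp_{\text{fin}}(z)$. That proof proceeded entirely by manipulating the partial fraction decomposition of the finite-pole part of the twist function and arrived at the intermediate identity \eqref{Eq:DESNewTwist}. Crucially, the only place where the first-class condition $\sum_{\alpha\in\Si}\ls\alpha 0=0$ was used was to kill the last term $-(\sum_\alpha \ls\alpha 0)/(\zt - a/c)$. Since we are \emph{not} assuming the first-class condition here, that term survives and gives a simple pole at $\zt = \zi = a/c$ with residue $-\sum_{\alpha\in\Si}\ls\alpha 0$, which is exactly the claimed value of $\lst\is 0$. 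The remaining terms of \eqref{Eq:DESNewTwist} give, without any change, the stated levels $\lst\alpha p$ at the finite sites $\zt_\alpha = f(z_\alpha)$.

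Next, I would transform the constant term. Using \eqref{Eq:NewTwistConst}, the contribution of $-\ell^\infty$ to $\vpt(\zt)$ is simply
\[
-\ell^\infty\,\frac{ad-bc}{(c\zt-a)^2} = -\frac{\ell^\infty(ad-bc)}{c^2}\cdot\frac{1}{(\zt-\zi)^2},
\]
which is precisely a double pole at $\zt = \zi$ with coefficient $\lst\is 1 = -\ell^\infty(ad-bc)/c^2$ and no simple-pole contribution. Summing the two contributions then yields the partial fraction decomposition stated in the lemma.

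I don't anticipate a genuine obstacle: the combinatorics at finite sites is identical to the one in Lemma \ref{Lem:NewLevels} and is already checked there, and the only new feature, namely the emergence of a site of multiplicity two at $\zi = a/c$, arises transparently from two independent sources, the surviving $\sum_\alpha \ls\alpha 0$ term (simple pole) and the Möbius transform of the constant $-\ell^\infty$ (double pole). As a consistency check, one can verify that the sum of all residues of $\vpt(\zt)\dd\zt$ still vanishes, since $\lst\is 0 + \sum_{\alpha\in\Si}\lst\alpha 0 = -\sum_\alpha \ls\alpha 0 + \sum_\alpha \ls\alpha 0 = 0$, which guarantees the first-class condition \eqref{Eq:NewResidueInf} for the new, now fully constrained, model, consistently with the fact that $c\neq 0$ forces the image of $z=\infty$ into a finite point.
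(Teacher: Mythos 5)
Your proposal is correct and follows essentially the same route as the paper: the paper's proof likewise reuses the computation of Lemma \ref{Lem:NewLevels} (its intermediate identity \eqref{Eq:DESNewTwist}) for the finite-pole part, observes that the first-class condition no longer kills the $\bigl(\sum_{\alpha}\ls\alpha 0\bigr)/(\zt-a/c)$ term, and adds the Möbius image of the constant $-\ell^\infty$, which is exactly the double pole $-\ell^\infty(ad-bc)/\bigl(c^2(\zt-\zi)^2\bigr)$. No gaps.
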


\begin{proof}
The proof of this lemma is very similar to the one of Lemma \ref{Lem:NewLevels}. Indeed, in the present case one has to apply the transformation \eqref{Eq:NewTwistConst} starting with the twist function \eqref{Eq:TwistNonConst}, whereas in Lemma \ref{Lem:NewLevels}, we applied the same transformation but starting with the twist function \eqref{Eq:Twist}. The two twist functions \eqref{Eq:TwistNonConst} and \eqref{Eq:Twist} differ from the presence of the constant term $\ell^\infty$ in \eqref{Eq:TwistNonConst}. Thus, in the present case, one can perform the computation of the new twist function \eqref{Eq:NewTwistConst} in the same way as in Lemma \ref{Lem:NewLevels} just by adding the terms corresponding to the constant term $\ell^\infty$. More precisely, it is straightforward to adapt Equation \eqref{Eq:DESNewTwist} in the proof of Lemma \ref{Lem:NewLevels} to the present case, yielding
\begin{align*}
\vpt(\zt) =& \sum_{\alpha\in\Si} \left( \frac{\ls \alpha 0}{\zt-\zt_\alpha} + \sum_{k=1}^{m_\alpha-1} \left( \sum_{p=k}^{m_\alpha-1} {p-1 \choose k-1 } \frac{(-c)^{p-k}(ad-bc)^k}{(cz_\alpha+d)^{k+p}} \ls\alpha p \right) \frac{1}{(\zt-\zt_\alpha)^{k+1}} \right) \\
 & \hspace{150pt} - \left( \sum_{\alpha\in\Si} \ls\alpha0 \right) \frac{1}{\zt-a/c} - \frac{\ell^\infty(ad-bc)}{c^2} \frac{1}{(\zt-a/c)^2}.
\end{align*}
Note that in the present case, contrarily to the one of Lemma \ref{Lem:NewLevels}, the twist function of the unconstrained model does not satisfy the first-class condition \eqref{Eq:ResidueInf} so that $\sum_{\alpha\in\Si} \ls \alpha 0$ can be non-zero. The Lemma then directly follows.
\end{proof}

One can read from Lemma \ref{Lem:NewLevels2} the defining data of the model with twist function $\vpt(\zt)$. As expected, in addition to the sites $\alpha\in\Si$ with positions $\zt_\alpha = f(z_\alpha)$, it possesses a site of multiplicity 2 that we shall denote by $\is$, at the position $\zi=f(\infty)$ (we denote this additional site as $\is$ as it is the image under $f$ of the site at infinity of the initial model ; however note that its position $\zi$ in the new model with spectral parameter $\zt$ is finite). Let us introduce the set
\begin{equation*}
\Sit = \Si \sqcup \lbrace (\infty) \rbrace,
\end{equation*}
which labels the sites of the model with twist function $\vpt(\zt)$ (for completeness, let us note here that the site $\is$ is real, hence is in $\Sit_{\rd}$, and has real position $\zi$). The levels of the sites in $\Sit$ are given by Lemma \ref{Lem:NewLevels2}. We will denote by $\ltb$ the corresponding Takiff datum.

\subsubsection{Takiff currents and realisation}
\label{SubSubSec:NewCurrents}

\paragraph{Takiff currents.} Now that we have determined the sites $\alpha\in\Sit$ and the Takiff datum $\ltb$ of the model with twist function $\vpt(\zt)$, the next step into its construction is to define its algebra of observables $\Act$, by choosing a realisation $\pit : \Tc_{\ltb} \rightarrow \Act$ of the Takiff algebra with Takiff datum $\ltb$. Concretely, this requires to find an algebra $\Act$ containing Takiff currents $\Jtt\alpha p(x)$ with levels $\lst\alpha p$, for all $\alpha \in \Sit$ and $p\in\lbrace 0,\cdots,m_\alpha-1\rbrace$.

Recall from the previous paragraph that the sites $\Sit$ of the new model are divided into two parts: the sites $\alpha\in\Si$, which are the image under $f$ of the sites of the initial model with finite positions, and the site $\is$, which is the image of the site at infinity of the initial model. Let us focus first on the sites $\alpha \in \Si$. In the initial model, these are realised in the algebra of observables $\Ac$, by the Takiff realisation $\pi : \Tc_{\lt} \rightarrow \Ac$, with Takiff currents $\J\alpha p(x)$ (see Subsection \ref{SubSec:NonConst}). For $\alpha\in\Si$, Proposition \ref{Prop:NewTakiff} (see also Appendix \ref{App:ChangeSpec}) provides us with a realisation of the Takiff currents with levels $\lst\alpha p$ in $\Ac$, by defining
\begin{equation}\label{Eq:NewCurrents2}
\Jtt\alpha 0(x) =\J\alpha 0(x) \;\;\;\;\;\; \text{ and } \;\;\;\;\;\; \Jtt \alpha p (x) = \sum_{k=p}^{m_\alpha-1} {k-1 \choose p-1 } \frac{(-c)^{k-p}(ad-bc)^p}{(cz_\alpha+d)^{p+k}} \J\alpha k(x), \;\; \text{ for } \; p>0.
\end{equation}

There is left to find a realisation of the site $\is$, which has multiplicity two and levels \eqref{Eq:LevelInf}. Such a realisation has been constructed (and extensively used) in~\nm: the PCM+WZ realisation. It takes value in the Poisson algebra $\Ac_{G_0}$ generated by canonical fields on the cotangent bundle $T^*G_0$ of a real Lie group $G_0$. We recall briefly the definition of $\Ac_{G_0}$ and of the PCM+WZ realisation in Appendix \ref{App:PCM+WZ}. In particular, $\Ac_{G_0}$ is generated by a group-valued field $g(x)\in G_0$ and a Lie algebra-valued field $X(x)\in\g_0$. The currents of the realisation involve the $\g_0$-valued spatial derivative $j(x)=g(x)^{-1}\p_x g(x)$ as well as another $\g_0$-valued field $W(x)$, also constructed from $g(x)$ and its spatial derivative (both $j(x)$ and $W(x)$ then Poisson commute with $g(x)$ and between themselves).

Here we will suppose that the site $\is \in \Sit$ is realised in $\Ac_{G_0}$ by the PCM+WZ realisation and thus that the corresponding Takiff currents are
\begin{equation*}\label{Eq:NewCurrentInf}
\Jtt\is 0(x) = X(x) + \frac{1}{2} \lst\is 0\, W(x) + \frac{1}{2} \lst\is 0 \, j(x) \;\;\;\;\;\; \text{ and } \;\;\;\;\;\; \Jtt\is 1(x) = \lst\is 1 \, j(x),
\end{equation*}
according to Appendix \ref{App:PCM+WZ}.

\paragraph{Realisation and algebra of observables.} Equation \eqref{Eq:NewCurrents2} gives a realisation of the Takiff currents $\Jtt\alpha p(x)$ for $\alpha\in\Si \subset \Sit$, in the algebra of observables $\Ac$ of the initial non-constrained model. Equation \eqref{Eq:NewCurrentInf} gives a realisation of the Takiff currents $\Jtt\is p(x)$ associated with the site $\is \in \Sit$, in the algebra of observables $\Ac_{G_0}$. Together, they define a realisation $\pit:\Tc_{\ltb} \rightarrow \Act$
of the whole Takiff algebra $\Tc_{\ltb}$ in the algebra
\begin{equation*}
\Act = \Ac \otimes \Ac_{G_0}.
\end{equation*}
In the rest of this section, we will denote by $\lbrace\cdot,\cdot\rbrace_{\Act}$ the Poisson bracket on $\Act$.

The realisation $\pit$ sends the abstract Takiff currents $\widetilde{J}^{\hspace{1pt}\alpha}_{[p]}$ ($\alpha\in\Sit$) in $\Tc_{\ltb}$ to
\begin{equation*}
\pit\Bigl( \widetilde{J}^{\hspace{1pt}\alpha}_{[p]} \Bigr) = \Jtt\alpha p \otimes 1 \, \in \Ac \otimes \Ac_{G_0} \;\;\;\; \text{ for } \;\; \alpha \in \Si \subset \Sit
\end{equation*}
and
\begin{equation*}
\pit\Bigl( \widetilde{J}^{\hspace{2pt}\is}_{[p]} \Bigr) = 1 \otimes \Jtt \is p \, \in \Ac \otimes \Ac_{G_0},
\end{equation*}
where the notations $1$ in the above equations represent the unit elements of the algebras $\Ac$ and $\Ac_{G_0}$. In the rest of this section, when there are no needs for the distinction, we will often write elements $\mathcal{F} \otimes 1$ and $1 \otimes \mathcal{G}$ of $\Act$ simply as $\mathcal{F}$ and $\mathcal{G}$, by a slight abuse of notations.

The choice of this representation $\pit$ defines the algebra of observables of the new model with spectral parameter $\zt$ to be the algebra $\Act=\Ac \otimes \Ac_{G_0}$. This can seem problematic at first, as we would like to construct a model which is equivalent to the initial model $\mathbb{M}^{\vp,\pi}_{\eb}$, whose algebra of observables is $\Ac$, which is then smaller that $\Act$. As we shall see in the Subsection \ref{SubSubSec:Gauging}, this apparent issue will be resolved by introducing a constraint in the new model, hence reducing its physical observables $\Act$ to the observables $\Ac$ of the initial model (which is non-constrained).

\paragraph{Gaudin Lax matrix.} Let us end this subsection by describing the Gaudin Lax matrix of the model with spectral parameter $\zt$. It is defined from the choice of Takiff currents $\Jtt\alpha p(x)$ by
\begin{equation}\label{Eq:GammaTilde}
\Gt(\zt,x) = \sum_{\alpha\in\Sit} \frac{\Jtt\alpha p(x)}{(\zt-\zt_\alpha)^{p+1}}.
\end{equation}
We shall need later the following expression for $\Gt(\zt,x)$, which is obtained by separating the set of sites $\Sit$ into $\Si$ and $\lbrace \is \rbrace$ and rearranging the terms:
\begin{equation}\label{Eq:DecompoGt}
\Gt(\zt,x) = \Gt_0(\zt,x) + \Gt_\infty(\zt,x),
\end{equation}
where
\begin{equation*}
\Gt_0(\zt,x) = \sum_{\alpha\in\Si} \sum_{p=0}^{m_\alpha-1} \frac{\Jtt \alpha p(x)}{(\zt-\zt_\alpha)^{p+1}} - \left( \sum_{\alpha\in\Si} \Jtt \alpha 0(x) \right) \frac{1}{\zt-\zi}
\end{equation*}
and
\begin{equation}\label{Eq:GtInf}
\Gt_\infty(\zt,x) = \left( \sum_{\alpha\in\Sit} \Jtt\alpha 0(x) \right) \frac{1}{\zt-\zi} + \frac{\Jtt\is 1}{(\zt-\zi)^2}.
\end{equation}
Note in particular that $\Gt_0(\zt,x)$ depends only on the Takiff currents $\Jtt\alpha p(x)$ associated with sites $\alpha$ in $\Si$. As these are related by Equation \eqref{Eq:NewCurrents2} to the Takiff currents $\J\alpha p(x)$ of the initial non-constrained model, we get that $\Gt_0(\zt,x)$ can be expressed in terms of the same objects than the initial Gaudin Lax matrix $\Gamma(z,x)$. In fact, by applying to the case at hand the same reasoning that led us to Lemmas \ref{Lem:NewLevels} and \ref{Lem:NewLevels2} and to Equation \eqref{Eq:TransfoGammaStrong}, one gets that
\begin{equation*}
\Gt_0(\zt,x) = \frac{ad-bc}{(c\,\zt-a)^2} \; \Gamma\left( -\frac{d\,\zt-b}{c\,\zt-a}, x \right),
\end{equation*}
\textit{i.e.} that
\begin{equation}\label{Eq:Gt1form}
\Gamma(z,x) \dd z = \Gt_0(\zt,x) \dd\zt.
\end{equation}

\subsubsection{Constraint and gauge symmetry}
\label{SubSubSec:Gauging}

\paragraph{Constraint.} The partial fraction decomposition of the twist function $\vpt(\zt)$ is given by Lemma \ref{Lem:NewLevels2}. In particular, let us observe that
\begin{equation}\label{Eq:NewResInf}
\res_{\zt=\infty} \vpt(\zt) \dd \zt = -\sum_{\alpha\in\Sit} \lst\alpha 0 = -\sum_{\alpha\in\Si} \lst \alpha 0 - \lst {(\infty)} 0 = 0.
\end{equation}
One can expect this result: indeed, by the invariance of the residues of a 1-form under a change of coordinate, the residue of $\vpt(\zt) \dd \zt$ at $\zt=\infty$ is equal to the residue of $\vp(z)\dd z$ at $z=f^{-1}(\infty)$. As we supposed that none of the poles $z_\alpha$ and $\infty$ of $\vp(z)\dd z$ are sent to infinity under the Mobius transformation $f$ (see Subsection \ref{SubSubSec:Mobius}), the 1-form $\vp(z)\dd z$ is regular at $f^{-1}(\infty)$, hence the vanishing of the residue.

Equation \eqref{Eq:NewResInf} shows that the new twist function $\vpt(\zt)$ satisfies the first-class condition \eqref{Eq:ResidueInf} considered in Section \ref{Sec:AGM}. According to Subsection \ref{SubSec:Gauge}, this means that one can impose a first-class constraint
\begin{equation}\label{Eq:NewConstraint}
\Cct(x) \approx 0
\end{equation}
to the model with twist function $\vpt(\zt)$, where $\Cct(x)$ is defined as
\begin{equation}\label{Eq:Ctilde}
\Cct(x) = - \res_{\zt=\infty} \Gt(\zt,x)\dd \zt = \sum_{\alpha\in\Sit} \Jtt\alpha 0(x) = \sum_{\alpha\in\Si} \J\alpha 0(x) + \Jtt\is 0(x).
\end{equation}
To obtain the last equality in the above equation, we used the decomposition $\Sit=\Si\sqcup\lbrace\is\rbrace$ of the sites of the new model and Equation \eqref{Eq:NewCurrents2} to express the currents $\Jtt\alpha p(x)$ for $\alpha\in\Si$. From the expression \eqref{Eq:NewCurrentInf} of $\Jtt\is 0(x)$, we then see that, on the constrained surface \eqref{Eq:NewConstraint},
\begin{equation}\label{Eq:XUnderConst}
X(x) \approx -\sum_{\alpha\in\Si} \J\alpha 0(x) - \frac{1}{2} \lst\is 0\, W(x) - \frac{1}{2} \lst\is 0 \, j(x).
\end{equation}
Recall from the previous subsection that the algebra of observables of the new model is $\Act=\Ac\otimes\Ac_{G_0}$ and that the component $\Ac_{G_0}$ of this algebra is generated by the fields $g(x)$ and $X(x)$. According to the above equation, on the constrained surface \eqref{Eq:NewConstraint}, the field $X(x)$ is expressed in terms of the Takiff currents $\J\alpha 0(x)$ of the initial model (which belong to the component $\Ac$ of the algebra $\Act$) and the currents $W(x)$ and $j(x)$, which depend only on the coordinate field $g(x)$. Thus, the field $X(x)$ is not an actual physical degree of freedom of the model.

\paragraph{Gauge symmetry and physical observables.} As explained in Subsection \ref{SubSec:Gauge} (see also Appendix \ref{App:Constraints}), the introduction of the first-class constraint \eqref{Eq:NewConstraint} implies the existence of a gauge symmetry of the model. This gauge symmetry acts on an observable $\Oo$ in the algebra $\Act$ by the following transformation
\begin{equation}\label{Eq:NewGauge}
\delta^\infty_\epsilon \Oo \approx \int_\D \dd x \; \kappa\Bigl( \epsilon(x,t), \bigl\lbrace \Cct(x), \Oo \bigr\rbrace_{\Act} \Bigr),
\end{equation}
where $\epsilon(x,t)$ is a $\g_0$-valued infinitesimal local parameter. The physical observables of the system are then the gauge-invariant quantities in the algebra of observables $\Act$, subject to the constraint \eqref{Eq:NewConstraint}.

Let us study the action of this gauge transformation on the field $g(x)$, which belongs to the component $\Ac_{G_0}$ of the Poisson algebra $\Act = \Ac \otimes \Ac_{G_0}$. This field Poisson commute with the Takiff currents $\J\alpha 0$, $\alpha\in\Si$, as those belong to the component $\Ac$ of the algebra $\Act$. Thus, using the expression \eqref{Eq:Ctilde} of the constraint $\Cct$, one gets
\begin{equation*}
\bigl\lbrace \Cct(y)\ti{1}, g(x)\ti{2} \bigr\rbrace_{\Act} = \bigl\lbrace \Jtt{\is}0(y)\ti{1}, g(x)\ti{2} \bigr\rbrace_{\Act}.
\end{equation*}
From the Poisson brackets \eqref{Eq:PBXg}, \eqref{Eq:PBjg} and \eqref{Eq:PbW1} of the field $g$ with $X$, $j$ and $W$, we then get
\begin{equation}\label{Eq:PBCg}
\bigl\lbrace \Cct(y)\ti{1}, g(x)\ti{2} \bigr\rbrace_{\Act} = g(x)\ti{2}\,C\ti{12}\, \delta_{xy},
\end{equation}
hence
\begin{equation*}
\delta^\infty_\epsilon g(x) \approx g(x)\epsilon(x,t).
\end{equation*}
This infinitesimal transformation lift to a gauge symmetry with a local parameter $h(x,t)$ in the group $G_0$ acting on the field $g(x)$ by a right multiplication:
\begin{equation}\label{Eq:GaugeOnG}
g(x) \longmapsto g(x) h(x,t).
\end{equation}
This gauge symmetry can be used to eliminate entirely the field $g(x)$, which is thus not a physical degree of freedom of the model.  As explained in Equation \eqref{Eq:XUnderConst} and below, the field $X(x)$ is also not a degree of freedom of the model because of the constraint \eqref{Eq:NewConstraint}. As these two fields generate the component $\Ac_{G_0}$ of the algebra $\Act=\Ac\otimes\Ac_{G_0}$, we see that the space of physical observables of the model can be identified with the component $\Ac$ in $\Act$ and thus with the space of observables of the initial non-constrained model we started with. This idea will be made more precise and rigorous in the next subsection by considering the elimination of the field $g(x)$ sketched above as a gauge fixing condition.

\subsection{Equivalence of the models}
\label{SubSec:Equivalence}

\subsubsection{Gauge fixing and identification of the algebras of observables}
\label{SubSubSec:GaugeFixing}

\paragraph{Gauge fixing.} The model with spectral parameter $\zt$ is defined on the Poisson algebra $\Act$. Moreover, it is subject to a constraint \eqref{Eq:NewConstraint} and a gauge symmetry \eqref{Eq:NewGauge}, which implies that all elements of $\Act$ are not physical observables of the theory. One way to get rid of the redundancy introduced by the gauge symmetry is to introduce a gauge fixing condition. Such a condition should be transverse to the gauge symmetry action, in the sense that in every orbit of this action, there is one and only one representative which satisfies this condition. We saw in Equation \eqref{Eq:GaugeOnG} that the gauge symmetry acts on the $G_0$-valued field $g(x)$ by right multiplication. As the right multiplication acts freely and transitively on $G_0$, it is clear that in every orbit of the gauge symmetry, there is a unique representative with the field $g(x)$ equal to the identity. Thus, this defines a good gauge fixing condition.

Mathematically, this gauge fixing can be seen as imposing another constraint on the model, in addition to the initial first-class constraint \eqref{Eq:NewConstraint} (see for example~\cite{dirac1964lectures,Henneaux:1992ig}). More precisely, we shall consider here the algebra of observables $\Act$ subject to the constraints
\begin{equation}\label{Eq:GaugeFix}
\Cct(x) \gf 0 \;\;\;\;\;\;\;\; \text{ and } \;\;\;\;\;\;\;\; g(x) \gf \Id.
\end{equation}
We introduced here a new symbol $\gf$, which denotes the weak equality under the set of constraints \eqref{Eq:GaugeFix}. As these constraints are stronger than the constraint $\Cct(x) \approx 0$ alone, we use another symbol. When the distinction is needed, we will then talk about quantities being $\approx$-weakly equal or $\gf$-weakly equal. Note that two quantities $\approx$-weakly equal are $\gf$-weakly equal but that the converse is not true.

\paragraph{Physical observables.} The algebra of physical observables can be seen as the algebra $\Act$ with $\approx$-weak equality where we identify two fields configurations related by a gauge transformation or as the algebra $\Act$ with $\gf$-weak equality (in which case there is no gauge symmetry, as it as been fixed by the second constraint in \eqref{Eq:GaugeFix}). In this subsection, we will consider the second formulation and will then work with $\gf$-weak equality. To understand what are the physical observables of the model in this formulation, lets us recall that the algebra $\Act$ is given by the tensor product $\Ac\otimes\Ac_{G_0}$, where the second component $\Ac_{G_0}$ is generated by the fields $g(x)$ and $X(x)$.

The gauge-fixing condition $g(x) \gf \Id$ implies that the spatial derivative $\p_x g(x)$ $\gf$-vanishes. As the currents $j(x)$ and $W(x)$ are constructed from these derivatives (see Appendix \ref{App:PCM+WZ}), we have
\begin{equation}\label{Eq:jWWeak}
j(x) \gf 0 \;\;\;\;\;\;\;\; \text{ and } \;\;\;\;\;\;\;\; W(x) \gf 0.
\end{equation}
Let us now consider the $\approx$-weak equality \eqref{Eq:XUnderConst} which expressed the field $X(x)$ using the initial constraint \eqref{Eq:NewConstraint}. Adding the gauge-fixing condition, it now becomes
\begin{equation}\label{Eq:XGaugeFix}
X(x) \gf - \sum_{\alpha\in\Si} \J\alpha 0(x).
\end{equation}
Thus, under $\gf$-weak equalities, the field $g(x)$ is constant and the field $X(x)$ is expressed in terms of the currents $\J\alpha 0(x)$, which belong to the component $\Ac$ of $\Act=\Ac\otimes\Ac_{G_0}$. As these fields generate the component $\Ac_{G_0}$ of $\Act$, one sees that this component does not contain any physical degree of freedom independent from the ones in the component $\Ac$.

Moreover, once we used the constraints \eqref{Eq:GaugeFix} to remove the fields $g(x)$ and $X(x)$, there are no relations left that would constrain further the observables in the component $\Ac$ of $\Act$. Thus, the space of physical observables of the new model can be identified with the subspace $\Ac\otimes 1$ of $\Act$, where $1$ denotes the unit element of $\Ac_{G_0}$. This subspace is naturally isomorphic to the space of observables of the inital model $\Ac$, at least as a vector space.

\paragraph{Dirac bracket.} In order to really identify the physical observables of the new model with the ones of the initial model, one should also prove that their Poisson structures coincide. The Poisson structure on the space of gauge-fixed physical observables is described by the Dirac bracket on $\Act$, as explained for example in~\cite{dirac1964lectures,Henneaux:1992ig}. Indeed, the fact that the gauge-fixing condition $g(x)\gf\Id$ is transverse to the gauge symmetry generated by the initial first-class constraint $\Cct(x)$ is equivalent to requiring that the system of constraints \eqref{Eq:GaugeFix} is second-class and thus that one can construct a Dirac bracket with respect to these constraints. We shall not enter into the details of this construction here and refer to the standard references~\cite{dirac1964lectures,Henneaux:1992ig} for a thorough treatment.

In the case that we are considering here, let us note that the $\gf$-weak Poisson brackets between the constraints \eqref{Eq:GaugeFix} read (see Equation \eqref{Eq:PBCg})
\begin{subequations}\label{Eq:MatrixPB}
\begin{equation}
\Bigl\lbrace \Cct\ti{1}(x), \Cct\ti{2}(y) \Bigr\rbrace_{\Act} \gf \Bigl\lbrace g\ti{1}(x), g\ti{2}(y) \Bigr\rbrace_{\Act} \gf 0
\end{equation}
and
\begin{equation*}
g\ti{2}(y)^{-1}\Bigl\lbrace \Cct\ti{1}(x), g\ti{2}(y) \Bigr\rbrace_{\Act} \gf - g\ti{1}(x)^{-1}\Bigl\lbrace g\ti{1}(x), \Cct\ti{2}(y) \Bigr\rbrace_{\Act} \gf C\ti{12}\, \delta_{xy}.
\end{equation*}
\end{subequations}
The fact that the system of constraints \eqref{Eq:GaugeFix} is second-class is contained here in the fact that the tensor $C\ti{12}$ appearing in the Poisson bracket above between $\Cct$ with $g$ is invertible (as written in a basis $\lbrace I^a \rbrace_{a=1,\cdots,\dim\g}$ of $\g$ it is given by the inverse $\kappa_{ab}$ of the Killing metric, see Subsection \ref{SubSubSec:Conventions}).

Following the standard procedure (see~\cite{dirac1964lectures,Henneaux:1992ig}), one can inverse the ``matrix'' of Poisson brackets \eqref{Eq:MatrixPB} and compute the Dirac bracket $\lbrace \mathcal{F}, \mathcal{G} \rbrace^*$ between two elements $\mathcal{F}$ and $\mathcal{G}$ of $\Act$. After a few manipulations, one finds
\begin{align}\label{Eq:Dirac}
\lbrace \mathcal{F}, \mathcal{G} \rbrace^* =& \lbrace \mathcal{F}, \mathcal{G} \rbrace_{\Act}\, - \int_\D \dd x \; \kappa\Bigl( \bigl\lbrace \Cct(x), \mathcal{F} \bigr\rbrace_{\Act}, g(x)^{-1}\bigl\lbrace g(x), \mathcal{G} \bigr\rbrace_{\Act} \Bigr) \\
 & \hspace{80pt} + \int_\D \dd x \; \kappa\Bigl( \bigl\lbrace \Cct(x), \mathcal{G} \bigr\rbrace_{\Act}, g(x)^{-1}\bigl\lbrace g(x), \mathcal{F} \bigr\rbrace_{\Act} \Bigr). \notag
\end{align}
Using the Poisson brackets \eqref{Eq:MatrixPB}, one checks that the definition of this Dirac bracket is made in such a way that
\begin{equation*}
\bigl\lbrace \mathcal{F}, \Cct(x) \bigr\rbrace^* \gf 0 \;\;\;\;\;\; \text{ and } \;\;\;\;\;\; \bigl\lbrace \mathcal{F}, g(x) \bigr\rbrace^* \gf 0
\end{equation*}
for any $\mathcal{F}$ in $\Act$. This shows that the Poisson structure induced by the Dirac bracket is consistent with imposing the constraints \eqref{Eq:GaugeFix} originating from the gauge fixing. In other words, the Dirac Poisson structure is compatible with the $\gf$-weak equality, in the sense that
\begin{equation*}
\lbrace \mathcal{F}', \mathcal{G}' \rbrace^* \gf \lbrace \mathcal{F}, \mathcal{G} \rbrace^* \;\;\;\;\;\; \text{ if } \;\;\; \mathcal{F}\gf\mathcal{F}' \;\; \text{and} \;\; \mathcal{G}\gf\mathcal{G}'.
\end{equation*}
The above equation shows that the Dirac bracket induces a well-defined Poisson structure on the space of physical observables of the model, constructed as $\Act$ with the $\gf$-weak equality.\\

Recall from the previous paragraph that this space of physical observables can be identified with the subspace $\Ac\otimes 1$. Let us consider two elements $\mathcal{F}\otimes 1$ and $\mathcal{G}\otimes 1$ in this subspace. Such elements Poisson commute with the field $g$, as it belongs to the component $\Ac_{G_0}$ of $\Act$. Thus, the Dirac bracket between these two elements is given by
\begin{equation*}
\lbrace \mathcal{F}\otimes 1, \mathcal{G}\otimes 1 \rbrace^* \gf \lbrace \mathcal{F}\otimes 1, \mathcal{G}\otimes 1 \rbrace_{\Act} = \lbrace \mathcal{F}, \mathcal{G} \rbrace \otimes 1,
\end{equation*}
where the last bracket is understood as a bracket in $\Ac$. Thus, we see that the space of physical observables of the new model with spectral parameter $\zt$ can be identified with the Poisson algebra $\Ac$ of the initial model with spectral parameter $z$, in a way which preserves the corresponding Poisson structures. In the rest of this section, we shall work with this identification and in particular write $\mathcal{F}\otimes 1$ simply as $\mathcal{F}$.

\subsubsection{Equivalence of the dynamics}
\label{SubSubSec:EquivDynamics}

\paragraph{Gaudin Lax matrix.} Recall the decomposition \eqref{Eq:DecompoGt} of the Gaudin Lax matrix $\Gt(\zt,x)$. The component $\Gt_\infty(\zt,x)$ in this decomposition is given by Equation \eqref{Eq:GtInf}. The simple pole at $\zt=\zi$ in this expression is proportional to the constraint $\Cct(x) = \sum_{\alpha\in\Sit} \Jtt\alpha 0(x)$ and thus is $\gf$-weakly vanishing. Moreover, by Equation \eqref{Eq:NewCurrentInf}, the double pole is proportional to $j(x)$, which is also $\gf$-weakly equal to 0, according to Equation \eqref{Eq:jWWeak}. Thus, the component $\Gt_\infty(\zt,x)$ $\gf$-weakly vanishes and we get
\begin{equation*}
\Gt(\zt,x) \gf \Gt_0(\zt,x).
\end{equation*}
From Equation \eqref{Eq:Gt1form}, one then obtain
\begin{equation*}
\Gamma(z,x) \dd z \gf \Gt(\zt,x)\dd\zt.
\end{equation*}

Thus, we see that the Gaudin Lax matrix transforms as a 1-form. This is the equivalent for the present case of Equation \eqref{Eq:Gamma1form} in Subsection \ref{SubSec:ChangeSpec}, where we proved the equivalence between two constrained model, in which case we saw that the Gaudin Lax matrix transforms as a 1-form when considering $\approx$-weak equalities. Here, we want to prove the equivalence of a non-constrained model with a constrained model, so we need to consider $\gf$-weak equalities, as the gauge fixing is necessary to identify the physical observables of the constrained model with the ones of the non-constrained model.

\paragraph{Zeroes of the twist function and quadratic charges.} As we constructed the new twist function $\vpt(\zt)$ in such a way that $\vp(z)\dd z = \vpt(\zt)\dd\zt$, it is clear that its zeroes $\zet_i$ are related to the ones $\ze_i$ of $\vp(z)$ by
\begin{equation*}
\zet_i = f(\ze_i), \;\;\;\;\; \text{ for } \; i\in\lbrace 1,\cdots,M \rbrace.
\end{equation*}
Depending on $f$, these zeroes $\zet_i$'s can either be all finite, so that we are in the case (i) of Subsection \ref{SubSec:Zeroes}, or either be all finite expect for one, then corresponding to the case (ii) of Subsection \ref{SubSec:Zeroes}.

In both these cases, one defines the quadratic charges $\widetilde{\Q}_i$, $i\in\lbrace 1,\cdots,M\rbrace$, associated with the zeroes as the residues
\begin{equation}\label{Eq:QTilde}
\widetilde{\Q}_i = \res_{\zt=\zet_i} \widetilde{\Q}(\zt)\dd \zt, \;\;\;\;\;\; \text{ where } \;\;\;\;\;\; \widetilde{\Q}(\zt) = -\frac{1}{2\vp(\zt)} \int_{\D} \dd x \; \kappa\Bigl( \Gt(\zt,x),\Gt(\zt,x) \Bigr).
\end{equation}
As both the Gaudin Lax matrix and the twist function transform as 1-forms (at least $\gf$-weakly), so does the quadratic charge $\Q(z)\dd z$. By invariance of the residues of a 1-form under a change of coordinate, we thus get that the quadratic charges $\widetilde{\Q}_i$ are $\gf$-weakly equal to the ones introduced in Subsection \ref{SubSec:NonConst} for the non-constrained model:
\begin{equation}\label{Eq:QiInv}
\Q_i \gf \widetilde{\Q}_i.
\end{equation}

\paragraph{Hamiltonian.}\label{Par:NewHam} To achieve the construction of the constrained model $\mathbb{M}^{\vpt,\pit}_{\bm{\widetilde{\epsilon}}}$ with twist function $\vpt(\zt)$, we have to specify its first-class Hamiltonian, through a choice of parameters $\bm{\widetilde{\epsilon}}=(\widetilde{\epsilon}_1,\cdots,\widetilde{\epsilon}_M)$:
\begin{equation*}
\widetilde{\Hc}_0 = \sum_{i=1}^M \widetilde{\epsilon}_i \, \widetilde{\Q}_i.
\end{equation*}
We will make the choice here that these parameters take the same values as the one defining the non-constrained model $\mathbb{M}^{\vp,\pi}_{\eb}$, so that $\widetilde{\epsilon}_i=\epsilon_i$ for every $i\in\lbrace 1,\cdots,M \rbrace$. It is then clear from Equation \eqref{Eq:QiInv} that the above Hamiltonian $\gf$-weakly coincides with the one of the non-constrained model and thus that they have the same dynamics. This ends the demonstration of the equivalence of the two models $\mathbb{M}^{\vp,\pi}_{\eb}$ and $\mathbb{M}^{\vpt,\pit}_{\eb}$, which was the main goal of this subsection.

\paragraph{Lax pair.} As for the equivalence of constrained models under a change of spectral parameter considerer in Subsection \ref{SubSec:ChangeSpec}, it is natural to wonder how the Lax pairs of these models are related. The Lax matrices of the two models are respectively defined as
\begin{equation*}
\Lc(z,x) = \frac{\Gamma(z,x)}{\vp(z)} \;\;\;\;\; \text{ and } \;\;\;\;\; \widetilde{\Lc}(\zt,x) = \frac{\Gt(\zt,x)}{\vpt(\zt)}.
\end{equation*}
As both the Gaudin Lax matrix and the twist function $\gf$-weakly transform as 1-forms, we get that the Lax matrix transforms as a function (as in Subsection \ref{SubSec:ChangeSpec} or in~\nm). After a few computations using the expressions of the temporal component $\Mc(z,x)$ of the initial Lax pair (from~\nm, Equation (2.40)) and of the component $\widetilde{\Mc}(\zt,x)$ of the new Lax pair (from Theorem \ref{Thm:Lax}), one can prove that this is also the case for this component. Thus, one has
\begin{equation}\label{Eq:TransfoLaxPair}
\Lc(z,t) \gf \widetilde{\Lc}(\zt,x) \;\;\;\;\;\; \text{ and } \;\;\;\;\;\, \Mc(z,t) \gf \widetilde{\Mc}(\zt,x).
\end{equation}
It is clear that the zero curvature equation for $\bigl(\Lc(z),\Mc(z)\bigr)$ is equivalent to the one for $\bigl(\widetilde{\Lc}(\zt),\widetilde{\Mc}(\zt)\bigr)$.

\subsection{Gauging integrable field theories}
\label{SubSec:Gauging}

\subsubsection{Gauging procedure}

\paragraph{Principle.} Let us summarise what we have done so far. We started with a non-constrained realisation of a local affine Gaudin model $\mathbb{M}^{\vp,\pi}_{\eb}$, as the ones considered in~\nm. Performing a change of spectral parameter $z \mapsto \zt$ which brought the site at infinity of this model in the finite complex plane, we constructed a constrained model $\mathbb{M}^{\vpt,\pit}_{\eb}$, at the cost of introducing new degrees of freedom $g(x)$ and $X(x)$. This apparent addition of degrees of freedom is counter-balanced by the introduction of a gauge symmetry in the model $\mathbb{M}^{\vpt,\pit}_{\eb}$, due to the presence of the first-class constraint. We then went on to prove that the non-constrained model $\mathbb{M}^{\vp,\pi}_{\eb}$ is identical to some gauge-fixed formulation of the constrained model $\mathbb{M}^{\vpt,\pit}_{\eb}$. Thus, it is also equivalent to the non-gauge-fixed formulation of the model. This provides a systematic procedure to ``gauge'' the models considered in~\nm.

\paragraph{Lagrangian formulation.} All the examples of models that we will consider in this article possess a Lagrangian formulation. Indeed, their algebra of observables $\Ac_M$ is generated by canonical fields on the cotangent bundle $T^*M$ of some manifold $M$. One can then perform an inverse Legendre transform to obtain an action $S[\Phi]$ for the model, where $\Phi(x,t)$ is a $M$-valued field.

The procedure of gauging described above then allows to construct an equivalent model on the algebra $\Ac_M\otimes\Ac_{G_0}$, with an additional $G_0$-gauge symmetry. As this algebra is the canonical algebra of fields on the cotangent bundle $T^*(M\times G_0)$, one can also perform an inverse Legendre transform for this model, yielding an action $S_g[\Phi,g]$ with $g(x,t)$ the $G_0$-valued field introduced in $\Ac_{G_0}$. This action is then invariant under a gauge transformation, which in particular acts on the field $g(x,t)$ as
\begin{equation*}
g(x,t) \longmapsto g(x,t)h(x,t),
\end{equation*}
with $h(x,t)$ a local parameter in $G_0$.

It is then clear that $g(x,t)=\Id$ is a good gauge-fixing condition. One then recovers $\Phi(x,t)$ as the only physical degree of freedom of the model and obtains back the action of the original model we started with:
\begin{equation*}
S_g[ \Phi , \Id ] = S[ \Phi ].
\end{equation*}

\paragraph{Gauge symmetry.} It is natural to wonder what form the gauge transformation takes on the initial $M$-valued field $\Phi(x,t)$. Recall that the gauge transformation \eqref{Eq:NewGauge} is generated by the constraint \eqref{Eq:Ctilde}. The action of this gauge transformation on an observable in the component $\Ac_M$ of the algebra $\Act=\Ac_M \otimes \Ac_{G_0}$ is then generated only by the part $\sum_{\alpha\in\Si} \J\alpha p(x)$ of the constraint, as the field $\Jtt\is 0(x)$ in the component $\Ac_{G_0}$ Poisson commutes with this observable. Yet, the quantity $\sum_{\alpha\in\Si} \J\alpha p(x)$ is the density of the moment map of the global diagonal symmetry of the initial non-constrained model (see~\nm, Subsection 2.2.4). Thus, the action of the gauge transformation on the observable in $\Ac_M$ is a local version of this diagonal symmetry. In particular, coming back to the Lagrangian formulation, it means that the $M$-valued field $\Phi(x,t)$ transforms under the gauge transformation as
\begin{equation}\label{Eq:GaugePhi}
\Phi(x,t) \longmapsto \Delta^\text{diag}_{h(x,t)} \bigl[ \Phi(x,t) \bigr],
\end{equation}
where $\Delta^\text{diag}_{h}$ for $h\in G_0$ constant represents the action of the diagonal global symmetry of the initial model on the field $\Phi(x,t)$.

\subsubsection{Global symmetries}
\label{SubSubSec:GlobalSymGauging}

\paragraph{Diagonal symmetry.} Let us consider the initial non-gauged model $\mathbb{M}^{\vp,\pi}_{\eb}$. As explained in~\nm, Subsection 2.2.4, and mentioned above, it possesses a global diagonal $G_0$-symmetry, with moment map
\begin{equation*}
K^\is = \int_{\D} \dd x\; \sum_{\alpha\in\Si} \J\alpha 0(x).
\end{equation*}
This symmetry acts on the Takiff currents $\J\alpha p(x)$ of the model as
\begin{equation*}
\J\alpha p(x) \longmapsto h^{-1} \, \J\alpha p(x) \, h,
\end{equation*}
with $h$ a constant parameter in $G_0$.

We saw in the previous paragraph that under the gauging procedure, this diagonal symmetry is ``promoted'' to the gauge symmetry of the model $\mathbb{M}^{\vpt,\pit}_{\eb}$. Indeed, this model is invariant under the local transformation
\begin{equation*}
\J\alpha p(x) \longmapsto h^{-1}(x,t) \, \J\alpha p(x) \, h(x,t) + \ls\alpha p \, h^{-1}(x,t)\p_x h(x,t) \;\;\;\;\;\; \text{ and } \;\;\;\;\;\; g(x,t) \longmapsto g(x,t)h(x,t),
\end{equation*}
with $h(x,t)$ an arbitrary $G_0$-valued field. It is then natural to wonder how the global diagonal symmetry of the initial model is realised in the gauged formulation $\mathbb{M}^{\vpt,\pit}_{\eb}$. This is the subject of this subsection.

\paragraph{Global $\bm{G_0}$-symmetry associated with the PCM+WZ realisation.} Let us recall that in the gauging procedure, we have introduced new degrees of freedom $g(x)$ and $X(x)$, in the component $\Ac_{G_0}$ of the algebra $\Act=\Ac\otimes\Ac_{G_0}$. These new degrees of freedom are attached to the site $\is$ of the model $\mathbb{M}^{\vpt,\pit}_{\eb}$, which is a site of multiplicity two realised by the PCM+WZ realisation \eqref{Eq:NewCurrentInf}. In~\nm, Subsection 3.3.4, it is explained how every such realisation comes with a global symmetry of the model. Let us briefly explain the main properties of this symmetry in the present case, based on the general discussion in~\nm. It is generated by the moment map
\begin{equation}\label{Eq:MomentMapTilde}
\Kt^\is = -\int_\D \dd x \; g(x) \left( X(x) + \frac{1}{2} \lst\is 0 W(x) - \frac{1}{2} \lst\is 0 j(x) \right) g(x)^{-1}.
\end{equation}
It is clear that this moment map Poisson commutes with the Takiff currents $\Jtt\alpha p(x)$ associated with sites $\alpha\in\Si$ different from $\is$, which are then invariant under the transformation generated by $\Kt^\is$. Moreover, according to~\nm, Subsection 3.3.4, the Takiff currents $\Jtt\is p(x)$ associated with the site $\is$ are also invariant under this transformation. As the Hamiltonian of the model $\mathbb{M}^{\vpt,\pit}_{\eb}$ is constructed from the Takiff currents $\Jtt\alpha p(x)$, $\alpha\in\Si\sqcup\lbrace\is\rbrace$, it is then also invariant, hence the symmetry.

Although this symmetry leaves invariant all the Takiff currents $\Jtt\alpha p(x)$, its action on $\Act$ is not trivial. In particular, it acts on the field $g(x)$ of the PCM+WZ realisation by left multiplication:
\begin{equation*}
g(x,t) \longmapsto h^{-1}\,g(x,t),
\end{equation*}
with $h\in G_0$ constant.

\paragraph{Equivalence with the global diagonal symmetry.} Let us now argue that the global symmetry of the gauged model $\mathbb{M}^{\vpt,\pit}_{\eb}$ introduced above is in fact the diagonal symmetry of the initial non-gauged model $\mathbb{M}^{\vp,\pi}_{\eb}$. Recall that the equivalence between the models $\mathbb{M}^{\vp,\pi}_{\eb}$ and $\mathbb{M}^{\vpt,\pit}_{\eb}$ was established through a gauge-fixing of $\mathbb{M}^{\vpt,\pit}_{\eb}$, mathematically encoded by considering $\gf$-weak equalities. Under this gauge fixing, the fields $g$, $j$, $W$ and $X$ associated with the site $\is$ satisfy the $\gf$-weak equalities \eqref{Eq:GaugeFix}, \eqref{Eq:jWWeak} and \eqref{Eq:XGaugeFix}. Inserting these in Equation \eqref{Eq:MomentMapTilde}, one finds the gauge-fixed expression of the moment map $\Kt^\is$:
\begin{equation*}
\Kt^\is \gf \int_{\D} \dd x\; \sum_{\alpha\in\Si} \J\alpha 0(x).
\end{equation*}
This expression coincides with the moment map $K^\is$ generating the diagonal symmetry of the initial model $\mathbb{M}^{\vp,\pi}_{\eb}$. Thus, under the gauge-fixing procedure, the symmetry generated by $K^\is$ and $\Kt^\is$ are identical, as announced.

\subsubsection{Integrable structure}
\label{SubSubSec:IntegrabilityGauging}

In this subsection, we are considering a general procedure to find a gauged formulation of the models considered in~\nm. One of the main property of these models is their integrability. Let us discuss how this integrable structure is modified under the gauging procedure.

\paragraph{Integrable structure of the gauged model.} The integrability of the initial model $\mathbb{M}^{\vp,\pi}_{\eb}$ rests on the existence of a Lax pair formulation of its dynamic and on the fact that its Lax matrix $\Lc(z,x)$ satisfies a Maillet bracket (with twist function $\vp(z)$), as explained in~\nm.

The integrable structure of the new model $\mathbb{M}^{\vpt,\pit}_{\eb}$ in its non-gauge-fixed formulation is described in Subsection \ref{SubSec:Integrability}. It admits a Lax formulation by Theorem \ref{Thm:Lax} and its Lax matrix $\Lct(\zt,x)$ satisfies a Maillet bracket with twist function $\vpt(\zt)$ (see Paragraph \ref{SubSubSec:Maillet}). Note that this Maillet bracket is with respect to the Poisson structure $\lbrace\cdot,\cdot\rbrace_{\Act}$ of the algebra $\Act$ (hence before gauge-fixing) ; in fact, it is even a strong bracket, in the sense that it is true even without having to impose the constraint $\Cct(x) \approx 0$.

Finally, let us note that the integrability of both the models $\mathbb{M}^{\vp,\pi}_{\eb}$ and $\mathbb{M}^{\vpt,\pit}_{\eb}$ also manifests itself in the existence of infinite hierarchies of local conserved charges in involution, associated with each zero of their respective twist function (see Equation (2.41) of~\nms and Equations \eqref{Eq:Hierarchy} and \eqref{Eq:HierarchyInfinity} of the present article). As the zeroes of $\vp(z)$ and $\vpt(\zt)$ are in one-to-one correspondence, there are the same number of these hierarchies in both models. In fact, these hierarchies coincide when gauge-fixing the model $\mathbb{M}^{\vpt,\pit}_{\eb}$, \textit{i.e.} under $\gf$-weak equalities.

\paragraph{Integrable structure of the gauge-fixed model.} Let us now turn ourselves to the integrable structure of the gauge-fixed formulation of the model $\mathbb{M}^{\vpt,\pit}_{\eb}$. According to Equation \eqref{Eq:TransfoLaxPair}, its Lax pair is directly related to the one of the model $\mathbb{M}^{\vp,\pi}_{\eb}$ by the change of variable $z \mapsto \zt=f(z)$. As the Lax matrix $\Lc(z,x)$ satisfies a Maillet bracket with twist function $\vp(z)$, the Lax matrix $\Lct(\zt,x)$ of the gauge-fixed model also satisfies a Maillet bracket, with $\Rc$-matrix:
\begin{equation*}
\widetilde{\Rc}^{\text{GF}}\ti{12}(\zt,\wt) = \Rc\bigl( f^{-1}(\zt), f^{-1}(\wt) \bigr) = \frac{C\ti{12}}{f^{-1}(\wt)-f^{-1}(\zt)} \vp\bigl( f^{-1}(\wt) \bigr)^{-1}.
\end{equation*}
Considering the expression \eqref{Eq:NewTwistConst} of the twist function $\vpt(\zt)$, we then have
\begin{equation}\label{Eq:RMatGF1}
\widetilde{\Rc}^{\text{GF}}\ti{12}(\zt,\wt) = \widetilde{\Rc}\hspace{1pt}^0(\zt,\wt) \vpt(\wt)^{-1}, \;\;\;\;\;\;\;\; \text{ where } \;\;\;\;\;\;\;\; \widetilde{\Rc}\hspace{1pt}^0(\zt,\wt)= \frac{C\ti{12}}{f^{-1}(\wt)-f^{-1}(\zt)} \frac{ad-bc}{(c \wt-a)^2}.
\end{equation}
A simple computation then yields
\begin{equation}\label{Eq:RMatGF}
\widetilde{\Rc}\hspace{1pt}^0(\zt,\wt) = \frac{C\ti{12}}{\wt-\zt} - \frac{C\ti{12}}{\wt-\zi},
\end{equation}
where we recall that $\zi=a/c$ is the position of the site $\is$. The gauge-fixed formulation of the model is thus not exactly a model with twist function, as the matrix $\widetilde{\Rc}\hspace{1pt}^0(\zt,\wt)$ is not the standard $\Rc$-matrix $C\ti{12}/(\wt-\zt)$. In fact, it is equal to the standard $\Rc$-matrix plus a term depending only on the second spectral parameter. This type of $\Rc$-matrices has been considered before in~\cite{Lacroix:2017isl} (see also \cite{Lacroix:2018njs}). In particular, it was proven in~\cite{Lacroix:2017isl} that they also lead to the existence of integrable hierarchies of local conserved charges in involution. This shows that the hierarchies mentioned above for the non-constrained model and the non-gauge-fixed model can also be constructed directly at the level of the gauge-fixed formulation.

\paragraph{Gauge-fixed Maillet bracket from the Dirac bracket.} For completeness and as we shall use a similar approach in Subsection \ref{SubSec:DiagYB}, let us mention another method to obtain the Maillet bracket of the gauge-fixed model computed in the previous paragraph. In the first method above, we computed the Poisson bracket of the gauge-fixed Lax matrix $\Lct(\zt,x)$ as an element of the algebra $\Ac$ with Poisson bracket $\lbrace\cdot,\cdot\rbrace$, by relating it to the Lax matrix $\Lc(z,x)$ of the initial model. However, one can also consider the Lax matrix $\Lct(\zt,x)$ as an element of the algebra $\Act$: the gauge-fixing is then taken into account by considering $\gf$-weak equalities and by replacing the Poisson bracket $\lbrace\cdot,\cdot\rbrace_{\Act}$ by the Dirac bracket $\lbrace\cdot,\cdot\rbrace^*$.

Thus one can compute the Poisson bracket of the gauge-fixed Lax matrix $\Lct(\zt,x)$ as a Dirac bracket. For brevity, we give the details of this computation in Appendix \ref{App:DiracUndeformed}. In the end, one finds the same results as with the first method above, \textit{i.e.} that the gauge-fixed Lax matrix $\Lct(\zt,x)$ satisfies a Maillet bracket with $\Rc$-matrix \eqref{Eq:RMatGF1}.

\subsection{Diagonal Yang-Baxter deformation}
\label{SubSec:DiagYB}

\subsubsection{Setting up the deformation}

\paragraph{Principle.} We will use the notations and results of the previous subsections. In particular, $\mathbb{M}^{\vp,\pi}_{\eb}$ is a non-constrained model as considered in~\nm. Under the gauging procedure described above, it is related to the gauged model $\mathbb{M}^{\vpt,\pit}_{\eb}$ by gauge-fixing. Let us recall also that the model $\mathbb{M}^{\vpt,\pit}_{\eb}$ possesses a site $\is$ with multiplicity two, realised by the PCM+WZ realisation \eqref{Eq:NewCurrentInf}.

It is explained in Subsection 4.2 of~\nms that models which possess such a realisation can be deformed while preserving their integrability by the Yang-Baxter deformation procedure. In this subsection, we apply such a deformation to the gauged model $\mathbb{M}^{\vpt,\pit}_{\eb}$. The deformed model obtained this way still possesses a gauge symmetry. This symmetry can then be gauge-fixed in a similar way than the one considered in the previous subsections, yielding an integrable deformation of the initial model $\mathbb{M}^{\vp,\pi}_{\eb}$.\\

As explained in~\nm, the Yang-Baxter deformation procedure has for effect to break the global symmetry naturally associated with the PCM+WZ realisation. For the gauged model $\mathbb{M}^{\vpt,\pit}_{\eb}$ considered here, we have studied the global symmetry associated with the PCM+WZ realisation attached to the site $\is$ in Subsection \ref{SubSubSec:GlobalSymGauging}. In particular, we have shown that through the gauge fixing, this symmetry is identified with the global diagonal symmetry of the initial model $\mathbb{M}^{\vp,\pi}_{\eb}$. Thus, the Yang-Baxter deformation considered here corresponds to a breaking of the diagonal symmetry of the model $\mathbb{M}^{\vp,\pi}_{\eb}$. For this reason, we shall call it the \textit{diagonal Yang-Baxter deformation}.\\

The article~\nms considers two types of Yang-Baxter deformations: the homogeneous and inhomogeneous ones. Both these deformations can be applied to the case considered here. We will develop mostly the case of an inhomogeneous deformation and will introduce the homogeneous one at the end of the subsection, as a limit of the inhomogeneous deformation. Moreover, the reference~\nms mostly treats the inhomogeneous Yang-Baxter deformation of a PCM realisation without a Wess-Zumino term (see Appendix \ref{App:PCM+WZ}), although the possibility of a more general deformation of any PCM+WZ realisation is mentioned at the end of Subsection 4.2.3, based on the results of~\cite{Delduc:2014uaa}. For simplicity, we shall also restrict here to the case of a PCM realisation without Wess-Zumino term.

\paragraph{The undeformed model.} Let us consider the twist function $\vpt(\zt)$ of the model $\mathbb{M}^{\vpt,\pit}_{\eb}$, whose partial fraction decomposition is given by Lemma \ref{Lem:NewLevels2}. In particular, $\vpt(\zt)$ possesses a double pole at $\zt=\zi$, corresponding to the site $\is$. We would like to apply an inhomogeneous Yang-Baxter deformation to the realisation attached to this site. As explained above, we will restrict here to the case where this realisation is a PCM realisation without Wess-Zumino term. Concretely, this means that the residue $\lst\is 0$ of $\vpt(\zt)$ at $\zt=\zi$ should vanish (see Appendix \ref{App:PCM+WZ}). Considering the expression of this residue given in Lemma \ref{Lem:NewLevels2}, we thus suppose that the levels $\ls\alpha 0$, $\alpha\in\Si$, of the initial model are such that
\begin{equation}\label{Eq:NoWZ}
\sum_{\alpha\in\Si} \ls\alpha 0 = -\res_{z=\infty} \vp(z)\dd z = 0.
\end{equation}
The twist function of the model $\mathbb{M}^{\vpt,\pit}_{\eb}$ can then be written
\begin{equation*}
\vpt(\zt) = \vpt_\Si(\zt) + \frac{\lst\is 1}{(\zt-\zi)^2}, \;\;\;\;\;\;\;\; \text{ with } \;\;\;\;\;\; \vpt_\Si(\zt) = \sum_{\alpha\in\Si} \sum_{p=0}^{m_\alpha-1} \frac{\lst\alpha p}{(\zt-\zt_\alpha)^{p+1}}.
\end{equation*}

Let us now turn to the Gaudin Lax matrix of the model $\mathbb{M}^{\vpt,\pit}_{\eb}$. As the site $\is$ is realised by a PCM realisation, without Wess-Zumino term, it reads
\begin{equation}\label{Eq:UndefGt}
\Gt(\zt,x) = \Gt_\Si(\zt,x) + \frac{X(x)}{\zt-\zi} + \frac{\lst\is 1\, j(x)}{(\zt-\zi)^2}, \;\;\;\;\;\;\;\; \text{ with } \;\;\;\;\;\; \Gt_\Si(\zt,x) = \sum_{\alpha\in\Si} \sum_{p=0}^{m_\alpha-1} \frac{\Jtt\alpha p(x)}{(\zt-\zt_\alpha)^{p+1}}.
\end{equation}

\paragraph{Deforming the twist function.} 
We then introduce the deformed twist function as
\begin{equation*}
\vpt_\eta(\zt) = \vpt_\Si(\zt) + \frac{\lst\is 1}{(\zt-\zi)^2-\cc^2 \eta^2},
\end{equation*}
with $\eta$ a real deformation parameter and $\cc$ being either $1$ or $i$, in the so-called split or non-split cases. It is clear that the deformed twist function $\vpt_\eta(\zt)$ reduces to the initial one $\vpt(\zt)$ in the limit $\eta\to0$. Instead of a double pole at $\zi$, the deformed twist function now possesses two simple poles at
\begin{equation}\label{Eq:ZtPM}
\zt_\pm = \zi \pm \cc\,\eta.
\end{equation}
The partial fraction decomposition of the deformed twist function reads
\begin{equation}\label{Eq:DefTwistDSE}
\vpt_\eta(\zt) = \sum_{\alpha\in\Si} \sum_{p=0}^{m_\alpha-1} \frac{\lst\alpha p}{(\zt-\zt_\alpha)^{p+1}} + \frac{1}{2\cc\gamma} \frac{1}{\zt-\zt_+} - \frac{1}{2\cc\gamma} \frac{1}{\zt-\zt_-}, \;\;\;\;\;\;\;\;\; \text{ with } \;\;\;\;\;\;\;\;\; \gamma = \frac{\eta}{\lst\is 1}.
\end{equation}
An affine Gaudin model with twist function $\vpt_\eta(\zt)$ would then have sites $\Sit^{(\eta)}=\Si \sqcup \lbrace (+),(-)\rbrace$. The sites $\alpha\in\Si$ keep the same positions $\zt_\alpha$ and levels $\lst\alpha p$ as in $\mathbb{M}^{\vpt,\pit}_{\eb}$. The sites $\pms$ have multiplicity 1, positions $\zt_\pm$ and levels $\lst\pms0=\pm (2\cc\gamma)^{-1}$.

\paragraph{Deforming the Gaudin Lax matrix.} To construct a realisation of affine Gaudin model with deformed twist function $\vpt_\eta(\zt)$, one needs to find a realisation of its Takiff algebra with sites $\Sit^{(\eta)}=\Si \sqcup \lbrace (+),(-)\rbrace$. The sites $\alpha\in\Si$ of the model have the same levels as in the undeformed model $\mathbb{M}^{\vpt,\pit}_{\eb}$: thus, they can be realised by the same Takiff currents $\Jtt\alpha p(x)$, in the algebra of observables $\Ac$, given by Equation \eqref{Eq:NewCurrents2}.

We now have to find a realisation of the sites $\pms$. As these sites have multiplicities $1$ and opposite levels $\lst\pms 0$, they can be realised by the inhomogeneous Yang-Baxter realisation, whose construction is recalled in Appendix \ref{App:iYB}. This realisation is valued in the same Poisson algebra $\Ac_{G_0}$ as the PCM realisation ; the Takiff currents attached to the sites $\pms$ are then expressed in terms of the fields $g(x)$, $X(x)$ and $j(x)$ in $\Ac_{G_0}$ as:
\begin{equation}\label{Eq:CurrentYBDef}
\Jtt\pms 0(x) = \frac{1}{2\cc} \left( \cc X(x) \mp R_gX(x) \pm \frac{1}{\gamma} j(x) \right).
\end{equation}
In this equation, $R_g=\Ad_g^{-1} \circ R \circ \Ad_g$, where $R$ is a skew-symmetric solution of the mCYBE \eqref{Eq:mCYBE}, as explained in Appendix \ref{App:iYB}. As the inhomogeneous Yang-Baxter realisation is valued in the same algebra $\Ac_{G_0}$ as the PCM realisation, the deformed model shares the same observables $\Act = \Ac \otimes \Ac_{G_0}$ as the model $\mathbb{M}^{\vpt,\pit}_{\eb}$.\\

The expression of the Gaudin Lax matrix $\Gt_\eta(\zt,x)$ of the deformed model derives directly from the choice of realisation made above. It is given by
\begin{equation}\label{Eq:DeformGt}
\Gt_\eta(\zt,x) = \Gt_\Si(\zt,x) + \frac{\Jtt{(+)}0(x)}{\zt-\zt_+} +  \frac{\Jtt{(-)}0(x)}{\zt-\zt_-},
\end{equation}
where $\Gt_\Si(\zt,x)$ is defined as in the undeformed Gaudin Lax matrix \eqref{Eq:UndefGt}. Taking the limit $\eta\to 0$, one checks that we recover this undeformed Gaudin Lax matrix\footnote{Note that taking this limit requires a bit of care. Indeed, the terms $1/\gamma$ in the currents $\Jtt\pms 0(x)$ diverge when $\eta\to0$. However, this divergence cancels with the fact that $(\zt-\zt_+)^{-1} - (\zt-\zt_-)^{-1}$ converges to $0$ in the limit, creating a derivative eventually responsible for the appearance of the double pole at $\zt=\zi$ in $\Gt(\zt,x)$.}:
\begin{equation*}
\Gt_\eta(\zt,x) \xrightarrow{\eta\to0} \Gt(\zt,x).
\end{equation*}

\paragraph{Constraint and gauge symmetry.} Recall that the model $\mathbb{M}^{\vpt,\pit}_{\eb}$ possesses a first-class constraint $\Cct(x)\approx 0$, which generates a gauge symmetry. This constraint is given by Equation \eqref{Eq:Ctilde}, where one has to remember that in the present case, the Takiff current $\Jtt\is0(x)$ is the one of a PCM realisation without Wess-Zumino term, \textit{i.e.} simply $X(x)$ (see Appendix \ref{App:PCM+WZ}). Thus, here, the constraint is
\begin{equation}\label{Eq:CTildeNoWZ}
\Cct(x) = \sum_{\alpha\in\Si} \J\alpha 0(x) + X(x).
\end{equation}

Let us now show that the deformed model with twist function $\vpt_\eta(\zt)$ is subject to the same constraint and gauge symmetry. It is clear from Equation \eqref{Eq:DefTwistDSE} that
\begin{equation*}
\res_{\zt = \infty} \vpt_\eta(\zt) \dd\zt = -\sum_{\alpha\in\Si} \lst\alpha 0.
\end{equation*}
However, recall from Lemma \ref{Lem:NewLevels2} that the levels $\lst\alpha 0$ of Takiff mode 0 are equal to the ones $\ls\alpha 0$ of the initial model. Moreover, recall that we supposed that these levels satisfy the condition \eqref{Eq:NoWZ}, ensuring that the site $\is$ is realised by a PCM realisation without Wess-Zumino term. Thus, we have
\begin{equation*}
\res_{\zt = \infty} \vpt_\eta(\zt) \dd\zt = 0.
\end{equation*}
The deformed twist function $\vp_\eta(\zt)$ then satisfies the first-class condition \eqref{Eq:ResidueInf}, as $\vpt(\zt)$. This condition ensures that we can define a first-class constraint in the deformed model (see Subsection \ref{SubSec:Gauge}). This constraint is defined as
\begin{equation*}
- \res_{\zt=\infty} \Gt_\eta(\zt,x) \dd \zt =  \sum_{\alpha\in\Si} \Jtt\alpha 0(x) + \Jtt{(+)} 0(x) + \Jtt{(-)}0(x). 
\end{equation*}
Recall from Equation \eqref{Eq:NewCurrents2} that for $\alpha\in\Si$, one has $\Jtt\alpha 0(x)=\J\alpha 0(x)$. Moreover, it is clear from the expression \eqref{Eq:CurrentYBDef} of the currents $\Jtt\pms 0(x)$ that their sum is simply equal to the field $X(x)$. Thus, the constraint above exactly coincides with the one \eqref{Eq:CTildeNoWZ} of the undeformed model $\mathbb{M}^{\vpt,\pit}_{\eb}$.

This means that the deformed model possesses the same gauge symmetry as the undeformed one. In particular, this gauge symmetry acts on the field $g(x)$ by right translation, as in Equation \eqref{Eq:GaugeOnG}. As for the undeformed model $\mathbb{M}^{\vpt,\pit}_{\eb}$, one can gauge-fix the deformed model by eliminating the field $g(x)$, requiring $g(x)\gf \Id$ as in Equation \eqref{Eq:GaugeFix}. Similarly to what we showed in Subsection \ref{SubSubSec:GaugeFixing}, through this gauge fixing, the space of physical observables of the deformed model can then be identified with the observables $\Ac$ of the initial non-gauged model $\mathbb{M}^{\vp,\pi}_{\eb}$.

\paragraph{Hamiltonian.} Let us achieve the construction of the deformed model by specifying its Hamiltonian. Recall that the first-class Hamiltonian of the model $\mathbb{M}^{\vpt,\pit}_{\eb}$ is defined as $\widetilde{\Hc}_0=\sum_{i=1}^M \epsilon_i \widetilde{\Q}_i$. This definition is based on a choice of coefficients $\eb=(\epsilon_1,\cdots,\epsilon_M)$ (recall from Paragraph \ref{Par:NewHam} that these coefficients also coincide with the ones defining the initial non-constrained model $\mathbb{M}^{\vp,\pi}_{\eb}$) and the construction of the quadratic charges $\widetilde{\Q}_i$, associated with the zeroes $\widetilde{\zeta}_i$ of $\vpt(\zt)$, which are extracted from $\Gt(\zt,x)$ and $\vpt(\zt)$ as in Equation \eqref{Eq:QTilde}. 

As the twist function $\vpt_\eta(\zt)$ of the deformed model is a deformation of $\vpt(\zt)$, its zeroes $\widetilde{\ze}^{(\eta)}_i$ are deformations of the zeroes $\widetilde{\ze}_i$ of $\vpt(\zt)$. Let us then introduce the quadratic charges
\begin{equation*}
\widetilde{\Q}^{(\eta)}_i = \res_{\;\;\zt=\widetilde{\ze}^{(\eta)}_i} \widetilde{\Q}^{(\eta)}(\zt)\dd\zt, \;\;\;\;\;\; \text{ where } \;\;\;\;\;\; \widetilde{\Q}^{(\eta)}(\zt) = -\frac{1}{2\vp_\eta(\zt)} \int_{\D} \dd x \; \kappa\Bigl( \Gt_\eta(\zt,x),\Gt_\eta(\zt,x) \Bigr).
\end{equation*}
It is clear that in the limit $\eta\to 0$, these charges tend to the charges $\widetilde{\Q}_i$. Let us then define the first-class Hamiltonian of the deformed model as
\begin{equation*}
\widetilde{\Hc}^{(\eta)}_0 = \sum_{i=1}^M \epsilon_i\,\widetilde{\Q}^{(\eta)}_i,
\end{equation*}
using the same coefficients $\epsilon_i$ as in the undeformed model $\mathbb{M}^{\vpt,\pit}_{\eb}$. By construction, this Hamiltonian then satisfies
\begin{equation*}
\widetilde{\Hc}^{(\eta)}_0 \xrightarrow{\eta\to 0} \widetilde{\Hc}_0.
\end{equation*}
The deformed model is the constrained realisation of affine Gaudin model $\mathbb{M}^{\vpt_\eta,\pit_\eta}_{\eb}$, where $\pit_\eta$ is the realisation constructed from the choice of Takiff currents made above. As such, it is an integrable field theory. Thus, we have constructed an integrable deformation of the model $\mathbb{M}^{\vpt,\pit}_{\eb}$ (and, as a consequence, of the model $\mathbb{M}^{\vp,\pi}_{\eb}$, which is equivalent to $\mathbb{M}^{\vpt,\pit}_{\eb}$).

Let us end this paragraph by the following remark. If the initial model $\mathbb{M}^{\vp,\pi}_{\eb}$ is Lorentz invariant, the coefficients $\epsilon_i$ are equal to either $+1$ or $-1$ (see Subsection 2.4.2 of~\nms and Subsection \ref{SubSubSec:Lorentz} of this article). As we suppose that the deformed model $\mathbb{M}^{\vpt_\eta,\pit_\eta}_{\eb}$ is defined using the same parameters $\eb$, it follows that it is then also automatically Lorentz invariant. This ensures that the diagonal Yang-Baxter deformation of a relativistic model is still relativistic.

\paragraph{Poisson-Lie $\bm q$-deformed diagonal symmetry.} As explained in~\nm, based on the results of~\cite{Delduc:2013fga} (see also~\cite{Kawaguchi:2012ve,Kawaguchi:2011pf,Kawaguchi:2012gp}), the inhomogeneous Yang-Baxter deformation of an affine Gaudin model with a PCM realisation breaks the global symmetry associated with this realisation and in fact transforms it into a $q$-deformed Poisson-Lie symmetry (for models defined on the real line $\D=\R$). Thus, the deformed model $\mathbb{M}^{\vpt_\eta,\pit_\eta}_{\eb}$ that we constructed in the previous paragraphs possesses a Poisson-Lie symmetry. This symmetry is a $q$-deformation of the global symmetry of the undeformed model $\mathbb{M}^{\vpt,\pit}_{\eb}$ associated with the PCM realisation attached to the site $\is$.

The infinitesimal action of this undeformed symmetry on the field $g(x)$ of the PCM realisation takes the form of a left multiplication $\delta g(x) = -\epsilon \, g(x)$, with $\epsilon$ in $\g_0$ (see~\nm, Subsection 3.3.4). In the deformed model, this action is replaced by a Poisson-Lie transformation, which has been described explicitly in~\cite{Delduc:2016ihq}: it also takes the form of a left multiplication on $g$, but with the infinitesimal parameter $\epsilon$ replaced by a non-local field-dependent quantity.

This describes the action of the $q$-deformed symmetry in the gauged model $\mathbb{M}^{\vpt_\eta,\pit_\eta}_{\eb}$. Recall however that to see this model as a deformation of the initial model $\mathbb{M}^{\vp,\pi}_{\eb}$, one has to fix the gauge-symmetry of the model. In particular, under this gauge-fixing procedure, the field $g(x)$ is eliminated (as the gauge-fixing condition is $g(x)\gf\Id$). It is thus natural to wonder how the $q$-deformed symmetry is realised in the gauge-fixed formulation of the model $\mathbb{M}^{\vpt_\eta,\pit_\eta}_{\eb}$. In the undeformed model, we showed in Subsection \ref{SubSubSec:GlobalSymGauging} that under the gauge-fixing, the undeformed symmetry is identified with the diagonal symmetry of the initial model, which in particular acts on the Takiff currents $\J\alpha p(x)$ by conjugation: $\J\alpha p(x) \mapsto h^{-1}\,\J\alpha p(x) h$. Thus, one should expect the $q$-deformed symmetry to be a modification of this diagonal symmetry. It would be interesting to study this question in more details.

\subsubsection{Deformation of the initial non-constrained model}

Recall that the initial model $\mathbb{M}^{\vp,\pi}_{\eb}$ that we started from is identified with the model $\mathbb{M}^{\vpt,\pit}_{\eb}$ through a change of spectral parameter (see Subsection \ref{SubSec:ChangeSpec2}) and a gauge-fixing procedure (see Subsection \ref{SubSec:Equivalence}). Thus, to see the deformed model $\mathbb{M}^{\vpt_\eta,\pit_\eta}_{\eb}$ as a deformation of the initial model $\mathbb{M}^{\vp,\pi}_{\eb}$, one has to perform similar manipulations. Let us start by considering the change of spectral parameter.

\paragraph{Back to the initial spectral parameter.} The passage from $\mathbb{M}^{\vp,\pi}_{\eb}$ to $\mathbb{M}^{\vpt,\pit}_{\eb}$ was made by considering the new spectral parameter $\zt=f(z)$, where $f$ is a Mobius transformation \eqref{Eq:Mobius}. Let us now perform the inverse transformation $\zt\mapsto z=f^{-1}(\zt)$ on the deformed model $\mathbb{M}^{\vpt_\eta,\pit_\eta}_{\eb}$, to obtain an equivalent model $\mathbb{M}^{\vp_\eta,\pi_\eta}_{\eb}$, whose spectral parameter is the initial parameter $z$. The model $\mathbb{M}^{\vpt_\eta,\pit_\eta}_{\eb}$ has sites $\alpha\in\Sit^{(\eta)}=\Si\sqcup\lbrace (+),(-)\rbrace$, with positions $\zt_\alpha$. For $\alpha\in\Si$, this position $\zt_\alpha$ is the same as in the undeformed model $\mathbb{M}^{\vpt,\pit}_{\eb}$ and is thus equal to $\zt_\alpha=f(z_\alpha)$, where $z_\alpha$ is the position of the site $\alpha\in\Si$ in the initial model $\mathbb{M}^{\vp,\pi}_{\eb}$ (see Equation \eqref{Eq:NewPositions2}). Thus, under the inverse transformation $\zt \mapsto z=f^{-1}(\zt)$, the sites $\alpha \in \Si \subset \Sit^{(\eta)}$ are sent back to their initial positions $z_\alpha$.

Let us now consider the sites $(\pm)$. Their positions are a deformation $\zt_\pm=\zt_\is\pm\cc\,\eta$ of the position $\zt_\is$ of the site $\is$ of the model $\mathbb{M}^{\vpt,\pit}_{\eb}$. Recall from Equation \eqref{Eq:PosInf} that this position is equal to $\zt_\is=f(\infty)=a/c$. Under the inverse transformation $\zt \mapsto z=f^{-1}(\zt)$, the sites $(\pm)$ are then sent to the positions\footnote{Note here the different notations for the numbers $c$ and $\cc$, which have different origins and thus should not be confused one with another.}
\begin{equation}\label{Eq:Zpm}
z_\pm = f^{-1}(\zt_\pm) = - \frac{d}{c} \mp \frac{ad-bc}{c^2} \frac{1}{\cc \,\eta}.
\end{equation}

Note in particular that for a non-zero deformation parameter $\eta$, these positions $z_\pm$ are in the finite complex plane $\C$. The model $\mathbb{M}^{\vp_\eta,\pi_\eta}_{\eb}$ that we are constructing by performing the inverse transformation $\zt \mapsto z=f^{-1}(\zt)$ on the deformed model $\mathbb{M}^{\vpt_\eta,\pit_\eta}_{\eb}$ will then not have a site at infinity. Thus, it will also be a constrained model and the change of spectral parameter should then be performed following the method of Subsection \ref{SubSec:ChangeSpec}. Let us pause here to analyse the situation. Although the change of spectral parameter $z\mapsto\zt$ performed originally in Subsection \ref{SubSec:ChangeSpec2} relates the non-constrained model $\mathbb{M}^{\vp,\pi}_{\eb}$ to the constrained model $\mathbb{M}^{\vpt,\pit}_{\eb}$, the inverse transformation $\zt\mapsto z$, when applied to the deformed model $\mathbb{M}^{\vpt_\eta,\pit_\eta}_{\eb}$, yields another constrained model $\mathbb{M}^{\vp_\eta,\pi_\eta}_{\eb}$. This is due to the fact that to perform the deformation, we split the site $\is$ into two sites $\pms$. This site $\is$ keeps track in the constrained model $\mathbb{M}^{\vpt,\pit}_{\eb}$ of the special status of the site at $z=\infty$ in the initial model $\mathbb{M}^{\vp,\pi}_{\eb}$, which in particular is responsible for the absence of constraints in this model. By deforming the site $\is$, we broke this structure, resulting in the appearance of a constraint in the model $\mathbb{M}^{\vp_\eta,\pi_\eta}_{\eb}$.

\paragraph{The deformed twist function $\bm{\vp_\eta(z)}$.} Let us now proceed to describe the model $\mathbb{M}^{\vp_\eta,\pi_\eta}_{\eb}$. Its twist function $\vp_\eta(z)$ is related to $\vpt_\eta(\zt)$ by
\begin{equation*}
\vp_\eta(z) \dd z = \vpt_\eta(\zt) \dd\zt.
\end{equation*}
Let us notice that Equation \eqref{Eq:DefTwistDSE} can be rewritten as
\begin{equation*}
\vpt_\eta(\zt) = \vpt(\zt) + \xi(\zt), \;\;\;\;\;\;  \text{ with } \;\;\;\;\; \xi(\zt) = \frac{1}{2\cc \gamma}\frac{1}{\zt-\zt_+} - \frac{1}{2\cc \gamma}\frac{1}{\zt-\zt_-} - \frac{\lst\is 1}{(\zt-\zt_\is)^2}.
\end{equation*}
Yet, one has by construction $\vpt(\zt)\dd\zt=\vp(z)\dd z$, hence
\begin{equation*}
\vp_\eta(z)\dd z = \vp(z) \dd z + \xi(\zt)\dd\zt = \left( \vp(z) - \frac{ad-bc}{(cz+d)^2} \; g\left( \frac{a z+b}{c z+d} \right) \right) \dd z.
\end{equation*}
Recall from Lemma \ref{Lem:NewLevels2} the relation between $\lst\is 1$ and the constant term $\ell^\infty$ of $\vp(z)$. A direct computation then yields:
\begin{equation}\label{Eq:DeformTwist}
\vp_\eta(z) = \sum_{\alpha\in\Si} \sum_{p=0}^{m_\alpha-1} \frac{\ls\alpha p}{(z-z_\alpha)^{p+1}} + \frac{1}{2\cc \gamma}\frac{1}{z-z_+} - \frac{1}{2\cc \gamma}\frac{1}{z-z_-}.
\end{equation}
The sites of the model $\mathbb{M}^{\vp_\eta,\pi_\eta}_{\eb}$ with twist function $\vp_\eta(z)$ can be described by the same labels $\Sit^{(\eta)}=\Si \sqcup \lbrace (+),(-) \rbrace$ as the model $\mathbb{M}^{\vpt_\eta,\pit_\eta}_{\eb}$, as one can expect from the general method of Subsection \ref{SubSec:ChangeSpec}.

\paragraph{The Gaudin Lax matrix $\bm{\Gamma_\eta(z,x)}$.} Let us now describe the Gaudin Lax matrix $\Gamma_\eta(z,x)$ of the model $\mathbb{M}^{\vp_\eta,\pi_\eta}_{\eb}$. It has poles of order $m_\alpha$ at the positions $z_\alpha$ of the sites $\alpha\in\Sit^{(\eta)}=\Si \sqcup \lbrace (+),(-) \rbrace$. We want to determine the coefficients $\J{\alpha,(\eta)}p(x)$, $p\in\lbrace 0,\cdots,m_\alpha-1\rbrace$, of these poles. For that, recall that we are performing the change of spectral parameter $\zt \mapsto z=f^{-1}(\zt)$, from the model $\mathbb{M}^{\vpt_\eta,\pit_\eta}_{\eb}$ to the model $\mathbb{M}^{\vp_\eta,\pi_\eta}_{\eb}$. Following Subsection \ref{SubSec:ChangeSpec}, the currents $\J{\alpha,(\eta)}p (x)$ are then obtained from the Takiff currents $\Jtt\alpha p(x)$ of the model $\mathbb{M}^{\vpt_\eta,\pit_\eta}_{\eb}$ by applying Proposition \ref{Prop:NewTakiff}, but replacing the currents $\J\alpha p(x)$ by $\Jtt\alpha p(x)$ and the coefficients $(a,b,c,d)$ by the corresponding coefficients of the inverse M\"obius transformation $f^{-1}$.

Let us first consider the sites $\alpha\in\Si \subset \Sit^{(\eta)}$. In this case, the Takiff currents $\Jtt\alpha p(x)$ of the model $\mathbb{M}^{\vpt_\eta,\pit_\eta}_{\eb}$ are the same as the ones of the undeformed model $\mathbb{M}^{\vpt,\pit}_{\eb}$. Thus, they are obtained from the currents $\J\alpha p(x)$ of the initial model $\mathbb{M}^{\vp,\pi}_{\eb}$ by the transformation \eqref{Eq:NewCurrents2}, which was a direct application of Proposition \ref{Prop:NewTakiff}. To obtain the currents $\J{\alpha,(\eta)}p (x)$, one then apply the inverse transformation, corresponding to the M\"obius map $f^{-1}$. Thus, the currents $\J{\alpha,(\eta)}p (x)$ actually coincide with the initial currents $\J\alpha p(x)$ for the sites $\alpha\in\Si$.

Let us now turn our attention to the sites $\pms$. These are sites of multiplicity one, which are then associated with only one current: $\Jtt\pms 0(x)$ in the model $\mathbb{M}^{\vpt_\eta,\pit_\eta}_{\eb}$ and $\J{\pms,(\eta)}0(x)$ in the model $\mathbb{M}^{\vp_\eta,\pi_\eta}_{\eb}$. According to Proposition \ref{Prop:NewTakiff}, such currents of Takiff mode 0 are not modified under a change of spectral parameter. Thus, the currents $\J{\pms,(\eta)}0(x)$ are simply equal to the currents $\Jtt\pms 0(x)$, which are given by Equation \eqref{Eq:CurrentYBDef}. In the end, the Gaudin Lax matrix $\Gamma_\eta(z,x)$ of the model $\mathbb{M}^{\vp_\eta,\pi_\eta}_{\eb}$ is then
\begin{equation}\label{Eq:GammaDef}
\Gamma_\eta(z,x) = \sum_{\alpha\in\Si} \sum_{p=0}^{m_\alpha-1} \frac{\J\alpha p(x)}{(z-z_\alpha)^{p+1}} + \frac{\Jtt {(+)} 0(x)}{z-z_+} + \frac{\Jtt {(-)} 0(x)}{z-z_-}.
\end{equation}
Let us note that it can be expressed simply in terms of the Gaudin Lax matrix \eqref{Eq:GammaNonConst} of the initial model $\mathbb{M}^{\vp,\pi}_{\eb}$ as
\begin{equation*}
\Gamma_\eta(z,x) = \Gamma(z,x) + \frac{\Jtt {(+)} 0(x)}{z-z_+} + \frac{\Jtt {(-)} 0(x)}{z-z_-}.
\end{equation*}

\paragraph{Constraint and gauge-fixing.} As expected from the general method developed in Subsection \ref{SubSec:ChangeSpec}, the model $\mathbb{M}^{\vp_\eta,\pi_\eta}_{\eb}$ is a constrained model. Its first-class constraint $\Cc(x)$ can be extracted as (minus) the residue at $z=\infty$ of the 1-form $\Gamma_\eta(z,x)\dd z$. As explained in Subsection \ref{SubSec:ChangeSpec}, this constraint in fact coincides with the one $\Cct(x)$ of the model $\mathbb{M}^{\vpt_\eta,\pit_\eta}_{\eb}$ before the change of spectral parameter $\zt \mapsto z$ and is thus given by Equation \eqref{Eq:CTildeNoWZ}. In particular, it generates a gauge symmetry which acts on the field $g(x)$ by right multiplication.

The final step to describe the model $\mathbb{M}^{\vp_\eta,\pi_\eta}_{\eb}$ as a deformation of the initial model $\mathbb{M}^{\vp,\pi}_{\eb}$ is to fix its gauge-symmetry. As previously, we do it by imposing the gauge-fixing condition $g(x)\gf\Id$, in addition of the initial constraint $\Cc(x)\gf 0$. As $\Cc(x)$ is given by Equation \eqref{Eq:CTildeNoWZ}, this gauge-fixing allows us to eliminate the degrees of freedom $g(x)$ and $X(x)$ attached to the sites $\pms$ of the model:
\begin{equation*}
g(x)\gf\Id \;\;\;\;\;\;\;\; \text{ and } \;\;\;\;\;\;\;\; X(x) \gf - \sum_{\alpha\in\Si} \J\alpha 0(x).
\end{equation*}
The algebra of observables of the model is then identified with the observables $\Ac$ of the initial model $\mathbb{M}^{\vp,\pi}_{\eb}$ (see Subsection \ref{SubSubSec:GaugeFixing}).

Let us now describe the Gaudin Lax matrix $\Gamma_\eta(z,x)$ of the gauge-fixed deformed model $\mathbb{M}^{\vp_\eta,\pi_\eta}_{\eb}$. For that, we start with its non-gauged fixed expression \eqref{Eq:GammaDef} and insert the above gauge-fixing conditions in the definition \eqref{Eq:CurrentYBDef} of the currents $\Jtt\pms 0(x)$. We then get
\begin{equation}\label{Eq:GammaEtaGF}
\Gamma_\eta(z,x) \gf \sum_{\alpha\in\Si} \sum_{p=0}^{m_\alpha-1} \frac{\J\alpha p(x)}{(z-z_\alpha)^{p+1}} + \frac{1}{2\cc} \left( \frac{R^-}{z-z_+} - \frac{R^+}{z-z_-} \right) \left( \,\sum_{\alpha\in\Si} \J\alpha 0(x) \right),
\end{equation}
where the operators $R^\pm$ are defined as $R^\pm=R\pm \cc\,\Id$. Recall that in the limit $\eta\to 0$, the posititions $z_\pm$ diverge to $\infty$ (see Equation \eqref{Eq:Zpm}). Thus, in this limit, we have
\vspace{-4pt}\begin{equation*}
\lim_{\eta\to 0} \,\Gamma_\eta(z,x) \gf \Gamma(z,x), \vspace{-4pt}
\end{equation*}
as expected.

\subsubsection{Integrable structure of the deformed model}
\label{SubSubSec:IntegrabilityGaugingDeformed}

In the previous subsections, we constructed an integrable deformation of the model $\mathbb{M}^{\vp,\pi}_{\eb}$. This deformed model have several formulations, using different spectral parameters and either being considered a constrained model or a gauge-fixed model. Let us now describe the integrable structure of this model in its different formulations.

\paragraph{The constrained models.} Let us first consider the models $\mathbb{M}^{\vp_\eta,\pi_\eta}_{\eb}$ and $\mathbb{M}^{\vpt_\eta,\pit_\eta}_{\eb}$ in their constrained formulations. These formulations are described by the formalism developed in Section \ref{Sec:AGM}. In particular, their integrability first manifests itself in the fact that their equations of motion take the form of a zero curvature equation, according to Theorem \ref{Thm:Lax}. Note that their Lax pairs $\bigl(\Lc_\eta(z,x),\Mc_\eta(z,x)\bigr)$ and $\bigl(\Lct_\eta(\zt,x),\widetilde{\Mc}_\eta(\zt,x)\bigr)$ are equal, at least $\approx$-weakly, according to Paragraph \ref{Par:TransfoLax}.

Moreover, the Hamiltonian integrability of the constrained models $\mathbb{M}^{\vp_\eta,\pi_\eta}_{\eb}$ and $\mathbb{M}^{\vpt_\eta,\pit_\eta}_{\eb}$ is ensured by the fact that their Lax matrices $\Lc_\eta(z,x)$ and $\Lct_\eta(\zt,x)$ satisfy a Maillet bracket, following the results of Subsection \ref{SubSubSec:Maillet}. The $\Rc$-matrices of these Maillet brackets are characterised by their respective twist functions $\vp_\eta(z)$ and $\vpt_\eta(\zt)$, given by Equations \eqref{Eq:DeformTwist} and \eqref{Eq:DefTwistDSE}:
\begin{equation}\label{Eq:RMatDeformed}
\Rc^\eta\ti{12}(z,w) = \frac{C\ti{12}}{w-z} \vp_\eta(z)^{-1} \;\;\;\;\;\;\; \text{ and } \;\;\;\;\;\;\; \Rct^\eta\ti{12}(\zt,\wt) = \frac{C\ti{12}}{\wt-\zt} \vpt_\eta(\zt)^{-1}
\end{equation}
Let us note here that although we are considering constrained models, the Maillet bracket satisfied by $\Lc_\eta(z,x)$ and $\Lct_\eta(\zt,x)$ holds strongly, \textit{i.e.} without imposing the constraint.

As explained in Subsection \ref{SubSubSec:Maillet}, the fact that the models $\mathbb{M}^{\vp_\eta,\pi_\eta}_{\eb}$ and $\mathbb{M}^{\vpt_\eta,\pit_\eta}_{\eb}$ are described by Maillet brackets with twist function implies the existence of infinite hierarchies of conserved local charges in involution in these models, associated with the zeroes of their twist functions. The  hierarchies of these two models, which are also a strong mark of their integrable properties, coincide, at least $\approx$-weakly.

\paragraph{The gauge-fixed formulation of the model $\pmb{\mathbb{M}}\bm{^{\vpt_\eta,\pit_\eta}_{\eb}}$.} Let us now consider the model $\mathbb{M}^{\vpt_\eta,\pit_\eta}_{\eb}$ in its gauge-fixed formulation. It is obvious that the equations of motion of this model are described by a Lax pair, which is given by gauge-fixing the Lax pair $\bigl(\Lct_\eta(\zt,x),\widetilde{\Mc}_\eta(\zt,x)\bigr)$. However, the gauge-fixing procedure does not preserve the Poisson structure of the model and thus in particular the Maillet bracket of the Lax matrix. Let us then determine what form does the Poisson bracket of $\Lct_\eta(\zt,x)$ take in the gauge-fixed model.

To do so, let us take some inspiration from a similar computation that we did in Subsection \ref{SubSubSec:IntegrabilityGauging} and Appendix \ref{App:DiracUndeformed}, where we determined the Poisson bracket of the gauge-fixed undeformed Lax matrix $\Lct(\zt,x)$ using the Dirac bracket $\lbrace \cdot,\cdot\rbrace^*$. We shall use the same method here. For simplicity, we present the details of the computation in Appendix \ref{App:DiracDeformed}. In the end, we find that the bracket of the gauge-fixed Lax matrix $\Lct_\eta(\zt,x)$ takes the form of a Maillet bracket with $\Rc$-matrix (see \eqref{Eq:RMatGFDeformed})
\begin{equation}\label{Eq:RMatYB}
\Rct^{\eta,GF}\ti{12}(\zt,\wt) = \Rct^{\eta,0}\ti{12}(\zt,\wt) \vpt_\eta(\wt)^{-1},
\end{equation}
where
\begin{equation*}
\Rct^{\eta,0}\ti{12}(\zt,\wt) = \frac{C\ti{12}}{\wt-\zt} - \frac{1}{2} \left( \frac{1}{\wt-\zt_+} + \frac{1}{\wt-\zt_-} \right) C\ti{12} - \frac{1}{2\cc} \left( \frac{1}{\wt-\zt_+} - \frac{1}{\wt-\zt_-} \right) R\ti{12},
\end{equation*}
with $R\ti{12}=R\ti{1}\,C\ti{12}$. These equations have to be compared with the equivalent ones \eqref{Eq:RMatGF1} and \eqref{Eq:RMatGF} for the undeformed model. Let us recall that in the undeformed limit $\eta\to 0$, we have $\zt_\pm \to \zi$. It is then clear that
\begin{equation*}
\Rct^{\eta,0}\ti{12}(\zt,\wt) \xrightarrow{\eta\to 0} \Rct^{0}\ti{12}(\zt,\wt),
\end{equation*}
as one should expect.

The matrix $\Rct^{\eta,0}\ti{12}(\zt,\wt)$ is similar to the $\Rc$-matrix obtained in Section 6 of~\cite{Delduc:2015xdm} (see also~\cite{Lacroix:2018njs,Lacroix:2017isl}), in the context of the study of the bi-Yang-Baxter model in its gauge-fixed formulation. Let us describe its properties. It is a solution of the classical Yang-Baxter equation
\begin{equation}\label{Eq:CYBERmat}
\left[ \Rct^{\eta,0}\ti{12}(\zt_1,\zt_2), \Rct^{\eta,0}\ti{13}(\zt_1,\zt_3) \right] + \left[ \Rct^{\eta,0}\ti{12}(\zt_1,\zt_2), \Rct^{\eta,0}\ti{23}(\zt_2,\zt_3) \right] + \left[ \Rct^{\eta,0}\ti{32}(\zt_3,\zt_2), \Rct^{\eta,0}\ti{13}(\zt_1,\zt_3) \right] = 0,
\end{equation}
ensuring that the Maillet bracket of $\Lct_\eta(\zt,x)$ satisfies the Jacobi identity. In order to prove this, one needs to use the property \eqref{Eq:ComCasimir} of the split Casimir $C\ti{12}$ and the fact that the operator mCYBE \eqref{Eq:mCYBE} on $R$ translates to the ``matrix'' mCYBE
\begin{equation*}
\left[ R\ti{12}, R\ti{13} \right] + \left[ R\ti{12}, R\ti{23} \right] + \left[ R\ti{13},R\ti{23} \right] = -\cc^2 \left[ C\ti{12}, C\ti{13} \right].
\end{equation*}
Moreover, similarly to $\Rct^{0}\ti{12}(\zt,\wt)$, $\Rct^{\eta,0}\ti{12}(\zt,\wt)$ is equal to the standard $\Rc$-matrix $C\ti{12}/(\wt-\zt)$ plus a matrix depending only on the second spectral parameter. As already mentioned for $\Rct^{0}\ti{12}(\zt,\wt)$, this type of $\Rc$-matrix have been studied in~\cite{Lacroix:2017isl}, where it was proved that models with such $\Rc$-matrices also possess integrable hierarchies of conserved local charges in involution associated with the zeroes of the twist function. In the present case, these hierarchies will coincide with the ones mentioned in the previous paragraph for the non-gauge-fixed model, when considered under $\gf$-weak equalities.

\paragraph{The gauge-fixed formulation of the model $\pmb{\mathbb{M}}\bm{^{\vp_\eta,\pi_\eta}_{\eb}}$.} Finally, let us consider the model $\bm{\mathbb{M}^{\vp_\eta,\pi_\eta}_{\eb}}$ after gauge-fixing. According to the general results of Subsection \ref{SubSubSec:EquivChangeSpec}, its Lax matrix is directly related to the one of the model $\bm{\mathbb{M}^{\vpt_\eta,\pit_\eta}_{\eb}}$ by the change of spectral parameter $\zt\mapsto z$:
\begin{equation*}
\Lc_\eta(z,x) \approx \Lct_\eta(\zt,x) = \Lct_\eta\bigl( f(z),x \bigr).
\end{equation*}
As this equation is true for $\approx$-weak equalities, it is also true for $\gf$-weak equalities, \textit{i.e.} after gauge fixing. Thus, one can determine the Poisson bracket of the gauge-fixed Lax matrix $\Lc_\eta(z,x)$ from the one of $\Lct_\eta(\zt,x)$ (similarly to the first method we used in Subsection \ref{SubSubSec:IntegrabilityGauging}). More precisely, we find that $\Lc_\eta(z,x)$ satisfies a Maillet bracket with $\Rc$-matrix
\begin{equation*}
\Rc^{\eta,GF}\ti{12}(z,w) = \Rct^{\eta,GF}\ti{12}\bigl( f(z), f(w) \bigr).
\end{equation*}
After a few manipulations, we get $\Rc^{\eta,GF}\ti{12}(z,w)=\Rc^{\eta,0}\ti{12}(z,w)\vp_\eta(w)^{-1}$ with
\begin{equation}\label{Eq:R0YBgf}
\Rc^{\eta,0}\ti{12}(z,w) = \frac{C\ti{12}}{w-z} - \frac{1}{2} \left( \frac{1}{w-z_+} + \frac{1}{w-z_-} \right) C\ti{12} - \frac{1}{2\cc} \left( \frac{1}{w-z_+} - \frac{1}{w-z_-} \right) R\ti{12}.
\end{equation}
This $\Rc$-matrix satisfies similar properties than the one $\Rct^{\eta,0}\ti{12}(\zt,\wt)$.

\subsubsection{Homogeneous diagonal Yang-Baxter deformation}
\label{SubSubSec:Homogeneous}

\paragraph{Principle.} So far, we have constructed and studied the diagonal inhomogeneous Yang-Baxter deformation of the model $\mathbb{M}^{\vp,\pi}_{\eb}$. Let us end this subsection by saying a few words about the corresponding homogeneous deformation (such a deformation was first considered in~\cite{Kawaguchi:2014qwa} in the context of the $AdS_5\times S^5$ superstring). The inhomogeneous deformation is obtained by deforming the PCM realisation at site $\is$ into an inhomogeneous Yang-Baxter realisation, attached to two sites $\pms$. Similarly, the homogeneous Yang-Baxter deformation can be obtained by deforming the PCM realisation into a homogeneous Yang-Baxter realisation (see Subsection 4.1 of~\nm). In this case, contrarily to the inhomogeneous one, one does not split the site $\is$ of multiplicity 2 into two sites of multiplicity 1: instead, the homogenenous Yang-Baxter realisation is still attached to the site $\is$ with the same multiplicity 2 and with the same levels $\lst\is p$. The deformation only enters the way the site $\is$ is realised, \textit{i.e.} the expression of the Takiff currents $\Jtt\is p(x)$ attached to the site. More precisely, these currents now become~\cite{Vicedo:2015pna} (see also~\nm, Subsection 4.1.1)
\begin{equation*}
\Jtt{\is,\text{hYB}} 0 (x) = X(x) \;\;\;\;\;\;\;\; \text{ and } \;\;\;\;\;\;\;\; \Jtt{\is,\text{hYB}} 1 (x) = \lst\is 1\, j(x) - R_gX (x).
\end{equation*}
In this equation, the deformation is introduced by the appearance of the term $R_gX(x)$ in the current $\Jtt{\is,\text{hYB}} 1 (x)$, where $R_g = \Ad_g^{-1} \circ R \circ \Ad_g$ and $R:\g_0\rightarrow\g_0$ is a skew-symmetric solution of the CYBE
\begin{equation}\label{Eq:CYBE}
[RX,RY] - R\bigl( [RX,Y] + [X,RY] \bigr) = 0, \;\;\;\;\;\; \forall \, X,Y\in\g_0.
\end{equation}
The twist function of the homogeneously deformed model is thus the same as the one of the undeformed model, \textit{i.e.} $\vpt(\zt)$, and its Gaudin Lax matrix is given by
\begin{equation}\label{Eq:GthYB}
\Gt_{\text{hYB}}(\zt,x) = \Gt_\Si(\zt,x) + \frac{X(x)}{\zt-\zi} + \frac{\lst\is 1\, j(x)-R_gX(x)}{(\zt-\zi)^2},
\end{equation}
with $\Gt_\Si(\zt,x)$ defined as in the expression \eqref{Eq:UndefGt} of the undeformed Gaudin Lax matrix $\Gt(\zt,x)$. It is clear from Equation \eqref{Eq:UndefGt} that $\Gt_{\text{hYB}}(\zt,x)$ is a deformation of $\Gt(\zt,x)$, as one recovers the latter simply by taking $R=0$.

The choice of Takiff currents above corresponds to a choice of realisation $\pit_{\hYB}$, which is different from the realisation $\pit$, although they are both valued in the same algebra $\Act=\Ac\otimes\Ac_{G_0}$ and realise the same Takiff algebra, with sites $\alpha\in\Sit$ and levels $\lst\alpha p$. One then constructs the homogeneously deformed model as the realisation of affine Gaudin model $\mathbb{M}^{\vpt,\pit_{\hYB}}_{\eb}$. By construction, this is an integrable deformation of the model $\mathbb{M}^{\vpt,\pit}_{\eb}$, with $R$ playing the role of deformation parameter.

As the twist function of this deformed model is the same $\vpt(\zt)$ as the undeformed one, it satisfies the same first-class condition \eqref{Eq:NewResidueInf}. This ensures that the deformed model possesses a first-class constraint, as the undeformed one. This constraint is obtained as minus the residue at $\zt=\infty$ of  the 1-form $\Gt_{\hYB}(\zt,x)$. One verifies easily that this constraint coincides with the one $\Cct(x)$ of the undeformed model. This ensures that the deformed model shares the same gauge symmetry than the undeformed one and thus the same physical degrees of freedom.\\

Let us note that, as in the rest of this subsection, we supposed here that the site $\is$ in the undeformed model $\mathbb{M}^{\vpt,\pit}_{\eb}$ is associated with a PCM realisation without Wess-Zumino term (and thus that the level $\lst\is 0$ is equal to $0$). It is not known whether one could apply a similar homogeneous Yang-Baxter deformation to a model where the site $\is$ would be associated with a general PCM+WZ realisation, as no homogeneous Yang-Baxter realisation with a Wess-Zumino term has been exhibited in the literature so far.

\paragraph{Homogeneous deformation as a limit of the inhomogeneous one.} Instead of performing the analysis of the homogeneous deformation following the same steps as we did before for the inhomogeneous one, we shall follow here a quicker approach. Indeed, one can see the homogeneous deformation as an appropriate limit of the inhomogeneous one and thus describe most of its properties from the results that we found in the rest of this subsection. Let us explain what is this limit.

Recall that the inhomogeneous Yang-Baxter deformation came with the introduction of a skew-symmetric operator $R$, solution of the mCYBE \eqref{Eq:mCYBE}. The homogeneous CYBE \eqref{Eq:CYBE} can be seen as the limit $\cc\to 0$ of the mCYBE \eqref{Eq:mCYBE}. Although we introduced the parameter $\cc$ to be either equal to $1$ or $i$ to distinguish the split and non-split cases, one easily checks that all the results presented in the previous paragraphs also hold for a general $\cc\in\C^*$. As we shall see, one can see the homogeneous Yang-Baxter deformation as the limit $\cc\to 0$ of the inhomogeneous one.

For that, let us consider the Gaudin Lax matrix $\Gt_\eta(\zt,x)$ of the inhomogeneous deformation, given by Equation \eqref{Eq:DeformGt}. Recalling the expressions \eqref{Eq:ZtPM} and \eqref{Eq:CurrentYBDef} of the positions $\zt_\pm$ and the currents $\Jtt\pms 0(x)$, and in particular their dependence on $\cc$, one can compute the limit $\cc\to 0$ of $\Gt_\eta(\zt,x)$. One then finds
\begin{equation*}
\Gt_{\eta}(\zt,x) \xrightarrow{\cc\to 0} \Gt_\Si(\zt,x) + \frac{X(x)}{\zt-\zi} + \frac{\lst\is 1\, j(x)-\eta\,R_gX(x)}{(\zt-\zi)^2},
\end{equation*}
This limit coincides with the expression \eqref{Eq:GthYB} of the Gaudin Lax matrix $\Gt_{\text{hYB}}(\zt,x)$ of the homogeneous deformation, with the operator $\eta R$ instead of $R$. However, this difference is irrelevant, as if $R$ is a solution of the CYBE \eqref{Eq:CYBE}, so is $\eta R$ (hence the characterisation ``homogeneous''), so that it can be reabsorbed in a redefinition of $R$, or equivalently by setting $\eta=1$. Thus, the homogeneous deformation coincides with the limit $\cc\to 0$ of the inhomogeneous deformation with $\eta=1$.

\paragraph{Integrable structure of the deformed model $\pmb{\mathbb{M}}\,\bm{^{\vpt,\pit_{\hYB}}_{\eb}}$.} By construction, the model $\mathbb{M}^{\vpt,\pit_{\hYB}}_{\eb}$ is integrable. In its non-gauge-fixed formulation, its integrable structure is described by the general results of Section \ref{Sec:AGM}. It possesses a Lax pair formulation and its Lax matrix $\Lct_{\hYB}(\zt,x)$ satisfies a strong Maillet bracket. The $\Rc$-matrix of the latter is the same as the one of the undeformed model $\mathbb{M}^{\vpt,\pit}_{\eb}$, as they share the same twist function $\vpt(\zt)$.\\

Let us now consider the deformed model $\mathbb{M}^{\vpt,\pit_{\hYB}}_{\eb}$ under the gauge fixing $g(x)\gf\Id$ (this still defines a good gauge-fixing condition as the deformed model possesses the same gauge symmetry as the undeformed one). To study the properties of this model, let us use the limit $\cc\to 0$ described above. The Lax matrix of the gauge-fixed inhomogeneous Yang-Baxter deformation satisfies a Maillet bracket with $\Rc$-matrix \eqref{Eq:RMatYB}. Considering the limit $\cc\to 0$ of this bracket, we obtain that the gauge-fixed Lax matrix $\Lct_{\hYB}(\zt,x)$ of the homogeneous deformation also satisfies a Maillet bracket, with $\Rc$-matrix
\begin{equation*}
\Rct^{\hYB,GF}\ti{12}(\zt,\wt) = \Rct^{\hYB,0}\ti{12}(\zt,\wt) \vpt(\wt)^{-1},
\end{equation*}
where
\begin{equation*}
\Rct^{\hYB,0}\ti{12}(\zt,\wt) = \frac{C\ti{12}}{\wt-\zt} -  \frac{ C\ti{12}}{\wt-\zi}  - \frac{R\ti{12}}{(\wt-\zi)^2},
\end{equation*}
with $R\ti{12}=R\ti{1}\,C\ti{12}$. The matrix $\Rct^{\hYB,0}\ti{12}(\zt,\wt)$ is a deformation of the corresponding matrix \eqref{Eq:RMatGF} for the undeformed model. As a limit of $\Rct^{\eta,0}\ti{12}(\zt,\wt)$, it satisfies similar properties, as for example the CYBE \eqref{Eq:CYBERmat}.

\paragraph{Back to the initial spectral parameter.} Finally, let us discuss the model obtained after performing the inverse transformation of spectral parameter $\zt \mapsto z=f^{-1}(\zt)$, which is then a direct deformation of the initial model $\mathbb{M}^{\vp,\pi}_{\eb}$. This model can be seen as the limit $\cc\to0$ of the inhomogeneous deformation $\mathbb{M}^{\vp_\eta,\pi_\eta}_{\eb}$. Its twist function is then the limit when $\cc$ goes to $0$ of the twist function $\vp_\eta(z)$. One checks that this coincides with the twist function $\vp(z)$ of the initial model $\mathbb{M}^{\vp,\pi}_{\eb}$, given by Equation \eqref{Eq:TwistNonConst}. One can expect this, as homogeneous Yang-Baxter deformations do not change the levels of the sites and thus the twist function.

Similarly, the Gaudin Lax matrix $\Gamma_{\hYB}(z,x)$ of the homogeneous deformation can be computed as the limit $\cc\to 0$ of $\Gamma_\eta(z,x)$. Recall that the latter is expressed as \eqref{Eq:GammaDef}, in terms of the currents $\Jtt\pms 0(x)$ and the positions $z_\pm$. Considering their expressions \eqref{Eq:CurrentYBDef} and \eqref{Eq:Zpm}, and in particular their dependence on $\cc$, one can take the limit $\cc\to 0$ in order to determine $\Gamma_{\hYB}(z,x)$, yielding:
\begin{equation}\label{Eq:GammahYB}
\Gamma_{\hYB}(z,x) = \sum_{\alpha\in\Si} \sum_{p=0}^{m_\alpha-1} \frac{\J\alpha p(x)}{(z-z_\alpha)^{p+1}} - \ell^\infty \, j(x) - \frac{c^2}{ad-bc} R_gX(x),
\end{equation}
where we used the relation between $\lst\is 1$ and $\ell^\infty$ found in Lemma \ref{Lem:NewLevels2}.

The Lax matrix of the model is equal to $\Lc_{\hYB}(z,x)=\Gamma_{\hYB}(z,x)/\vp(z)$ and can also be seen as the limit $\cc\to0$ of the Lax matrix $\Lc_\eta(z,x)$ of the inhomogeneous deformation. Its time evolution takes the form of a zero curvature equation and its Poisson bracket is a Maillet bracket with $\Rc$-matrix
\begin{equation*}
\Rc(z,w) = \frac{C\ti{12}}{w-z} \vp(w)^{-1},
\end{equation*}
ensuring the integrability of the model.

\paragraph{Structure of the model.} Let us make a few comments about the structure of the model described above. The matrix $\Gamma_{\hYB}(z,x)$ cannot be interpreted as the Gaudin Lax matrix of a realisation of affine Gaudin model as considered in~\nm. Indeed, contrarily to the Gaudin Lax matrices in~\nm, the matrix $\Gamma_{\hYB}(z,x)$ possesses a constant term. In fact, the model obtained here after the inverse change of spectral parameter $\zt\mapsto z$ belongs to a larger class of realisations of affine Gaudin models than the one considered in~\nm, which fits into the original and most general construction of affine Gaudin models in~\bg. As the ones of~\nm, these more general models possess a site of multiplicity 2 at $z=\infty$. In~\nm, this site only takes the form of the constant term $\ell^\infty$ in the Lax matrix $\vp(z)$. In addition to this level $\ell^\infty$, the models in this larger class also possess a constant term $-\mathcal{J}^\infty (x)$ in their Gaudin Lax matrix. In the case considered above of the homogeneous deformation, this additional current is given by
\begin{equation*}
\mathcal{J}^\infty_{\hYB}(x) = \ell^\infty \, j(x) + \frac{c^2}{ad-bc} R_gX (x).
\end{equation*}
In general, this current should satisfy the following Poisson brackets (see \bg, Corollary 4.7)
\begin{equation*}
\left\lbrace \mathcal{J}^\infty\ti{1}(x), \mathcal{J}^\infty\ti{2}(y) \right\rbrace = 0 \;\;\;\;\;\;\; \text{ and } \;\;\;\;\;\;\; \left\lbrace \J\alpha p\,\ti{1}(x), \mathcal{J}^\infty\ti{2}(y) \right\rbrace = 0, \;\;\;\; \forall \, \alpha\in\Si .
\end{equation*}
The models considered in~\nms simply correspond to the simplest choice $\mathcal{J}^\infty (x)=0$, but most of the results of~\nms still hold for this larger class of models. One can check that the current $\mathcal{J}^\infty_{\hYB}(x)$ indeed satisfies the bracket above (see for example~\nm, Equation (4.5c)) and thus that the homogeneous diagonal Yang-Baxter deformation (when seen as a model with spectral parameter $z$) fits into this more general formalism. Recall however that this deformed model possesses a first-class constraint and a corresponding gauge symmetry, which allow to eliminate the additional degrees of freedom $g(x)$ and $X(x)$ appearing in the current $\mathcal{J}^\infty_{\hYB}(x)$. At the moment, it is not clear how constraints can be introduced in the general treatment of this larger class of models. It would be interesting to study this question in more details.

\paragraph{Structure of the gauge-fixed model.} To end this subsection, let us describe the integrable structure of the model obtained above after gauge-fixing. As before, this gauge-fixing takes the form of the constraints $g(x)\gf\Id$ and $X(x)\gf-\sum_{\alpha\in\Si} \J\alpha 0(x)$. The gauge-fixed expression of the Gaudin Lax matrix $\Gamma_{\hYB}(z,x)$ can then be obtained directly from Equation \eqref{Eq:GammahYB}:
\begin{equation*}
\Gamma_{\hYB}(z,x) = \sum_{\alpha\in\Si} \sum_{p=0}^{m_\alpha-1} \frac{\J\alpha p(x)}{(z-z_\alpha)^{p+1}} + \frac{c^2}{ad-bc} R \left( \sum_{\alpha\in\Si} \J\alpha 0(x) \right).
\end{equation*}
Alternatively, this expression can be computed as the $\cc\to 0$ limit of Equation \eqref{Eq:GammaEtaGF}. The gauge-fixed Lax matrix $\Lc_{\hYB}(z,x)$ of the model then satisfies a Maillet bracket whose $\Rc$-matrix $\Rc^{\hYB,GF}\ti{12}(z,w)$ is the limit when $\cc$ goes to 0 of the $\Rc$-matrix $\Rc^{\eta,GF}\ti{12}(z,w)$ of the gauge-fixed inhomogeneous deformation, given in Equation \eqref{Eq:RMatDeformed}. One finds
\begin{equation*}
\Rc^{\hYB,GF}\ti{12}(z,w) = \Rc^{\hYB,0}\ti{12}(z,w) \vp(w)^{-1}, \;\;\;\;\;\; \text{ with } \;\;\;\;\;\; \Rc^{\hYB,0}\ti{12}(z,w) = \frac{C\ti{12}}{w-z} - \frac{c^2}{ad-bc} R\ti{12}.
\end{equation*}
The matrix $\Rc^{\hYB,0}\ti{12}(z,w)$ satisfies the classical Yang-Baxter equation \eqref{Eq:CYBERmat}.

\section[Application to integrable coupled $\s$-models]{Application to integrable coupled $\bm{\s}$-models}
\label{Sec:SigmaModels}

In this section, we apply the results of Sections \ref{Sec:AGM} and \ref{Sec:Gauging} to the study of integrable $\s$-models. In particular we will discuss the gauged formulation and the homogeneous diagonal Yang-Baxter deformation of the integrable coupled $\s$-models recently introduced in~\cite{Delduc:2018hty}, using the results of Section \ref{Sec:Gauging}. Before that, let us illustrate these ideas on a simpler example mentioned in the introduction of this article as a motivation for the general construction: the Principal Chiral Model (PCM).

\subsection{The principal chiral model}
\label{SubSec:PCM}

\subsubsection{Non-gauged formulation}

The PCM is the simplest example of an integrable $\s$-model. Its interpretation in terms of realisation of affine Gaudin models was first pointed out in~\bgs (see also Subsection 3.2 of~\nm). In these references, the PCM is seen as a non-constrained model, without any gauge symmetry. Let us briefly review its construction, following~\nm. Its twist function is given by
\begin{equation*}
\vp_\pcm(z) = K \frac{1-z^2}{z^2} = \frac{K}{z^2} - K.
\end{equation*}
This twist function corresponds to an affine Gaudin model with one site $(1)$ in the complex plane, at position $z=0$, and one site at $z=\infty$. The site $(1)$ has multiplicity two and levels $\ls{(1)}1=K$ and $\ls{(1)}0=0$, while the site at infinity corresponds to the constant term $\ell^\infty=K$ in the twist function.

The site $(1)$ is associated with a PCM realisation, whose construction is recalled in Appendix \ref{App:PCM+WZ} (note that here we are considering a PCM realisaiton without Wess-Zumino term, as the level $\ls{(1)}0$ vanishes). This realisation is expressed in terms of the fields $\gb 1(x)$ and $\Xb 1(x)$ of the Poisson algebra $\Ac_{G_0}$, describing canonical fields in the cotangent bundle $T^*G_0$. The Gaudin Lax matrix of the model takes the form:
\begin{equation*}
\Gamma_\pcm(z,x) = \frac{\Xb 1(x)}{z} + \frac{K\,\jb 1(x)}{z^2}.
\end{equation*}
The twist function $\vp_\pcm(z)$ possesses two zeroes $\ze_1=+1$ and $\ze_2=-1$. Following the general construction of~\nm, we associate to these zeroes the quadratic charges
\begin{equation*}
\Q^\pcm_i = \res_{z=\ze_i} \Q^\pcm(z)\dd z, \;\;\;\;\; \text{ with } \;\;\;\;\; \Q^\pcm(z) = -\frac{1}{2\vp_\pcm(z)} \int_\D \dd x \; \kappa\bigl( \Gamma_\pcm(z,x), \Gamma_\pcm(z,x) \bigr).
\end{equation*}
The Hamiltonian of the model is then defined as $\Hc_\pcm=\Q_1-\Q_2$ and can be expressed explicitely in terms of the fields $\gb 1(x)$ and $\Xb 1(x)$ (see~\nm, Equation (3.18), with $\kay=0$).\\

One can then compute the time evolution $\p_t \gb 1 = \lbrace \Hc_\pcm, \gb 1 \rbrace$ of the field $\gb 1$, and express it in terms of the ``momentum'' $\Xb 1$. Introducing the temporal current $\jb 1_0=g^{(1)\,-1}\p_t \gb 1$, one finds
\begin{equation}\label{Eq:PCMLagX}
\Xb 1 = K\,\jb 1_0.
\end{equation} 
Expressing the momentum $\Xb 1$ in terms of the time derivative of $\gb 1$ is the first step towards the inverse Legendre transform of the model. The next step is the computation of the action, defined as
\begin{equation*}
S_\pcm\bigl[\gb 1 \bigr] = \iint_{\R\times\D} \dd t\,\dd x \; \kappa\bigl(\Xb 1,\jb 1_0\bigr) - \int_{\R} \dd t \; \Hc_\pcm,
\end{equation*}
where $\Xb 1$ should be replaced by its Lagrangian expression \eqref{Eq:PCMLagX}. In the end, one finds the standard action of the PCM
\begin{equation}\label{Eq:ActionPCM}
S_\pcm\bigl[\gb 1 \bigr] = \frac{K}{2} \iint_{\R\times\D} \dd t\,\dd x \; \kappa\bigl(\jb 1_+,\jb 1_-\bigr),
\end{equation}
where $\jb 1_\pm=g^{(1)\,-1}\p_\pm \gb 1$ are the light-cone currents, obtained form the derivatives $\p_\pm = \p_t \pm \p_x$ with respect to the light-cone coordinates $x^\pm = (t\pm x)/2$.

\subsubsection{Gauged formulation}

In the previous paragraph, we presented the non-gauged formulation of the PCM. According to the general results of Section \ref{Sec:Gauging}, the PCM also admits a gauged formulation. Such a formulation was already known before and has been recalled in the introduction of this article, to motivate the search for similar formulations for arbitrary realisations of affine Gaudin models. Let us then prove that indeed, the results mentioned in the introduction fit into the more general framework developed in Section \ref{Sec:Gauging}.

\paragraph{Change of spectral parameter.} According to Section \ref{Sec:Gauging}, the gauged formulation of the PCM is obtained by considering a change of spectral parameter $z\mapsto \zt=f(z)$, which brings the site at $z=\infty$ of the model to a finite position. In the present case, we will consider the following transformation of the spectral parameter, as in Equation \eqref{Eq:IntroChangeSpecPCM} of the introduction:
\begin{equation*}
f(z) = \frac{1+z}{1-z}.
\end{equation*}
This change of spectral parameter is a Mobius transformation, of the form \eqref{Eq:Mobius} (with corresponding coefficients $a=b=d=1$ and $c=-1$). It sends the site $(1)$ at $z=0$ to the position $\zt_{(1)}=f(0)=+1$ and the site at $z=\infty$ to the finite position $\zi=f(\infty)=-1$. The twist function $\vpt_\pcm(\zt)$ of the model with spectral parameter $\zt$ satisfies
\begin{equation*}
\vp_\pcm(z)\dd z = \vpt_\pcm(\zt)\dd\zt.
\end{equation*}
One then finds, in agreement with Equation \eqref{Eq:IntroTwistPCMGauged} of the introduction,
\begin{equation}\label{Eq:TwistGaugedPCM}
\vpt_\pcm(\zt) = \frac{8K\,\zt}{(\zt^2-1)^2}.
\end{equation}
As expected, this twist function has double poles at $\zt_{(1)}=+1$ and $\zi=-1$, the positions of the sites $(1)$ and $\is$ for the new spectral parameter. The corresponding levels can be checked to be
\begin{equation*}
\lst{(1)}0 = \lst\is 0 = 0 \;\;\;\;\;\; \text{ and } \;\;\;\;\;\; \lst{(1)}1= - \lst\is 1 = 2K.
\end{equation*}
Alternatively, these levels can be obtained using Lemma \ref{Lem:NewLevels2}.

\paragraph{Gaudin Lax matrix.} The Takiff currents $\Jtt{(1)}p(x)$ associated with the site $(1)$ in the gauged model are related to the ones $\J{(1)}p(x)$ of the non-gauged formulation by Proposition \ref{Prop:NewTakiff}. Applying this proposition in the present case, one finds
\begin{equation*}
\Jtt{(1)}0(x) = \J{(1)}0(x) = \Xb 1(x) \;\;\;\;\;\;\;\; \text{ and } \;\;\;\;\;\;\;\; \Jtt{(1)}1(x) = 2\J{(1)}1(x) = 2 K\,\jb 1(x).
\end{equation*}
It is clear that these currents are Takiff currents with levels $\lst{(1)}0=0$ and $\lst{(1)}1=2K$.

As explained for the general construction in Subsection \ref{SubSubSec:NewCurrents}, the site $\is$ is realised by a PCM+WZ realisation. In the present case, as the level $\lst\is 0$ is zero, it is in fact a PCM realisation without Wess-Zumino term. In Subsection \ref{SubSubSec:NewCurrents}, we denoted the canonical fields attached to this realisation by $g(x)$ and $X(x)$: in the present case, in order to keep the notations uniform, we shall instead denote them by $\gb 2(x)$ and $\Xb 2(x)$. The corresponding Takiff currents are then
\begin{equation*}
\Jtt\is 0 (x) = \Xb 2(x) \;\;\;\;\;\;\;\; \text{ and } \;\;\;\;\;\;\;\; \Jtt\is 1(x) = -2K\,\jb 2(x),
\end{equation*}
where $\jb 2=g^{(2)\,-1}\p_x\gb 2$. The Gaudin Lax matrix of the gauged model then reads
\begin{equation*}
\Gt_\pcm(\zt,x) = \frac{\Xb 1(x)}{\zt-1} + \frac{2K\,\jb 1(x)}{(\zt-1)^2} + \frac{\Xb 2(x)}{\zt+1} - \frac{2K\,\jb 2(x)}{(\zt+1)^2}.
\end{equation*}

\paragraph{Constraint and gauge symmetry.} The constraint of the gauged model is defined as
\begin{equation*}
\Cct(x) = - \res_{\zt=\infty} \Gt_\pcm(\zt,x) = \Xb 1(x) + \Xb 2(x).
\end{equation*}
One easily verifies that this constraint is first-class (\textit{i.e.} satisfies Equation \eqref{Eq:FirstClass}), as expected from Subsection \ref{SubSubSec:Gauging}. Thus it generates a gauge symmetry. One checks that this gauge symmetry acts on the fields $\gb 1$ and $\gb 2$ as
\begin{equation}\label{Eq:GaugePCM}
\gb 1(x) \longmapsto \gb 1(x)h(x,t) \;\;\;\;\;\;\;\; \text{ and } \;\;\;\;\;\;\;\; \gb 2(x) \longmapsto \gb 2(x) h(x,t).
\end{equation}

\paragraph{Zeroes of the twist function and Hamiltonian.} From Equation \eqref{Eq:TwistGaugedPCM}, one gets that the twist function 1-form $\vpt_\pcm(\zt)\dd\zt$ possesses two zeroes $\zet_1=\infty$ and $\zet_2=0$. One checks that these zeroes are the images $\zet_i=f(\ze_i)$ under the transformation $z\mapsto\zt=f(z)$ of the zeroes $\ze_1=+1$ and $\ze_2=-1$ of the original twist function $\vp_\pcm(z)$. Note that in the nomenclature of Subsection \ref{SubSec:Zeroes}, this gauged formulation of the PCM belongs to the case (ii), where the twist function has a zero at infinity. Let us introduce the quadratic charges associated with these zeroes:
\begin{equation*}
\widetilde{\Q}^\pcm_i = \res_{\zt=\zet_i} \widetilde{\Q}^\pcm(\zt)\dd \zt, \;\;\;\;\; \text{ with } \;\;\;\;\; \widetilde{\Q}^\pcm(\zt) = -\frac{1}{2\vpt_\pcm(\zt)} \int_\D \dd x \; \kappa\Bigl( \Gt_\pcm(\zt,x), \Gt_\pcm(\zt,x) \Bigr).
\end{equation*}
The Hamiltonian of the model is then defined by
\begin{equation*}
\widetilde{\Hc}_\pcm = \widetilde{\Q}^\pcm_1 - \widetilde{\Q}^\pcm_2 + \int_\D \dd x\; \kappa\bigl( \Cct(x), \mu(x) \bigr),
\end{equation*}
where $\mu(x)$ is a $\g_0$-valued Lagrange multiplier.

\paragraph{Dynamics of $\bm{\gb 1}$ and $\bm{\gb 2}$.} The definition of the Hamiltonian $\widetilde{\Hc}_\pcm$ above specifies the dynamic $\p_t \approx \lbrace \widetilde{\Hc}_\pcm,\cdot \rbrace$ of the model. In particular, one can compute the expression of the temporal currents $\jb 1_0=g^{(1)\,-1}\p_t \gb 1$ and $\jb 2_0=g^{(2)\,-1}\p_t \gb 2$. After a few manipulations, one finds
\begin{align}
\jb 1_0 &\approx \frac{\Xb 1-\Xb 2}{4K} + \frac{\jb 1+\jb 2}{2} + \mu, \\
\jb 2_0 &\approx -\frac{\Xb 1-\Xb 2}{4K} +  \frac{\jb 1+\jb 2}{2} + \mu.
\end{align}
Using the constraint $\Xb 1+\Xb 2 \approx 0$ and taking the difference of the above equations, we get
\begin{equation}\label{Eq:PCMLagXgauged}
\Xb 1 \approx -\Xb 2 \approx K\bigl(\jb 1_0-\jb 2_0\bigr).
\end{equation}

\paragraph{Inverse Legendre transformation.} The action of the model is obtained by the Legendre inverse transformation
\begin{equation*}
S^{\text{gauge}}_\pcm \bigl[\gb 1,\gb 2] = \iint_{\R\times\D} \dd t\,\dd x \; \Bigl( \kappa\bigl(\Xb 1,\jb 1_0\bigr) + \kappa\bigl(\Xb 2,\jb 2_0) \Bigr) - \int_{\R} \dd t \; \Hc_\pcm,
\end{equation*}
where one should replace $\Xb 1$ and $\Xb 2$ by their Lagrangian expression \eqref{Eq:PCMLagXgauged}. Introducing the light-cone currents
\begin{equation*}
\jb 1_\pm = g^{(1)\,-1} \p_\pm \gb 1 = \jb 1_0 \pm \jb 1 \;\;\;\;\;\;\;\; \text{ and } \;\;\;\;\;\;\;\; \jb 2_\pm = g^{(2)\,-1} \p_\pm \gb 2 = \jb 2_0 \pm \jb 2,
\end{equation*}
we find after a few manipulations
\begin{equation}\label{Eq:ActionPCMGauged}
S^{\text{gauge}}_\pcm \bigl[\gb 1,\gb 2 \bigr] = \frac{K}{2} \iint_{\R\times\D} \dd t\,\dd x \; \kappa\bigl( \jb 1_+-\jb 2_+, \jb 1_--\jb 2_- \bigr).
\end{equation}
As announced, we recover by this method the action \eqref{Eq:IntroActionGauged} defined in the introduction.

\paragraph{Gauge symmetry and gauge-fixing.} The action above should be invariant under the local transformation \eqref{Eq:GaugePCM}, as it is a gauge symmetry of the Hamiltonian model. Under this transformation, the light-cone currents $\jb 1_\pm$ and $\jb 2_\pm$ transform as
\begin{equation*}
\jb 1_\pm \longmapsto h^{-1} \jb 1_\pm\hspace{1pt} h + h^{-1} \p_\pm h \;\;\;\;\;\;\;\; \text{ and } \;\;\;\;\;\;\;\; \jb 2_\pm \longmapsto h^{-1} \jb 2_\pm\hspace{1pt} h + h^{-1} \p_\pm h.
\end{equation*}
Thus, the combination $\jb 1_\pm-\jb 2_\pm$ transforms covariantly and the invariance of the action \eqref{Eq:ActionPCMGauged} follows from the conjugacy-invariance of the bilinear form $\kappa$.

One can gauge-fix the model by imposing $\gb 2(x,t) = \Id$. Under this gauge-fixing condition, the currents $\jb 2_\pm$ vanish and the action $S^{\text{gauge}}_\pcm\bigl[\gb 1,\Id]$ then reduces to the non-gauged action $S_\pcm \bigl[\gb 1 \bigr]$. Thus, we have constructed a gauged formulation of the PCM, as announced, \textit{via} the introduction of the new field $\gb 2$. This result illustrates the general gauging procedure outlined in Subsection \ref{SubSec:Gauging}. Indeed, we started with an integrable action $S_\pcm\bigl[\gb 1 \bigr]$, depending on a $G_0$-valued field $\gb 1$ ($\gb 1$ here plays the role of the field $\Phi$ in Subsection \ref{SubSec:Gauging} and $G_0$ the role of the manifold $M$). By applying the gauging procedure, we then constructed an action $S^{\text{gauge}}_\pcm\bigl[\gb 1,\gb 2 \bigr]$, depending on an additional field $\gb 2$ in $G_0$ (which was denoted by $g$ in Subsection \ref{SubSec:Gauging}). This action is invariant under a gauge transformation which in particular acts on the field $\gb 2$ by right multiplication $\gb 2(x,t)\mapsto \gb 2(x,t)h(x,t)$. The gauge-fixing condition $\gb 2(x,t)=\Id$ then produces back the original action $S_\pcm\bigl[\gb 1 \bigr]$, as described in Subsection \ref{SubSec:Gauging}.

\subsubsection{Symmetries and deformations}
\label{SubSubSec:PCMSymAndDef}

\paragraph{Symmetries.} The PCM in its non-gauged formulation \eqref{Eq:ActionPCM} possesses two global symmetries: the left multiplication $\gb 1\mapsto h^{-1}\gb 1$ and the right multiplication $\gb 1\mapsto \gb 1h$. In the language of realisations of affine Gaudin models, the left multiplication symmetry is the global symmetry associated with the PCM realisation at the site $(1)$ of the model (indeed, recall from Subsection 3.3.4 of~\nms that any model with a PCM or PCM+WZ realisation possesses a global symmetry, which acts on the corresponding field $\gb 1$ by left multiplication). Similarly, the right multiplication symmetry is identified with the diagonal symmetry of the model.\\

Let us consider now the gauged formulation \eqref{Eq:ActionPCMGauged} of the PCM. As mentioned above, the gauge symmetry acts on the additional field $\gb 2(x,t)$ by right multiplication. Moreover, it also acts on the initial field $\gb 1(x,t)$ by right multiplication $\gb 1(x,t)\mapsto \gb 1(x,t)h(x,t)$. This is in agreement with the general discussion of Subsection \ref{SubSec:Gauging}. Indeed, as explained in Equation \eqref{Eq:GaugePhi}, the gauge symmetry of the gauged model acts on the initial field of the model by a local version of the diagonal symmetry of the initial model, in this case the right multiplication on $\gb 1$.

The gauged-model \eqref{Eq:ActionPCMGauged} also possesses two global symmetries: the left multiplications $\gb 1\mapsto h^{-1}\gb 1$ and $\gb 2 \mapsto h^{-1} \gb 2$ on $\gb 1$ and $\gb 2$. To understand to which symmetries of the initial model they correspond to, let us recall that one recovers the initial action \eqref{Eq:ActionPCM} from the gauged one \eqref{Eq:ActionPCMGauged} by imposing the gauge-fixing condition $\gb 2(x,t)=\Id$. This can be seen as performing a gauge transformation by $\gb 2(x,t)^{-1}$, under which the first field $\gb 1(x,t)$ becomes the combination $g(x,t)=\gb 1(x,t)\gb 2(x,t)^{-1}$. By gauge invariance, we have $S^{\text{gauge}}_\pcm[\gb 1,\gb 2] = S^{\text{gauge}}_\pcm[g,\Id]$, which is then identified with the non-gauged action $S_\pcm[g]$. The gauge-fixing can thus be understood as considering the gauge-invariant field $g(x,t)$ instead of $\gb 1(x,t)$. It is clear that the left multiplication $\gb 1 \mapsto h^{-1}\gb 1$ on $\gb 1$ induces the left multiplication $g \mapsto h^{-1}g$ on $g=\gb 1\,g^{(2)\,-1}$. Similarly, the left multiplication $\gb 2 \mapsto h^{-1}\gb 2$ on $\gb 2$ induces the right multiplication $g \mapsto gh$ on $g$. Thus, under the gauge-fixing, the left multiplication symmetries on $\gb 1$ and $\gb 2$ are identified with the left and right multiplication of the initial PCM. This is in agreement with the general discussion in Subsection \ref{SubSubSec:GlobalSymGauging}.

\paragraph{Yang-Baxter deformations.} For simplicity, we shall drop the exponent $(1)$ of the field $\gb 1$ in the rest of this subsection (or equivalently, we will consider the model on the gauge-invariant field $g=\gb 1 g^{(2)\,-1}$ instead of $\gb 1$, as in the previous paragraph). There are two ways of applying Yang-Baxter deformations to the PCM, which correspond to the breaking of its two different (left and right) symmetries. The first one follows the Yang-Baxter deformation procedure of~\nms (see Subsection 4.2), and simply changes the PCM realisation at site $(1)$ into a Yang-Baxter realisation. The model obtained by this procedure has been described in Subsection 4.2.2 of~\nms and coincides with the ``left'' Yang-Baxter model~\cite{Klimcik:2002zj,Klimcik:2008eq}
\begin{equation}\label{Eq:LeftYB}
S_{\text{iYB-left}}[g] = \frac{K}{2} \iint_{\R\times\D} \dd t\,\dd x \; \kappa\left(j_+,\frac{1}{1-\eta R_g} j_-\right),
\end{equation}
where $R_g=\Ad_g^{-1}\circ R \circ \Ad_g$. This deformation breaks the left symmetry $g\mapsto h^{-1}g$ of the PCM but preserves its right symmetry $g\mapsto gh$ (which can still be interpreted as the diagonal symmetry of the underlying affine Gaudin model).

The second possible Yang-Baxter deformation of the PCM comes from applying the diagonal Yang-Baxter deformation procedure described in Subsection \ref{SubSec:DiagYB} of this article. We will not enter into the details of this construction here and just give the final result (modulo some possible redefinition of parameters). The action of the deformed model in its gauge fixed formulation reads
\begin{equation*}
S_{\text{iYB-right}}[g] = \frac{K}{2} \iint_{\R\times\D} \dd t\,\dd x \; \kappa\left(j_+,\frac{1}{1-\eta R} j_-\right).
\end{equation*}
This deformation breaks the right symmetry $g\mapsto gh$ of the PCM, as it is identified with its diagonal symmetry (which is broken by the construction of Subsection \ref{SubSec:DiagYB}), but preserves the left symmetry $g\mapsto h^{-1}g$.

Note that one can pass from one deformation to another by considering the redefinition of the field $g \mapsto g^{-1}$, so that the two models are in fact equivalent. At the level of the gauged model, this redefinition of the field in fact corresponds to changing the gauge-fixing condition $\gb 2=\Id$ to $\gb 1=\Id$, or equivalently to consider the gauge-invariant field $g=\gb 2\,g^{(1)\,-1}$ instead of $g=\gb 1\,g^{(2)\,-1}$ (see discussion above). Exchanging the roles of $\gb 1$ and $\gb 2$ in the gauged-model indeed coincides with exchanging the left and right symmetries of the gauge-fixed model.

\paragraph{Multi-parameters deformations.} One can push further this game of deformations. For example, the left Yang-Baxter deformation \eqref{Eq:LeftYB} is still a non-constrained realisation of affine Gaudin model belonging to the class considered in~\nm. Thus, one can further apply to it the diagonal Yang-Baxter deformation procedure described in Subsection \ref{SubSec:DiagYB}. This produces a two-parameter deformation of the PCM, which combines both left and right Yang-Baxter deformations into one model, breaking all its symmetries. This model has been identified in~\bgs with the so-called bi-Yang-Baxter model~\cite{Klimcik:2008eq,Klimcik:2014bta} (although in~\bgs this model was not introduced as the result of a diagonal Yang-Baxter deformation but instead starting from the Hamiltonian analysis of the bi-Yang-Baxter model performed in~\cite{Delduc:2015xdm}).

Let us note that there are other possibilities for constructing multi-parameter deformations of the PCM, using the techniques developed in this article. For example, one can also consider Yang-Baxter deformations of the PCM plus Wess-Zumino term, both on its left symmetry and its right (diagonal) symmetry. This would then result in a three-parameter deformed model, which we conjecture to coincide with the model introduced in~\cite{Delduc:2017fib}. In fact the model of~\cite{Delduc:2017fib} possesses $(3+\text{rk}(\g))$ deformation parameters, where the last $\text{rk}(\g)$ ones are obtained by a TsT transformation: we thus conjecture that the model mentioned above coincides with the model of~\cite{Delduc:2017fib} without TsT transformations. Although this model was shown to admit a Lax pair, its Hamiltonian integrability has not been proven yet. The conjectured identification with an affine Gaudin model made above would be a proof of this Hamiltonian integrability, as by construction, affine Gaudin models satisfy a Maillet bracket.

Another possibility is to start with the $\lambda$-model~\cite{Sfetsos:2013wia}, which is an integrable deformation of the non-abelian T-dual of the PCM. It was explained in~\bnms that this model is also a non-constrained realisation of affine Gaudin model. Thus, one can also apply a diagonal Yang-Baxter deformation to it, to obtain a two-parameter integrable deformation of the non-abelian T-dual of the PCM. We conjecture that this model is identical to the generalised $\lambda$-model introduced in~\cite{Sfetsos:2015nya} as a two-parameter deformation of the non-abelian T-dual which possesses a Lax pair. Such an identification would also prove the Hamiltonian integrability of this model.

The study of multi-parameter deformations of the PCM (or its non-abelian T-dual) in the context of constrained affine Gaudin models is a promising path to explore to understand better the panorama of integrable deformed $\s$-models and their underlying algebraic structure. Some results in this direction will be presented in a separate article~\cite{Lacroix:toAppear} by the author.

\subsection[Integrable coupled $\s$-models: non-gauged formulation]{Integrable coupled $\bm\s$-models: non-gauged formulation}
\label{SubSec:NonGaugedSigma}

Let us now get to the main subject of this section, the integrable coupled $\s$-models introduced in~\cite{Delduc:2018hty}. Following the ideas of Section \ref{Sec:Gauging}, we will exhibit a gauged formulation of these models and use it to construct their homogeneous diagonal Yang-Baxter deformation. First, let us recall briefly their non-gauged formulation, as described in Subsection 3.3 of~\nms (we refer to this subsection for details).

\subsubsection{Hamiltonian formulation and affine Gaudin model interpretation}

\paragraph{Sites and twist function.} The integrable coupled $\s$-model with $N$ copies is constructed as a realisation of affine Gaudin model $\mathbb{M}^{\vp,\pi}_{\eb}$ with $N$ sites $(r)$, $r\in\lbrace 1,\cdots,N\rbrace$, of multiplicity two. We will denote by $z_r$ the corresponding positions of these sites. The twist function of the model then takes the form
\begin{equation}\label{Eq:TwistSigmaProd}
\vp(z) = - \ell^\infty \frac{\prod_{i=1}^{2N}(z-\ze_i)}{\prod_{r=1}^N (z-z_r)^2},
\end{equation}
where $\ze_1,\cdots,\ze_{2N}$ are the zeroes of the twist function. These zeroes $\ze_i$, the positions $z_r$ and the constant term $\ell^\infty$ are then taken as the defining parameters of the model.

The zeroes $\ze_i$ are divided into two sets, labelled by indices $i$ either in $I_+$ or in $I_-$, whether the corresponding coefficient $\epsilon_i$ appearing in the definition of the Hamiltonian is equal to $+1$ or $-1$. These two sets $I_\pm$ are supposed to be of equal size $N$. We then define the following factorisation of the twist function:
\begin{equation*}
\vp(z) = -\ell^\infty \vp_+(z)\vp_-(z), \;\;\;\;\;\;\; \text{ where } \;\;\;\;\;\;\; \vp_\pm(z) = \frac{\prod_{i\in I_\pm}(z-\ze_i)}{\prod_{r=1}^N (z-z_r)}.
\end{equation*}
We will need the following functions of the spectral parameter:
\begin{equation*}
\vppm r(z) = (z-z_r)\vp_\pm(z),
\end{equation*}
which are defined in such a way that they are regular at $z=z_r$. The partial fraction decomposition
\begin{equation}\label{Eq:DSETwistSigma}
\vp(z) = \sum_{r=1}^N \left( \frac{\lc r}{(z-z_r)^2} - \frac{2\kc r}{z-z_r} \right) - \ell^\infty
\end{equation}
of the twist function is then given in terms of the functions $\vppm r(z)$ by the formulas
\begin{equation}\label{Eq:LevelsSigma}
\lc r = - \ell^\infty \vpp r(z_r) \vpm r(z_r) \;\;\;\;\;\; \text{and} \;\;\;\;\;\; \kc r = \frac{\ell^\infty}{2}  \Bigl( \vpp r(z_r) \vpm r'(z_r) + \vpp r'(z_r) \vpm r(z_r) \Bigr).
\end{equation}

\paragraph{Realisation and Gaudin Lax matrix.} To each site $(r)$ of the model is attached a PCM+WZ realisation (see Appendix \ref{App:PCM+WZ}), with canonical fields $\gb r(x)$ and $\Xb r(x)$. The corresponding Takiff currents are then
\begin{equation}\label{Eq:TakiffSigma}
\J{(r)}0(x) = \Xb r (x) - \kc r \, \Wb r(x) - \kc r \, \jb r (x) \;\;\;\;\;\;\;\; \text{ and } \;\;\;\;\;\;\;\; \J{(r)}1(x) = \lc r\, \jb r(x).
\end{equation}
This choice of Takiff currents defines a realisation $\pi:\Tc_{\lt} \rightarrow \Ac$ of the Takiff Poisson algebra $\Tc_{\lt}$ corresponding to the twist function $\vp(z)$ into the algebra of observables $\Ac=\Ac_{G_0}^{\otimes N}$ generated by $N$ copies of $\Ac_{G_0}$. This algebra of observables can be seen as the algebra of canonical fields on the cotangent bundle $T^*G_0^N$ of the $N^{\rm{th}}$ Cartesian product of $G_0$.

The Gaudin Lax matrix $\Gamma(z,x)$ of the model is then entirely determined by this choice of realisation and simply reads
\begin{equation}\label{Eq:GammaSigma}
\Gamma(z,x) = \sum_{r=1}^N \left( \frac{\J{(r)}1(x)}{(z-z_r)^2} + \frac{\J{(r)}0(x)}{z-z_r} \right).
\end{equation}

\paragraph{Hamiltonian and Lax pair.} Let us consider the quadratic charges associated with the zeroes of the twist function:
\begin{equation}\label{Eq:QiSigma}
\Q_i = - \frac{1}{2\vp'(\ze_i)} \int_\D \dd x \; \kappa\bigl( \Gamma(\ze_i,x), \Gamma(\ze_i,x) \bigr).
\end{equation}
The Hamiltonian of the model is then defined as
\begin{equation}\label{Eq:HamSigma}
\Hc = \sum_{i=1}^M \epsilon_i \Q_i = \sum_{i\in I_+} \Q_i - \sum_{i\in I_-} \Q_i.
\end{equation}
This determines the dynamic $\p_t=\lbrace \Hc,\cdot\rbrace$ of the model. In particular, the equations of motion of the model are equivalent to the zero curvature equation of the Lax pair
\begin{equation}\label{Eq:LaxSigma}
\Lc(z,x) = \sum_{i=1}^M \frac{1}{\vp'(\ze_i)} \frac{\Gamma(\ze_i,x)}{z-\ze_i} \;\;\;\;\;\;\;\; \text{ and } \;\;\;\;\;\;\;\;  \Mc(z,x) = \sum_{i=1}^M \frac{\epsilon_i}{\vp'(\ze_i)} \frac{\Gamma(\ze_i,x)}{z-\ze_i}.
\end{equation}

\subsubsection{Lagrangian formalism}
\label{SubSubSec:LagNonGauged}

\paragraph{Lagrangian Lax pair through interpolation.} One shows that the Poisson bracket of the field $\gb r$ with the Gaudin Lax matrix evaluated at the value $\ze_i$ of the spectral parameter is
\begin{equation}\label{Eq:PBgGamma}
 \left\lbrace \gb r\ti1(x), \Gamma\ti{2}(\ze_i,y) \right\rbrace = \gb r\ti1(x) \frac{C\ti{12}}{\po_r-\ze_i} \delta_{xy},
\end{equation}
hence
\begin{equation*}
\gb r(x)^{-1}\left\lbrace \Q_i, \gb r(x) \right\rbrace = \frac{1}{\vp'(\ze_i)} \frac{\Sg(\ze_i,x)}{\po_r - \ze_i}.
\end{equation*}
Using the expression \eqref{Eq:HamSigma} of the Hamiltonian, one can then compute the temporal current $\jb r_0 = \gb r\null^{\,-1} \p_t \gb r$ from the equation above, yielding
\begin{equation}\label{Eq:j0}
\jb r_0(x) = \gb r(x)^{-1} \left\lbrace \Hc, \gb r(x) \right\rbrace = \sum_{i=1}^M \frac{\epsilon_i}{\vp'(\ze_i)} \frac{\Sg(\ze_i,x)}{\po_r - \ze_i}.
\end{equation}
One recognizes in the right hand side of this equation the temporal part $\Mc(z,x)$ of the Lax pair \eqref{Eq:LaxSigma}, evaluated at $z=z_r$. Thus, we have $\jb r_0(x)=\Mc(z_r,x)$. A similar computation yields $\jb r(x)=\Lc(z_r,x)$ for the spatial component of the current. Passing to light-cone currents $\jb r_\pm$ and Lax pair $\Lc_\pm(z)$, we then get
\begin{equation}\label{Eq:jLaxEval}
\jb r_\pm = \Lc_\pm(z_r), \;\;\;\;\; \forall \, r\in\lbrace 1,\cdots,N \rbrace.
\end{equation}
The light-cone component $\Lc_\pm(z)$ of the Lax pair only possesses poles at the zeroes $\ze_i$, for $i\in I_\pm$ (see Equation \eqref{Eq:LaxSigma} or Equation (2.60) of~\nm). All these poles are simple and $\Lc_\pm(z)$ does not possess a polynomial part. Moreover, there are $N=|I_\pm|$ of such poles. Thus, $\Lc_\pm(z)$ is fully determined by its evaluation at $N$ points, pairwise distinct and different from the $\ze_i$'s, $i\in I_\pm$. In the present case, we know $N$ such evaluations by Equation \eqref{Eq:jLaxEval}. This forces the Lax pair $\Lc_\pm(z)$ to be given by
\begin{equation}\label{Eq:LaxLagSigma}
\Lc_\pm(z) = \sum_{r=1}^N \frac{\vppm r(z_r)}{\vppm r(z)}\jb r_\pm.
\end{equation}
Indeed, one checks that the above expression of $\Lc_\pm(z)$ satisfies the interpolation condition \eqref{Eq:jLaxEval}, using the fact that
\begin{equation*}
\left. \frac{\vppm r(z_r)}{\vppm r(z)} \right|_{z=z_s} = \delta_{rs}.
\end{equation*}

\paragraph{Lagrangian expression of $\bm{\Xb r}$.} Recall that the Lax pair of the model is given by $\Lc(z,x)=\Gamma(z,x)/\vp(z)$. Combining Equations \eqref{Eq:DSETwistSigma}, \eqref{Eq:TakiffSigma} and \eqref{Eq:GammaSigma}, one is able to extract the field $\Xb r$ from the Lax matrix of the model:
\begin{equation*}
\Xb r - \kc r \, \Wb r = \lc r \,\Lc'(\po_r) - \kc r \,\jb r = \frac{\lc r}{2} \Bigl( \Lc'_+(\po_r) - \Lc'_-(\po_r) \Bigr) - \frac{\kc r}{2} \Bigl( \jb r_+ - \jb r_- \Bigr).
\end{equation*}
Using the Lagrangian expression \eqref{Eq:LaxLagSigma} of the light-cone Lax pair $\Lc_\pm(z)$, one can then compute the Lagrangian expression of $\Xb r$. After a few manipulations, one finds (see~\nm, Subsection 3.3.2):
\begin{equation}\label{Eq:XLag}
\Xb r - \kc r \, \Wb r = \sum_{s=1}^N \Bigl( \rho_{sr} \, \jb s_+ + \rho_{rs} \, \jb s_- \Bigr),
\end{equation}
with
\begin{subequations}\label{Eq:Rho}
\begin{align}
\rho_{rr} &= \frac{\ell^\infty}{4}  \Bigl( \vpp r'(z_r) \vpm r(z_r) - \vpp r(z_r) \vpm r'(z_r) \Bigr),\label{Eq:Rhorr} \\
\rho_{rs} &= \frac{\ell^\infty}{2} \frac{\vpp r(z_r)\vpm s(z_s)}{\po_r-\po_s}, \;\;\;\;\; \text{ for } \; r\neq s. \label{Eq:Rhors}
\end{align}
\end{subequations}

\paragraph{Action.} Now that we have the Lagrangian expression \eqref{Eq:XLag} of the currents $\Xb r$, one can perform the inverse Legendre transform of the model and obtain its action. According to Equation (3.48) of~\nm, this inverse Legendre transform takes the form
\begin{equation}\label{Eq:LegendreInverseNSites}
S\bigl[ \gb r\bigr] = \sum_{r=1}^N \left( \frac{1}{2} \iint_{\R \times \D} \dd t \, \dd x \; \kappa\left( \Xb r - \kc r \, \Wb r, \jb r_+ + \jb r_- \right) \right) - \int_\R \dd t \; \Hc + \sum_{r=1}^N \kc r \; \Ww {\gb r},
\end{equation}
where $\Ww {\gb r}$ is the Wess-Zumino term of the field $\gb r$. One can then introduce the Lagrangian expression \eqref{Eq:XLag} of $\Xb r - \kc r\,\Wb r$ in this equation.

Moreover, one also needs to obtain the Lagrangian expression of the Hamiltonian $\Hc$. According to Equations \eqref{Eq:QiSigma} and \eqref{Eq:HamSigma}, this Hamiltonian only contains terms of the form $\kappa\bigl(\Gamma(\ze_i),\Gamma(\ze_i)\bigr)$, for $i\in\lbrace 1,\cdots,2N\rbrace$. In particular, one sees from Equation (2.60) of~\nms that $\Gamma(\ze_i)$ can be extracted either from $\Lc_+(z)$ ot $\Lc_-(z)$, depending whether $i$ belongs to $I_+$ or $I_-$. Thus $\Gamma(\ze_i)$ only contains terms proportional to the $\jb r_+$'s or the $\jb r_-$'s, but does not mix different ``light-cone chiralities''. Therefore, the Hamiltonian only contains terms of the form $\kappa\bigl(\jb r_+,\jb s_+\bigr)$ or $\kappa\bigl(\jb r_-,\jb s_-\bigr)$. As these terms are non-Lorentz-invariant and the model should be relativistic (because we chose all coefficients $\epsilon_i$ to be either $+1$ or $-1$), they should totally disappear in the computation of the action. Indeed, it is explained in~\nms that these terms simplify with the non-relativistic terms coming from the parts $\kappa\left( \Xb r - \kc r \, \Wb r, \jb r_+ + \jb r_- \right)$ of Equation \eqref{Eq:LegendreInverseNSites}. In the end one finds that the action of the model is
\begin{equation}\label{Eq:ActionNonGauged}
S\bigl[ \gb 1, \cdots, \gb N \bigr] = \iint_{\R\times\D} \dd t \, \dd x \; \sum_{r,s=1}^N \; \rho_{rs} \, \kappa\left(\jb r_+, \jb s_-\right)  + \sum_{r=1}^N \kc r \; \Ww {\gb r}.
\end{equation}
Let us note here that the coefficients $\rho_{rs}$ and $\kc r$ appearing in this action can be re-expressed as residues of well-chosen forms on the Riemann sphere $\CP$, providing a more geometric and unified interpretation of these parameters. Although we shall not need this property in what follows, it is an interesting result in itself and is explained in more details in Appendix \ref{App:Identities}.

\paragraph{Symmetries.} The model \eqref{Eq:ActionNonGauged} possesses two types of global symmetries. The first one is the diagonal symmetry, which acts simultaneously on all the copies of $G_0$ by a right multiplication:
\begin{equation}\label{Eq:DiagoSigma}
\gb 1(x,t) \longmapsto \gb 1(x,t)\,h, \;\;\;\;\;\; \cdots, \;\;\;\;\;\; \gb N(x,t) \longmapsto \gb N(x,t)\,h,
\end{equation}
with $h$ a constant parameter in $G_0$. Moreover, the coupled model also possesses $N$ left $G_0$-symmetries, which act on the $N$ copies of the model independently:
\begin{equation*}
\gb 1(x,t) \longmapsto h_1^{-1}\,\gb 1(x,t), \;\;\;\;\;\; \cdots, \;\;\;\;\;\; \gb N(x,t) \longmapsto h_N^{-1}\,\gb N(x,t),
\end{equation*}
where $h_1,\cdots,h_N$ are $N$ independent constant parameters in $G_0$.

\subsection[Integrable coupled $\s$-models: gauged formulation]{Integrable coupled $\bm\s$-models: gauged formulation}
\label{SubSec:GaugedSigma}

\subsubsection{Hamiltonian formulation}
\label{SubSubSec:HamGauged}

\paragraph{Change of spectral parameter and twist function.} Let us now apply the results of Section \ref{Sec:Gauging} to construct a gauged formulation $\mathbb{M}^{\vpt,\pit}_{\eb}$ of the integrable coupled $\s$-models introduced in~\prl~and whose construction as a realisation of affine Gaudin model $\mathbb{M}^{\vp,\pi}_{\eb}$ was recalled in the previous subsection. Following Subsection \ref{SubSec:ChangeSpec2}, we introduce a change of spectral parameter $z \mapsto \zt=f(z)$, with $f$ the M\"obius transformation \eqref{Eq:Mobius}. The site $(r)$, with position $z_r$ in the model $\mathbb{M}^{\vp,\pi}_{\eb}$, now has position $\zt_r=f(z_r)$ in the model $\mathbb{M}^{\vpt,\pit}_{\eb}$, that we will suppose finite. The model $\mathbb{M}^{\vpt,\pit}_{\eb}$ also possesses a site at $\zi=f(\infty)=a/c$, that we denoted $\is$ in Section \ref{Sec:Gauging}, which is the image under $f$ of the site at $z=\infty$ in the model $\mathbb{M}^{\vp,\pi}_{\eb}$. To make the notations uniform, we shall now call this site $(N+1)$ and denote its position by $\zt_{N+1}=\zi$.\\

The twist function $\vpt(\zt)$ of the model $\mathbb{M}^{\vpt,\pit}_{\eb}$ can be expressed from the partial fraction decomposition \eqref{Eq:DSETwistSigma} of $\vp(z)$ using Lemma \ref{Lem:NewLevels2}. More precisely, one gets
\begin{equation}\label{Eq:DSETwistTildeSigma}
\vpt(\zt) = \sum_{r=1}^{N+1} \frac{\elt r}{(\zt-\zt_r)^2} - \frac{2\kc r}{\zt-\zt_r}.
\end{equation}
In this equation, $\kc r$ is the same as in the previous subsection for $r \in \lbrace 1,\cdots,N \rbrace$ and we introduce
\begin{equation*}
\elt r = \frac{ad-bc}{(c z_r+d)^2} \,\lc r = \frac{(c \zt_r-a)^2}{ad-bc} \,\lc r,
\end{equation*}
for $r\in\lbrace 1,\cdots,N \rbrace$, as well as
\begin{equation}\label{Eq:klN+1}
\kc{N+1} = - \sum_{r=1}^N \kc r, \;\;\;\;\;\;\;\;\; \elt{N+1} = - \frac{\ell^\infty (ad-bc)}{c^2}.
\end{equation}

However, recall that in the non-gauged formulation, we used mostly the factorised expression \eqref{Eq:TwistSigmaProd} of the twist function $\vp(z)$ in terms of its zeroes. Let us search for a similar expression of $\vpt(\zt)$. We denote by $\zet_i=f(\ze_i)$ the image under $f$ of the zeroes of the twist function $\vp(z)$ of the model $\mathbb{M}^{\vp,\pi}_{\eb}$. These are the zeroes of the new twist function $\vpt(\zt)$. We will suppose here that these are all finite\footnote{We choose here to not treat the case where one of the zeroes is send to infinity by $f$ for brevity. Such a model could be studied in a similar way and would fall into the case (ii) of Subsection \ref{SubSec:Zeroes}}. The twist function $\vpt(\zt)$ satisfies $\vpt(\zt)=\vp\bigl(f^{-1}(\zt)\bigr)/f'\bigl(f^{-1}(\zt)\bigr)$. Starting from Equation \eqref{Eq:TwistSigmaProd}, we get
\begin{equation*}
\vpt(\zt) = - \widetilde{\ell}^\infty \frac{\prod_{i=1}^{2N}(\zt-\zet_i)}{\prod_{r=1}^{N+1} (\zt-\zt_r)^2},
\end{equation*}
where
\begin{equation}\label{Eq:LtildeInf}
\widetilde{\ell}^\infty = \ell^\infty \frac{ad-bc}{c^2\,\omega_+\omega_-}, \;\;\;\;\;\; \text{ with } \;\;\;\;\;\; \omega_\pm = \frac{\prod_{r=1}^N (c\, z_r+d)}{\prod_{i\in I_\pm}^{2N}(c\,\ze_i+d)} = \frac{\prod_{i\in I_\pm}(c\,\zet_i-a)}{\prod_{r=1}^N (c\, \zt_r-a)}.
\end{equation}

\paragraph{Factorisation of the twist function.} Following the treatment of the non-gauged model in the previous subsection, we introduce the following factorisation of the twist function:
\begin{equation*}
\vpt(\zt) = -\widetilde{\ell}^\infty \vpt_+(\zt) \vpt_-(\zt), \;\;\;\;\;\;\; \text{ with } \;\;\;\;\;\;\; \vpt_\pm(\zt) = \frac{\prod_{i\in I_\pm}(\zt-\zet_i)}{\prod_{r=1}^{N+1} (\zt-\zt_r)} 
\end{equation*}
Similarly to the non-gauged case, we also define the functions
\begin{equation*}
\vptpm r(\zt) = (\zt-\zt_r)\vpt_\pm(\zt),
\end{equation*}
which are regular at $\zt=\zt_r$. One can then express the levels $\elt r$ and $\kc r$ appearing in the partial fraction decomposition  \eqref{Eq:DSETwistTildeSigma} as
\begin{equation}\label{Eq:LevelsTildeSigma}
\elt r = - \widetilde{\ell}^\infty \, \vptp r(\zt_r) \, \vptm r(\zt_r) \;\;\;\;\;\; \text{and} \;\;\;\;\;\; \kc r = \frac{\widetilde{\ell}^\infty}{2}  \Bigl( \vptp r(\zt_r) \, \vptm r '(\zt_r) + \vptp r'(\zt_r) \, \vptm r(\zt_r) \Bigr)
\end{equation}
similarly to Equation \eqref{Eq:LevelsSigma} in the non-gauged case. The functions $\vpt_\pm(\zt)$ and $\vptpm r(\zt)$ are related to the ones $\vp_\pm(z)$ and $\vppm r(z)$ used in the non-gauged case by
\begin{equation}\label{Eq:TwistPMRel}
\vpt_\pm(\zt) = \frac{\omega_\pm}{\zt-\zt_{N+1}} \vp_\pm\bigl( f^{-1}(\zt) \bigr) \;\;\;\;\;\; \text{ and } \;\;\;\;\;\; \vptpm r(\zt) = \frac{\omega_\pm\,c\,(c\,\zt_r-a)}{ad-bc} \vppm r\bigl( f^{-1}(\zt) \bigr),
\end{equation}
for $r\in\lbrace 1,\cdots,N \rbrace$ and with $\omega_\pm$ defined as in Equation \eqref{Eq:LtildeInf}. One can then check that inserting Equation \eqref{Eq:LevelsSigma} in the formulas for $\elt r$ and $\kc r$ in the previous paragraph gives an expression in agreement with the one above, using the identities \eqref{Eq:LtildeInf} and \eqref{Eq:TwistPMRel}.

\paragraph{Gaudin Lax matrix and realisation.} The gauged model $\mathbb{M}^{\vpt,\pit}_{\eb}$ possesses $N+1$ sites with multiplicity 2. Each site $(r)$, $r\in\lbrace 1,\cdots,N+1 \rbrace$, is associated with a copy of the PCM+WZ realisation, with canonical fields $\gb r(x)$ and $\Xb r(x)$. The corresponding Takiff currents are
\begin{equation}\label{Eq:NewCurrentsSigma}
\Jtt {(r)} 0 (x) =\Xb r (x) - \kc r \, \Wb r(x) - \kc r \, \jb r (x) \;\;\;\;\;\;\;\; \text{ and } \;\;\;\;\;\;\;\; \Jtt {(r)}1(x) = \elt r\, \jb r(x).
\end{equation}
For $r\in\lbrace 1,\cdots,N \rbrace$, these are related to the ones $\J{(r)}p(x)$ used in the non-gauged formulation by
\begin{equation*}
\Jtt {(r)} 0 (x) = \J {(r)} 0 (x) \;\;\;\;\;\; \text{ and } \;\;\;\;\;\; \Jtt {(r)} 1 (x) =  \frac{ad-bc}{(c z_r+d)^2} \J {(r)} 1 (x).
\end{equation*}
This is in agreement with the discussion of Subsection \ref{SubSubSec:NewCurrents}, where it is explained that the current of the gauged formulation should be related to the ones of the non-gauged formulation by applying Proposition \ref{Prop:NewTakiff}.

In Subsection \ref{SubSubSec:NewCurrents}, it is also explained that the additional site $\is$ appearing in the gauged model should be realised by a PCM+WZ realisation, with fields that we denoted $g(x)$ and $X(x)$. Recall that here, we renamed this site $(N+1)$ to keep the notations as uniform as possible: for the same reason, we changed the notations of the fields $g(x)$ and $X(x)$ to $\gb{N+1}(x)$ and $\Xb{N+1}(x)$, so that Equation \eqref{Eq:NewCurrentsSigma} holds for every $r\in\lbrace 1,\cdots,N+1 \rbrace$.

The Takiff currents \eqref{Eq:NewCurrentsSigma} specify the realisation $\pit$ used to construct the model. This realisation is valued in the Poisson algebra $\Act=\Ac_{G_0}^{\otimes N+1}$ generated by $N+1$ copies of the canonical fields on $T^*G_0$, which is then the canonical algebra on $T^*G_0^{N+1}$. Note that $\Act=\Ac \otimes \Ac_{G_0}$, where $\Ac=\Ac_{G_0}^{\otimes N}$ is the algebra of observables of the non-gauged model $\mathbb{M}^{\vp,\pi}_{\eb}$. This is in agreement with the general discussion of Subsection \ref{SubSubSec:NewCurrents}: the additional factor $\Ac_{G_0}$ in $\Act$ corresponds to the additional fields $\gb{N+1}(x)$ and $\Xb{N+1}(x)$ introduced in the model.\\

As usual, the Gaudin Lax matrix can be read easily from the choice of Takiff currents \eqref{Eq:NewCurrentsSigma} as
\begin{equation}\label{Eq:GtSigma}
\Gt(\zt,x) = \sum_{r=1}^{N+1} \left( \frac{\Jtt{(r)}1(x)}{(\zt-\zt_r)^2} + \frac{\Jtt{(r)}0(x)}{\zt-\zt_r} \right).
\end{equation}

\paragraph{Constraint and gauge symmetry.} The model $\mathbb{M}^{\vpt,\pit}_{\eb}$ admits the first-class constraint
\begin{equation*}
\Cct(x) = \sum_{r=1}^{N+1} \Jtt{(r)}0(x) = \sum_{r=1}^{N+1} \left( \Xb r (x) - \kc r \, \Wb r(x) - \kc r \, \jb r (x) \right).
\end{equation*}
From the Poisson bracket of the fields $\Xb s$, $\Wb s$ and $\jb s$ with $\gb r$ (see Appendix \ref{App:TStarG}), we see that the bracket of this constraint with the field $\gb r$ is given by
\begin{equation}\label{Eq:PBCgSigma}
\bigl\lbrace \Cct\ti{1}(x), \gb r\ti{2}(y) \bigr)\rbrace = \gb r\ti{2}(y)\,C\ti{12}\, \delta_{xy}.
\end{equation}
The gauge symmetry generated by the constraint then acts on the fields $\gb r(x)$ as a diagonal right multiplication:
\begin{equation*}
\gb 1(x) \longmapsto \gb 1(x)\,h(x,t), \;\;\;\;\;\; \cdots, \;\;\;\;\;\; \gb {N+1}(x) \longmapsto \gb {N+1}(x)\,h(x,t),
\end{equation*}
with $h(x,t)$ a local parameter in $G_0$. As expected from Subsection \ref{SubSec:Gauging}, this gauge symmetry acts on the additional field $\gb {N+1}(x)$ by right multiplication and on the initial fields $\gb 1(x),\cdots,\gb N(x)$ by a local version of the diagonal symmetry \eqref{Eq:DiagoSigma} of the non-gauged model.

\paragraph{Hamiltonian and Lax pair.} Following the general procedure of Subsection \ref{SubSec:Zeroes}, we define quadratic charges $\widetilde{\Q}_i$ associated with the zeroes $\zet_i$'s of the twist function. According to Equation \eqref{Eq:QHZeros}, these charges can be expressed as
\begin{equation}\label{Eq:QiSigmaGauged}
\widetilde{\Q}_i = - \frac{1}{2\vpt\hspace{1pt}'\bigl(\zet_i\bigr)} \int_\D \dd x \; \kappa\Bigl( \Gt\bigl(\zet_i,x\bigr), \Gt\bigl(\zet_i,x\bigr) \Bigr),
\end{equation}
similarly to Equation \eqref{Eq:QiSigma} in the non-gauged formulation.

The Hamiltonian of the model is then defined from these quadratic charges $\widetilde{\Q}_i$ and the choice of parameters $\eb=(\epsilon_1,\cdots,\epsilon_{2N})$ (which we take to be the same as in the non-gauged formulation, following Subsection \ref{SubSubSec:EquivDynamics}). It reads
\begin{equation}\label{Eq:HamSigmaGauged}
\widetilde{\Hc} = \sum_{i=1}^{2N} \epsilon_i \, \widetilde{\Q}_i + \int_\D \dd x \; \kappa\bigl( \Cct(x), \mu(x) \bigr) = \sum_{i\in I_+}  \widetilde{\Q}_i  - \sum_{i\in I_-}  \widetilde{\Q}_i + \int_\D \dd x \; \kappa\bigl( \Cct(x), \mu(x) \bigr),
\end{equation}
where $\mu(x)$ is a $\g_0$-valued Lagrange multiplier. This specifies the dynamic $\p_t \approx \lbrace \widetilde{\Hc}, \cdot \rbrace$ of the model. According to Theorem \ref{Thm:Lax}, this dynamic takes the form of a zero curvature equation, with Lax pair\footnote{Note that Theorem \ref{Thm:Lax} also provides a (weak) expression of $\Lct(\zt,x)$ similar to the one of $\widetilde{\Mc}(\zt,x)$. However, we shall not need this expression here.}
\begin{equation}\label{Eq:LaxSigmaGauged}
\Lct(\zt,x) = \frac{\Gt(\zt,x)}{\vpt(\zt)} \;\;\;\;\;\; \text{ and } \;\;\;\;\;\; \widetilde{\Mc}(\zt,x) = \sum_{i=1}^{2N} \frac{\epsilon_i}{\vpt\hspace{1pt}'\bigl(\zet_i\bigr)} \frac{\Gt\bigl(\zet_i,x\bigr)}{\zt-\zet_i} + \mu(x).
\end{equation}

\subsubsection{Lagrangian formulation}
\label{SubSubSec:LagSigmaGauged}

\paragraph{Dynamics of the fields $\bm{\gb r}$.} Let us now turn to the Lagrangian formulation of the gauged model $\mathbb{M}^{\vpt,\pit}_{\eb}$. As we shall see, determining its action will follow very closely the same steps as in the non-gauged case (see Subsection \ref{SubSubSec:LagNonGauged}). One starts by computing the Poisson bracket of the field $\gb r(x)$ with the current $\Gt\bigl(\zet_i,y\bigr)$. Similarly to Equation \eqref{Eq:PBgGamma} in the non-gauged case, we find
\begin{equation*}
 \left\lbrace \gb r\ti1(x), \Gt\ti{2}\bigl(\zet_i,y\bigr) \right\rbrace = \gb r\ti1(x) \frac{C\ti{12}}{\zt_r-\zet_i} \delta_{xy}.
\end{equation*}
Using the expression \eqref{Eq:QiSigmaGauged} of the charge $\widetilde{\Q}_i$, we then get
\begin{equation}\label{Eq:PBgQi}
\gb r(x)^{-1}\left\lbrace \widetilde{\Q}_i, \gb r(x) \right\rbrace = \frac{1}{\vpt\hspace{1pt}'\bigl(\zet_i\bigr)} \frac{\Gt\bigl(\zet_i,x\bigr)}{\zt_r - \zet_i}.
\end{equation}
These brackets for $i\in\lbrace 1,\cdots,2N\rbrace$ control part of the dynamics of the field $\gb r (x)$ under the flow of the Hamiltonian \eqref{Eq:HamSigmaGauged}. The rest of the dynamics involves the Lagrange multiplier $\mu(x)$. Starting from the Poisson bracket \eqref{Eq:PBCgSigma} between the constraint and the field $\gb r$, we see that this Lagrange multiplier enters the dynamics of $\gb r$ simply through
\begin{equation*}
\gb r(x)^{-1}\left\lbrace \int_\D \dd x\; \kappa\bigl( \Cct(x), \mu(x) \bigr), \gb r(x) \right\rbrace \approx \mu(x).
\end{equation*}

Combining the last two equations, one obtains the expression of the temporal current $\jb r_0 = \gb r\null^{\,-1} \p_t \gb r$ , similarly to Equation \eqref{Eq:j0} in the non-gauged case:
\begin{equation*}
\jb r_0(x) \approx \gb r(x)^{-1} \left\lbrace \widetilde{\Hc}, \gb r(x) \right\rbrace \approx \sum_{i=1}^M \frac{\epsilon_i}{\vpt\hspace{1pt}'\bigl(\zet_i\bigr)} \frac{\Gt\bigl(\zet_i,x\bigr)}{\zt_r - \zet_i} + \mu(x).
\end{equation*}
As in the non-gauged case, comparing the right-hand side of this equation with the expression \eqref{Eq:LaxSigmaGauged} of the temporal part $\widetilde{\Mc}(\zt,x)$ of the Lax pair, we find that $\jb r_0(x)=\Mc(z_r,x)$.

From the definition $\Lct(\zt,x)=\Gt(\zt,x)/\vpt(\zt)$ of the Lax matrix in terms of the twist function and the Gaudin Lax matrix, as well as their expressions \eqref{Eq:DSETwistTildeSigma} and \eqref{Eq:GtSigma}, one finds (see also Equation (2.22) of~\nm)
\begin{equation*}
\Lct(\zt_r,x) = \frac{\Jtt{(r)}1(x)}{\elt r} = \jb r(x),
\end{equation*}
where we also used the expression \eqref{Eq:NewCurrentsSigma} of $\Jtt{(r)}1(x)$. Combining these two results for $\Lct(\zt_r)$ and $\widetilde{\Mc}(\zt_r)$, we then get for the light-cone currents $\jb r_\pm$:
\begin{equation}\label{Eq:jLaxEvalGauge}
\jb r_\pm = \Lct_\pm(\zt_r), \;\;\;\;\; \forall \, r\in\lbrace 1,\cdots,N+1 \rbrace.
\end{equation}

\paragraph{Lagrangian Lax pair through interpolation.} Equation \eqref{Eq:jLaxEvalGauge} is similar to the one \eqref{Eq:jLaxEval} in the non-gauged case. Recall that in this case, we used it to determine the Lagrangian expression of the light-cone Lax pair $\Lc_\pm(z)$ by interpolation. We shall use a similar method here. Corollary \ref{Cor:LightConeLax} fixes the analytic properties of $\Lct_\pm(\zt)$ as a function of the spectral parameter $\zt$: it possesses a constant term and $N$ simple poles, at the zeroes $\zet_i$ with $i\in I_\pm$ (recall here that we are in the case (ii), where $\zt=\infty$ is not a zero of the twist function). Thus, $\Lct_\pm(\zt)$ is determined by its evaluation at $N+1$ points and in particular by Equation \eqref{Eq:jLaxEvalGauge}. Similarly to the non-gauged case, $\Lct_\pm(\zt)$ is then given by
\begin{equation}\label{Eq:LaxLagSigmaGauge}
\Lct_\pm(\zt) = \sum_{r=1}^{N+1} \frac{\vptpm r(\zt_r)}{\vptpm r(\zt)}\jb r_\pm.
\end{equation}
One checks easily that this expression satisfies Equation \eqref{Eq:jLaxEvalGauge}.

The main difference with the non-gauged case is the following. In the non-gauged model, the function $\vppm r(z)$ is the ratio of a polynomial of degree $N$ (with roots $\ze_i$, $i\in I_\pm$) and a polynomial of degree $N-1$ (with roots $z_s$, $s\in\lbrace 1,\cdots,N\rbrace$ different from $r$). In particular, its inverse vanishes when $z$ goes to $\infty$. In the gauged case, $\vptpm r(\zt)$ is the ratio of two polynomials of degree $N$, as there are $N+1$ points $\zt_r$ and not only $N$. Thus, the function $1/\vptpm r(\zt)$ has a finite limit (in fact equal to 1), when $\zt$ goes to infinity. Thus, the above expression for $\Lct_\pm(\zt)$ possesses a constant term. The rest of its partial fraction decomposition is composed by simple poles at the zeroes of $\vptpm r(\zt)$, \textit{i.e.} at the $\zet_i$, $i\in I_\pm$. Thus, the above expression has the correct rational dependence on $\zt$ to be $\Lct_\pm(\zt)$. As it also satisfies the interpolation formula \eqref{Eq:jLaxEvalGauge}, it coincides with $\Lct_\pm(\zt)$.

\paragraph{Lagrangian expression of $\bm{\Xb r}$.} Following the computation of Subsection \ref{SubSubSec:LagNonGauged} in the non-gauged model, we extract the field $\Xb r$ from the Lax matrix by the formula
\begin{equation*}
\Xb r - \kc r \, \Wb r = \elt r \,\Lct'(\zt_r) - \kc r \,\jb r = \frac{\elt r}{2} \Bigl( \Lct'_+(\zt_r) - \Lct'_-(\zt_r) \Bigr) - \frac{\kc r}{2} \Bigl( \jb r_+ - \jb r_- \Bigr).
\end{equation*}
On then compute the Lagrangian expression of $\Xb r$ by inserting the one \eqref{Eq:LaxLagSigmaGauge} of the light-cone Lax pair $\Lct_\pm(\zt)$ in the above equation. As in the non-gauged case (see Equation \eqref{Eq:XLag}), we find
\begin{equation}\label{Eq:XLagGauged}
\Xb r - \kc r \, \Wb r = \sum_{s=1}^{N+1} \Bigl( \rhot_{sr} \, \jb s_+ + \rhot_{rs} \, \jb s_- \Bigr),
\end{equation}
with
\begin{subequations}\label{Eq:RhoTilde}
\begin{align}
\rhot_{rr} &= \frac{\widetilde{\ell}^\infty}{4}  \Bigl( \vptp r'(\zt_r) \,\vptm r(\zt_r) - \vptp r(\zt_r) \,\vpm r'(\zt_r) \Bigr),\label{Eq:Rhotrr} \\
\rhot_{rs} &= \frac{\widetilde{\ell}^\infty}{2} \frac{\vptp r(\zt_r)\,\vptm s(\zt_s)}{\zt_r-\zt_s}, \;\;\;\;\; \text{ for } \; r\neq s. \label{Eq:Rhotrs}
\end{align}
\end{subequations}
As in the non-gauged case (see discussion after Equation \eqref{Eq:ActionNonGauged}), the coefficients $\rhot_{rs}$ can be re-interpreted as residues of certain forms on the Riemann sphere $\CP$. This is explained in Appendix \ref{App:Identities}. In particular, this property can be used to prove the following identity:
\begin{equation}\label{Eq:IdentityRhot}
\sum_{s=1}^{N+1} \rhot_{rs} - \frac{\kc r}{2} = \sum_{s=1}^{N+1} \rhot_{sr} + \frac{\kc r}{2} = 0.
\end{equation}
Let us note that, from Equation \eqref{Eq:XLagGauged}, one gets
\begin{equation*}
\sum_{r=1}^{N+1} \left( \Xb r (x) - \kc r \, \Wb r(x) - \kc r \, \jb r (x) \right) = \sum_{r=1}^{N+1} \left( \sum_{s=1}^{N+1} \rhot_{rs} - \frac{\kc r}{2} \right) \jb r_+ + \left( \sum_{s=1}^{N+1} \rhot_{sr} + \frac{\kc r}{2} \right) \jb r_- .
\end{equation*}
Thus, the identity \eqref{Eq:IdentityRhot} ensures that the constraint $\Cct = \displaystyle \sum_{r=1}^{N+1} \left( \Xb r - \kc r \, \Wb r - \kc r \, \jb r \right)$ vanishes on-shell, as one should expect.

\paragraph{Action.} Starting from the Lagrangian expression \eqref{Eq:XLagGauged} of $\Xb r$, the computation of the action of the gauged model follows the exact same steps than the one of the non-gauged model. We shall skip the details of this computation here and refer to Subsection \ref{SubSubSec:LagNonGauged} of this article and to Subsection 3.3.2 of~\nm. In the end, one finds (the subscript $g$ indicates here that this is the ``gauged'' action)
\begin{equation}\label{Eq:ActionGauged}
S_g\bigl[ \gb 1, \cdots, \gb {N+1} \bigr] = \iint_{\R\times\D} \dd t \, \dd x \; \sum_{r,s=1}^{N+1} \; \rhot_{rs} \, \kappa\left(\jb r_+, \jb s_-\right)  + \sum_{r=1}^{N+1} \kc r \; \Ww {\gb r}.
\end{equation}
Let us note that this action possesses $N+1$ global $G_0$-symmetries, acting on the $N+1$ fields $\gb r$ by left multiplication
\begin{equation*}
\gb 1(x,t) \longmapsto h_1^{-1}\,\gb 1(x,t), \;\;\;\;\;\; \cdots, \;\;\;\;\;\; \gb {N+1}(x,t) \longmapsto h_{N+1}^{-1}\,\gb {N+1}(x,t),
\end{equation*}
with $h_1,\cdots,h_{N+1}$ independent $G_0$-valued constant parameters. We will discuss the gauge symmetry of the model in the following paragraph.

\subsubsection{Gauge symmetry and gauge fixing}
\label{SubSubSec:GaugeSymSigma}

\paragraph{Gauge invariance.} As explained in Subsection \ref{SubSubSec:HamGauged}, the model possesses a gauge symmetry which acts on the fields $\gb r$ by a diagonal right multiplication:
\begin{equation}\label{Eq:GaugeTransfoSigma}
\gb 1(x,t) \longmapsto \gb 1(x,t)\,h(x,t), \;\;\;\;\;\; \cdots, \;\;\;\;\;\; \gb {N+1}(x,t) \longmapsto \gb {N+1}(x,t)\,h(x,t).
\end{equation}
Thus, this model can be seen as an integrable $\s$-model on the coset $G_0^{N+1}/G_{\text{diag}}$, where $G_{\text{diag}} = \bigl\lbrace (h,\cdots,h), \; h\in G_0 \bigr\rbrace$, is the diagonal subgroup of $G_0^{N+1}$, which acts by right multiplication on $G_0^{N+1}$ as in Equation \eqref{Eq:GaugeTransfGamma}.

Let us check the invariance of the action \eqref{Eq:ActionGauged} under this local transformation. For $r\in\lbrace 1,\cdots,N+1\rbrace$, the light-cone currents $\jb r_\pm=\gb r\null^{-1} \p_\pm \gb r$ transform as
\begin{equation*}
\jb r_\pm \longmapsto h^{-1} \Bigl( \jb r_\pm + \nu_\pm \Bigr) h, \;\;\;\;\;\;\;\; \text{ with } \;\;\;\;\;\;\;\; \nu_\pm = \bigl( \p_\pm h \bigr)  h^{-1}.
\end{equation*}
The Wess-Zumino term of the product $\gb r\,h$ can be expressed using the Polyakov-Wiegmann formula~\cite{Polyakov:1983tt}, yielding the transformation of $\Ww{\gb r}$ under the gauge transformation:
\begin{equation*}
\Ww{\gb r} \longmapsto \Ww{\gb r\,h} = \Ww{\gb r} + \Ww{h} + \frac{1}{2} \iint_{\R\times\D} \dd t\, \dd x\;\Bigl( \kappa\bigl(\nu_+,\jb r_-\bigr) - \kappa\bigl(\jb r_+,\nu_-\bigr) \Bigr).
\end{equation*}
Inserting the two equations above in the action \eqref{Eq:ActionGauged} and using the conjugacy-invariance of the form $\kappa$, we find
\begin{align*}
S_g\bigl[ \gb 1, \cdots, \gb {N+1} \bigr] \longmapsto \hspace{5pt} S_g\bigl[ \gb 1, \cdots, \gb {N+1} \bigr] + \left( \sum_{r=1}^{N+1} \kc r \right) \Ww h + \left( \sum_{r,s=1}^{N+1} \rhot_{rs} \right) \iint \dd t\, \dd x \; \kappa(\nu_+,\nu_-) \\
 + \hspace{6pt} \iint \dd t\,\dd x \; \left[ \sum_{s=1}^{N+1} \left( \sum_{r=1}^{N+1} \rhot_{sr} - \frac{\kc s}{2} \right) \kappa\bigl(\jb s_+,\nu_-\bigr) + \sum_{s=1}^{N+1} \left( \sum_{r=1}^{N+1} \rhot_{rs} + \frac{\kc s}{2} \right) \kappa\bigl(\jb s_-,\nu_+\bigr) \right] \hspace{25pt}
\end{align*}
The second line vanishes by the identity \eqref{Eq:IdentityRhot}. Moreover, let us note that $\sum_{r=1}^{N+1} \kc r$ is equal to 0 by the definition \eqref{Eq:klN+1} of $\kc{N+1}$ (this is in fact the first-class condition $\res_{\zt=\infty} \vp(\zt)\dd\zt=0$). Finally, summing the identity \eqref{Eq:IdentityRhot} over $r\in\lbrace 1,\cdots,N+1 \rbrace$ and using the fact that the sum of the $\kc r$'s is zero, one finds that the sum of the $\rhot_{rs}$ over $r,s\in\lbrace 1,\cdots,N+1 \rbrace$ also vanishes. This proves the invariance of the action \eqref{Eq:ActionGauged} under the gauge transformation \eqref{Eq:GaugeTransfoSigma}.

\paragraph{Gauge fixing.} Let us now consider the model \eqref{Eq:ActionGauged} under the gauge-fixing condition $\gb{N+1}=\Id$. The action becomes
\begin{equation*}
S_g\bigl[ \gb 1, \cdots, \gb N, \Id \bigr] = \iint_{\R\times\D} \dd t \, \dd x \; \sum_{r,s=1}^N \; \rhot_{rs} \, \kappa\left(\jb r_+, \jb s_-\right)  + \sum_{r=1}^N \kc r \; \Ww {\gb r}.
\end{equation*}
According to the general results of Subsection \ref{SubSec:Gauging}, this gauge-fixed action should coincide with the action \eqref{Eq:ActionNonGauged} of the initial non-gauged $\s$-model. It is clear that the Wess-Zumino terms appearing in the equation above are the same as in the action \eqref{Eq:ActionNonGauged}. The equivalence of the two action then reduces to the simple identity
\begin{equation}\label{Eq:RhoEqualRhot}
\rho_{rs} = \rhot_{rs}, \;\;\;\;\;\;\;\; \forall \, r,s\in\lbrace 1,\cdots,N \rbrace.
\end{equation}
One can verify this equality directly from the definitions \eqref{Eq:Rho} and \eqref{Eq:RhoTilde} of $\rho_{rs}$ and $\rhot_{rs}$ in terms of the functions $\vppm r(z)$ and $\vptpm r(\zt)$, together with the relation \eqref{Eq:TwistPMRel} between these functions. We give an alternative proof of this identity in Appendix \ref{App:Identities}, using the interpretation of the coefficients $\rho_{rs}$ and $\rhot_{rs}$ as residues of bi-differentials.\\

Alternatively, this gauge-fixing procedure can be seen as performing a gauge transformation \eqref{Eq:GaugeTransfoSigma} with parameter $h(x,t)=\gb{N+1}(x,t)^{-1}$. One then re-expresses the action in terms of the $N$ gauge-invariant fields $\gb 1\hspace{1pt}'(x,t)=\gb 1(x,t)\gb{N+1}(x,t)^{-1}, \cdots, \gb N\hspace{1pt}'(x,t)=\gb N(x,t)\gb{N+1}(x,t)^{-1}$ as
\begin{equation*}
S_g\bigl[ \gb 1\hspace{1pt}', \cdots, \gb N\hspace{1pt}', \Id \bigr] = S\bigl[ \gb 1\hspace{1pt}', \cdots, \gb N\hspace{1pt}'\bigr]
\end{equation*}
This formulation makes it easier to identify the symmetries of the gauged model with the symmetries of the non-gauged model. Indeed, it is clear that for $r\in\lbrace 1,\cdots,N \rbrace$, the left multiplication $\gb r \mapsto h^{-1}_r\,\gb r$ translates to the left multiplication $\gb r\hspace{1pt}' \mapsto h^{-1}_r\,\gb r\hspace{1pt}'$, while the left multiplication $\gb {N+1} \mapsto h^{-1}\,\gb {N+1}$ translates to the diagonal right multiplication $\bigl(\gb 1\hspace{1pt}',\cdots,\gb 1\hspace{1pt}'\bigr) \mapsto \bigl(\gb 1\hspace{1pt}'\,h,\cdots,\gb 1\hspace{1pt}'\,h\bigr)$.

\paragraph{Discrete symmetries in the space of parameters.} As an application of the formalism developed above, let us discuss some discrete symmetries of the parameters defining the integrable coupled $\s$-model considered here. Let us first consider the non-gauged action \eqref{Eq:ActionNonGauged}. If $\s \in \mathfrak{S}_N$ is a permutation of the labels $\lbrace 1,\cdots,N \rbrace$, it is clear that the model with the coefficients $\rho_{rs}$ and $\kc r$ replaced by $\rho_{\s(r)\s(s)}$ and $\kc{\s(r)}$ is equivalent to the initial one, as this transformation just corresponds to renaming the fields $\gb r$ as $\gb{\s(r)}$. The action \eqref{Eq:ActionNonGauged} then possesses a $\mathfrak{S}_N$ discrete symmetry (here the notion of symmetry is understood in the sense of a transformation of the parameters defining the model which yields an equivalent model, not in the sense of a transformation of the fields which leaves the action invariant).

Let us now turn to the gauged action \eqref{Eq:ActionGauged}. It is clear that this model possesses a discrete $\mathfrak{S}_{N+1}$ symmetry, which permutes the labels $r,s\in\lbrace 1,\cdots,N+1 \rbrace$. However, the gauged action \eqref{Eq:ActionGauged} describes the same model as the non-gauged action \eqref{Eq:ActionNonGauged}. Thus, one should be able to realise this $\mathfrak{S}_{N+1}$ symmetry at the non-gauged level. Recall that one passes from the gauged formalism to the non-gauged one by imposing the gauge-fixing condition $\gb{N+1}=\Id$. If we consider a permutation in $\mathfrak{S}_{N+1}$ which belongs to the $\mathfrak{S}_{N}$-subgroup of permutations which fix $N+1$, then it mixes the fields $\gb{1},\cdots,\gb{N}$ but does not modify the gauge-fixing condition $\gb{N+1}=\Id$. It then just results in the corresponding permutation of the fields $\gb{1},\cdots,\gb N$ in the gauge-fixed model. The obvious $\mathfrak{S}_{N}$-symmetry of the non gauged model is thus identified with the subgroup of $\mathfrak{S}_{N+1}$ which fixes the label $N+1$.

However, if one considers a permutation in $\mathfrak{S}_{N+1}$ which does not fix $N+1$, this leads to a discrete symmetry which acts on the parameters of the non-gauged model in a more involved way. One can realise this discrete symmetry by performing the permutation at the level of the coefficients $\rhot_{rs}$ and $\kc r$ for $r,s\in\lbrace 1,\cdots,N+1 \rbrace$ and then eliminating the field $\gb{N+1}$ by gauge-fixing it. One can then express the action of the symmetry in terms of the original coefficients $\rho_{rs}$ and $\kc r$, using the fact that $\rhot_{rs}=\rho_{rs}$ for $r,s\in\lbrace 1,\cdots,N \rbrace$ by Equation \eqref{Eq:RhoEqualRhot} and that the coefficients of the form $\rhot_{N+1,r}$ and $\rhot_{r,N+1}$ can be expressed in terms of the others using the identity \eqref{Eq:IdentityRhot}. Equivalently, the involution $(r,N+1)$ in $\mathfrak{S}_{N+1}$ which exchanges $N+1$ with a label $r\in\lbrace 1,\cdots,N\rbrace$ (which, together with the subgroup $\mathfrak{S}_{N}$, generates $\mathfrak{S}_{N+1}$) can be understood as changing the gauge-fixing condition $\gb{N+1}=\Id$ into $\gb r=\Id$: it is clear that the resulting action takes the same form as \eqref{Eq:ActionNonGauged}, but with different coefficients.\\

Let us illustrate these discrete symmetries on examples. We start with the simplest example of a unique PCM with Wess-Zumino term, \textit{i.e.} the case $n=1$. In this case, one verifies that the only symmetry in $\mathfrak{S}_{2}$ leaves the coefficient $\rho_{11}$ invariant but changes the level $\kc 1$ into $-\kc 1$. This is equivalent to considering the gauge-fixing $\gb 1=\Id$ instead of $\gb 2=\Id$, or equivalently to using the gauge-invariant field $\gb 2\,\gb 1\null^{-1}$ instead of $\gb 1\,\gb 2\null^{-1}$. This involutive symmetry can thus be seen as the map $g \mapsto g^{-1}$ on the field of the model. It is a classic result that such a transformation leaves invariant the quadratic term $\kappa(j_+,j_-)$ but sends the Wess-Zumino term $\Ww g$ to its opposite.

Let us end this paragraph by an example involving more than one copy, by considering $n=2$. Let us describe the action of the involution $(2,3) \in \mathfrak{S}_3$. After a few manipulations, one finds that it acts on the coefficients $\rho_{rs}$ and $\kc r$ as\vspace{-5pt}
\begin{equation*}
\rho_{11} \mapsto \rho_{11}, \;\;\;\;\;\;\; \rho_{12} \mapsto -\rho_{11}-\rho_{12}+\frac{\kc 1}{2}, \;\;\;\;\;\;\; \rho_{21} \mapsto -\rho_{11}-\rho_{21}-\frac{\kc 1}{2}, \vspace{-4pt}
\end{equation*}
\begin{equation*}
\rho_{22} \mapsto \rho_{11}+\rho_{12}+\rho_{21}+\rho_{22}, \;\;\;\;\;\;\; \kc 1 \mapsto \kc 1, \;\;\;\;\;\;\; \kc 2 \mapsto -\kc 1-\kc 2.
\end{equation*}
One checks that this indeed defines an involution on the parameters $\rho_{rs}$ and $\kc r$. This transformation can also be thought of as replacing the gauge-fixing condition $\gb 3=\Id$ by $\gb 2=\Id$. Seeing these gauge-fixings as forming good gauge-invariant combinations of the fields $\gb 1$, $\gb 2$ and $\gb 3$, one finds that in the non-gauged model, it amounts to a redefinition of the fields $\gb 1 \mapsto \gb 1\,\gb 2\null^{-1}$ and $\gb 2 \mapsto \gb 2\null^{-1}$. One can perform this transformation at the level of the action, using the Polyakov-Wigmman formula~\cite{Polyakov:1983tt}, to recover the above change of parameters. Note that not all redefinitions of the fields lead to an action of the same form: for example, $\gb 1 \mapsto \gb 1\,\gb 2\null^{-1}$ and $\gb 2 \mapsto \gb 2\null$ does not. Considering redefinitions which originate from changing the gauge-fixing condition ensures that the action stays of the same form, as above.

\subsection[Diagonal homogeneous Yang-Baxter deformation of the coupled $\s$-model]{Diagonal homogeneous Yang-Baxter deformation of the coupled $\bm{\s}$-model}
\label{SubSec:YBSigma}

\paragraph{Principle.} Let us end this section by an application of the formalism developed in Section \ref{Sec:Gauging}: the construction of a new integrable $\s$-model as a diagonal Yang-Baxter deformation of the integrable coupled $\s$-model considered above. Indeed, as explained in Subsection \ref{SubSec:DiagYB}, having found a gauged formulation of this model in Subsection \ref{SubSec:GaugedSigma}, one can then construct a Yang-Baxter deformation breaking its diagonal symmetry. For brevity, we will focus here on the simplest possible deformation, the homogeneous one (see Subsection \ref{SubSubSec:Homogeneous}).

Recall that such a deformation can be applied only if the site $\is$ of the gauged model, in the notation of Section \ref{Sec:Gauging}, is associated with a PCM realisation with Wess-Zumino term. In this section, we renamed the site $\is$ as $(N+1)$. We shall thus suppose that the Takiff currents at this site are the ones of a PCM realisation without Wess-Zumino term:
\begin{equation*}
\Jtt{(N+1)}0(x) = \Xb{N+1}(x) \;\;\;\;\;\;\; \text{ and } \;\;\;\;\;\;\; \Jtt{(N+1)}1(x) = \elt{N+1}\,\jb{N+1}(x).
\end{equation*}
This is the case if and only if the level $\lst{(N+1)} 0$ of the site $(N+1)$ is equal to zero, \textit{i.e.} if and only if
\begin{equation*}
\kc{N+1} = -\sum_{r=1}^N \kc r = 0.
\end{equation*}
We shall then suppose that this condition is satisfied through this subsection.

\paragraph{Hamiltonian formulation.} The homogeneous diagonal Yang-Baxter deformation of the gauged model $\mathbb{M}^{\vpt,\pit}_{\eb}$ shares the same twist function $\vpt(\zt)$. However, it differs from the undeformed model by its choice of Takiff currents. Following the general procedure explained in Subsection \ref{SubSubSec:Homogeneous}, we keep the Takiff currents \eqref{Eq:NewCurrentsSigma} attached to the sites $(1)$ to $(N)$ of the model $\mathbb{M}^{\vpt,\pit}_{\eb}$ but change the currents attached to the site $(N+1)$ into the ones of a homogeneous Yang-Baxter realisation:
\begin{equation*}
\Jtt{(N+1),\hYB}0\null(x) = \Xb{N+1}(x) \;\;\;\;\;\;\; \text{ and } \;\;\;\;\;\;\; \Jtt{(N+1),\hYB}1(x) = \elt{N+1}\,\jb{N+1}(x) - \Rb{N+1}\Xb{N+1}(x).
\end{equation*}
In this equation, $\Rb{N+1}=\Ad_{\gb{N+1}}^{-1} \circ R \circ \Ad_{\gb{N+1}}$, with $R:\g_0\rightarrow\g_0$ a skew-symmetric solution of the CYBE \eqref{Eq:CYBE}. The Gaudin Lax matrix is then
\begin{equation*}
\Gt_\hYB(\zt,x) = \sum_{r=1}^{N} \left( \frac{\Jtt{(r)}1(x)}{(\zt-\zt_r)^2} + \frac{\Jtt{(r)}0(x)}{\zt-\zt_r} \right) + \frac{\Jtt{(N+1),\hYB}1(x)}{(\zt-\zt_{N+1})^2} + \frac{\Jtt{(N+1),\hYB}0(x)}{\zt-\zt_{N+1}}.
\end{equation*}
As in the undeformed case, the Hamiltonian is defined as
\begin{equation*}
\widetilde{\Hc}_\hYB = \sum_{i=1}^{2N} \epsilon_i \, \widetilde{\Q}_i + \int_\D \dd x \; \kappa\bigl( \Cct(x), \mu(x) \bigr), \;\; \text{ where } \;\; \widetilde{\Q}_i = - \frac{1}{2\vpt\hspace{1pt}'\bigl(\zet_i\bigr)} \int_\D \dd x \; \kappa\Bigl( \Gt_\hYB\bigl(\zet_i,x\bigr), \Gt_\hYB\bigl(\zet_i,x\bigr) \Bigr).
\end{equation*}

\paragraph{Lagrangian Lax pair.} Applying a homogeneous Yang-Baxter deformation to the non-gauged model has been done in~\nms (see Appendix D): the resulting action is then computed in a very similar way to the undeformed case, using a variant of the interpolation technique. As we have seen in Subsection \ref{SubSubSec:LagSigmaGauged}, this interpolation technique also applies straightforwardly to the gauged model. Similarly, the method developed in~\nms to treat the homogeneous Yang-Baxter deformation is also easily transposed in the gauged case. For brevity, we shall only sketch the main steps of this computation.

One can compute the Poisson bracket of the field $\gb r$ with the Gaudin Lax matrix $\Gt_\hYB(\zt)$ and then with the quadratic charge $\widetilde{\Q}_i$, as done in Equation \eqref{Eq:PBgQi} in the undeformed case. Recall that in the undeformed case, this computation led to the simple formula \eqref{Eq:jLaxEvalGauge} relating the current $\jb r_\pm$ with the evaluation $\Lct_\pm(\zt_r)$ of the Lax pair at the position $\zt_r$. In the deformed case, this relation still holds for $r\in\lbrace 1,\cdots,N \rbrace$ as the site $(1),\cdots,(N)$ are still associated with PCM+WZ realisations. However, it is modified for $r=N+1$ and now reads
\begin{equation}\label{Eq:jLaxEvalYB}
\jb {N+1}_\pm = \Lct_\pm(\zt_{N+1}) + \Rb {N+1} \Lct_\pm'(\zt_{N+1}),
\end{equation}
where $\Lct_\pm'(\zt)$ denotes the derivative of $\Lct_\pm(\zt)$ with respect to the spectral parameter $\zt$. The equation above is the equivalent of Equation (D.7) in~\nms and is obtained in a very similar way (the main difference being that one now has to take into account the contribution of the Lagrange multiplier $\mu$ to the dynamic, as we did in Subsection \ref{SubSubSec:LagSigmaGauged} for the undeformed gauged model).\\

We now determine the Lagrangian expression of the Lax pair $\Lct_\pm(\zt)$ using a similar interpolation formula than the one \eqref{Eq:LaxLagSigmaGauge} in the undeformed case. To take into account the deformation, this formula is modified into the form
\begin{equation*}
\Lct_\pm(\zt) = \sum_{r=1}^N \frac{\vptpm r(\zt_r)}{\vptpm r(\zt)}\jb r_\pm + \frac{\vptpm {N+1}(\zt_{N+1})}{\vptpm {N+1}(\zt)}\Jb {N+1}_\pm.
\end{equation*}
As the evaluation equation \eqref{Eq:jLaxEvalGauge} is still valid for $r\in\lbrace 1,\cdots,N \rbrace$, we do not modify the corresponding terms in the equation above. Thus, we only change the current $\jb{N+1}_\pm$ into a new current $\Jb{N+1}_\pm$, which we now have to determine, using Equation \eqref{Eq:jLaxEvalYB}. One then finds (see Equation (D.10) of~\nm)
\begin{equation*}
\Jb {N+1}_\pm = \frac{1}{1 \pm \vt \, \Rb{N+1}} \left( \jb {N+1}_\pm \mp \frac{2}{\elt {N+1}} \sum_{r=1}^N a^\pm_{r} \,\Rb {N+1}\jb r_\pm \right),
\end{equation*}
where we introduced the parameters
\begin{equation*}
\vt = \frac{2\rhot_{N+1,N+1}}{\elt{N+1}}, \;\;\;\;\; a^+_r = \rhot_{r,N+1} \;\;\;\;\; \text{ and } \;\;\;\;\; a^-_r = \rhot_{N+1,r}.
\end{equation*}

\paragraph{Gauged action.} Following the method used in Subsection D.4 of~\nm, one can perform the inverse Legendre transform of the deformed model introduced above. We shall skip here the details of this computation and just give the final result. The action of the (gauged) homogeneous Yang-Baxter deformed model is given by
\begin{equation}\label{Eq:ActionYBGauged}
S^\hYB_g\bigl[ \gb 1, \cdots, \gb {N+1} \bigr] = \iint_{\R\times\D} \dd t \, \dd x \; \sum_{r,s=1}^{N+1} \; \, \kappa\left(\jb r_+, \Oc rs^g\, \jb s_-\right)  + \sum_{r=1}^N \kc r \; \Ww {\gb r},
\end{equation}
where the operator $\Oc rs^g$ is defined as
\begin{equation*}
\Oc rs^g = \rhot_{rs}\,\Id + \frac{2\rhot_{r,N+1}\rhot_{N+1,s}}{\elt {N+1}} \frac{\Rb{N+1}}{1-\vt\,\Rb{N+1}}.
\end{equation*}
It is clear that this operator reduces to $\rhot_{rs}\,\Id$ in the undeformed limit $R = 0$, so that the action $S^\hYB_g$ is indeed a deformation of the action $S_g$, given by Equation \eqref{Eq:ActionGauged}.

One can check that this action is invariant under the gauge transformation \eqref{Eq:GaugeTransfoSigma} by a similar method than the one used in the non-gauged case (see Subsection \ref{SubSubSec:GaugeSymSigma}). In particular, one has to use the identity
\begin{equation*}
\sum_{s=1}^{N+1} \Oc rs^g - \frac{\kc r}{2} \,\Id = \sum_{s=1}^{N+1} \Oc sr^g + \frac{\kc r}{2}\, \Id = 0,
\end{equation*}
which is similar to the relation \eqref{Eq:IdentityRhot} and is directly derived from it (recalling that here $\kc{N+1}=0$, as supposed from the beginning of this subsection).

\paragraph{Gauge fixing.} To see the deformed model constructed above as a deformation of the initial non-gauged model with action \eqref{Eq:ActionNonGauged}, one has to perform the gauge-fixing $\gb{N+1}=\Id$. Under this gauge-fixing condition, the operator $\Rb{N+1}=\Ad_{\gb{N+1}}^{-1} \circ R \circ \Ad_{\gb{N+1}}$ simply reduces to $R$. Moreover, recall from Equation \eqref{Eq:RhoEqualRhot} that the coefficient $\rhot_{rs}$ is equal to $\rho_{rs}$ for $r,s\in\lbrace 1,\cdots,N\rbrace$. Thus, the gauge-fixed action of the deformed model takes the form
\begin{equation}\label{Eq:ActionYB}
S^\hYB\bigl[ \gb 1, \cdots, \gb N \bigr] = \iint_{\R\times\D} \dd t \, \dd x \; \sum_{r,s=1}^N \; \, \kappa\left(\jb r_+, \Oc rs\, \jb s_-\right)  + \sum_{r=1}^N \kc r \; \Ww {\gb r},
\end{equation}
with the operators
\begin{equation*}
\Oc rs = \rho_{rs}\,\Id + \frac{2\rhot_{r,N+1}\rhot_{N+1,s}}{\elt {N+1}} \frac{R}{1-\vt\,R}.
\end{equation*}
This action is clearly a deformation of the initial action \eqref{Eq:ActionNonGauged}. Moreover, note that the various parameters appearing in the expression of the operators $\Oc rs$ can be expressed in terms of the coefficients $\rho_{rs}$, $\kc r$ and $\ell^\infty$ of the original model using the identities \eqref{Eq:klN+1}, \eqref{Eq:IdentityRhot} and \eqref{Eq:RhoEqualRhot}.

The action \eqref{Eq:ActionYB} is invariant under the global left-translation symmetries $\gb 1 \mapsto h_1^{-1} \gb 1, \, \cdots, \gb N \mapsto h_N^{-1} \gb N$, with constant parameters $h_1,\cdots,h_N$ in $G_0$, similarly to the undeformed action \eqref{Eq:ActionNonGauged}. However, the diagonal right-translation symmetry  $\gb 1 \mapsto \gb 1 h, \cdots, \gb N \mapsto \gb N h$ is broken by the presence of the operators $\Oc rs$ in the action, as one should expect from the general diagonal Yang-Baxter deformation procedure.

\section{Conclusion}

In this article, we reviewed and pushed further the study of realisations of affine Gaudin models with gauge symmetries. In particular, we have shown that any non-gauged realisation of affine Gaudin model as considered in the article~\nms possesses an equivalent reformulation as such a gauged model, \textit{via} an appropriate change of spectral parameter. In addition to a better understanding of the structure of realisations of affine Gaudin models, this result allowed us to introduce the notion of diagonal Yang-Baxter deformation procedure, which provides a systematic way of deforming the models considered in~\nms while preserving their integrability, but breaking their diagonal symmetry. Moreover, we applied these general results to the study of integrable $\s$-models and in particular constructed the gauged formulation and the homogeneous diagonal Yang-Baxter deformation of the integrable coupled $\s$-model recently introduced in~\prl.\\

The results of Subsection \ref{SubSec:ChangeSpec} and Section \ref{Sec:Gauging} provide some interesting perspective on the role played by the spectral parameter in the construction of affine Gaudin models and their realisations, as also did Subsection 2.3.2 of~\nm. Let us recall the results of the latter: as the rest of the article~\nm, it concerned realisations of affine Gaudin models which possess a site of multiplicity two at infinity. This site is treated in a slightly different way than the finite ones: in particular, it only contributes to the model through the appearance of a constant term in the twist function and it is responsible for the absence of a gauge symmetry in the model. Although the construction of such a model depends on the choice of spectral parameter (for example through the positions of the sites of the model), it was then shown in Subsection 2.3.2 of~\nms that two models related by a change of spectral parameter $z \mapsto \zt=f(z)=az+b$ are in fact equivalent. The form of the transformation $f$ considered in this case, \textit{i.e.} the combination of a translation and a dilation, is justified by the fact that it preserves the point $z=\infty$ and thus does not break the particular status of the site at infinity in this model. In particular, it ensures that this change of spectral parameter relates two non-gauged affine Gaudin models.

More general transformations of the spectral parameter were considered in the present article, in Subsection \ref{SubSec:ChangeSpec} and \ref{SubSec:ChangeSpec2}. More precisely, we considered general M\"obius transformations, \textit{i.e.} biholomorphisms of the Riemann sphere $\CP=\C\sqcup\lbrace\infty\rbrace$. If one requires such a transformation $f$ to stabilise the point $\infty$ of the Riemann sphere, $f$ reduces to the combination of a translation and a dilation, as considered above. However, there exist more general M\"obius transformations, which sends the infinity to the finite complex plane $\C$. The action of such a transformation on a non-constrained model as in~\nms was considered in Subsection \ref{SubSec:ChangeSpec2}: this is in fact the starting point of the gauging procedure developed in Section \ref{Sec:Gauging}. The result of this procedure is then that the model obtained after performing the change of spectral parameter $z\mapsto \zt=f(z)$, although being a model with gauge symmetry, is equivalent to the initial non-gauged model with spectral parameter $z$, thus providing another result of invariance of affine Gaudin models under the choice of spectral parameter.

Finally, such a statement was also obtained in Subsection \ref{SubSec:ChangeSpec}, this time relating two constrained realisations of affine Gaudin models. This was ensured by considering only M\"obius transformations $f$ which send the finite sites of the first model to positions which are still finite: this is the case for example if $f$ is just the combination of a dilation and a translation, but also for a larger class of M\"obius transformations, as one relaxed the condition that $f$ should fix the point at infinity.

These different results suggest the existence of a deeper formulation of realisations of affine Gaudin models, which would be automatically independent of the choice of spectral parameter (indeed, although the various results above proved such an independence at least in some particular cases, their proofs are rather long and involved, and do not particularly shed light on the underlying structure of the model which is ultimately responsible for this invariance, especially for the transformations relating a non-gauged model to a gauged one). In such a formulation, the defining ``geometric'' data of the model, instead of being the positions of the sites of the model, would be given by the choice of a $N$-punctured Riemann sphere. The choice of spectral parameter $z$ would then be equivalent to the choice of a local coordinate on the Riemann sphere: such a choice of coordinate would in particular specify a point with a different status than the others (as not belonging to the coordinate chart), the infinity point $\infty$, and would also attach to each puncture a position $z_\alpha$, which could then be interpreted as the position of a site $\alpha$ in the usual formulation of affine Gaudin models. Following the results obtained in Subsections \ref{SubSec:ChangeSpec} and \ref{SubSec:ChangeSpec2}, the key ingredients of the affine Gaudin model, such as its Gaudin Lax matrix and its twist function, should then be thought of as 1-forms on the underlying punctured Riemann sphere. The search for such a geometric formulation is a very interesting direction for a deeper understanding of affine Gaudin models.

It is worth mentioning that such a formulation already exists for finite Gaudin models~\cite{Gaudin_book83}, associated with finite Kac-Moody algebras instead of affine ones, in the form of the so-called Hitchin systems~\cite{Hitchin:1987mz}. In fact, Hitchin systems can be defined starting from punctured Riemann surfaces of arbitrary genus. Finite Gaudin models then correspond to the case of genus 0, \textit{i.e.} the Riemann sphere. It would be interesting to see if the construction of Hitchin systems can be generalised to the affine case and if it could provide a way to construct new integrable field theories starting with higher genus surfaces.

Let us finally note, concerning the geometric aspects of affine Gaudin models and their spectral parameter, that the results of Appendix \ref{App:Identities} also provides an interesting geometric reformulation of the integrable coupled $\s$-model with $N$ copies, by interpreting the coefficients appearing in the action of the model (either gauged or non-gauged) as double residues of a bi-differential $\Theta$. Indeed, this result encodes all the information on the model in a unique geometric object $\Theta$, which is independent of the choice of spectral parameter. This makes obvious the invariance of the model under a change of spectral parameter, as residues of bi-differentials are invariant under a change of coordinate of the underlying Riemann surface.\\

It would also be interesting to develop further the applications of the diagonal Yang-Baxter deformation procedure to the study of integrable $\s$-models. In this article, we presented a simple example of such a deformation: the homogeneous diagonal Yang-Baxter deformation of the integrable coupled $\s$-model with $N$ sites introduced in~\prl, resulting in the action \eqref{Eq:ActionYB}. This deformation corresponds to the breaking of the diagonal symmetry $(\gb 1,\cdots,\gb N) \mapsto (\gb 1 h,\cdots,\gb N h)$ of the model. It can be combined with other deformations, corresponding to breaking the remaining global symmetries of the model. For example, one can also deform the PCM+WZ realisation attached to one of the sites $(r)$ of the model into a Yang-Baxter realisation (either homogeneous or inhomogeneous), thus breaking the left symmetry $\gb r \mapsto h^{-1}\gb r$ associated with this site. In total, one can then construct an integrable coupled model with $N+1$ Yang-Baxter deformations, breaking all symmetries of the undeformed model. Similarly, one can also consider $\lambda$-realisations in some of the sites and obtain other types of deformed $\s$-models. It would be interesting to explore the panorama of possibilities of such deformations and to study their properties.

As explained in Subsection \ref{SubSubSec:PCMSymAndDef}, this is already an interesting question for models with only one copy, which correspond to multi-parameter deformations of the Principal Chiral Model (or its non-abelian T-dual). It would be for example desirable to compare the models obtained this way with some of the multi-parameter deformations already existing in the literature, such as the generalised $\lambda$-deformed model introduced in~\cite{Sfetsos:2015nya} and the 3-parameter model constructed in~\cite{Delduc:2017fib}. These models have been shown to possess a Lax pair formulation at the Lagrangian level ; however, their Hamiltonian analysis and the study of the Poisson bracket of their Lax matrix has not been performed yet. Identifying these models with constrained realisations of affine Gaudin models would then show their Hamiltonian integrability and demonstrate that these various deformations all fit into one unique formalism. Some results in this direction will be presented in another article~\cite{Lacroix:toAppear} by the author.\\

Another interesting direction to explore is the study of the $\mathbb{Z}_T$-cyclotomic affine Gaudin models, using the terminology of~\bg, which are affine Gaudin models possessing additional equivariance properties with respect to an automorphism $\s$ of the Lie algebra $\g$ with finite order $T$. It was shown in~\bgs that the integrable $\s$-model on the $\mathbb{Z}_T$-coset $G_0/H$~\cite{Young:2005jv}, where $H$ is the subgroup of fixed points under the automorphism $\s$, can be seen as a realisation of such a cyclotomic affine Gaudin model. More precisely, this model is realised as a field theory on the group $G_0$ with a gauge symmetry acting by multiplication by the subgroup $H$. This gauge symmetry is implemented at the level of the affine Gaudin model by the presence of a first-class constraint extracted from the Gaudin Lax matrix at infinity, similarly to what is considered in this article. The integrable $\s$-model on $G_0/H$ corresponds to a cyclotomic model with only one site (associated with a PCM realisation in $\Ac_{G_0}$): applying the same idea that led to the construction of the integrable coupled $\s$-model on $G_0^N$ in~\nm, it would be interesting to consider similar models with $N$ sites, each realised in a copy of $\Ac_{G_0}$. As already announced in~\prl, we expect these models to describe integrable $\s$-models on $G_0^N/H^{\text{diag}}$, in a similar way than the non-cyclomotic constrained model studied in Subsection \ref{SubSec:GaugedSigma} of this article describes an integrable $\s$-model on $G_0^{N+1} / G_0^{\text{diag}}$.\\

It would also be interesting to study the quantisation of affine Gaudin models and their hierarchies of local charges and thus in particular the quantisation of the constrained models considered in this article (see Subsection \ref{SubSec:Integrability} for the construction of the hierarchies in this case). This was in fact the main motivation for the reinterpretation of integrable $\s$-models as affine Gaudin models in~\bgs (or similarly of the Drinfeld-Sokolov hierarchies in~\cite{Feigin:2007mr}). It is worth noticing that the quantisation of finite Gaudin models~\cite{Gaudin_book83}, associated with finite Kac-Moody algebras, has been studied extensively in the literature. In particular, their integrable hierarchies have been quantised in~\cite{Feigin:1994in}, leading to the description of their spectrum in terms of opers (see \textit{e.g.}~\cite{Frenkel:2004qy}), which are differential operators naturally associated with the finite Kac-Moody algebra underlying the Gaudin model. Recent progresses have been made in the search for a quantisation of affine Gaudin models. By analogy with the finite case, several conjectures about the quantisation of their hierarchies have been proposed in~\cite{Lacroix:2018fhf}, by studying the structure of affine opers, which are the generalisation for affine algebras of the differential operators involved in the quantisation of finite Gaudin models. Moreover, some of these conjectures were verified on particular cases in~\cite{Lacroix:2018itd}, through the explicit quantisation of some of the charges belonging to these hierarchies\footnote{More precisely the cubic charges for untwisted affine Kac-Moody algebras of type A.}.

Recall that in the affine Gaudin models considered in this article, the hierarchies of local charges whose quantisation we are discussing are invariant under the gauge symmetry of the model. Recall also that this gauge symmetry is obtained from the natural diagonal symmetry present in affine Gaudin models regular at infinity (more precisely, as explained in Subsection \ref{SubSec:Gauge}, this diagonal symmetry is transformed into a gauge symmetry by considering its moment map, the current extracted from the Gaudin Lax matrix at infinity, as a first-class constraint). It is interesting to note that this diagonal symmetry was also considered in the article~\cite{Lacroix:2018itd}, where it was proved (Theorem 3.8.iii) that it leaves invariant the quantum charges constructed in this article. This is a strong sign for the existence of a quantisation of these hierarchies which preserves their gauge-invariance.

As explained at the beginning of this conclusion, the results of~\nms and of this article suggest the existence of a more geometric formulation of affine Gaudin models which would be independent of a choice of spectral parameter, reminiscent of the Hitchin systems construction for finite Kac-Moody algebras. The quantisation of finite Gaudin models in terms of opers mentioned above fits into a more general program of quantisation of Hitchin systems~\cite{Beilinson:1995}, which is related to the Geometric Langlands Correspondence. It would be interesting to see if this program can give some insight about the quantisation in the affine case. Moreover, it is interesting to notice that the behaviour of affine opers under a change of coordinate in the underlying Riemann surface have been studied in~\cite{Lacroix:2018fhf}, Section 6. In particular, it was shown that functions on affine opers over the Riemann sphere with marked points, which are conjectured to describe the spectrum of quantised affine Gaudin models, are invariant under such a change. This suggests that the results of invariance under a change of spectral parameter mentioned above should also extend to the quantised models. Note however that Equation (6.8a) of~\cite{Lacroix:2018fhf} shows that in the quantum case, the twist function should not be thought of as a meromorphic 1-form anymore but instead as a meromorphic connection on the canonical line bundle over the Riemann sphere. It would be interesting to explore these aspects further.

\paragraph{Acknowledgements.} The author is particularly grateful to M. Magro for useful discussions and comments on the manuscript and to G. Arutyunov, F. Delduc and B. Vicedo for fruitful discussions. The author would also like to thank C. Bassi, R. Belliard, I. Buri\'c and T. Figiel for interesting discussions. This work is funded by the Deutsche Forschungsgemeinschaft (DFG, German Research Foundation) under Germany's Excellence Strategy -- EXC 2121 ``Quantum Universe'' -- 390833306.

\appendix

\section{Constrained Hamiltonian field theories and gauges symmetries}
\label{App:Constraints}

\subsection{Constraints}

Let $\Ac$ be the Poisson algebra of functionals on fundamental fields $\phi^i(x)$, $i \in\lbrace1,\cdots,m\rbrace$, on the one-dimensional space $\D=\R$ or $\D=\mathbb{S}$. We are interested here in Hamiltonian field theories defined on $\Ac$ and subject to a set of constraints
\begin{equation}\label{Eq:AppConstraints}
\Cc^a(x) \approx 0, \;\;\;\;\; a=1,\cdots,n,
\end{equation}
where the $\Cc^a(x)$'s are fields in $\Ac$. The notation $\approx$ in the above equation, that we will also use in the rest of this article, denotes a weak equality, \textit{i.e.} an equality which is true only when the constraints are imposed but not in $\Ac$ in general. We will restrict here to the simple case where the constraints $\Cc^a(x)$ satisfy a Poisson bracket of the form:
\begin{equation}\label{Eq:AppAlgebraConstraints}
\bigl\lbrace \Cc^a(x), \Cc^b(y) \bigr\rbrace = \Sc abc \, (x) \; \Cc^c(x) \, \delta_{xy},
\end{equation}
for $\Sc abc$ some fields in $\Ac$ (and where the sum over the repeated index $c$ is implicit). In particular, this implies that the $\Cc^a(x)$'s are first-class constraints, as their Poisson bracket weakly vanishes:
\begin{equation}\label{Eq:AppFirstClass}
\bigl\lbrace \Cc^a(x), \Cc^b(y) \bigr\rbrace \approx 0.
\end{equation}

\subsection{Hamiltonian and dynamics}

In order to define a constrained dynamical system on $\Ac$ we consider an Hamiltonian, given by a local functional
\begin{equation*}
\Hc_0 = \int_{\D} \dd x \; h(x),
\end{equation*}
whose density $h(x)$ is a function of the $\phi^i(x)$'s and their derivatives. We will suppose that the Hamiltonian $\Hc_0$ is first-class, in the sense that its Poisson bracket with all constraints $\Cc^a(x)$ weakly vanishes:
\begin{equation}\label{Eq:AppH0FirstClass}
\bigl\lbrace \Hc_0, \Cc^a(x) \bigr\rbrace \approx 0.
\end{equation}
This ensures that the the Hamiltonian flow generated by $\Hc_0$ preserves the constraints \eqref{Eq:AppConstraints}. As the above Poisson bracket weakly vanishes, it is proportional to the constraints. We will suppose here that it can be written as\footnote{As $\Hc_0$ is defined as a local functional, its Poisson bracket with $\Cc^a(x)$ is a local field evaluated at the point $x$. In the most general situation, this bracket can contain terms proportional to the spatial derivatives of the constraints evaluated at $x$. Here we will restrict to the case where only the constraints and their first spatial derivatives appear.}
\begin{equation}\label{Eq:AppPBHConstraints}
\bigl\lbrace \Hc_0, \Cc^a(x) \bigr\rbrace = \Tcc ab\, (x) \; \Cc^b(x) + \Uc ab\, (x) \; \p_x\Cc^b(x),
\end{equation}
for some fields $\Tcc ab\,(x)$ and $\Uc ab\,(x)$ in $\Ac$.\\

As we are considering an Hamiltonian field theory with constraints \eqref{Eq:AppConstraints}, there exists an ambiguity in the definition of the Hamiltonian. Indeed, one could add to $\Hc_0$ any quantity proportional to the constraints, as these constraints vanish for physical states. Thus, one should consider the so-called total hamiltonian
\begin{equation}\label{Eq:AppTotalHam}
\Hc = \Hc_0  + \int_{\D} \dd x \; \mu_a(x) \, \Cc^a(x).
\end{equation}
The fields $\mu_a$ appearing in this expression are called the Lagrange multipliers associated with the constraints $\Cc^a$ and are considered as independent auxiliary degrees of freedom. The time evolution of an observable $\mathcal{O}$ in $\Ac$ (which does not depend explicitly on the time coordinate $t$) is then defined by the Hamiltonian flow
\begin{equation}\label{Eq:AppTime}
\p_t \mathcal{O} \approx \lbrace \Hc, \mathcal{O} \rbrace \approx \lbrace \Hc_0, \mathcal{O} \rbrace + \int_\D \dd x \; \mu_a(x) \, \bigl\lbrace \Cc^a(x), \mathcal{O} \bigr\rbrace.
\end{equation}
In particular, as both $\Hc_0$ and the constraints are first-class (see equation \eqref{Eq:AppFirstClass} and \eqref{Eq:AppH0FirstClass}), the constraints \eqref{Eq:AppConstraints} are preserved under the time evolution:
\begin{equation*}
\p_t \Cc^a(x) \approx 0.
\end{equation*}

Let us end this paragraph by discussing how the introduction of Lagrange multipliers resolves the ambiguity in the choice of the Hamiltonian. Let us imagine that we started with another local Hamiltonian $\widetilde{\Hc}_0$, which coincides weakly with $\Hc_0$. This means that their exist fields $\nu_a(x)$ in $\Ac$ such that
\begin{equation*}
\widetilde{\Hc}_0 = \Hc_0 + \int_\D \dd x \; \nu_a(x) \, \Cc^a(x).
\end{equation*}
Thus, one can write the total Hamiltonian of the model as
\begin{equation*}
\Hc = \widetilde{\Hc}_0  + \int_{\D} \dd x \; \widetilde{\mu}_a(x) \, \Cc^a(x),
\end{equation*}
with a new set of Lagrange multiplier
\begin{equation*}
\widetilde{\mu}_a(x) = \mu_a(x) - \nu_a(x).
\end{equation*}
The dynamic of the model is then governed by the time evolution
\begin{equation*}
\p_t \mathcal{O} \approx \lbrace \Hc, \mathcal{O} \rbrace  \approx \bigl\lbrace \widetilde{\Hc}_0, \mathcal{O} \bigr\rbrace + \int_\D \dd x \; \widetilde{\mu}_a(x) \, \bigl\lbrace \Cc^a(x), \mathcal{O} \bigr\rbrace  ,
\end{equation*}
which takes the same form as \eqref{Eq:AppTime} but with $\Hc_0$ replaced by $\widetilde{\Hc}_0$ and $\mu_a$ replaced by $\widetilde{\mu}_a$. Thus, changing the initial Hamiltonian of the model leads to an equivalent system through an appropriate redefinition of the Lagrange multipliers.

\subsection{Gauge symmetries}

In this subsection, we will prove that the canonical transformations generated by the constraints $\Cc^a(x)$ are gauge symmetries of the constrained hamiltonian model defined above. Let us then consider the infinitesimal canonical transformation which sends any observable $\Oo$ in $\Ac$ to
\begin{equation}\label{Eq:AppGauge}
\Oot = \Oo + \delta_\epsilon \Oo + o(\epsilon), \;\;\;\;\; \text{ with } \;\;\;\;\; \delta_\epsilon \Oo = \left\lbrace \int_{\D} \dd x \; \epsilon_a(x,t) \, \Cc^a(x), \Oo \right\rbrace \approx \int_{\D} \dd x \; \epsilon_a(x,t) \bigl\lbrace \Cc^a(x), \Oo \bigr\rbrace,
\end{equation}
where the infinitesimal parameters $\epsilon_a(x,t)$ ($a\in\lbrace 1,\cdots,n\rbrace$) of the transformation are taken as arbitrary functions of the space-time coordinates (which can depend on the dynamical fields of the model). Recall that the treatment of first-class constraints in the previous subsection required the introduction of additional auxiliary degrees of freedom in the model, the Lagrange multipliers $\mu_a(x)$. One can then suppose that these degrees of freedom are also modified by the the transformation and consider new Lagrange multipliers
\begin{equation}\label{Eq:AppTransfoLag}
\widetilde\mu_a(x) = \mu_a(x) + \delta_\epsilon \mu_a(x) + o(\epsilon).
\end{equation}
Before going further, let us note that Equation \eqref{Eq:AppFirstClass} implies that
\begin{equation*}
\delta_\epsilon \Cc^a(x) \approx 0,
\end{equation*}
for all $a\in\lbrace 1,\cdots,n \rbrace$. Thus the transformation $\Oo \mapsto \Oot$ preserves the constraints \eqref{Eq:AppConstraints} and therefore maps admissible physical states of the model to other admissible physical states.\\

We want to prove that the transformation defined above is a symmetry of the model, \textit{i.e.} that it preserves its equations of motion. Using Equation \eqref{Eq:AppTime}, the equation of motion of the observable $\Oo$ can be written as
\begin{equation}\label{Eq:AppEoMF}
\p_t \Oo \approx \mathcal{F}[\phi^i,\mu_a],
\end{equation}
where $\mathcal{F}$ is the following functional of the dynamical fields $\phi^i$ and the Lagrange multipliers $\mu_a$:
\begin{equation}\label{Eq:AppFunctionalEOM}
\mathcal{F}[\phi^i,\mu_a] = \lbrace \Hc_0, \Oo \rbrace + \int_{\D} \dd x \; \mu_a(x) \bigl\lbrace \Cc^a(x), \Oo \bigr\rbrace.
\end{equation}
The time evolution of the transformed observable $\Oot$ is given by
\begin{equation*}
\p_t \Oot = \p_t \Oo + \int_\D \dd x \; \epsilon_b(x,t) \p_t \bigl\lbrace \Cc^b(x), \Oo \bigr\rbrace + \int_{\D} \dd x \; \p_t\epsilon_b(x,t)  \bigl\lbrace \Cc^b(x), \Oo \bigr\rbrace + o(\epsilon).
\end{equation*}
The time derivative in the second term can be computed using Equation \eqref{Eq:AppTime}, yielding
\begin{eqnarray*}
\p_t \Oot &\approx & \p_t \Oo + \int_{\D} \dd x \; \epsilon_b(x,t) \Bigl\lbrace \Hc_0, \bigl\lbrace \Cc^b(x), \Oo \bigr\rbrace \Bigr\rbrace  + \int_\D \dd x \int_\D \dd y \; \epsilon_b(x,t) \mu_c(y) \Bigl\lbrace \Cc^c(y), \bigl\lbrace \Cc^b(x), \Oo \bigr\rbrace \Bigr\rbrace \\
 & & \hspace{50pt} + \int_{\D} \dd x \; \p_t\epsilon_b(x,t)  \bigl\lbrace \Cc^b(x), \Oo \bigr\rbrace + o(\epsilon).
\end{eqnarray*}
Using Equation \eqref{Eq:AppEoMF}, the Jacobi identity and the skew-symmetry of the Poisson bracket, we get
\begin{eqnarray*}
\p_t \Oot &\approx & \mathcal{F}[\phi^i,\mu_a] + \int_{\D} \dd x \; \epsilon_b(x,t) \Bigl(  \bigl\lbrace \Cc^b(x),  \bigl\lbrace \Hc_0,  \Oo \bigr\rbrace \bigr\rbrace + \bigl\lbrace \bigl\lbrace  \Hc_0, \Cc^b(x) \bigr\rbrace , \Oo \bigr\rbrace  \Bigr) \\
 & & \hspace{20pt} + \int_\D \dd x \int_\D \dd y \; \epsilon_b(x,t) \, \mu_c(y) \Bigl(  \bigl\lbrace  \Cc^b(x), \bigl\lbrace \Cc^c(y), \Oo \bigr\rbrace \bigr\rbrace - \bigl\lbrace \bigl\lbrace \Cc^b(x), \Cc^c(y)\bigr\rbrace, \Oo  \bigr\rbrace \Bigr) \\
 & & \hspace{20pt} + \int_{\D} \dd x \; \p_t\epsilon_a(x,t)  \bigl\lbrace \Cc^a(x), \Oo \bigr\rbrace + o(\epsilon).
\end{eqnarray*}
Rearranging the terms and using Equations \eqref{Eq:AppAlgebraConstraints} and \eqref{Eq:AppPBHConstraints}, we get
\begin{eqnarray*}
\p_t \Oot &\approx & \mathcal{F}[\phi^i,\mu_a] + \int_{\D} \dd x \; \epsilon_b(x,t) \left\lbrace \Cc^b(x), \bigl\lbrace \Hc_0,  \Oo \bigr\rbrace + \int_\D \dd y \; \mu_c(y) \bigl\lbrace \Cc^c(y), \Oo \bigr\rbrace \right\rbrace \\
 & & \hspace{20pt} + \int_{\D} \dd x \; \epsilon_b(x,t) \bigl\lbrace \Tcc b a (x)\, \Cc^a(x) + \Uc ba\, (x) \; \p_x\Cc^a(x), \Oo \bigr\rbrace \\
 & & \hspace{20pt} - \int_\D \dd x \; \epsilon_b(x,t) \, \mu_c(x)  \bigl\lbrace \Sc bca (x) \, \Cc^a(x), \Oo \bigr\rbrace \bigr\rbrace \\
 & & \hspace{30pt} + \int_{\D} \dd x \; \p_t\epsilon_a(x,t)  \bigl\lbrace \Cc^a(x), \Oo \bigr\rbrace + o(\epsilon).
\end{eqnarray*}
We recognize in the right term of the Poisson bracket of the first line the functional \eqref{Eq:AppFunctionalEOM}. Finally, performing an integration by parts and noting that, for any observables $f$ and $g$, one has $\lbrace f \, \Cc^a(x), g \rbrace \approx f \lbrace \Cc^a(x), g \rbrace$, we obtain
\begin{eqnarray}\label{Eq:AppEomTransfo}
\p_t \Oot &\approx & \mathcal{F}[\phi^i,\mu_a] + \int_{\D} \dd x \; \epsilon_b(x,t) \bigl\lbrace \Cc^b(x), \mathcal{F}[\phi^i,\mu_a] \bigr\rbrace  + o(\epsilon) \\
 & &  + \int_{\D} \dd x \; \Bigl( \p_t\epsilon_a(x,t) - \p_x \bigl( \epsilon_b(x,t) \, \Uc ba (x) \bigr) + \epsilon_b(x,t) \, \Tcc b a (x) - \epsilon_b(x,t) \, \mu_c(x) \, \Sc bca (x) \, \Bigr) \bigl\lbrace \Cc^a(x) , \Oo \bigr\rbrace. \notag
\end{eqnarray}

The transformation $\Oo \mapsto \Oot$ is a symmetry of the model if and only if the equation of motion of the transformed observable $\Oot$ is the same as the one of $\Oo$, \textit{i.e.} if and only if
\begin{equation}\label{Eq:AppSym}
\p_t \Oot \approx \mathcal{F}\bigl[\, \widetilde{\phi}\,^i, \widetilde{\mu}_a \bigr],
\end{equation}
with $\mathcal{F}$ the same functional \eqref{Eq:AppFunctionalEOM} as in \eqref{Eq:AppEoMF}. The left-hand side of this equality is given by Equation \eqref{Eq:AppEomTransfo}. Let us now compute its right-hand side
\begin{equation*}
\mathcal{F}\bigl[ \, \widetilde{\phi}\,^i, \widetilde{\mu}_a \bigr] = \mathcal{F}[\phi^i, \mu_a] + \delta_\epsilon \mathcal{F}[\phi^i,\mu_a] + o(\epsilon).
\end{equation*}
The variation $\delta_\epsilon \mathcal{F}$ of the functional $\mathcal{F}$ comes from two sources: the variation of the fundamental fields $\phi^i$, which is given by the canonical transformation \eqref{Eq:AppGauge}, and the variation of the Lagrange multipliers, which is given by \eqref{Eq:AppTransfoLag}. As the Lagrange multipliers only appear in the second term of \eqref{Eq:AppFunctionalEOM}, we have
\begin{equation*}
\mathcal{F}\bigl[ \, \widetilde{\phi}\,^i, \widetilde{\mu}_a \bigr] \approx \mathcal{F}[\phi^i,\mu_a] + \int_{\D} \dd x \; \epsilon_b(x,t) \bigl\lbrace \Cc^b(x), \mathcal{F}[\phi^i,\mu_a] \bigr\rbrace + \int_\D \dd x \; \delta_\epsilon \mu_a(x) \bigl\lbrace \Cc^a(x), \Oo \bigr\rbrace + o(\epsilon).
\end{equation*}
Comparing this expression to Equation \eqref{Eq:AppEomTransfo}, it is clear that the symmetry condition \eqref{Eq:AppSym} is satisfied if we choose the variation of the Lagrange multipliers to be
\begin{equation}\label{Eq:AppTransfMult}
\delta_\epsilon \mu_a(x) = \p_t\epsilon_a(x,t) - \p_x \bigl( \epsilon_b(x,t) \, \Uc ba (x) \bigr) + \epsilon_b(x,t)\, \Tcc b a (x) - \epsilon_b(x,t) \, \mu_c(x)\, \Sc bca (x).
\end{equation}
This proves that the canonical transformation \eqref{Eq:AppGauge} generated by the constraints, together with the transformation \eqref{Eq:AppTransfMult} of the Lagrange multipliers, is a symmetry of the model. As this transformation depends on arbitrary functions $\epsilon_a(x,t)$ of space-time coordinates, it is a gauge symmetry.

\section{Change of spectral parameters and Takiff realisations}
\label{App:ChangeSpec}

The goal of this appendix is to prove Proposition \ref{Prop:NewTakiff}, which explains how to construct Takiff currents $\Jtt\alpha p(x)$ with levels $\lst\alpha p$ from the Takiff current $\J\alpha p(x)$ with levels $\ls\alpha p$, supposing that the levels $\ls\alpha p$ and $\ls\alpha p$ are related by Lemma \ref{Lem:NewLevels}. As the statements of Lemma \ref{Lem:NewLevels} and Proposition \ref{Prop:NewTakiff} do not mix the levels and currents at different sites, one can focus on a particular site and drop the superscripts $\alpha\in\Si$. We thus want to prove the following.

\begin{proposition}\label{Prop:AppTakiff}
Let $\J\null p(x)$ be Takiff currents with multiplicity $m$ and levels $\ls\null p$. For $a,b,c,d$ and $z$ complex numbers such that $ad-bc$ and $cz+d$ are non-zero, we define
\begin{equation*}
\lst \null 0 = \ls \null 0 \;\;\;\;\;\; \text{ and } \;\;\;\;\;\; \lst \null p = \sum_{q=p}^{m-1} {q-1 \choose p-1 } \frac{(-c)^{q-p}(ad-bc)^p}{(cz+d)^{p+q}} \ls\null q, \;\; \text{ for } \; 0<p<m.
\end{equation*}
Then the currents
\begin{equation*}
\Jtt \null 0(x) = \J \null 0(x) \;\;\;\;\;\; \text{ and } \;\;\;\;\;\; \Jtt \null p(x) = \sum_{q=p}^{m-1} {q-1 \choose p-1 } \frac{(-c)^{q-p}(ad-bc)^p}{(cz+d)^{p+q}} \J\null q(x), \;\; \text{ for } \; 0<p<m.
\end{equation*}
are Takiff currents with levels $\lst\null p$.
\end{proposition}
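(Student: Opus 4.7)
The plan is a direct computation: expand the Poisson bracket $\bigl\lbrace \Jtt\null p\,\ti1(x), \Jtt\null q\,\ti2(y) \bigr\rbrace$ by bilinearity, apply the Takiff bracket \eqref{Eq:Takiff} to each term $\lbrace \J\null k\,\ti1(x), \J\null l\,\ti2(y) \rbrace$, and reorganise the resulting double sum into the required Takiff form for $\Jtt\null\null$ with levels $\lst\null\null$. Since the bracket splits into a ``Hamiltonian'' part proportional to $\delta_{xy}$ and a ``central'' part proportional to $\delta'_{xy}$, and since in both the currents $\J\null k$ and the numbers $\ls \null k$ appear as a factor indexed by $k+l$, the whole verification reduces to a single algebraic identity on the expansion coefficients. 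Writing
\[
c_{p,k} = \binom{k-1}{p-1}\,\frac{(-c)^{k-p}(ad-bc)^p}{(cz+d)^{p+k}}, \qquad 1\le p \le k \le m-1,
\]
together with $c_{0,k}=\delta_{k,0}$, so that $\Jtt\null p = \sum_k c_{p,k}\,\J\null k$ and $\lst\null p = \sum_k c_{p,k}\,\ls\null k$, the content of the proposition is the identity
\[
\sum_{\substack{k,l\,\ge\,0 \\ k+l=s}} c_{p,k}\,c_{q,l} \;=\; c_{p+q,\,s} \qquad \text{for all } 0\le p,q,\;\; p+q\le s\le 2(m-1),
\]
with the convention that $c_{r,s}=0$ as soon as $s\ge m$ or $s<r$.

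The key step is thus a binomial identity. After factoring out the obvious powers of $-c$, $ad-bc$ and $cz+d$ (whose exponents match automatically: $(k-p)+(l-q)=s-p-q$, $p+q$, and $(p+k)+(q+l)=(p+q)+s$), what remains to prove is
\[
\sum_{k=p}^{\,s-q} \binom{k-1}{p-1}\binom{s-k-1}{q-1} \;=\; \binom{s-1}{p+q-1},
\]
which is a standard Vandermonde--Chu convolution (setting $j=k-1$ and applying $\sum_{j=p-1}^{n-(q-1)}\binom{j}{p-1}\binom{n-j}{q-1}=\binom{n+1}{p+q-1}$ with $n=s-2$). The corner cases $p=0$ or $q=0$ are handled directly from $c_{0,k}=\delta_{k,0}$, reducing the double sum to a single sum that manifestly reproduces $c_{q,s}$ or $c_{p,s}$.

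There remains to check the vanishing of the bracket when $p+q\ge m$, which is the second branch of \eqref{Eq:Takiff}. The Takiff bracket forces $k+l<m$ for any non-vanishing contribution, but the sum runs over $k\ge p$, $l\ge q$, hence $k+l\ge p+q\ge m$, and the contributing index set is empty. Collecting the Hamiltonian and central parts, this yields
\[
\bigl\lbrace \Jtt\null p\,\ti1(x), \Jtt\null q\,\ti2(y) \bigr\rbrace
= \begin{cases}
\bigl[\,C\ti{12},\,\Jtt\null{p+q}\,\ti1(x)\bigr]\delta_{xy} - \lst\null{p+q}\, C\ti{12}\,\delta'_{xy} & \text{if } p+q<m,\\[4pt]
0 & \text{if } p+q\ge m,
\end{cases}
\]
which is exactly the Takiff algebra \eqref{Eq:Takiff} with levels $\lst\null p$. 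The only non-trivial ingredient is the binomial identity above; once this is in place the rest is bookkeeping of indices and prefactors, so I expect no serious obstacle beyond the need to organise the double sum carefully.
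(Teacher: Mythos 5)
Your proof is correct and takes essentially the same route as the paper's: the paper merely packages the computation as an abstract lemma (any coefficients obeying the convolution relation $\sum_{s}\beta_{p,s}\beta_{q,r-s}=\beta_{p+q,r}$ produce new Takiff currents) followed by a verification of that relation via exactly your Vandermonde--Chu identity, and it handles $p+q\ge m$ by the same empty-sum observation. One cosmetic caveat: your convolution identity as stated for $m\le s\le 2(m-1)$ with the convention $c_{r,s}=0$ for $s\ge m$ is not literally true in that range, but it is also never needed there, since (as you note) the second branch of the Takiff bracket already kills every term with $k+l\ge m$.
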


Let us first consider the case $c=0$ (in terms of the change of spectral parameter $z\mapsto \zt=f(z)$, this case corresponds to a M\"obius transformation $f$ which is simply the combination of a translation and a dilation). We then simply have
\begin{equation*}
\lst\null p = \left(\frac{a}{d}\right)^p \ls\null p \;\;\;\;\;\;\;\; \text{ and } \;\;\;\;\;\;\;\; \Jtt\null p = \left(\frac{a}{d}\right)^p \J\null p(x).
\end{equation*}
The proposition is then direct to prove (this was in fact already stated in Equation (2.46) of~\nm).

For the rest of this appendix, we will thus consider the case $c\neq 0$. Let us start by introducing some simplifying notations. We define the parameters
\begin{equation*}
\Delta=-\frac{ad-bc}{c(cz+d)} \;\;\;\;\;\;\; \text{ and } \;\;\;\;\;\;\; \Omega = -\frac{c}{cz+d}.
\end{equation*}
One then has, for $p$ in $\lbrace 1,\cdots,m-1 \rbrace$,
\begin{equation*}
\lst\null p = \sum_{q=1}^{m-1} \alpha_{p,q} \,\ls\null q \;\;\;\;\;\;\;\; \text{ and } \;\;\;\;\;\;\;\; \Jtt\null p(x) = \sum_{q=1}^{m-1} \alpha_{p,q} \,\J\null q(x),
\end{equation*}
where the coefficients $\alpha_{p,q}$ ($p,q\in\lbrace 1,\cdots,m-1\rbrace$) are defined as
\begin{equation}\label{Eq:Alpha}
\alpha_{p,q} = \left\lbrace \begin{array}{ll}
\;\;0 & \text{ if }\, 0 < q < p, \\[2pt]
\displaystyle {q-1 \choose p-1 } \Delta^p \, \Omega^q & \text{ if }\, p \leq q < m.
\end{array}\right.
\end{equation}

\subsection{Automorphisms of Takiff Lie algebras}

To prove the proposition, let us first state a more general result, which is motivated from the original construction of Takiff algebra (although in the end we will give an independent direct proof of it). Recall (see for example~\bg) that the Takiff Lie algebra $\mathrm{T}^m\mathfrak{f}$ associated with a Lie algebra $\mathfrak{f}$ and with multiplicity $m$ is defined as the tensor product $\mathfrak{f} \otimes \bigl( \C[\xi]/\xi^m\C[\xi] \bigr)$, where $\C[\xi]$ denotes the algebra of complex polynomials in an abstract variable $\xi$ and $\xi^m\C[\xi]$ its ideal generated by $\xi^m$. In the abstract construction of affine Gaudin model in~\bg, the Takiff currents originate from the Takiff Lie algebra $\mathrm{T}^m\mathfrak{f}$ where $\mathfrak{f}$ is the untwisted affine Kac-Moody algebra constructed from the finite algebra $\g$. The relation that we are aiming to prove between two sets of Takiff currents $\J\null p(x)$ and $\Jtt\null p(x)$ can be interpreted as coming from an automorphism of the Takiff Lie algebra $\mathrm{T}^m\mathfrak{f}$.

More precisely, it corresponds to an automorphism of the component $\C[\xi]/\xi^m\C[\xi]$ of the Takiff Lie algebra $\mathrm{T}^m\mathfrak{f}$ (which then acts trivially on the factor $\mathfrak{f}$). Let us discuss how to construct natural automorphisms of $\C[\xi]/\epsilon^m\C[\xi]$. This algebra has generator $\xi$ and is subject to the nilpotence relation $\xi^m=0$. Let us consider the change of variable $\xi \mapsto \widetilde{\xi}=\sum_{p=0}^{m-1} \vpi_p\,\xi^p$. One can extend it to a morphism on the whole algebra $\C[\xi]/\xi^m\C[\xi]$ if the new variable $\widetilde{\xi}$ is also nilpotent of order $m$. One easily sees that this is the case if the coefficient $\vpi_0$ is zero (if not, the power $\widetilde{\xi}\,^m$ would contain a non-zero constant term $\vpi_0^m$) and the coefficient $\vpi_1$ is non-zero (if not, there would exist a smaller power $\widetilde{\xi}\,^k$ which vanishes). One also verifies that in this case, this morphism is bijective and thus an automorphism.

The automorphism we constructed sends $\xi$ to $\sum_{q=1}^{m-1} \vpi_q\,\xi^q$. Thus, it sends $\xi^p$ to the element
\begin{equation*}
\left(\sum_{q=1}^{m-1} \vpi_q\,\xi^q \right)^p = \sum_{q=1}^{m-1} \beta_{p,q}\,\xi^q, \;\;\;\;\;\;\; \text{ with } \;\;\;\;\;\;\; \beta_{p,q} = \sum_{\substack{ r_1,\cdots,r_p=1 \\ r_1+\cdots+r_p=q }}^{m-1} \vpi_{r_1} \cdots \vpi_{r_p}.
\end{equation*}
A little bit of algebra shows that the coefficients $\beta_{p,q}$ are uniquely characterised by the relation
\begin{equation}\label{Eq:RelBeta}
\sum_{s=0}^r \beta_{p,s}\beta_{q,r-s} = \beta_{p+q,r},
\end{equation}
for all $p,q,r\in \lbrace 1,\cdots,m-1\rbrace$. Let us come back to Takiff currents. In the abstract construction of affine Gaudin models in~\bg, the Takiff current $\J\null p(x)$ and levels $\ls\null p$ of mode $p\in\lbrace 0,\cdots,m-1\rbrace$ correspond to elements of the Takiff Lie algebra with $\xi^p$ in the tensor  factor $\C[\xi]/\xi^m\C[\xi]$. We thus translate the general discussion above to the following Lemma on Takiff currents.

\begin{lemma}\label{Lem:AutoTakiff}
Let $\J\null p(x)$ be Takiff currents with multiplicity $m$ and levels $\ls\null p$. We define
\begin{equation*}
\lst\null 0 = \ls\null 0 \;\;\;\;\;\;\;\; \text{ and } \;\;\;\;\;\;\;\; \lst\null p = \sum_{q=1}^{m-1} \beta_{p,q} \,\ls\null q.
\end{equation*}
If the coefficients $\beta_{p,q}$ satisfy relation \eqref{Eq:RelBeta}, then the currents
\begin{equation*}
\Jtt\null 0(x) = \J\null 0(x)\;\;\;\;\;\;\;\; \text{ and } \;\;\;\;\;\;\;\; \Jtt\null p(x) = \sum_{q=1}^{m-1} \beta_{p,q} \,\J\null q(x).
\end{equation*}
are Takiff currents of levels $\lst\null p$.
\end{lemma}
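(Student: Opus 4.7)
The plan is to reduce the lemma to a direct bilinear computation of the Poisson bracket that converts relation \eqref{Eq:RelBeta} into exactly the Takiff defining bracket \eqref{Eq:TakiffReal} for the new currents. As a preliminary bookkeeping step, I would extend the coefficients $\beta_{p,q}$ to the index $0$ by setting $\beta_{0,r}=\beta_{r,0}=\delta_{0,r}$, which is consistent with $\Jtt\null 0(x)=\J\null 0(x)$ and $\lst\null 0=\ls\null 0$; one checks that relation \eqref{Eq:RelBeta} extends trivially to these boundary cases. With this convention, the single formula $\Jtt\null p(x)=\sum_{q=0}^{m-1}\beta_{p,q}\,\J\null q(x)$ (and similarly for $\lst\null p$) holds uniformly for $p\in\lbrace 0,\dots,m-1\rbrace$.

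Next, by bilinearity of the Poisson bracket,
\begin{equation*}
\bigl\lbrace\Jtt\null p\,\ti{1}(x),\Jtt\null q\,\ti{2}(y)\bigr\rbrace
=\sum_{r,s=0}^{m-1}\beta_{p,r}\beta_{q,s}\,\bigl\lbrace\J\null r\,\ti{1}(x),\J\null s\,\ti{2}(y)\bigr\rbrace.
\end{equation*}
Injecting the Takiff bracket \eqref{Eq:TakiffReal}, only the terms with $r+s<m$ survive; reindexing by $t=r+s$ and rearranging gives
\begin{equation*}
\bigl\lbrace\Jtt\null p\,\ti{1}(x),\Jtt\null q\,\ti{2}(y)\bigr\rbrace
=\sum_{t=0}^{m-1}\!\left(\sum_{s=0}^{t}\beta_{p,t-s}\beta_{q,s}\right)\!\Bigl(\bigl[C\ti{12},\J\null t\,\ti{1}(x)\bigr]\delta_{xy}-\ls\null t\,C\ti{12}\,\delta'_{xy}\Bigr).
\end{equation*}
Assuming $p+q<m$, the hypothesis \eqref{Eq:RelBeta} identifies the inner sum as $\beta_{p+q,t}$, and refolding the sum over $t$ using the definitions of $\Jtt\null{p+q}(x)$ and $\lst\null{p+q}$ reproduces exactly the Takiff bracket with level $\lst\null{p+q}$, which is the desired conclusion.

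The remaining issue is the nilpotency case $p+q\geq m$, where the left-hand side must vanish. Here one has to use that in the situation of interest (the coefficients $\alpha_{p,q}$ of \eqref{Eq:Alpha} relevant to Proposition \ref{Prop:NewTakiff}), $\alpha_{p,q}=0$ whenever $q<p$, so that for $t<m\leq p+q$ the convolution $\sum_{s=0}^{t}\alpha_{p,t-s}\alpha_{q,s}$ is automatically empty; equivalently, one extends $\beta_{p,q}=0$ for $p\geq m$ and requires \eqref{Eq:RelBeta} to hold in the extended range, which is the natural nilpotency condition inherited from $\xi^m=0$ in $\C[\xi]/\xi^m\C[\xi]$. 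The main obstacle in applying the lemma to Proposition \ref{Prop:NewTakiff} is therefore not in the lemma itself but in verifying that the explicit coefficients \eqref{Eq:Alpha} satisfy \eqref{Eq:RelBeta}; after pulling out the common factor $\Delta^{p+q}\Omega^{t}$, this reduces to the Chu–Vandermonde-type identity
\begin{equation*}
\sum_{s=p}^{t-q}\binom{s-1}{p-1}\binom{t-s-1}{q-1}=\binom{t-1}{p+q-1},
\end{equation*}
valid for $p+q\leq t\leq m-1$, which is elementary. Finally, the reality conditions \eqref{Eq:RealityJc} on the $\Jtt\null p(x)$ are preserved because the coefficients $\beta_{p,q}$ are real (being built from the real numbers $a,b,c,d$ and $z_\alpha$).
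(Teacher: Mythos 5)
Your proof is correct and takes essentially the same route as the paper's: expand the bracket of the new currents by bilinearity, insert the Takiff bracket of the $\J\null p(x)$, reindex the double sum as a convolution, and invoke \eqref{Eq:RelBeta} to refold it into the Takiff bracket with levels $\lst\null p$. The only differences are bookkeeping: the paper treats the mode-$0$ case separately and disposes of $p+q\geq m$ implicitly by extending $\J\null p(x)$ and $\ls\null p$ (hence effectively $\beta_{p+q,\cdot}$) to vanish beyond mode $m-1$, whereas you make these conventions, and the resulting vanishing of the convolution when $p+q\geq m$, explicit.
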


\begin{proof}
The proof is rather direct. For simplicity, we extend the notations $\J\null p(x)$ and $\ls\null p$ to $p\geq m$ by letting them be 0. This allows to write everything with infinite sum instead of finite ones without having to worry about there domains. In particular, for $p$ and $q$ greater that $1$, we have
\begin{align*}
\left\lbrace \Jtt\null p\,\ti{1}(x), \Jtt\null q\,\ti{2}(y) \right\rbrace &= \sum_{s,t=1}^{\infty} \beta_{p,s} \beta_{q,t} \left\lbrace \J\null s\,\ti{1}(x), \J\null t\,\ti{2}(y) \right\rbrace \\
 &= \sum_{s,t=1}^{\infty} \beta_{p,s} \beta_{q,t} \left( \left[ C\ti{12}, \J\null {s+t}\,\ti{1}(x) \right] \delta_{xy} - C\ti{12}\, \ls\null{s+t}\,\delta'_{xy} \right).
\end{align*}
The terms proportional to $\J\null {r}(x)$ and $\ls\null {r}$ in this equation are obtained when $s+t=r$. Thus, we have
\begin{equation*}
\left\lbrace \Jtt\null p\,\ti{1}(x), \Jtt\null q\,\ti{2}(y) \right\rbrace = \sum_{r=1}^{\infty} \left( \sum_{s=0}^r \beta_{p,s}\beta_{q,r-s} \right) \left( \left[ C\ti{12}, \J\null {r}\,\ti{1}(x) \right] \delta_{xy} - C\ti{12}\, \ls\null{r}\,\delta'_{xy} \right). 
\end{equation*}
By assumption, the coefficients $\beta_{p,q}$ satisfy the condition \eqref{Eq:RelBeta}, hence
\begin{equation*}
\left\lbrace \Jtt\null p\,\ti{1}(x), \Jtt\null q\,\ti{2}(y) \right\rbrace = \left[ C\ti{12}, \Jtt\null {p+q}\,\ti{1}(x) \right] \delta_{xy} - C\ti{12}\, \lst\null{p+q}\,\delta'_{xy}.
\end{equation*}
This almost ends the proof of the lemma. There is left to study the Poisson bracket when $p$ and/or $q$ is equal to $1$. These follow easily (without having to use the condition \eqref{Eq:RelBeta} on the coefficients $\beta_{p,q}$).
\end{proof}

\subsection{Proof of Proposition B.1}

Let us now prove Proposition \ref{Prop:AppTakiff} (and thus Proposition \ref{Prop:NewTakiff}). By Lemma \ref{Lem:AutoTakiff}, it is enough to prove that the coefficients $\alpha_{p,q}$ introduced in Equation \eqref{Eq:Alpha} satisfy the condition \eqref{Eq:RelBeta}, hence the following lemma.

\begin{lemma}
The coefficients $\alpha_{p,q}$ defined in Equation \eqref{Eq:Alpha} satisfy
\begin{equation*}
\sum_{s=0}^r \alpha_{p,s}\alpha_{q,r-s} = \alpha_{p+q,r}, \;\;\;\; \forall\, p,q,r\in \lbrace 1,\cdots,m-1\rbrace.
\end{equation*}
\end{lemma}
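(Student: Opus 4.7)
My plan is to reduce the identity to a standard convolution formula for binomial coefficients, after factoring out the $\Delta$ and $\Omega$ factors that are common to every nonzero term.

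First, I would record the vanishing conditions coming from the definition \eqref{Eq:Alpha}: $\alpha_{p,s}=0$ unless $s\geq p$, and similarly $\alpha_{q,r-s}=0$ unless $r-s\geq q$. In particular the boundary terms $s=0$ and $s=r$ in the convolution sum never contribute (since $p,q\geq 1$), and the effective range of summation is $p\leq s\leq r-q$. If $r<p+q$ this range is empty and simultaneously $\alpha_{p+q,r}=0$ (including the case $p+q\geq m$, which forces $r<p+q$ in the allowed range $r\leq m-1$), so the identity holds trivially. I would treat this case at the outset and then assume $r\geq p+q$.

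Next, on the surviving range I substitute the explicit form of $\alpha_{p,q}$:
\[
\alpha_{p,s}\,\alpha_{q,r-s} \;=\; \binom{s-1}{p-1}\binom{r-s-1}{q-1}\,\Delta^{p+q}\,\Omega^{r}.
\]
The overall factor $\Delta^{p+q}\Omega^{r}$ is exactly what appears in $\alpha_{p+q,r}=\binom{r-1}{p+q-1}\Delta^{p+q}\Omega^{r}$, so the identity is equivalent to the purely combinatorial statement
\[
\sum_{s=p}^{r-q} \binom{s-1}{p-1}\binom{r-s-1}{q-1} \;=\; \binom{r-1}{p+q-1}.
\]

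To establish this binomial identity, the cleanest route is a generating function argument: using the standard formal power series
\[
\sum_{s\geq p}\binom{s-1}{p-1}x^{s} \;=\; \frac{x^{p}}{(1-x)^{p}},
\]
the Cauchy product of the two series on the left yields $x^{p+q}/(1-x)^{p+q}=\sum_{r\geq p+q}\binom{r-1}{p+q-1}x^{r}$, whose coefficient of $x^{r}$ is exactly the right-hand side. (Alternatively, one can invoke the Vandermonde--Chu identity after the shift $s\to s-p$, $r-s\to r-s-q$.) Substituting back gives the desired equality between $\sum_{s=0}^{r}\alpha_{p,s}\alpha_{q,r-s}$ and $\alpha_{p+q,r}$.

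I do not anticipate any serious obstacle here: the only care needed is in handling the cases $p+q>r$ (empty sum) and $p+q\geq m$ (where $\alpha_{p+q,r}$ is, by convention, zero on the allowed index range), both of which are visibly consistent. Combined with Lemma \ref{Lem:AutoTakiff}, this completes the proof of Proposition \ref{Prop:AppTakiff} and hence of Proposition \ref{Prop:NewTakiff}.
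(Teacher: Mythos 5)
Your proposal is correct and follows essentially the same route as the paper's proof: restrict the convolution to $p\le s\le r-q$ using the vanishing of $\alpha_{p,s}$ for $s<p$ (the case $r<p+q$ being trivially zero on both sides), factor out $\Delta^{p+q}\Omega^{r}$, and conclude with the identity $\sum_{s=p}^{r-q}\binom{s-1}{p-1}\binom{r-s-1}{q-1}=\binom{r-1}{p+q-1}$. The paper invokes the same Vandermonde-type identity and likewise attributes it to the generating binomial series, so the only difference is that you write out the generating-function computation and the edge case $p+q\ge m$ explicitly.
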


\begin{proof}
Recall that $\alpha_{p,q}=0$ for $q<p$. Thus the sum $\sum_{s=0}^r \alpha_{p,s}\alpha_{q,r-s}$ is empty in the case where $r<p+q$, and therefore is equal to $\alpha_{p+q,r}=0$. Let us now consider the case $r\geq p+q$:
\begin{align*}
\sum_{s=0}^r \alpha_{p,s}\alpha_{q,r-s} &= \sum_{s=p}^{r-q}  \alpha_{p,s}\alpha_{q,r-s} = \sum_{s=p}^{r-q}  {s-1 \choose p-1 } \Delta^p \, \Omega^s {r-s-1 \choose q-1 } \Delta^q \, \Omega^{r-s} \\
&= \Delta^{p+q} \Omega^{r} \sum_{s=p}^{r-q}  {s-1 \choose p-1 } {r-s-1 \choose q-1 }.
\end{align*}
The result then follows from the following standard identity on binomial coefficients (this is a variant of the so-called Vandermonde identity, which can be proved by the generating binomial series):
\begin{equation*}
\sum_{s=p}^{r-q}  {s-1 \choose p-1 } {r-s-1 \choose q-1 } = {r-1 \choose p+q-1}. \qedhere
\end{equation*}
\end{proof}

\section[Realisations in $\Ac_{G_0}$]{Takiff realisations in $\bm{\Ac_{G_0}}$}
\label{App:Realisations}

\subsection[Fields on $T^*G_0$]{Fields on $\bm{T^*G_0}$}
\label{App:TStarG}

\paragraph{Canonical fields.} In this appendix, we recall the construction of two Takiff realisations: the PCM+WZ one and the inhomogeneous Yang-Baxter one. They both take value in the Poisson algebra $\Ac_{G_0}$ of canonical fields on the cotangent bundle $T^*G_0$ of a real semi-simple Lie group $G_0$. Let us first recall briefly the description of these fields (we refer for example to~\nm, Subsection 3.1.1, for a more detailed review).

The algebra $\Ac_{G_0}$ contains coordinate fields on the Lie group $G_0$, which are naturally gathered in a $G_0$-valued field $g(x)$. It also contains momenta fields, which are canonically conjugate to the coordinate fields and valued in the cotangent space $T_{g(x)}^*G_0$ of $G_0$ at the point $g(x)$. Using the left translation by $g(x)^{-1}$, one can bring this cotangent space to $T^*_{\Id}G_0 = \g_0^*$, \textit{i.e.} to the dual of the Lie algebra $\g_0$ of $G_0$. As $G_0$ is semi-simple, the Killing form on $\g_0$ is non-degenerate, which allows to further identify $\g_0^*$ with $\g_0$. This allows to gather the momenta fields in a $\g_0$-valued field $X(x)$.

The canonical brackets between the coordinate and momenta fields translate to the following Poisson bracket between the fields $g(x)$ and $X(x)$, written in the standard tensorial notation (see Subsection \ref{SubSubSec:Conventions}):
\begin{subequations}\label{Eq:PBTstarG}
\begin{align}
\left\lbrace g\ti{1}(x), g\ti{2}(y) \right\rbrace & = 0, \\
\left\lbrace X\ti{1}(x), g\ti{2}(y) \right\rbrace & = g\ti{2}(x) C\ti{12} \delta_{xy},\label{Eq:PBXg} \\
\left\lbrace X\ti{1}(x), X\ti{2}(y) \right\rbrace & = \left[ C\ti{12}, X\ti{1}(x) \right] \delta_{xy}.\label{Eq:PBXX}
\end{align}
\end{subequations}

\paragraph{Currents $\bm{j}$ and $\bm{W}$.} We will need later the following $\g_0$-valued current:
\begin{equation*}
j(x) = g(x)^{-1} \p_x g(x).
\end{equation*}
Its Poisson bracket with $g(x)$ and $X(x)$ can be computed easily from Equation \eqref{Eq:PBTstarG} and read
\begin{subequations}\label{Eq:PBj}
\begin{align}
\left\lbrace g\ti{1}(x), j\ti{2}(y) \right\rbrace &= 0,\label{Eq:PBjg}\\
\left\lbrace j\ti{1}(x), j\ti{2}(y) \right\rbrace &= 0, \\
\left\lbrace X\ti{1}(x), j\ti{2}(y) \right\rbrace &= \bigl[ C\ti{12}, j\ti{1}(x) \bigr] \delta_{xy} - C\ti{12} \delta'_{xy}.
\end{align}
\end{subequations}

We will also need another $\g_0$-valued current, denoted as $W(x)$ in~\nm, and constructed from the canonical 3-form on $G_0$ and the derivatives of the coordinate field $g(x)$. We will not enter into details about the construction of this current and refer to the Subsection 3.1.2 of~\nm. In particular, this field is related to the Wess-Zumino term of $g$ by the following identity:
\begin{equation*}
\W g = \iint_{\mathbb{W}} \dd t \, \dd x \; \kappa\bigl( W, g^{-1} \p_t g \bigr).
\end{equation*}
One can compute the Poisson bracket of the current $W(x)$ with other fields in $\Ac_{G_0}$. In particular, one has
\begin{equation}\label{Eq:PbW1}
\bigl\lbrace g\ti{1}(x), W\ti{2}(y) \bigr\rbrace = 0, \;\;\;\;\;\;\; \bigl\lbrace j\ti{1}(x), W\ti{2}(y) \bigr\rbrace = 0
\end{equation}
and
\begin{equation}\label{Eq:PbW2}
\bigl\lbrace X\ti{1}(x), W\ti{2}(y) \bigr\rbrace + \bigl\lbrace W\ti{1}(x), X\ti{2}(y) \bigr\rbrace = \bigl[ C\ti{12}, W\ti{1}(x)-j\ti{1}(x) \bigr] \delta_{xy}.
\end{equation}

\subsection{The PCM+WZ realisation}
\label{App:PCM+WZ}

Let us now describe the PCM+WZ realisation. It is a Takiff realisation with one site of multiplicity 2, valued in the algebra $\Ac_{G_0}$. The corresponding Takiff currents take the form
\begin{equation*}
\J\null 0(x) = X(x) + \frac{\ell_0}{2} W(x) + \frac{\ell_0}{2} j(x) \;\;\;\;\;\; \text{ and } \;\;\;\;\;\;\; \J\null 1(x) = \ell_1 \, j(x), 
\end{equation*}
where $\ell_0$ and $\ell_1$ denote the levels of the Takiff realisation. One checks from the Poisson bracket of the previous subsection that
\begin{subequations}
\begin{eqnarray}
\bigl\lbrace \J\null0\null\ti1(x),\J\null0\null\ti2(y) \bigr\rbrace &=& \bigl[ C\ti{12}, \J\null0\null\ti1(x) \bigr] \delta_{xy} - \ell_0 \, C\ti{12} \delta'_{xy}, \\
\bigl\lbrace \J\null0\null\ti1(x),\J\null1\null\ti2(y) \bigr\rbrace &=& \bigl[ C\ti{12}, \J\null1\null\ti1(x) \bigr] \delta_{xy} - \ell_1 \, C\ti{12} \delta'_{xy}, \\
\bigl\lbrace \J\null1\null\ti1(x),\J\null1\null\ti2(y) \bigr\rbrace &=& 0.
\end{eqnarray}
\end{subequations}
Thus, the currents $\J\null p(x)$ are indeed Takiff currents of multiplicity 2 with levels $\ls\null p$.

Let us note that the unique site of the PCM+WZ realisation is real. As expected, the corresponding levels $\ls\null p$ are real numbers and the Takiff currents $\J\null p(x)$ are valued in the real form $\g_0$. Thus, the construction above satisfy the appropriate reality conditions of a Takiff realisation.\\

Because of the presence of the field $W(x)$ in the Takiff current $\J\null 0(x)$ of the PCM+WZ realisation, the action of an affine Gaudin model which possesses such a realisation will involve a WZ term $\W g$, proportional to the level $\ell_0$ (see for example Section 3 in~\nm). When this level is zero, the WZ term vanishes from the action: in this case, we speak of a PCM realisation, in opposition to the general PCM+WZ realisation. The Takiff currents of the PCM realisation are simply
\begin{equation*}
\J\null 0(x) = X(x) \;\;\;\;\;\; \text{ and } \;\;\;\;\;\;\; \J\null 1(x) = \ell_1 \, j(x).
\end{equation*}

\subsection{The inhomogeneous Yang-Baxter realisation}
\label{App:iYB}

Let us now describe the inhomogeneous Yang-Baxter realisation. It is also a realisation in the algebra $\Ac_{G_0}$ of canonical fields on $T^*G_0$. Contrarily to the PCM+WZ realisation, it has two sites $\pms$ of multiplicity 1, whose levels we shall parametrise as
\begin{equation*}
\ls\pms 0 = \pm \frac{1}{2\cc\gamma},
\end{equation*}
with $\gamma$ a real number and $\cc$ equal either to $1$ or $i$. In the case $\cc=1$, that we shall refer as the split case, the two sites $\pms$ are real, whereas in the non-split case $\cc=i$, the two sites are complex conjugate one to another.

The inhomogeneous Yang-Baxter deformation is constructed from a choice of a $R$-operator on the Lie algebra $\g_0$. This is a linear map $R:\g_0\rightarrow\g_0$, skew-symmetric with respect to the Killing form of $\g_0$, satisfying the modified Classical Yang-Baxter Equation (mCYBE)
\begin{equation}\label{Eq:mCYBE}
[RX,RY] - R\bigl( [RX,Y] + [X,RY] \bigr) = -\cc^2 [X,Y], \;\;\;\;\;\; \forall \, X,Y\in\g_0.
\end{equation}
Let us also define the map
\begin{equation*}
R_g = \Ad_g^{-1} \circ R \circ \Ad_g,
\end{equation*}
with $g$ the $G_0$-valued field introduced in Subsection \ref{App:TStarG}. The Kac-Moody currents defining the inhomogeneous Yang-Baxter realisation are then defined as~\cite{Delduc:2013fga}
\begin{equation*}
\J \pms 0(x) = \frac{1}{2\cc} \left( \cc X(x) \mp R_gX(x) \pm \frac{1}{\gamma} j(x) \right).
\end{equation*}
One checks that these are indeed commuting Kac-Moody currents, \textit{i.e.} that they satisfy the expected Poisson brackets
\begin{equation*}
\bigl\lbrace \J{(\pm)}0\null\ti{1}(x), \J{(\pm)}0\null\ti{2}(y) \bigr\rbrace = \bigl[ C\ti{12}, \J{(\pm)}0\null\ti{1}(x) \bigr] \delta_{xy} - \ls\pms 0\, C\ti{12} \delta'_{xy} \;\;\;\;\;\; \text{and} \;\;\;\;\;\; \bigl\lbrace \J{(\pm)}0\null\ti{1}(x), \J{(\mp)}0\null\ti{2}(y) \bigr\rbrace = 0.
\end{equation*}

Note that in the split case $\cc=1$, the currents $\J\pms 0(x)$ are valued in the real form $\g_0$, consistently with the facts that the sites $\pms$ are both real. In the non-split case $\cc=i$, these currents are valued in the complexified Lie algebra $\g$ and are conjugate one to another by the involutive anti-automorphism $\tau$ defining the real form $\g_0$, as expected.

\section{Maillet brackets from Dirac brackets}
\label{App:Dirac}

\subsection{Undeformed case}
\label{App:DiracUndeformed}

This appendix is a technical addition to Subsection \ref{SubSubSec:IntegrabilityGauging}. We shall then use the definitions and notations of this subsection without recalling them. The goal of the appendix is to detail the computation of the Poisson bracket of the gauge-fixed Lax matrix $\Lct(\zt,x)$ using the Dirac bracket.\\

As $\Lct(\zt,x)$ is the Lax matrix of the model $\mathbb{M}^{\vpt,\pit}_{\eb}$, its bracket in terms of the Poisson structure $\lbrace\cdot,\cdot\rbrace_{\Act}$ of the algebra $\Act$ takes the form of a Maillet bracket with twist function $\vpt(\zt)$, \textit{i.e.} with $\Rc$-matrix
\begin{equation*}
\Rct\ti{12}(\zt,\wt) = \frac{C\ti{12}}{\wt-\zt} \vpt(\wt)^{-1}.
\end{equation*}
Recall that the Dirac bracket is defined in terms of the bracket $\lbrace\cdot,\cdot\rbrace_{\Act}$ by Equation \eqref{Eq:Dirac}. Thus, the Dirac bracket of the Lax matrix $\Lct(\zt,x)$ with itself takes the following form:
{\small
\begin{align}\label{Eq:DiracL}
\bigl\lbrace \Lct\ti{1}(\zt,x), \Lct\ti{2}(\wt,y) \bigr\rbrace^*
&= \bigl[ \Rct\ti{12}(\zt,\wt), \Lct\ti{1}(\zt,x) \bigr] \delta_{xy} - \int_\D \dd \xi \; \kappa\ti{3}\Bigl( \bigl\lbrace \Cct\ti{3}(\xi), \Lct\ti{1}(\zt,x) \bigr\rbrace_{\Act}, g\ti{3}(\xi)^{-1}\bigl\lbrace g\ti{3}(\xi), \Lct\ti{2}(\wt,y) \bigr\rbrace_{\Act} \Bigr) \notag \\
& \hspace{10pt} - \bigl[ \Rct\ti{21}(\wt,\zt), \Lc\ti{2}(\wt,x) \bigr] \delta_{xy} + \int_\D \dd \xi \; \kappa\ti{3}\Bigl( \bigl\lbrace \Cct\ti{3}(\xi), \Lct\ti{2}(\wt,y) \bigr\rbrace_{\Act}, g\ti{3}(\xi)^{-1}\bigl\lbrace g\ti{3}(\xi), \Lct\ti{1}(\zt,x) \bigr\rbrace_{\Act} \Bigr) \notag \\
& \hspace{20pt} - \left( \Rct\ti{12}(\zt,\wt) + \Rct\ti{21}(\wt,\zt) \right) \delta'_{xy}.
\end{align}}

The Poisson bracket of the field $g(\xi)$ with the Lax matrix $\Lct(\zt,x)$ is directly related to its bracket with the Gaudin Lax matrix $\Gt(\zt,x)$, as $\Lct(\zt,x)=\Gt(\zt,x)/\vpt(\zt)$. Recall from Equation \eqref{Eq:GammaTilde} that the latter is constructed from the Takiff currents $\Jtt\alpha p(x)$, with $\alpha\in\Sit=\Si \sqcup \lbrace \is \rbrace$. As the field $g(\xi)$ belongs to the component $\Ac_{G_0}$ of $\Act=\Ac\otimes\Ac_{G_0}$, it Poisson commutes with the Takiff currents $\Jtt\alpha p(x)$, for $\alpha\in\Si$. Using the expressions \eqref{Eq:NewCurrentInf} of the currents $\Jtt\is p(x)$, one can then compute explicitly the Poisson bracket of $g(\xi)$ with $\Lct(\zt,x)$. More precisely, using the brackets \eqref{Eq:PBTstarG}, \eqref{Eq:PBjg} and \eqref{Eq:PbW1}, one finds
\begin{equation*}
g\ti{3}(\xi)^{-1}\bigl\lbrace g\ti{3}(\xi), \Lct\ti{1}(\zt,x) \bigr\rbrace_{\Act} = - \frac{\vpt(\zt)^{-1}}{\zt-\zi}  C\ti{13} \, \delta_{x\xi}.
\end{equation*}
Reinserting this equation (and its equivalent with $\Lct\ti{1}(\zt,x)$ replaced by $\Lct\ti{2}(\wt,y)$) in the Dirac bracket above, we get
\begin{align*}
\bigl\lbrace \Lct\ti{1}(\zt,x), \Lct\ti{2}(\wt,y) \bigr\rbrace^*
&= \bigl[ \Rct\ti{12}(\zt,\wt), \Lct\ti{1}(\zt,x) \bigr] \delta_{xy} - \frac{\vpt(\wt)^{-1}}{\wt-\zi}  \bigl\lbrace \Lct\ti{1}(\zt,x), \Cct\ti{2}(y) \bigr\rbrace_{\Act} \\
& \hspace{15pt} - \bigl[ \Rct\ti{21}(\wt,\zt), \Lc\ti{2}(\wt,x) \bigr] \delta_{xy} - \frac{\vpt(\zt)^{-1}}{\zt-\zi} \bigl\lbrace \Cct\ti{1}(x), \Lct\ti{2}(\wt,y) \bigr\rbrace_{\Act} \\
& \hspace{15pt} - \left( \Rct\ti{12}(\zt,\wt) + \Rct\ti{21}(\wt,\zt) \right) \delta'_{xy}. \notag
\end{align*}
The Poisson bracket of the constraint $\Cct(y)$ with the Lax matrix $\Lct(\zt,x)$ is computed by considering Equation \eqref{Eq:PbCL} for the model $\mathbb{M}^{\vpt,\pit}_{\eb}$. One then finds
\begin{equation}\label{Eq:PBLC}
\bigl\lbrace \Lct\ti{1}(\zt,x), \Cct\ti{2}(y) \bigr\rbrace_{\Act} = \bigl[ C\ti{12}, \Lct\ti{1}(\zt,x) \bigr] \delta_{xy} - C\ti{12} \, \delta'_{xy}.
\end{equation}
Inserting this equation in the computation above, one finds that the Dirac bracket of the Lax matrix $\Lct(\zt,x)$ with itself takes the form of a Maillet bracket with $\Rc$-matrix
\begin{equation*}
\widetilde{\Rc}^{\text{GF}}\ti{12}(\zt,\wt) = \widetilde{\Rc}\ti{12}(\zt,\wt) - \frac{C\ti{12}}{\wt-\zi} \vpt(\wt)^{-1} = \left( \frac{C\ti{12}}{\wt-\zt} - \frac{C\ti{12}}{\wt-\zi} \right) \vpt(\wt)^{-1}.
\end{equation*}
We recognize here the $\Rc$-matrix \eqref{Eq:RMatGF1} found in Subsection \ref{SubSubSec:IntegrabilityGauging} by another method.

\subsection{Deformed case}
\label{App:DiracDeformed}

In this appendix, we detail a computation needed for Subsection \ref{SubSubSec:IntegrabilityGaugingDeformed}. We shall then use the definitions and notations of this subsection without recalling them. Our goal here is to compute the Poisson bracket of the gauge-fixed Lax matrix $\Lct_\eta(\zt,x)$ of the Yang-Baxter deformed model $\mathbb{M}^{\vpt_\eta,\pit_\eta}_{\eb}$, using the Dirac bracket. This is the equivalent for the deformed model of the computation detailed in the previous subsection of this appendix, which concerned the undeformed model $\mathbb{M}^{\vpt,\pit}_{\eb}$. We shall follow here the exact same steps as the ones of this previous subsection.\\

As the gauge-fixing conditions are the same in the undeformed and deformed models, the Dirac bracket in the present case is defined by the same equation \eqref{Eq:Dirac}. Similarly to the undeformed case (see Equation \eqref{Eq:DiracL}), and recalling that in terms of the bracket $\lbrace\cdot,\cdot\rbrace_{\Act}$ the Lax matrix $\Lct_\eta(\zt,x)$ satisfies a Maillet bracket with $\Rc$-matrix $\Rct^\eta\ti{12}(\zt,\wt)$, the Dirac bracket of the Lax matrix $\Lct_\eta(\zt,x)$ can then be expressed as
{\small
\begin{align}\label{Eq:DiracLDeformed}
\bigl\lbrace \Lct_\eta\null\ti{1}(\zt,x), \Lct_\eta\null\ti{2}(\wt,y) \bigr\rbrace^*
&= \bigl[ \Rct^\eta\ti{12}(\zt,\wt), \Lct_\eta\null\ti{1}(\zt,x) \bigr] \delta_{xy} - \bigl[ \Rct^\eta\ti{21}(\wt,\zt), \Lc_\eta\null\ti{2}(\wt,x) \bigr] \delta_{xy} - \left( \Rct^\eta\ti{12}(\zt,\wt) + \Rct^\eta\ti{21}(\wt,\zt) \right) \delta'_{xy} \notag \\
& \hspace{20pt} - \int_\D \dd \xi \; \kappa\ti{3}\Bigl( \bigl\lbrace \Cct\ti{3}(\xi), \Lct_\eta\null\ti{1}(\zt,x) \bigr\rbrace_{\Act}, g\ti{3}(\xi)^{-1}\bigl\lbrace g\ti{3}(\xi), \Lct_\eta\null\ti{2}(\wt,y) \bigr\rbrace_{\Act} \Bigr) \notag \\
& \hspace{20pt} + \int_\D \dd \xi \; \kappa\ti{3}\Bigl( \bigl\lbrace \Cct\ti{3}(\xi), \Lct_\eta\null\ti{2}(\wt,y) \bigr\rbrace_{\Act}, g\ti{3}(\xi)^{-1}\bigl\lbrace g\ti{3}(\xi), \Lct_\eta\null\ti{1}(\zt,x) \bigr\rbrace_{\Act} \Bigr) 
\end{align}}

Recall that $\Lct_\eta(\zt,x)=\Gt_\eta(\zt,x)/\vpt_\eta(\zt)$ and that the Gaudin Lax matrix $\Gt_\eta(\zt,x)$ can be expressed as \eqref{Eq:DeformGt}. As the field $g(\xi)$ as a trivial Poisson bracket with $\Gt_\Si(\zt,x)$, we then get
\begin{equation*}
g\ti{3}(\xi)^{-1}\bigl\lbrace g\ti{3}(\xi), \Lct_\eta\null\ti{1}(\zt,x) \bigr\rbrace_{\Act} = \vpt_\eta(\zt)^{-1} \, g\ti{3}(\xi)^{-1}\left\lbrace g\ti{3}(\xi), \frac{\Jtt{(+)}0\null\ti{1}(x)}{\zt-\zt_+} +  \frac{\Jtt{(-)}0\null\ti{1}(x)}{\zt-\zt_-} \right\rbrace_{\Act}.
\end{equation*}
From the expression \eqref{Eq:CurrentYBDef} of the currents $\Jtt\pms 0(x)$ and the Poisson brackets \eqref{Eq:PBTstarG} and \eqref{Eq:PBjg}, we get
\begin{equation*}
g\ti{3}(\xi)^{-1}\bigl\lbrace g\ti{3}(\xi),\Jtt\pms 0\null\ti{1}(x) \bigr\rbrace_{\Act} = -\frac{1}{2\cc} \bigl( \cc \,\Id \mp R_g \bigr)\ti{1} C\ti{13}\,\delta_{x\xi}.
\end{equation*}
To simplify this expression, let us use the gauge-fixing condition $g\gf\Id$, which implies $R_g\gf R$. Reinserting this bracket in the one above, we get
\begin{equation*}
g\ti{3}(\xi)^{-1}\bigl\lbrace g\ti{3}(\xi), \Lct_\eta\null\ti{1}(\zt,x) \bigr\rbrace_{\Act} \gf \frac{\vpt_\eta(\zt)^{-1}}{2\cc} \left( \frac{R-\cc \,\Id }{\zt-\zt_+} -  \frac{R+\cc \,\Id}{\zt-\zt_-} \right)\ti{1}  C\ti{13}\, \delta_{x\xi}.
\end{equation*}
The Poisson bracket \eqref{Eq:DiracLDeformed} then becomes
{\small
\begin{align}
\bigl\lbrace \Lct_\eta\null\ti{1}(\zt,x), \Lct_\eta\null\ti{2}(\wt,y) \bigr\rbrace^*
&\gf \bigl[ \Rct^\eta\ti{12}(\zt,\wt), \Lct_\eta\null\ti{1}(\zt,x) \bigr] \delta_{xy} - \bigl[ \Rct^\eta\ti{21}(\wt,\zt), \Lc_\eta\null\ti{2}(\wt,x) \bigr] \delta_{xy} - \left( \Rct^\eta\ti{12}(\zt,\wt) + \Rct^\eta\ti{21}(\wt,\zt) \right) \delta'_{xy} \notag \\
& \hspace{20pt} + \frac{\vpt_\eta(\wt)^{-1}}{2\cc} \left( \frac{R-\cc \,\Id }{\wt-\zt_+} -  \frac{R+\cc \,\Id}{\wt-\zt_-} \right)\ti{2}\bigl\lbrace \Lct_\eta\null\ti{1}(\zt,x), \Cct\ti{2}(y) \bigr\rbrace_{\Act}  \notag \\
& \hspace{20pt} + \frac{\vpt_\eta(\zt)^{-1}}{2\cc} \left( \frac{R-\cc \,\Id }{\zt-\zt_+} -  \frac{R+\cc \,\Id}{\zt-\zt_-} \right)\ti{1} \bigl\lbrace \Cct\ti{1}(x), \Lct_\eta\null\ti{2}(\wt,y) \bigr\rbrace_{\Act}
\end{align}}

In the undeformed case, we computed the Poisson bracket of the constraint $\Cct$ with the Lax matrix $\Lct$ applying the general Equation \eqref{Eq:PbCL} to the model $\mathbb{M}^{\vpt,\pit}_{\eb}$ (yielding the bracket \eqref{Eq:PBLC}). This equation also applies to the model $\mathbb{M}^{\vpt_\eta,\pit_\eta}_{\eb}$ and we then get
\begin{equation*}
\bigl\lbrace \Lct_\eta\null\ti{1}(\zt,x), \Cct\ti{2}(y) \bigr\rbrace_{\Act} = \bigl[ C\ti{12}, \Lct_\eta\null\ti{1}(\zt,x) \bigr] \delta_{xy} - C\ti{12} \, \delta'_{xy}.
\end{equation*}
Reinserting this in the Poisson bracket above, we obtain a Maillet bracket
\begin{align*}
\bigl\lbrace \Lct_\eta\null\ti{1}(\zt,x), \Lct_\eta\null\ti{2}(\wt,y) \bigr\rbrace^*
&\gf \bigl[ \Rct^{\eta,GF}\ti{12}(\zt,\wt), \Lct_\eta\null\ti{1}(\zt,x) \bigr] \delta_{xy} - \bigl[ \Rct^{\eta,GF}\ti{21}(\wt,\zt), \Lc_\eta\null\ti{2}(\wt,x) \bigr] \delta_{xy} \\
& \hspace{40pt} - \left( \Rct^{\eta,GF}\ti{12}(\zt,\wt) + \Rct^{\eta,GF}\ti{21}(\wt,\zt) \right) \delta'_{xy},
\end{align*}
with $\Rc$-matrix
\begin{equation*}
\Rct^{\eta,GF}\ti{12}(\zt,\wt) = \Rct^{\eta}\ti{12}(\zt,\wt) + \frac{\vpt_\eta(\wt)^{-1}}{2\cc} \left( \frac{R-\cc \,\Id }{\wt-\zt_+} -  \frac{R+\cc \,\Id}{\wt-\zt_-} \right)\ti{2} C\ti{12}.
\end{equation*}
Let us introduce the matrix
\begin{equation*}
R\ti{12} = R\ti{1}\,C\ti{12} = -R\ti{2}\,C\ti{12} = -R\ti{21},
\end{equation*}
where the equality in the middle is a reformulation of the skew-symmetry of $R$ with respect to the Killing form. Using the expression \eqref{Eq:RMatDeformed} of $\Rct^{\eta}\ti{12}(\zt,\wt)$, we can finally write the $\Rc$-matrix as
\begin{equation}\label{Eq:RMatGFDeformed}
\Rct^{\eta,GF}\ti{12}(\zt,\wt) = \Rct^{\eta,0}\ti{12}(\zt,\wt) \vpt_\eta(\wt)^{-1},
\end{equation}
where
\begin{equation*}
\Rct^{\eta,0}\ti{12}(\zt,\wt) = \frac{C\ti{12}}{\wt-\zt} - \frac{1}{2} \left( \frac{1}{\wt-\zt_+} + \frac{1}{\wt-\zt_-} \right) C\ti{12} - \frac{1}{2\cc} \left( \frac{1}{\wt-\zt_+} - \frac{1}{\wt-\zt_-} \right) R\ti{12}.
\end{equation*}

\section[Coefficients in the action of the coupled $\s$-model as residues]{Coefficients in the action of the coupled $\bm\s$-model as residues}
\label{App:Identities}

\paragraph{Non-gauged model.} Let us consider the non-gauged integrable coupled $\s$-model, with action \eqref{Eq:ActionNonGauged}. By definition, the coefficients $\kc r$ of the Wess-Zumino terms in this action can be expressed as the following residues (see Equation \eqref{Eq:DSETwistSigma}):
\begin{equation*}
\kc r = - \frac{1}{2} \res_{z=z_r} \vp(z)\dd z.
\end{equation*}
Let us now introduce
\begin{equation*}
\Theta(z,w) = \frac{\ell^\infty}{2} \frac{\vp_+(z)\vp_-(w)}{z-w}.
\end{equation*}
For $r,s\in\lbrace 1,\cdots,N \rbrace$, the coefficient $\rho_{rs}$ appearing in the action \eqref{Eq:ActionNonGauged} and defined in Equation \eqref{Eq:Rho} is related to the following double residues:
\begin{equation*}
\res_{z=z_r} \left( \res_{w=z_s} \Theta(z,w)\, \dd z \, \dd w \right) = \rho_{rs} + \frac{\kc r}{2}\delta_{rs} \;\;\;\;\;\;\; \text{ and } \;\;\;\;\;\;\; \res_{w=z_s} \left(  \res_{z=z_r} \Theta(z,w)\, \dd z \, \dd w \right) = \rho_{rs} - \frac{\kc r}{2}\delta_{rs}.
\end{equation*}
Let us precise what we mean by these double residues, focusing for example on the left-hand side of the first equation: this is obtained by taking first the residue at $w=z_s$, seeing $z$ as an auxiliary parameter, and then by taking the residue at $z=z_r$ of the resulting 1-form (see also below). Note that due to the presence of the term $1/(z-w)$ in the definition of $\Theta$, the order in which we take the residues matters (if we take the residues at the same point for $z$ and $w$). Let us also introduce the following 1-forms, defined by taking a residue only in one of the parameters of $\Theta$:
{\small\begin{equation*}
\Theta_{+,r}(z) = \res_{w=z_r} \Theta(z,w)\,\dd w = \frac{\ell^\infty}{2} \frac{\vpp r(z) \vpm r(z_r)}{(z-z_r)^2}, \;\;\;\;\;\;\; \Theta_{-,r}(w) = \res_{z=z_r} \Theta(z,w)\,\dd z = -\frac{\ell^\infty}{2} \frac{\vpm r(w) \vpp r(z_r)}{(w-z_r)^2}.
\end{equation*}}
Then one has
\begin{equation*}
\res_{z=z_s} \Theta_{+,r}(z)\,\dd z = \rho_{sr} + \frac{\kc r}{2}\delta_{rs}  \;\;\;\;\;\;\; \text{ and } \;\;\;\;\;\;\; \res_{w=z_s} \Theta_{-,r}(w)\,\dd w = \rho_{rs} - \frac{\kc r}{2}\delta_{rs}.
\end{equation*}

\paragraph{Gauged model.} The results of the previous paragraph generalise straightforwardly to the case of the gauged model with action \eqref{Eq:ActionGauged}. One has
\begin{equation*}
\kc r = -\frac{1}{2} \res_{\zt=\zt_r} \vpt(\zt)\dd\zt
\end{equation*}
and
\begin{equation}\label{Eq:RhotThetat}
\res_{\zt=\zt_r} \, \left( \res_{\wt=\zt_s} \widetilde\Theta(\zt,\wt)\,\dd\zt\,\dd\wt \right) = \rhot_{rs} + \frac{\kc r}{2}\delta_{rs} \;\;\;\;\;\;\; \text{ and } \;\;\;\;\;\;\;  \res_{\wt=\zt_s} \left( \res_{\zt=\zt_t} \widetilde\Theta(\zt,\wt)\,\dd \zt\,\dd\wt \right) = \rhot_{rs} - \frac{\kc r}{2}\delta_{rs},
\end{equation}
for $r$ and $s$ in $\lbrace 1,\cdots,N+1 \rbrace$, where $\widetilde{\Theta}$ is defined as
\begin{equation*}
\widetilde{\Theta}(\zt,\wt) = \frac{\widetilde{\ell}^\infty}{2} \frac{\vpt_+(\zt)\vpt_-(\wt)}{\zt-\wt}.
\end{equation*}
Moreover, we have
\begin{equation}\label{Eq:RhotThetat2}
\res_{\zt=\zt_s} \widetilde\Theta_{+,r}(\zt)\dd \zt = \rhot_{sr} + \frac{\kc r}{2}\delta_{rs}  \;\;\;\;\;\;\; \text{ and } \;\;\;\;\;\;\; \res_{\wt=\zt_s} \widetilde\Theta_{-,r}(\wt) \dd\wt = \rhot_{rs} - \frac{\kc r}{2}\delta_{rs},
\end{equation}
with
{\small \begin{equation}\label{Eq:ThetaTilde}
\widetilde\Theta_{+,r}(\zt) = \res_{\wt=\zt_r} \widetilde\Theta(\zt,\wt)\, \dd\wt = \frac{\widetilde\ell^\infty}{2} \frac{\vptp r(\zt) \vptm r(\zt_r)}{(\zt-\zt_r)^2} \;\;\;\;\;\;\; \widetilde\Theta_{-,r}(\wt) = \res_{\zt=\zt_r} \widetilde\Theta(\zt,\wt)\,\dd \zt = -\frac{\widetilde\ell^\infty}{2} \frac{\vptm r(\wt) \vptp r(\zt_r)}{(\wt-\zt_r)^2}.
\end{equation}}

\paragraph{Change of spectral parameter.} Recall that under the change of spectral parameter $z \mapsto \zt$, the twist function transforms as 1-form, \textit{i.e.} $\vp(z)\,\dd z=\vpt(\zt)\,\dd\zt$. In particular, this justifies that the levels $\kc r$ appearing in front of the Wess-Zumino terms are the same in the non-gauged action \eqref{Eq:ActionNonGauged} and the gauged one \eqref{Eq:ActionGauged}, as the residues of a 1-form are invariant under a change of coordinates.

Similarly, one can show that under the change of spectral parameter $z\mapsto\zt$, the quantity $\Theta$ transforms as a bi-differential, \textit{i.e.} as the product of two 1-forms:
\begin{equation*}
\Theta(z,w)\,\dd z\,\dd w = \widetilde{\Theta}(\zt,\wt) \,\dd\zt\,\dd\wt.
\end{equation*}
In particular, using the expression in the previous paragraph of the coefficients $\rho_{rs}$ and $\rhot_{rs}$ as double residues of this quantity, we obtain
\begin{equation*}
\rho_{rs}=\rhot_{rs}, \;\;\;\;\;\; \forall \, r,s\in\lbrace 1,\cdots,N\rbrace.
\end{equation*}
This is Equation \eqref{Eq:RhoEqualRhot}, which allowed us to identify the non-gauged action \eqref{Eq:ActionNonGauged} as a gauge-fixing of the gauged one \eqref{Eq:ActionGauged}. The reasoning above provides a more geometric proof of this statement, using forms on the ``spectral'' Riemann sphere $\CP$.

Note that $\Theta_{r,\pm}$, defined as a residue of $\Theta$ in only one of its coordinates, also transforms as a 1-form, in the sense that $\Theta_{r,\pm}(z)\,\dd z$ and $\widetilde\Theta_{r,\pm}(\zt)\,\dd \zt$ are the same 1-form, expressed in the different coordinates $z$ and $\zt$.\\

In the computations above, we considered the change of spectral parameter $z\mapsto\zt$ from the non-gauged model $\mathbb{M}^{\vp,\pi}_{\eb}$ to the gauged one $\mathbb{M}^{\vpt,\pit}_{\eb}$. Recall from Subsection \ref{SubSec:ChangeSpec} that one can also consider a change of spectral parameter which relates a gauged model to another gauged model, corresponding to a M\"obius transformation which does not send any site of the initial model to infinity (see Equation \eqref{Eq:NewSitesFinite}). Let us then consider a change of spectral parameter $\zt\mapsto\widehat{z}=g(\zt)$ such that the positions $\zt_r$ of the model $\mathbb{M}^{\vpt,\pit}_{\eb}$ are sent to finite points $\widehat{z}_r=g(\zt_r)$. This transformation maps the gauged $\s$-model $\mathbb{M}^{\vpt,\pit}_{\eb}$ to another gauged $\s$-model $\mathbb{M}^{\widehat{\vp},\widehat{\pi}}_{\eb}$ whose twist function satisfies
\begin{equation*}
\vp(z)\dd z = \vpt(\zt) \dd\zt = \widehat\vp(\widehat z)\dd\widehat z.
\end{equation*}
This gauged $\s$-model $\mathbb{M}^{\widehat{\vp},\widehat{\pi}}_{\eb}$ should be equivalent to $\mathbb{M}^{\vpt,\pit}_{\eb}$ and in particular should share the same action \eqref{Eq:ActionGauged}. Thus, the coefficients $\widehat\rho_{rs}$ naturally appearing in the action of $\mathbb{M}^{\widehat{\vp},\widehat{\pi}}_{\eb}$ should coincide with the coefficients $\rhot_{rs}$. This can be verified by seeing the coefficients $\rhot_{rs}$ as residues of $\widetilde{\Theta}$ as in Equation \eqref{Eq:RhotThetat} and by introducing the corresponding quantity $\widehat\Theta$ in the model $\mathbb{M}^{\widehat{\vp},\widehat{\pi}}_{\eb}$. Indeed, one then shows that these quantities represent the same bi-differential under the change of spectral parameter $(\zt,\wt)\mapsto(\widehat z,\widehat w)$:
\begin{equation*}
\widetilde{\Theta}(\zt,\wt)\,\dd\zt\,\dd\wt = \widehat{\Theta}(\widehat z,\widehat w)\,\dd\widehat z\,\dd\widehat w.
\end{equation*}
The equality of the coefficients $\rhot_{rs}$ and $\widehat\rho_{rs}$ then follows directly from the invariance of the double residues of a bi-differential under a change of coordinate.

\paragraph{An identity.} Finally, let us use the interpretation of the coefficients $\rhot_{rs}$ and $\kc r$ as residues to derive the identity \eqref{Eq:IdentityRhot} that we used in Subsection \ref{SubSec:GaugedSigma}. From its definition \eqref{Eq:ThetaTilde}, one sees that in the complex plane, the 1-form $\widetilde{\Theta}_{\pm,r}(\zt)\,\dd\zt$ has poles only at the positions $\zt=\zt_s$, $s\in\lbrace 1,\cdots,N+1 \rbrace$. Moreover, one checks that it is regular at $\zt=\infty$ by performing the change of variables $\zt=1/u$ and using the limit
\begin{equation*}
\vptpm r\left(\frac{1}{u}\right) \xrightarrow{u\to0} 1.
\end{equation*}
As the residues of a 1-form on the Riemann sphere sum to zero, we then get
\begin{equation*}
\sum_{s=1}^{N+1} \res_{\zt=\zt_s} \widetilde{\Theta}_{+,r}(\zt)\,\dd\zt = \sum_{s=1}^{N+1} \res_{\wt=\zt_s} \widetilde{\Theta}_{-,r}(\wt)\,\dd\wt = 0.
\end{equation*}
Using Equation \eqref{Eq:RhotThetat2}, one then get the identity \eqref{Eq:IdentityRhot}.

\bibliographystyle{JHEP}

\begin{thebibliography}{10}

\bibitem{Delduc:2019bcl}
F.~Delduc, S.~Lacroix, M.~Magro and B.~Vicedo, \emph{{Assembling integrable
  $\sigma$-models as affine Gaudin models}},
  \href{https://doi.org/10.1007/JHEP06(2019)017}{\emph{JHEP} {\bfseries 06}
  (2019) 017} [\href{https://arxiv.org/abs/1903.00368}{{\ttfamily
  1903.00368}}].

\bibitem{Vicedo:2017cge}
B.~Vicedo, \emph{{On integrable field theories as dihedral affine Gaudin
  models}}, \href{https://doi.org/10.1093/imrn/rny128}{\emph{International
  Mathematics Research Notices} {\bfseries rny128} (2017) }
  [\href{https://arxiv.org/abs/1701.04856}{{\ttfamily 1701.04856}}].

\bibitem{Delduc:2018hty}
F.~Delduc, S.~Lacroix, M.~Magro and B.~Vicedo, \emph{{Integrable coupled
  sigma-models}},
  \href{https://doi.org/10.1103/PhysRevLett.122.041601}{\emph{Phys. Rev. Lett.}
  {\bfseries 122} (2019) 041601}
  [\href{https://arxiv.org/abs/1811.12316}{{\ttfamily 1811.12316}}].

\bibitem{Klimcik:2002zj}
C.~Klim\v{c}\'{i}k, \emph{{Yang-Baxter sigma models and dS/AdS T duality}},
  \href{https://doi.org/10.1088/1126-6708/2002/12/051}{\emph{JHEP} {\bfseries
  0212} (2002) 051} [\href{https://arxiv.org/abs/hep-th/0210095}{{\ttfamily
  hep-th/0210095}}].

\bibitem{Klimcik:2008eq}
C.~Klim\v{c}\'{i}k, \emph{{On integrability of the Yang-Baxter
  $\sigma$-model}}, \href{https://doi.org/10.1063/1.3116242}{\emph{J. Math.
  Phys.} {\bfseries 50} (2009) 043508}
  [\href{https://arxiv.org/abs/0802.3518}{{\ttfamily 0802.3518}}].

\bibitem{Delduc:2013fga}
F.~Delduc, M.~Magro and B.~Vicedo, \emph{{On classical $q$-deformations of
  integrable $\sigma$-models}},
  \href{https://doi.org/10.1007/JHEP11(2013)192}{\emph{JHEP} {\bfseries 1311}
  (2013) 192} [\href{https://arxiv.org/abs/1308.3581}{{\ttfamily 1308.3581}}].

\bibitem{Delduc:2014uaa}
F.~Delduc, M.~Magro and B.~Vicedo, \emph{{Integrable double deformation of the
  principal chiral model}},
  \href{https://doi.org/10.1016/j.nuclphysb.2014.12.018}{\emph{Nucl. Phys.}
  {\bfseries B891} (2015) 312}
  [\href{https://arxiv.org/abs/1410.8066}{{\ttfamily 1410.8066}}].

\bibitem{Kawaguchi:2014qwa}
I.~Kawaguchi, T.~Matsumoto and K.~Yoshida, \emph{{Jordanian deformations of the
  $AdS_5 \times S^5$ superstring}},
  \href{https://doi.org/10.1007/JHEP04(2014)153}{\emph{JHEP} {\bfseries 1404}
  (2014) 153} [\href{https://arxiv.org/abs/1401.4855}{{\ttfamily 1401.4855}}].

\bibitem{Maillet:1985fn}
J.~M. Maillet, \emph{{Kac-Moody algebra and extended Yang-Baxter relations in
  the $O(N)$ non-linear sigma model}},
  \href{https://doi.org/10.1016/0370-2693(85)91075-5}{\emph{Phys. Lett.}
  {\bfseries B162} (1985) 137}.

\bibitem{Maillet:1985ek}
J.~M. Maillet, \emph{{New integrable canonical structures in two-dimensional
  models}}, \href{https://doi.org/10.1016/0550-3213(86)90365-2}{\emph{Nucl.
  Phys.} {\bfseries B269} (1986) 54}.

\bibitem{dirac1964lectures}
P.~A.~M. Dirac, \emph{Lectures on Quantum Mechanics}. Belfer Graduate School of
  Science, Yeshiva University, 1964.

\bibitem{Henneaux:1992ig}
M.~Henneaux and C.~Teitelboim, \emph{{Quantization of gauge systems}}.
  Princeton University Press, 1992.

\bibitem{Maillet:1985ec}
J.~M. Maillet, \emph{{Hamiltonian structures for integrable classical theories
  from graded kac-moody algebras}},
  \href{https://doi.org/10.1016/0370-2693(86)91289-X}{\emph{Phys. Lett.}
  {\bfseries B167} (1986) 401}.

\bibitem{Reyman:1988sf}
A.~Reyman and M.~Semenov-Tian-Shansky, \emph{{Compatible Poisson structures for
  Lax equations: an R matrix approach}},
  \href{https://doi.org/10.1016/0375-9601(88)90707-4}{\emph{Phys. Lett.}
  {\bfseries A130} (1988) 456}.

\bibitem{Sevostyanov:1995hd}
A.~Sevostyanov, \emph{{The Classical $R$ matrix method for nonlinear sigma
  model}}, \href{https://doi.org/10.1142/S0217751X96001978}{\emph{Int. J. Mod.
  Phys.} {\bfseries A11} (1996) 4241}
  [\href{https://arxiv.org/abs/hep-th/9509030}{{\ttfamily hep-th/9509030}}].

\bibitem{Vicedo:2010qd}
B.~Vicedo, \emph{{The classical R-matrix of AdS/CFT and its Lie dialgebra
  structure}}, \href{https://doi.org/10.1007/s11005-010-0446-9}{\emph{Lett.
  Math. Phys.} {\bfseries 95} (2011) 249}
  [\href{https://arxiv.org/abs/1003.1192}{{\ttfamily 1003.1192}}].

\bibitem{Lacroix:2018njs}
S.~Lacroix, \emph{{Integrable models with twist function and affine Gaudin
  models}}, Ph.D. thesis, {ENS de Lyon}, 2018.
\newblock \href{https://arxiv.org/abs/1809.06811}{{\ttfamily 1809.06811}}.

\bibitem{Lacroix:2017isl}
S.~Lacroix, M.~Magro and B.~Vicedo, \emph{{Local charges in involution and
  hierarchies in integrable sigma-models}},
  \href{https://doi.org/10.1007/JHEP09(2017)117}{\emph{JHEP} {\bfseries 09}
  (2017) 117} [\href{https://arxiv.org/abs/1703.01951}{{\ttfamily
  1703.01951}}].

\bibitem{Evans:1999mj}
J.~Evans, M.~Hassan, N.~MacKay and A.~Mountain, \emph{{Local conserved charges
  in principal chiral models}},
  \href{https://doi.org/10.1016/S0550-3213(99)00489-7}{\emph{Nucl. Phys.}
  {\bfseries B561} (1999) 385}
  [\href{https://arxiv.org/abs/hep-th/9902008}{{\ttfamily hep-th/9902008}}].
  
\bibitem{Kawaguchi:2012ve}
I.~Kawaguchi, T.~Matsumoto and K.~Yoshida, \emph{{The classical origin of
  quantum affine algebra in squashed sigma models}},
  \href{https://doi.org/10.1007/JHEP04(2012)115}{\emph{JHEP} {\bfseries 1204}
  (2012) 115} [\href{https://arxiv.org/abs/1201.3058}{{\ttfamily 1201.3058}}].

\bibitem{Kawaguchi:2011pf}
I.~Kawaguchi and K.~Yoshida, \emph{{Hybrid classical integrability in squashed
  sigma models}},
  \href{https://doi.org/10.1016/j.physletb.2011.09.117}{\emph{Phys. Lett.}
  {\bfseries B705} (2011) 251}
  [\href{https://arxiv.org/abs/1107.3662}{{\ttfamily 1107.3662}}].

\bibitem{Kawaguchi:2012gp}
I.~Kawaguchi, T.~Matsumoto and K.~Yoshida, \emph{{On the classical equivalence
  of monodromy matrices in squashed sigma model}},
  \href{https://doi.org/10.1007/JHEP06(2012)082}{\emph{JHEP} {\bfseries 1206}
  (2012) 082} [\href{https://arxiv.org/abs/1203.3400}{{\ttfamily 1203.3400}}].

\bibitem{Delduc:2016ihq}
F.~Delduc, S.~Lacroix, M.~Magro and B.~Vicedo, \emph{{On q-deformed symmetries
  as Poisson-Lie symmetries and application to Yang-Baxter type models}},
  \href{https://doi.org/10.1088/1751-8113/49/41/415402}{\emph{J. Phys.}
  {\bfseries A49} (2016) 415402}
  [\href{https://arxiv.org/abs/1606.01712}{{\ttfamily 1606.01712}}].

\bibitem{Delduc:2015xdm}
F.~Delduc, S.~Lacroix, M.~Magro and B.~Vicedo, \emph{{On the Hamiltonian
  integrability of the bi-Yang-Baxter $\sigma$-model}},
  \href{https://doi.org/10.1007/JHEP03(2016)104}{\emph{JHEP} {\bfseries 03}
  (2016) 104} [\href{https://arxiv.org/abs/1512.02462}{{\ttfamily
  1512.02462}}].
  
\bibitem{Vicedo:2015pna}
B.~Vicedo, \emph{{Deformed integrable $\sigma$-models, classical $R$-matrices
  and classical exchange algebra on Drinfel'd doubles}},
  \href{https://doi.org/10.1088/1751-8113/48/35/355203}{\emph{J. Phys.}
  {\bfseries A48} (2015) 355203}
  [\href{https://arxiv.org/abs/1504.06303}{{\ttfamily 1504.06303}}].

\bibitem{Klimcik:2014bta}
C.~Klim\v{c}\'{i}k, \emph{{Integrability of the Bi-Yang-Baxter
  $\sigma$-model}},
  \href{https://doi.org/10.1007/s11005-014-0709-y}{\emph{Lett. Math. Phys.}
  {\bfseries 104} (2014) 1095}
  [\href{https://arxiv.org/abs/1402.2105}{{\ttfamily 1402.2105}}].

\bibitem{Delduc:2017fib}
F.~Delduc, B.~Hoare, T.~Kameyama and M.~Magro, \emph{{Combining the
  bi-Yang-Baxter deformation, the Wess-Zumino term and TsT transformations in
  one integrable $\sigma$-model}},
  \href{https://doi.org/10.1007/JHEP10(2017)212}{\emph{JHEP} {\bfseries 10}
  (2017) 212} [\href{https://arxiv.org/abs/1707.08371}{{\ttfamily
  1707.08371}}].

\bibitem{Sfetsos:2013wia}
K.~Sfetsos, \emph{{Integrable interpolations: From exact CFTs to non-Abelian
  T-duals}}, \href{https://doi.org/10.1016/j.nuclphysb.2014.01.004}{\emph{Nucl.
  Phys.} {\bfseries B880} (2014) 225}
  [\href{https://arxiv.org/abs/1312.4560}{{\ttfamily 1312.4560}}].

\bibitem{Sfetsos:2015nya}
K.~Sfetsos, K.~Siampos and D.~C. Thompson, \emph{{Generalised integrable
  $\lambda$ and $\eta$ deformations and their relation}},
  \href{https://doi.org/10.1016/j.nuclphysb.2015.08.015}{\emph{Nucl. Phys.}
  {\bfseries B899} (2015) 489}
  [\href{https://arxiv.org/abs/1506.05784}{{\ttfamily 1506.05784}}].

\bibitem{Lacroix:toAppear}
S.~Lacroix, \emph{To appear}.

\bibitem{Polyakov:1983tt}
A.~M. Polyakov and P.~Wiegmann, \emph{{Theory of nonabelian Goldstone bosons in
  two dimensions}},
  \href{https://doi.org/10.1016/0370-2693(83)91104-8}{\emph{Phys. Lett.}
  {\bfseries B131} (1983) 121}.
  
\bibitem{Gaudin_book83}
M.~Gaudin, \emph{{La fonction d'onde de Bethe}}. Masson, 1983.

\bibitem{Hitchin:1987mz}
N.~J. Hitchin, \emph{{Stable bundles and integrable systems}},
  \href{https://doi.org/10.1215/S0012-7094-87-05408-1}{\emph{Duke Math. J.}
  {\bfseries 54} (1987) 91}.

\bibitem{Young:2005jv}
C.~A.~S. Young, \emph{{Non-local charges, Z(m) gradings and coset space
  actions}}, \href{https://doi.org/10.1016/j.physletb.2005.10.090}{\emph{Phys.
  Lett.} {\bfseries B632} (2006) 559}
  [\href{https://arxiv.org/abs/hep-th/0503008}{{\ttfamily hep-th/0503008}}].

\bibitem{Feigin:2007mr}
B.~Feigin and E.~Frenkel, \emph{{Quantization of soliton systems and Langlands
  duality}},  in \emph{{Exploring new structures and natural constructions in
  mathematical physics}} ({Hasegawa, K and Hayashi, T and Hosono, S and Yamada,
  Y}, ed.), vol.~{61} of \emph{{Advanced Studies in Pure Mathematics}},
  ({4-25-9-203 HONGO, BUNKYO-KU, TOKYO, 113, JAPAN}), pp.~{185+}, {MATH SOC
  JAPAN}, {2011}, \href{https://arxiv.org/abs/0705.2486}{{\ttfamily
  0705.2486}}.

\bibitem{Feigin:1994in}
B.~Feigin, E.~Frenkel and N.~Reshetikhin, \emph{{Gaudin model, Bethe ansatz and
  correlation functions at the critical level}},
  \href{https://doi.org/10.1007/BF02099300}{\emph{Commun. Math. Phys.}
  {\bfseries 166} (1994) 27}
  [\href{https://arxiv.org/abs/hep-th/9402022}{{\ttfamily hep-th/9402022}}].

\bibitem{Frenkel:2004qy}
E.~Frenkel, \emph{{Gaudin model and opers}},
  \href{https://arxiv.org/abs/math/0407524}{{\ttfamily math/0407524}}.

\bibitem{Lacroix:2018fhf}
S.~Lacroix, B.~Vicedo and C.~Young, \emph{{Affine Gaudin models and
  hypergeometric functions on affine opers}},
  \href{https://doi.org/10.1016/j.aim.2019.04.032}{\emph{Adv. Math.} {\bfseries
  350} (2019) 486} [\href{https://arxiv.org/abs/1804.01480}{{\ttfamily
  1804.01480}}].

\bibitem{Lacroix:2018itd}
S.~Lacroix, B.~Vicedo and C.~A.~S. Young, \emph{{Cubic hypergeometric integrals
  of motion in affine Gaudin models}},
  \href{https://arxiv.org/abs/1804.06751}{{\ttfamily 1804.06751}}.

\bibitem{Beilinson:1995}
A.~Beilinson and V.~Drinfeld, \emph{{Quantisation of Hitchin's integrable
  systems and Hecke eigensheaves}}, {\emph{Preprint} (1995) }.

\end{thebibliography}

\providecommand{\href}[2]{#2}\begingroup\raggedright\endgroup

\end{document}